\newtheorem{remark}{Remark}
\newcommand{\N}{\mathbb{N}}
\newcommand{\R}{\mathbb{R}}
\newcommand{\PP}{\mathsf{P}} 
\newcommand{\EE}{\mathsf{E}} 
\newcommand{\Var}{\mathsf{Var}} 
\newcommand{\bb}[1]{\boldsymbol{#1}}
\newcommand{\nvert}[0]{\, \vert \, }
\newcommand{\rd}{\mathrm{d}}
\newcommand{\Tn}[1]{\cellcolor{palegray}{#1}}
\newcommand{\LK}[1]{\cellcolor{verypalegray}{#1}}
\newcommand{\sqrtsmash}[1]{\sqrt{\smash[b]{#1}}}
\newcommand{\1}{\mathbf{1}}
\DeclareMathSymbol{\mhyphen}{\mathord}{AMSa}{"39}
\definecolor{palegray}{RGB}{140,140,140}
\definecolor{verypalegray}{RGB}{200,200,200}
\newcommand{\proofEPD}{S1{~}}
\newcommand{\proofuniform}{S12{~}}
\newcommand{\prooflocalgamma}{S13{~}}
\newcommand{\prooflocalEPD}{S14{~}}
\newcommand{\powercurves}{S15{~}}
\title{Omnibus goodness-of-fit tests for univariate continuous distributions based on trigonometric moments}
\author[1\authfn{1}]{Alain Desgagn\'e}
\author[2\authfn{1}]{Fr\'ed\'eric Ouimet}
\affil[1]{D\'epartement de math\'ematiques, Universit\'e du Qu\'ebec \`a Montr\'eal, Canada}
\affil[2]{D\'epartement de math\'ematiques et d'informatique, Universit\'e du Qu\'ebec \`a Trois-Rivi\`eres, Canada}
\runningauthor{Desgagn\'e \& Ouimet}
\begin{document}

\maketitle

\begin{abstract}
We propose a new omnibus goodness-of-fit test based on trigonometric moments of probability-integral-transformed data. The test builds on the framework of the LK test introduced by Langholz and Kronmal [J. Amer. Statist. Assoc. 86 (1991), 1077--1084], but fully exploits the covariance structure of the associated trigonometric statistics. As a result, our test statistic converges under the null hypothesis to a $\chi_2^2$ distribution, even in the presence of nuisance parameters, yielding a well-calibrated rejection region. We derive the exact asymptotic covariance matrix required for normalization and propose a unified approach to computing the LK normalizing scalar. The applicability of both the proposed test and the LK test is substantially expanded by providing implementation details for 11 families of continuous distributions, covering most commonly used parametric models. Simulation studies demonstrate accurate empirical size, close to the nominal level, and strong power properties, yielding fully plug-and-play procedures. Further insight is provided by an analysis under local alternatives. The methodology is illustrated using surface temperature forecast errors from a numerical weather prediction model.

\keywords{Fourier basis, goodness-of-fit, nuisance parameter, omnibus test, hypothesis testing, probability integral transform, $U$-statistic.}
\end{abstract}

\section{Introduction}\label{sec:introduction}

Parametric goodness-of-fit tests are essential statistical tools used to determine how well a given parametric distribution family fits a set of observations. Their importance spans various fields, including economics, biology, medicine, and engineering, as they ensure the accurate representation of real-world data. These tests are crucial for assessing the appropriateness of a chosen model, identifying its deficiencies, and verifying underlying data assumptions. Furthermore, they provide a quantitative basis for comparing multiple candidate distributions and selecting the one that offers the best predictive accuracy. A well-fitting model delivers valuable insights into the data's distribution and characteristics, which is fundamental for any subsequent analysis.

Within this framework, an omnibus goodness-of-fit test is designed to evaluate the overall fit of a model without targeting a specific alternative hypothesis \citep{doi:10.1201/9780203753064}. Unlike directional tests that focus on particular types of deviations from the null hypothesis (such as skewness or kurtosis), omnibus tests are engineered to detect a broad spectrum of discrepancies. As stated by \citet[p.~5]{MR1029526}, ``omnibus tests are intended to have moderate power against all alternatives.'' This versatility makes them powerful instruments for identifying any form of lack of fit, though they might be less sensitive to specific deviations compared to more specialized tests.

The literature on omnibus goodness-of-fit testing is extensive; see, e.g., \citet{doi:10.1201/9780203753064} for a book reference on this subject. Many classical tests are based on the empirical distribution function (EDF); after applying the probability integral transform to the data, they assess the uniformity of the transformed sample. This approach is broadly applicable to nearly any null hypothesis and includes well-known supremum statistics like the Kolmogorov--Smirnov and Kuiper tests, as well as quadratic statistics such as the Cram\'er--von Mises, Anderson--Darling, and Watson tests \citep[Section~4.2.2]{doi:10.1201/9780203753064}. When nuisance parameters are present, they are typically replaced by consistent estimators, a technique discussed by \citet{MR451521,MR653521,MR653522,MR885745,MR4411854,MR4906044}, among many others. For classical EDF-based tests, however, the effect of parameter estimation generally cannot be handled through a single, unified analytical adjustment. As a result, calibration relies either on ad hoc, distribution-specific corrections available for a few special cases, or on resampling-based methods.

Another powerful class of omnibus tests is based on orthogonal series expansions, where the density or log-density of the probability-integral-transformed data is expanded using an orthonormal basis; see, e.g., Section~8.11 of \citet{doi:10.1201/9780203753064}. The coefficients in these kinds of expansions are trivial under the null hypothesis of uniformity, so tests can be constructed by summing the squares of the estimated coefficients up to a cutoff. The choice of the orthonormal basis is critical and can influence the power of the test against different alternatives. Legendre polynomials have been a popular choice \citep[see, e.g.,][]{MR1294744,MR1482140,MR1893336,MR3010617,MR3536197,MR4718464,MR4874943}, leading to a test statistic that is effective but can be computationally intensive in the presence of nuisance parameters as the series cutoff increases, given that the statistic must be rescaled by an appropriate covariance matrix to maintain its asymptotic distribution.

The tests based on trigonometric moments proposed in this paper can be viewed as belonging to this latter framework, since trigonometric functions form the Fourier basis associated with orthogonal series expansions. This idea was originally put forward by \citet{MR1146353}, who proposed using the sample means of the first two nontrivial Fourier basis functions evaluated at the probability-integral-transformed data $U_i = F(X_i \nvert \boldsymbol{\theta})$. These quantities correspond to our $U$-statistic
$[C_n(\bb{\theta}),S_n(\bb{\theta})]^{\top}=n^{-1} \sum_{i=1}^n [\cos (2\pi U_i), \sin (2\pi U_i)]^{\top}$ as defined in \eqref{eq:Cn.Sn}.
Their resulting procedure, referred to as the LK test in this paper and defined by $K_1(\hat{\boldsymbol{\theta}}_n)=V(\hat{\boldsymbol{\theta}}_n)^{-1}\, 2n \bigl\{ C_n^2(\hat{\boldsymbol{\theta}}_n) + S_n^2(\hat{\boldsymbol{\theta}}_n) \bigr\}$, appears to possess many compelling attributes of an omnibus goodness-of-fit test, including:
(i) simplicity and ease of implementation, with no tuning parameters required;
(ii) interpretability of $C_n(\boldsymbol{\theta})$ as a measure of relative tail weight and central concentration, and of $S_n(\boldsymbol{\theta})$ as a measure of relative skewness;
(iii) genuinely omnibus behavior;
(iv) strong power properties;
(v) convergence under $\mathcal{H}_0$ to a simple chi-square limiting distribution despite the presence of nuisance parameters;
(vi) a remarkably accurate chi-square approximation even for small sample sizes, enabling a fully plug-and-play implementation; and
(vii) applicability to a broad range of continuous parametric distributions.

To the best of our knowledge, no other goodness-of-fit test combines such wide applicability with a fully plug-and-play implementation that allows critical values and $p$-values to be computed directly from chi-square quantiles, without resorting to simulations, even in the presence of nuisance parameters. However, \citet{MR1146353} provide full implementation details only for a limited number of specific cases, namely the normal, exponential (implicitly covering the Pareto distribution via a logarithmic transformation), Weibull, Laplace, and uniform distributions. Moreover, determining the normalizing scalar $V(\boldsymbol{\theta})$ of the LK test for a given distribution is far from trivial and requires substantial analytical effort, which in practice limits the full potential of this approach. Therefore, one of our main contributions is to substantially broaden the applicability of the test to a wide range of null distributions encompassing most standard parametric families.

In addition, \citet{MR1146353} claim that their $K_1(\hat{\boldsymbol{\theta}}_n)$ test statistics have asymptotic $\chi_2^2$ distributions under the null hypothesis and certain regularity conditions. Our theoretical results (see Proposition~\ref{prop:asymp.dist.ML}) and simulation studies (see Section~\ref{sec:empirical.size.analysis}) show instead that the LK statistic is not exactly asymptotically $\chi_2^2$-distributed, even though the chi-square remains a very good approximation. This discrepancy arises because the LK test does not fully exploit the covariance matrix $\Sigma(\boldsymbol{\theta})$ of the vector $\sqrt{n}[C_n(\boldsymbol{\theta}), S_n(\boldsymbol{\theta})]^{\top}$ when normalizing the statistic, relying only on its trace. Another main contribution of this paper is therefore to take advantage of this observation and to propose a new test statistic, $T_n$, which fully accounts for the underlying covariance structure. This improved construction is expected to yield, on average, higher power for the $T_n$ test relative to the LK test, a conjecture that is supported by our simulation results (see Sections~\ref{sec:empirical.power.analysis} and~\ref{sec:empirical.power.analysis.Laplace}), where systematic average power gains are observed.

Our work makes several contributions to the field of goodness-of-fit testing:
\begin{enumerate}

\item Building on the theoretical framework of \citet{MR451521} and \citet{MR653521}, as well as its recent extension by \citet{MR4906044}, we derive the exact  covariance matrix $\Sigma(\boldsymbol{\theta})$ required to properly rescale $\sqrt{n}[C_n(\hat{\boldsymbol{\theta}}_n), S_n(\hat{\boldsymbol{\theta}}_n)]^{\top}$ in order to obtain asymptotic normality, for an arbitrary null distribution under $\mathcal{H}_0$.

\item We propose a new goodness-of-fit test $T_n(\hat{\boldsymbol{\theta}}_n)$ which, like the LK test, is based on $\sqrt{n}[C_n(\hat{\boldsymbol{\theta}}_n), S_n(\hat{\boldsymbol{\theta}}_n)]^{\top}$ but fully exploits the covariance matrix $\Sigma(\boldsymbol{\theta})$, leading to a more efficient use of the available information and improved power. We show that the resulting quadratic-form statistic converges to a $\chi_2^2$ distribution under $\mathcal{H}_0$.

\item We propose an alternative and straightforward way to compute the normalizing scalar $V(\boldsymbol{\theta})$ appearing in the LK test, based directly on the covariance matrix $\Sigma(\boldsymbol{\theta})$.

\item We substantially expand the applicability of both the $T_n$ and LK tests beyond the five specific null distributions considered by \citet{MR1146353}. Our framework covers 11 parametric distribution families and accounts for all combinations of known and unknown parameters within each family, yielding 53 distinct testing configurations characterized by different covariance matrices $\Sigma(\boldsymbol{\theta})$ and normalizing scalars $V(\boldsymbol{\theta})$. Note that fixing a subset of parameters under $\mathcal{H}_0$ does not define a single test, but rather a collection of tests obtained by specializing those parameters to particular values. For example, in the case of the exponential power distribution (EPD), treating the parameter $\lambda$ as known leads to distinct procedures for each fixed $\lambda$; in particular, $\lambda=1$ and $\lambda=2$ recover Laplace and normal tests, respectively. Consequently, our $T_n$ and LK constructions yield plug-and-play procedures for most commonly used parametric models.

\item We demonstrate through simulation studies, across a wide range of distributions, that the chi-square $\chi_2^2$ approximation to the null distribution of the $T_n$ and LK statistics provides a very good fit even for small and moderate sample sizes. As a consequence, critical values and $p$-values for the $T_n$ and LK tests can be computed accurately using chi-square quantiles, without relying on Monte Carlo simulations or pre-tabulated values.

\item We demonstrate the strong power properties of the $T_n$ and LK tests through simulation studies. In a first analysis, we examine the normal, Student's $t_2$, and exponential null distributions by generating power curves against several families of alternatives and comparing the results with classical EDF-based competitors. In a second analysis, we revisit the comprehensive simulation study of goodness-of-fit tests for the Laplace distribution conducted by \citet{MR4512291}, which evaluated $40$ competing procedures (including the LK test) against a large collection of $400$ alternative distributions. We augment this study by adding the $T_n$ test, which emerges as the most powerful Laplace test among the procedures considered.

\item We analyze the asymptotic behavior of this class of tests under local alternatives, focusing on two representative examples. Building on recent theoretical advances by \citet{MR4906044}, these analyses provide a deeper understanding of the power properties of the proposed tests in neighborhoods of the null hypothesis. The performance of the $T_n$ test is then compared with that of a natural benchmark provided by Rao's score test (also known as the Lagrange multiplier test), as well as with its asymptotically equivalent counterpart, the generalized likelihood ratio test (GLRT), for which the exact limiting distribution is also derived under the null hypothesis and under sequences of local alternatives.

\end{enumerate}

The remainder of this paper is organized as follows. Section~\ref{sec:GOF.general} details the construction of our goodness-of-fit tests, including the treatment of nuisance parameters. Section~\ref{sec:GOF.specific} presents the components required to implement the $T_n$ and LK tests for $11$ families of null distributions. Section~\ref{sec:empirical.simulations} reports the results of our simulation studies, including the empirical size under $\mathcal{H}_0$, empirical power analyses for the normal, Student's $t_2$, and exponential distributions, as well as a comprehensive empirical power study for the Laplace distribution. Section~\ref{sec:asymp.local.alternatives} investigates the asymptotic behavior of the $T_n$ test under local alternatives. Section~\ref{sec:application} analyzes a dataset of surface temperature forecast errors from a numerical weather prediction model to illustrate the practical utility of the proposed tests. Section~\ref{sec:conclusion} summarizes our findings and provides future directions for research.

The proofs of all results stated in the paper are collected in the \ref{supp}. The appendices provide a link to the \textsf{R} code used to reproduce all numerical results (\ref{app:reproducibility}), as well as a table of estimators for each parameter across the 11 distribution families and a table of constants (\ref{app:table.constants}).

\section{Goodness-of-fit tests based on trigonometric moments}\label{sec:GOF.general}

We consider independent and identically distributed (i.i.d.) observations $X_1, \ldots, X_n$ and want to test the following composite hypotheses:
\[
\mathcal{H}_0 : X_1, \ldots, X_n \sim F(\cdot \nvert \bb{\theta}_0)
\qquad \text{vs} \qquad
\mathcal{H}_1 : X_1, \ldots, X_n \centernot{\sim} F(\cdot \nvert \bb{\theta}_0),
\]
where the cumulative distribution function (CDF) $F(\cdot \nvert \boldsymbol{\theta}_0)$ belongs to a family of continuous CDFs, $\{F(\cdot \nvert \boldsymbol{\theta}) : \boldsymbol{\theta} \in \Theta\}$, and the parameter space $\Theta$ is an open subset of $\mathbb{R}^p$, ensuring that the true parameter $\boldsymbol{\theta}_0$ is always an interior point. For a generic parameter vector $\boldsymbol{\theta} = [\theta_1, \ldots, \theta_p]^{\top} \in \Theta$, we denote the associated probability density function (PDF) by
$f(\cdot \nvert \boldsymbol{\theta})$.

Throughout this paper, all components of the $p$-dimensional nuisance parameter $\boldsymbol{\theta}_0$ are treated as unknown unless stated otherwise; see Remark~\ref{rem:theta.known.components} for the straightforward modifications required when any subset of parameters is assumed known.
Let
\[
\hat{\boldsymbol{\theta}}_n = \hat{\boldsymbol{\theta}}_n(X_1,\ldots,X_n)
\]
denote a consistent $\mathbb{R}^p$-valued estimator of $\boldsymbol{\theta}_0$ under $\mathcal{H}_0$, satisfying regularity conditions specified later in this section.

All tests introduced in this paper are primarily based on the maximum likelihood (ML) estimator, as the regularity conditions required for ML estimation are generally mild
and widely satisfied in practice. To illustrate the flexibility of the proposed approach, we also present the general framework based on an arbitrary estimator, and we provide explicit implementations based on ML and on the method-of-moments (MM) estimator for the EPD distribution.

Consider the following non-degenerate $U$-statistic of degree $1$ based on sample trigonometric moments:
\begin{equation}\label{eq:Cn.Sn}
\begin{bmatrix}
C_n(\bb{\theta}) \\[1mm]
S_n(\bb{\theta})
\end{bmatrix}=
\begin{bmatrix}
\frac{1}{n} \sum_{i=1}^n \cos\left\{2\pi F(X_i \nvert \bb{\theta})\right\} \\[1mm]
\frac{1}{n} \sum_{i=1}^n \sin\left\{2\pi F(X_i \nvert \bb{\theta})\right\}
\end{bmatrix}
= \frac{1}{n} \sum_{i=1}^n \bb{\tau}(X_i, \bb{\theta}),
\end{equation}
with the corresponding two-dimensional kernel given by
\begin{equation}\label{eq:tau}
 \bb{\tau}(x, \bb{\theta})
=
\begin{bmatrix}
\cos\left\{2\pi F(x \nvert \bb{\theta})\right\} \\[1mm]
\sin\left\{2\pi F(x \nvert \bb{\theta})\right\}
\end{bmatrix}.
\end{equation}

In \citet{MR653521}, the asymptotic normality of non-degenerate $U$-statistics with univariate kernels, univariate observations, and estimated multivariate nuisance
parameters was established under mild regularity conditions on the kernel, the null distribution, and the consistency of the nuisance parameter estimators.
More recently, Randles' results were extended by \citet{MR4906044} to cover the case of multivariate kernels and multivariate observations, as well as local alternatives,
under similar regularity conditions.

In the present setting, the kernel $\boldsymbol{\tau}$ defined in \eqref{eq:tau} is two-dimensional. The derivation of the results presented below builds directly on the framework developed in \citet{MR4906044}. Our first objective is to determine the asymptotic distribution of $\sqrt{n} \,\smash{[C_n(\hat{\bb{\theta}}_n),S_n(\hat{\bb{\theta}}_n)]^{\top}}$ under $\mathcal{H}_0$.

Consider the score vector $\boldsymbol{s}(x,\boldsymbol{\theta})$, the corresponding Fisher information matrix $I(\boldsymbol{\theta})$, and the score--kernel cross-moment matrix $G(\boldsymbol{\theta})$, defined by
\begin{equation}\label{eq:s.I.G}
\boldsymbol{s}(x,\boldsymbol{\theta})
= \partial_{\boldsymbol{\theta}} \ln f(x \nvert \boldsymbol{\theta}), \qquad
I(\boldsymbol{\theta})
= \EE\left\{\boldsymbol{s}(X,\boldsymbol{\theta})
\boldsymbol{s}(X,\boldsymbol{\theta})^{\top}\right\}, \qquad
G(\boldsymbol{\theta})
= \EE\left\{\boldsymbol{\tau}(X,\boldsymbol{\theta})
\boldsymbol{s}(X,\boldsymbol{\theta})^{\top}\right\}.
\end{equation}
Applying Theorem~1 of \citet{MR4906044}, we obtain the following result.

\begin{proposition}\label{prop:asymp.dist.ML}
If $\hat{\boldsymbol{\theta}}_n$ denotes the maximum likelihood estimator and the usual regularity conditions hold, then under $\mathcal{H}_0$ and as $n \to \infty$,
\begin{equation*}\label{eq:asymp.Cn.Sn.H0}
\sqrt{n}\, \Sigma(\hat{\boldsymbol{\theta}}_n)^{-1/2}
\begin{bmatrix}
C_n(\hat{\boldsymbol{\theta}}_n)\\[1mm]
S_n(\hat{\boldsymbol{\theta}}_n)
\end{bmatrix}
\;\rightsquigarrow\;
\mathcal{N}_2(\boldsymbol{0}_2, I_2),
\end{equation*}
where $\bigl[C_n(\boldsymbol{\theta}), S_n(\boldsymbol{\theta})\bigr]^{\top}$ is defined in \eqref{eq:Cn.Sn}, $I_2$ denotes the $2\times2$ identity matrix, and
\begin{equation}\label{eq:Sigma.ML}
\Sigma(\boldsymbol{\theta})
= \tfrac{1}{2} I_2 - G(\boldsymbol{\theta})\, I(\boldsymbol{\theta})^{-1}\, G(\boldsymbol{\theta})^{\top}.
\end{equation}
\end{proposition}

\begin{remark}
For the general case based on an arbitrary estimator $\hat{\boldsymbol{\theta}}_n$ not restricted to maximum likelihood estimation, Proposition~\ref{prop:asymp.dist.ML} continues to hold, with the asymptotic covariance matrix given by
\begin{equation}\label{eq:Sigma.general}
\Sigma(\boldsymbol{\theta})
= \tfrac{1}{2} I_{2}
- G(\boldsymbol{\theta}) R(\boldsymbol{\theta})^{-1} J(\boldsymbol{\theta})^{\top}
- J(\boldsymbol{\theta}) R(\boldsymbol{\theta})^{-1} G(\boldsymbol{\theta})^{\top}
+ G(\boldsymbol{\theta}) R(\boldsymbol{\theta})^{-1} G(\boldsymbol{\theta})^{\top},
\end{equation}
with
\vspace{-6mm}
\[
J(\bb{\theta}) = \EE\big\{\bb{\tau}(X, \bb{\theta}) \bb{r}(X, \bb{\theta})^{\top}\big\},
\]
provided that the estimator $\hat{\boldsymbol{\theta}}_n$ admits an asymptotic  representation based on a suitable function $\boldsymbol{r}(X,\boldsymbol{\theta})$. More precisely, there must exist an $\mathbb{R}^p$-valued measurable mapping $x \mapsto \boldsymbol{r}(x,\boldsymbol{\theta})$ such that, for
$X \sim F(\cdot \nvert \boldsymbol{\theta}_0)$, we have $\EE\left\{\boldsymbol{r}(X,\boldsymbol{\theta}_0)\right\} = \boldsymbol{0}_p$,
the covariance matrix
\vspace{-2mm}
\begin{equation}\label{eq:R}
R(\boldsymbol{\theta}_0)
= \EE\big\{\boldsymbol{r}(X,\boldsymbol{\theta}_0)
\boldsymbol{r}(X,\boldsymbol{\theta}_0)^{\top}\big\}
\end{equation}
is positive definite, and the following convergence in probability holds under
$\mathcal{H}_0$ as $n \to \infty$:
\begin{equation}\label{eq:hyp.est}
\sqrt{n} \, (\hat{\bb{\theta}}_n - \bb{\theta}_0) - R(\bb{\theta}_0)^{-1} \frac{1}{\sqrt{n}} \sum_{i=1}^n \bb{r}(X_i, \bb{\theta}_0) \stackrel{\PP}{\longrightarrow} \bb{0}_p.
\end{equation}
In particular, when ${\hat{\bb{\theta}}_n}$ is the maximum likelihood estimator,  condition~\eqref{eq:hyp.est} is satisfied under mild regularity conditions by the score function, by taking $\bb{r}(x,\bb{\theta})=\bb{s}(x, \bb{\theta})$ \citep[see, e.g.,][Chapter~5]{MR1652247}. In this case, $R(\bb{\theta})=I(\bb{\theta})$, $J(\bb{\theta}) = G(\bb{\theta})$, and the asymptotic covariance matrix $\Sigma(\bb{\theta})$ defined in \eqref{eq:Sigma.general} reduces to the form given in \eqref{eq:Sigma.ML}.
\end{remark}

Based on the results of Proposition~\ref{prop:asymp.dist.ML}, we now introduce the new
test statistic $T_n$, together with its asymptotic null distribution.

\begin{proposition}\label{prop:test.Tn}
The new test statistic, denoted by $T_n(\hat{\boldsymbol{\theta}}_n)$, is defined as
the quadratic form
\begin{equation*}\label{eq:test.Tn}
T_n(\hat{\bb{\theta}}_n) = n [C_n(\hat{\bb{\theta}}_n), S_n(\hat{\bb{\theta}}_n)] \, \Sigma(\hat{\bb{\theta}}_n)^{-1} \, [C_n(\hat{\bb{\theta}}_n), S_n(\hat{\bb{\theta}}_n)]^{\top},
\end{equation*}
where $\bigl[C_n(\boldsymbol{\theta}), S_n(\boldsymbol{\theta})\bigr]^{\top}$ is
defined in \eqref{eq:Cn.Sn} and $\Sigma(\boldsymbol{\theta})$ is defined in
\eqref{eq:Sigma.ML} for the ML-based case and in \eqref{eq:Sigma.general} for the general case.
Furthermore, under $\mathcal{H}_0$ and as $n \to \infty$,
\vspace{-2mm}
\begin{equation*}\label{eq:asymp.H0.Tn}
T_n(\hat{\boldsymbol{\theta}}_n) \;\rightsquigarrow\; \chi_2^2.
\end{equation*}
\end{proposition}

We now introduce an alternative expression for the normalizing scalar $V(\boldsymbol{\theta})$ appearing in the LK test statistic.

\begin{proposition}\label{prop:test.LK}
Consider the test statistic originally introduced by \citet{MR1146353}, referred to
in this paper as the LK test, and defined by
\vspace{-2mm}
\begin{equation*}\label{eq:stat.LK}
K_1(\hat{\boldsymbol{\theta}}_n)
=
V(\hat{\boldsymbol{\theta}}_n)^{-1}\,
2n \bigl\{ C_n^2(\hat{\boldsymbol{\theta}}_n) + S_n^2(\hat{\boldsymbol{\theta}}_n) \bigr\}.
\end{equation*}
Then an alternative way to compute the normalizing scalar $V(\boldsymbol{\theta})$ is given by
\begin{equation}\label{eq:V.LK}
V(\boldsymbol{\theta}) = \mathrm{tr}\{\Sigma(\boldsymbol{\theta})\},
\end{equation}
where $\Sigma(\boldsymbol{\theta})$ is defined in \eqref{eq:Sigma.ML} for the ML-based case and in \eqref{eq:Sigma.general} for the general case.
\end{proposition}

\begin{remark}\label{rem:approx.V}
The alternative way to compute the normalizing scalar $V(\boldsymbol{\theta})$ in~\eqref{eq:V.LK} stems from the fact that \citet{MR1146353} defined $V(\boldsymbol{\theta})$ as the asymptotic expectation of
$\smash{n\{C_n^2(\hat{\bb{\theta}}_n)+S_n^2(\hat{\bb{\theta}}_n)\}}$. Under $\mathcal{H}_0$ with $X_i\sim F(\cdot\nvert\bb{\theta}_0)$, write
$Z_n=\sqrt{n}\,[C_n(\hat{\bb{\theta}}_n),\,S_n(\hat{\bb{\theta}}_n)]^{\top}$.
By Proposition~\ref{prop:asymp.dist.ML} and Slutsky's theorem, $Z_n \rightsquigarrow \mathcal{N}_2(\mathbf{0},\Sigma(\bb{\theta}_0))$. Under a mild moment condition ensuring uniform integrability of $\|Z_n\|^2$, it follows that
\[
\EE\big[n\{C_n^2(\hat{\bb{\theta}}_n)+S_n^2(\hat{\bb{\theta}}_n)\}\big]
=\EE(\|Z_n\|^2)
=\mathrm{tr}\{\Var(Z_n)\}+\|\EE(Z_n)\|^2
\longrightarrow
\mathrm{tr}\bigl\{\Sigma(\bb{\theta}_0)\bigr\}.
\]

Moreover, Proposition~\ref{prop:test.Tn} clearly shows that the asymptotic distribution of $K_1(\hat{\boldsymbol{\theta}}_n)$ under $\mathcal{H}_0$ is generally not $\chi_2^2$, as claimed by \citet{MR1146353}. More specifically, letting $\smash{\hat{\sigma}_{ij} \equiv [\Sigma(\hat{\boldsymbol{\theta}}_n)]_{ij} \stackrel{\PP}{\longrightarrow} \sigma_{ij}}$ for $i,j\in \{1,2\}$, we see that $K_1(\hat{\boldsymbol{\theta}}_n)$ weakly converges to a weighted mixture of asymptotically dependent $\chi_1^2$ random variables ($\sigma_{11} \neq \sigma_{22}$ and $\sigma_{12}\neq 0$ in general):
\[
K_1(\hat{\boldsymbol{\theta}}_n) = \frac{2 \hat{\sigma}_{11}}{\hat{\sigma}_{11} + \hat{\sigma}_{22}} \left(\frac{n \, C_n^2(\hat{\boldsymbol{\theta}}_n)}{\hat{\sigma}_{11}}\right) + \frac{2 \hat{\sigma}_{22}}{\hat{\sigma}_{11} + \hat{\sigma}_{22}} \left(\frac{n \, S_n^2(\hat{\boldsymbol{\theta}}_n)}{\hat{\sigma}_{22}}\right) \;\rightsquigarrow\; \frac{2 \sigma_{11}}{\sigma_{11} + \sigma_{22}} \chi_1^2 + \frac{2 \sigma_{22}}{\sigma_{11} + \sigma_{22}} \chi_1^2 \stackrel{\mathrm{law}}{\neq} \chi_2^2.\vspace{-1mm}
\]
Nevertheless, the simulations in Section~\ref{sec:empirical.size.analysis} indicate that the distribution of $K_1(\hat{\boldsymbol{\theta}}_n)$ under $\mathcal{H}_0$ is well approximated by a $\chi_2^2$.

Note that the values denoted by $V^{-1}$ and reported in Table~1 of \citet{MR1146353} for a few specific tests are all correctly reproduced by $V(\boldsymbol{\theta})^{-1}$ as given in \eqref{eq:V.LK}, except for the MM-based Laplace test. In that case, the reported value $V^{-1} = 1.13$ is incorrect (as can be confirmed by simulations). Instead, the correct value is $V(\hat{\boldsymbol{\theta}}_n)^{-1} = 0.92751735$. Note also that, although the values of $V^{-1}$ reported by \citet{MR1146353} are constants, in general both $V(\hat{\boldsymbol{\theta}}_n)$ and the matrix $\Sigma(\hat{\boldsymbol{\theta}}_n)$ depend on some or all of the estimated parameters.
\end{remark}

\begin{remark}
The critical values and $p$-values for the $T_n$ and LK tests are derived from the limiting result. The null hypothesis is rejected for large values of the test statistic, specifically if the observed value of $T_n(\hat{\bb{\theta}}_n)$ or $K_1(\hat{\bb{\theta}}_n)$ exceeds the upper $\alpha$-quantile of the $\chi_2^2$ distribution. The $p$-value is calculated as the probability that a $\chi_2^2$ random variable is greater than the observed value of the statistic.
\end{remark}

\begin{remark}\label{rem:v}
The representation $V(\boldsymbol{\theta}) = \mathrm{tr}(\Sigma(\boldsymbol{\theta}))$ provides a useful geometric interpretation of the LK test.
While the statistic $T_n$ corresponds to the squared Mahalanobis norm of $\sqrt{n}[C_n(\hat{\boldsymbol{\theta}}_n), S_n(\hat{\boldsymbol{\theta}}_n)]^{\top}$,
and therefore fully exploits the covariance structure $\Sigma(\boldsymbol{\theta})$, the LK test is based on the squared Euclidean norm
$n\{C_n^2(\hat{\boldsymbol{\theta}}_n) + S_n^2(\hat{\boldsymbol{\theta}}_n)\}$, normalized by half the total variance, $\mathrm{tr}\{\Sigma(\hat{\boldsymbol{\theta}}_n)\}/2$.
As a result, the LK test can be interpreted as an isotropic scalar projection of the vector statistic, treating all directions equally.
This interpretation explains both the simplicity of the LK test and the typical gain in power achieved by $T_n$ (see Section~\ref{sec:empirical.simulations}) through the use of the full covariance structure.
\end{remark}

\begin{remark}\label{rem:theta.known.components}
To obtain the proper normalization when some of the parameters are known under $\mathcal{H}_0$, set the relevant components of the estimator $\smash{\hat{\bb{\theta}}_n}$ to their known values in $\bb{\theta}_0$. Then, to calculate $\Sigma(\bb{\theta})$ in \eqref{eq:Sigma.ML} or \eqref{eq:Sigma.general}, remove the columns of $G(\bb{\theta})$ and $J(\bb{\theta})$ corresponding to the positions of the known values in $\bb{\theta}_0$, and remove the corresponding rows and columns of $I(\bb{\theta})$ and $R(\bb{\theta})$. For example, if the second component of $\bb{\theta}_0$ is known under $\mathcal{H}_0$, then remove the second columns of $G(\bb{\theta})$ and $J(\bb{\theta})$, and remove the second rows and columns of $I(\bb{\theta})$ and $R(\bb{\theta})$. For this reason, the matrices $G(\bb{\theta})$ and $I(\bb{\theta})$ are reported in Table~\ref{table:G.I} for specific ML-based tests ($G(\bb{\theta})$, $J(\bb{\theta})$, and $R(\bb{\theta})$ for the MM-based test) instead of the explicit form of the covariance matrix $\Sigma(\bb{\theta})$, as each configuration of known and unknown parameters leads to a distinct $\Sigma(\bb{\theta})$ and $V(\bb{\theta})$. In particular, when all the components of $\bb{\theta}_0$ are known under $\mathcal{H}_0$, we have $\smash{\hat{\bb{\theta}}_n} = \bb{\theta}_0$ and $\Sigma(\hat{\bb{\theta}}_n) = \frac{1}{2} I_2$, in which case the test statistic reduces to $T_n(\hat{\bb{\theta}}_n) = n [C_n(\bb{\theta}_0), S_n(\bb{\theta}_0)] \, (\frac{1}{2} I_2)^{-1} \, [C_n(\bb{\theta}_0), S_n(\bb{\theta}_0)]^{\top} = 2n \{C_n^2(\bb{\theta}_0) + S_n^2(\bb{\theta}_0)\} = K_1(\bb{\theta}_0)$. Here, the scaling $\frac{1}{2} I_2$ is an easy consequence of the fact that $F(X_1\nvert\bb{\theta}_0),\ldots,F(X_n\nvert\bb{\theta}_0)$ are i.i.d.\ standard uniforms under $\mathcal{H}_0$ and $\{u\mapsto \sqrt{2} \cos(2\pi u),u\mapsto \sqrt{2} \sin(2\pi u)\}$ forms an orthonormal set of functions in $L^2([0,1])$.
\end{remark}

\begin{figure}[H]
\centering
\includegraphics[width=0.37\linewidth]{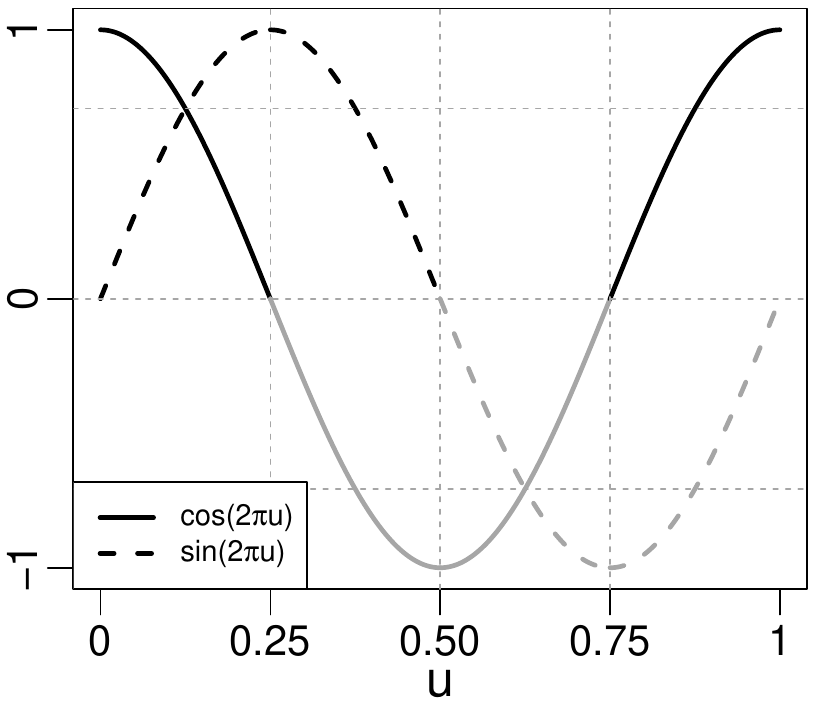}
\caption{The functions $u\mapsto \cos(2\pi u)$ (solid) and $u\mapsto \sin(2\pi u)$ (dashed) plotted on the interval $[0,1]$. The line color indicates the function's sign: black for positive values and gray for negative values.}
\label{fig:cos.sin}
\end{figure}

\begin{remark}\label{rem:interp}
The two components that form the test statistics $K_1(\hat{\boldsymbol{\theta}}_n)$ and $T_n(\hat{\bb{\theta}}_n)$, namely $\smash{\sqrt{n} C_n(\hat{\bb{\theta}}_n)}$ and $\smash{\sqrt{n} S_n(\hat{\bb{\theta}}_n)}$, which are respectively scaled sample means of $\smash{\cos(2\pi \hat{U}_i)}$ and $\smash{\sin(2\pi \hat{U}_i)}$ with $\smash{\hat{U}_i = F(X_i \nvert \hat{\bb{\theta}}_n)}$, can be interpreted in the asymptotic regime ($n \to \infty$) using Figure~\ref{fig:cos.sin}. Under the null hypothesis, $X_i \sim F(\cdot \nvert \bb{\theta}_0)$, and assuming the consistency of $\smash{\hat{\bb{\theta}}_n}$ and the continuity of $(x,\bb{\theta})\mapsto F(x \nvert \bb{\theta})$, each transformed observation $\smash{\hat{U}_i}$ weakly converges to a standard uniform random variable by the continuous mapping theorem \citep[p.~7]{MR1652247}. In particular, both $u\mapsto \cos(2\pi u)$ and $u\mapsto \sin(2\pi u)$ integrate to zero over $(0,1)$, so $\smash{\sqrt{n} C_n(\hat{\bb{\theta}}_n)}$ and $\smash{\sqrt{n} S_n(\hat{\bb{\theta}}_n)}$ converge to zero in expectation under $\mathcal{H}_0$, meaning that no systematic deviation from the model is detected.

Now consider the case where $X_i \centernot{\sim} F(\cdot \nvert \bb{\theta}_0)$. The component $\smash{\sqrt{n} S_n(\hat{\bb{\theta}}_n)}$ can be interpreted as a directional measure of skewness relative to the fitted model. Specifically, if more than 50\% of the observations fall below the fitted median $\smash{F^{-1}(0.5 \nvert \hat{\bb{\theta}}_n)}$, then most $\smash{\hat{U}_i}$ values will lie below $0.5$. Since $u\mapsto \sin(2\pi u)$ is positive on $(0, 0.5)$ and negative on $(0.5, 1)$, this typically leads to a positive value of $\smash{\sqrt{n} S_n(\hat{\bb{\theta}}_n)}$. In such cases, the dataset is more right-skewed (or less left-skewed) than the null model. Conversely, a negative value suggests that the dataset is more left-skewed (or less right-skewed).

The component $\smash{\sqrt{n} C_n(\hat{\bb{\theta}}_n)}$ can be interpreted as a measure of relative tail weight and central concentration. If fewer than 50\% of the observations fall in the central interval ($\smash{F^{-1}(0.25 \nvert \hat{\bb{\theta}}_n)},\smash{F^{-1}(0.75 \nvert \hat{\bb{\theta}}_n)})$ --- that is, between the first and third quartiles under the fitted model --- then most $\smash{\hat{U}_i}$ values will lie in the tail intervals $(0, 0.25)$ and $(0.75, 1)$. Since $u\mapsto \cos(2\pi u)$ is negative over the central region and positive in the tails, this typically leads to a positive value of $\smash{\sqrt{n} C_n(\hat{\bb{\theta}}_n)}$. In such cases, the dataset often has lighter tails than the null model, that is, more mass in the tails and less in the center. Conversely, a negative value of $\smash{\sqrt{n} C_n(\hat{\bb{\theta}}_n)}$ may indicate heavier tails, that is, more mass in the center and less in the tails. Indeed, it is generally understood that heavy-tailed distributions allocate more mass in the extreme tails (e.g., $u < 0.02$ and $u > 0.98$) while assigning comparatively less mass to the intermediate tail regions. As a result, the overall proportion of transformed data $\hat{U}_i$ in the broad tail intervals $(0, 0.25)$ and $(0.75, 1)$ may be smaller than under the null, leading to a negative value of $\smash{\sqrt{n} C_n(\hat{\bb{\theta}}_n)}$.
\end{remark}

\section{Trigonometric tests for specific distributions}\label{sec:GOF.specific}

In this section, detailed information on the trigonometric tests $T_n$ and LK is provided for 11 specific families of distributions, namely the exponential power
distribution (EPD; also known as the generalized normal or Subbotin distribution), half-EPD, skew normal (SN), generalized gamma (GG), logistic, Student's $t$,
Gompertz, Lomax (also known as the Pareto type II), inverse-Gaussian (IG), beta, and Kumaraswamy (Kum) distributions.

The PDF, CDF, support, and parameter domain are listed for each of these 11 families in Table~\ref{table:density.CDF}, to clearly identify the specific version of each distribution considered and because the CDF is used directly in the computation of the $T_n$ and LK test statistics.

\begin{table}[!htbp]
\small\def\arraystretch{1.2}
\begin{center}
\begin{tabular}{|p{1.1cm}|p{12.0cm}|}
\hline
$\vcenter{\hbox{\shortstack{EPD \\ $(\lambda,\mu,\sigma)$}}}$
 & $f(x) \!=\! \frac{1}{2\sigma \lambda^{1/\lambda-1} \Gamma(1/\lambda)} e^{-\frac{1}{\lambda}(|x-\mu|/\sigma)^\lambda}$,  $F(x) \!=\! \frac{1}{2}\! \left[1 + \mathrm{sign}(x-\mu) \, \Gamma_{1/\lambda, 1}\hspace{-.3mm}\big\{\frac{1}{\lambda}(|x-\mu|/\sigma)^\lambda\big\}\right]$,  $x\in\R$,  $\lambda>0$, $\mu\in\R$, $\sigma>0$\! \rule{0pt}{11pt} \vspace{1mm}\cr
\hline
$\vcenter{\hbox{\shortstack{Half-EPD \\ $(\lambda,\sigma)$}}}$  & $f(x) = \frac{1}{\sigma \lambda^{1/\lambda-1} \Gamma(1/\lambda)} e^{-\frac{1}{\lambda} (x/\sigma)^{\lambda}}$, ~ $F(x) = \Gamma_{1/\lambda, 1}\hspace{-.3mm}\big\{\frac{1}{\lambda} (x/\sigma)^{\lambda}\big\}$, ~ $x>0$,  $\lambda>0$, $\sigma>0$ \rule{0pt}{12pt} \vspace{1mm}\cr
\hline
$\mathrm{SN}(\lambda,\mu,\sigma)$ & $f(x)=\tfrac{2}{\sigma}\phi(y)\Phi(\lambda y)$, ~ $F(x) = \Phi(y)-2 T(y, \lambda)$, ~ $y=\frac{x-\mu}{\sigma}\in\R$,  $\lambda\in\R$, $\mu\in\R$, $\sigma>0$ \rule{0pt}{9pt} \vspace{0.5mm}\cr
\hline
$\mathrm{GG}(\lambda,\beta, \rho)$ & $f(x) = \frac{\rho}{\beta \Gamma(\lambda)} \big(\tfrac{x}{\beta}\big)^{\lambda\rho-1} e^{-(x/\beta)^{\rho}}$, ~ $F(x) = \Gamma_{\lambda, 1}\hspace{-.3mm} \left\{(x/\beta)^{\rho}\right\}$, ~ $x>0$, $\lambda>0$, $\beta>0$, $\rho>0$ \rule{0pt}{9pt} \vspace{0.5mm}\cr
\hline
$\vcenter{\hbox{\shortstack{Logistic \\ $(\mu,\sigma)$}}}$ & $f(x) =\frac{e^{-(x - \mu)/\sigma}}{\sigma \{1 + e^{-(x - \mu)/\sigma}\}^2}$, ~ $F(x) = \frac{1}{1 + e^{-(x - \mu)/\sigma}}$, ~ $x\in\R$, $\mu\in\R$, $\sigma>0$ \rule{0pt}{11pt} \vspace{1mm}\cr
\hline
$\vcenter{\hbox{\shortstack{Student's \\ $t(\lambda,\mu,\sigma)$}}}$ & $f(x) = \frac{\Gamma(\frac{\lambda+1}{2})}{\sigma\sqrt{\lambda\pi}\Gamma(\frac{\lambda}{2})} \big(1 + \frac{y^2}{\lambda}\big)^{-\frac{\lambda + 1}{2}}$, ~ $F(x) = \frac{1}{2} \Big[1 + \mathrm{sign}(y) I\Big(\frac{y^2}{y^2+\lambda};1/2,\lambda/2\Big)\Big]$, ~ $y=\frac{x-\mu}{\sigma}\in\R$,  $\lambda>0$, $\mu\in\R$, $\sigma>0$ \rule{0pt}{11pt} \vspace{0.5mm}\cr
\hline
$\vcenter{\hbox{\shortstack{Gompertz \\ $(\beta, \rho)$}}}$ &  $f(x) = \beta\rho\exp(\rho + \beta x - \rho e^{\beta x})$, ~ $F(x) = 1 - \exp\left\{-\rho( e^{\beta x} - 1)\right\}$, ~ $x>0$, $\beta>0$, $\rho>0$\rule{0pt}{11pt} \vspace{1mm}\cr
\hline
$\vcenter{\hbox{\shortstack{Lomax \\ $(\alpha, \sigma)$}}}$ & $f(x) = (\alpha/\sigma) (1 + x/\sigma)^{-(\alpha + 1)}$, ~ $F(x) = 1 - (1 + x/\sigma)^{-\alpha}$, ~ $x>0$, $\alpha>0$, $\sigma>0$ \rule{0pt}{11pt} \vspace{1mm}\cr
\hline
$\mathrm{IG}(\mu, \lambda)$ & $f(x) = \sqrt{\frac{\lambda}{2\pi x^3}} \exp\left\{-\frac{\lambda (x - \mu)^2}{2 \mu^2 x}\right\}$, ~
 $F(x) = \Phi\left\{\sqrt{\frac{\lambda}{x}} \left(\frac{x}{\mu} - 1\right)\right\} + \exp\left(\frac{2\lambda}{\mu}\right)\Phi\left\{-\sqrt{\frac{\lambda}{x}} \left(\frac{x}{\mu} + 1\right)\right\}$, ~ $x>0$,  $\mu>0$, $\lambda>0$ \rule{0pt}{12pt} \vspace{0.5mm}\cr
\hline
$\mathrm{Beta}(\alpha, \beta)$ &  $f(x) = \frac{1}{B(\alpha, \beta)} x^{\alpha - 1} (1 - x)^{\beta - 1}$, ~ $F(x ) = I(x; \alpha, \beta)$, ~ $x\in (0, 1)$, $\alpha>0$, $\beta>0$  \rule{0pt}{11pt} \vspace{0.5mm}\cr
\hline
$\mathrm{Kum}(\alpha, \beta)$ & $f(x) = \alpha\beta x^{\alpha - 1}(1 - x^{\alpha})^{\beta - 1}$, ~  $F(x) = 1 - (1 - x^{\alpha})^{\beta}$,  ~ $x\in (0, 1)$, $\alpha>0$, $\beta>0$\rule{0pt}{9pt} \vspace{0.5mm}\cr
\hline

\end{tabular}
\end{center}
\vspace{-2mm}
\caption{Density, CDF, support, and parameter domain for each family of distributions}
\label{table:density.CDF}
\end{table}

\vspace{-2mm}
\noindent
{\bf \normalsize Notation.} The following standard notations are used throughout the paper: $\phi(\cdot)$ and $\Phi(\cdot)$ denote the PDF and CDF of the standard normal distribution; $\Gamma(\cdot)$, $\psi(\cdot)$, and $\psi_1(\cdot)$ denote the gamma, digamma, and trigamma functions; $B(a,b)$, $B(x;a,b)$, and $I(x;a,b)=B(x;a,b)/B(a,b)$ denote the beta, incomplete beta, and regularized incomplete beta functions; $\gamma(a,x)$ denotes the lower incomplete gamma function, and $\Gamma_{a,b}(x)=\gamma(a,x/b)/\Gamma(a)$ denotes the CDF of a gamma distribution with shape parameter $a$ and scale parameter $b$. Finally, Owen's $T$ function is denoted by
\begin{equation*}\label{eq:OwenT}
T(y,\lambda)=\frac{1}{2\pi}\int_0^{\lambda}\frac{e^{-y^2(1+t^2)/2}}{1+t^2}\,\rd t.
\end{equation*}

Many special cases can be derived from the 11 distribution families, in particular from the rich EPD and GG families. Log-variant, inverse-variant, limiting-case,
and other types of equivalences can also be obtained, which greatly extend the applicability of the tests $T_n$ and LK. We provide below a list that is intended to be as exhaustive as possible:
\vspace{-5mm}
\begin{itemize}

\item \textbf{Laplace}$(\mu,\sigma)
= \mathrm{EPD}(\lambda = 1, \mu, \sigma)$.

\item \textbf{Normal}$(\mu,\sigma)
= \mathrm{EPD}(\lambda = 2, \mu, \sigma)
= \mathrm{SN}(\lambda = 0, \mu,\sigma)$.

\item \textbf{Weibull}$(\beta,\rho)
= \mathrm{GG}(\lambda = 1, \beta, \rho)$.

\item \textbf{Gamma}$(\lambda,\beta)
= \mathrm{GG}(\lambda, \beta, \rho = 1)$.

\item \textbf{Nakagami}$(\lambda,\omega)
= \mathrm{GG}\big(\lambda,\; \beta = \sqrt{\omega/\lambda},\; \rho = 2\big)$.

\item \textbf{Exponential}$(\beta)
= \mathrm{GG}(\lambda = 1, \beta, \rho = 1)
= \mathrm{half\mbox{-}EPD}(\lambda = 1, \sigma = \beta)$.

\item \textbf{Half-normal}$(\delta)
= \mathrm{GG}(\lambda = 1/2,\; \beta = \sqrt{2}\,\delta,\; \rho = 2)
= \mathrm{half\mbox{-}EPD}(\lambda = 2, \sigma = \delta)$.

\item \textbf{Rayleigh}$(\delta)
= \mathrm{GG}(\lambda = 1,\; \beta = \sqrt{2}\,\delta,\; \rho = 2)$.

\item \textbf{Maxwell--Boltzmann}$(\delta)
= \mathrm{GG}(\lambda = 3/2,\; \beta = \sqrt{2}\,\delta,\; \rho = 2)$.

\item \textbf{Chi-squared} $\chi^2(k)
= \mathrm{GG}(\lambda = k/2,\; \beta = 2,\; \rho = 1)$.

\item \textbf{Cauchy}$(\mu,\sigma)= t(\lambda=1,\mu,\sigma)$.

\item All \textbf{log-variant} families are handled through the transformation $X=\ln(Y)$, e.g., $Y \sim \mathbf{log\text{-}normal}(\mu,\sigma)
\iff X=\ln(Y)\sim\mathrm{normal}(\mu,\sigma)$, and similarly for the $\mathbf{log\text{-}EPD}$, $\mathbf{log\text{-}Laplace}$, $\mathbf{log\text{-}logistic}$, and other log-variant families.

\item All \textbf{inverse-variant} families are treated via the transformation $X = 1/Y$, e.g.,
$Y \sim \mathbf{inverse\text{-}gamma}(\lambda,\tau) \iff X=1/Y \sim \mathrm{gamma}(\lambda, \beta = 1/\tau)$, or $Y \sim \mathrm{\textbf{Fr\'echet}}(\tau,\rho) \equiv \mathbf{inverse\text{-}Weibull}(\tau,\rho) \iff X=1/Y \sim \mathrm{Weibull}(\beta = 1/\tau, \rho)$.

\item All \textbf{exponential-variant} families are handled through the transformation $X=e^Y$, e.g., $Y \sim \mathbf{exp\text{-}GG}(\lambda,\mu,\sigma)
\iff X=e^Y\sim\mathrm{GG}(\lambda,\beta=e^{\mu},\rho=1/\sigma)$, and similarly for the $\mathbf{exp\text{-}Weibull}$, $\mathbf{exp\text{-}gamma}$, and other exp-variant families.

\item The \textbf{Gumbel} distribution is handled via $Y \sim \mathrm{\textbf{Gumbel}}(\mu,\sigma) \iff X=e^{-Y} \sim \mathrm{Weibull}(\beta=e^{-\mu},\, \rho = 1/\sigma)$.

\item The \textbf{Pareto type I} distribution on a known support $(\sigma,\infty)$ is handled via $Y \sim \mathrm{\textbf{Pareto}}(\alpha,\sigma) \iff X=\ln(Y/\sigma)\sim \mathrm{exponential}(\beta = 1/\alpha)$. If the lower endpoint $\sigma$ is unknown, we use the plug-in $\hat{\sigma}_n=\min\{Y_1,\ldots,Y_n\}$ and set $X=\ln(Y/\hat{\sigma}_n)$.

\item The \textbf{Beta-prime} distribution is handled via $Y \sim \mathrm{\textbf{Beta-prime}}(\alpha,\beta) \iff X = Y/(Y+1) \sim \mathrm{Beta}(\alpha,\beta)$.

\item For any \textbf{completely specified} continuous distribution, $\Sigma = \tfrac12 I_2$ and $V=1$, resulting in identical test statistics for $T_n$ and the LK test.

\item The $\mathrm{\textbf{continuous uniform}}(a,b)$ distribution arises as the limiting case of the $\mathrm{EPD}(\lambda,\mu,\sigma)$ family when $\lambda \to \infty$, with $\mu=(a+b)/2$ and $\sigma=(b-a)/2$. In this limit, the ML-based test yields  $\hat a_n = x_{(1)}$ and $\hat b_n = x_{(n)}$, and the covariance matrix reduces to $\Sigma = \tfrac12 I_2$, whether $a$, $b$, or both parameters are unknown.

\end{itemize}

\begin{table}[!t]
\small\def\arraystretch{1.2}
\begin{center}
\begin{tabular}{|p{1.1cm}|p{12.1cm}|}
\hline
$\vcenter{\hbox{\shortstack{EPD \\ $(\lambda,\mu,\sigma)$}}}$
& \hspace{-1.5mm} $G(\lambda,0,1)=
\begin{bsmallmatrix}
\frac{h_{1}(\lambda)-h_{3}(\lambda)}{\lambda^2}& 0 & h_{1}(\lambda) \\[1mm]
0 & \frac{h_{2}(\lambda)}{\lambda^{1/\lambda-1} \Gamma(1/\lambda)} & 0
\end{bsmallmatrix}$,  $I(\lambda,0, 1)=
\begin{bsmallmatrix}
\frac{1}{\lambda^3}\big[(\frac{1}{\lambda}+1)\psi_1(\frac{1}{\lambda} + 1)+\big\{\psi(\frac{1}{\lambda} + 1)+\ln(\lambda)\big\}^2 -1\big] & 0 & \frac{-\psi(1/\lambda + 1)-\ln(\lambda)}{\lambda} \\[1mm]
0 & \frac{\lambda^{2 - 2/\lambda}\Gamma(2-1/\lambda)}{\Gamma(1/\lambda)} & 0 \\[1mm]
\frac{-\psi(1/\lambda + 1)-\ln(\lambda)}{\lambda} & 0 & \lambda
\end{bsmallmatrix}$  \rule{0pt}{22pt}\vspace{.5mm} \cr
\hline
$\vcenter{\hbox{\shortstack{$\mathrm{EPD}_{\lambda}(\mu,\sigma)$ \\ \footnotesize MM-based}}}$  & $G_{\lambda}(0, 1)=
\begin{bsmallmatrix}
0 ~&~ h_{1}(\lambda) \\[1mm]
\frac{h_{2}(\lambda)}{\lambda^{1/\lambda-1} \Gamma(1/\lambda)} ~&~ 0
\end{bsmallmatrix}$, ~ $J_{\lambda}(0,1)=
\begin{bsmallmatrix}
0 & 2D_{\lambda}h_{4}(\lambda)\\[1mm]
\frac{h_{5}(\lambda)\Gamma(2/\lambda)}{\lambda^{1/\lambda}\Gamma(3/\lambda)} & 0
\end{bsmallmatrix}$, ~ $R_{\lambda}(0,1)=\begin{bsmallmatrix}
\frac{\Gamma(1/\lambda)}{\lambda^{2/\lambda}\Gamma(3/\lambda)} & 0 \\[1mm]
0 & 4 D_{\lambda}
\end{bsmallmatrix}$, $D_{\lambda}=\frac{\Gamma^2(\frac{3}{\lambda})}{\Gamma(\frac{1}{\lambda})\Gamma(\frac{5}{\lambda})-\Gamma^2(\frac{3}{\lambda})}$ \rule{0pt}{13pt}\vspace{.5mm}\cr
\hline
$\vcenter{\hbox{\shortstack{Half-EPD \\ $(\lambda,\sigma)$}}}$
& $G(\lambda,1)
=
\begin{bsmallmatrix}
\frac{h_{6}(1/\lambda,1/\lambda+1,1)-h_{15}(\lambda)}{\lambda^2} & h_{6}(1/\lambda, 1 / \lambda + 1, 1) \\[1mm]
\frac{h_{7}(1/\lambda,1/\lambda+1,1)-h_{16}(\lambda)}{\lambda^2} & h_{7}(1/\lambda, 1 / \lambda + 1, 1)
\end{bsmallmatrix}$, ~  $I(\lambda,1) =
\begin{bsmallmatrix}
\frac{1}{\lambda^3}\big[(\frac{1}{\lambda}+1)\psi_1(\frac{1}{\lambda} + 1)+\big\{\psi(\frac{1}{\lambda} + 1)+\ln(\lambda)\big\}^2 -1\big] ~ &  ~ \frac{-\psi(1/\lambda + 1)-\ln(\lambda)}{\lambda} \\[1mm]
\frac{-\psi(1/\lambda + 1)-\ln(\lambda)}{\lambda} ~ & ~ \lambda
\end{bsmallmatrix}$\rule{0pt}{15pt}\vspace{.5mm} \cr
\hline
$\mathrm{SN}(\lambda,\mu,\sigma)$ & $G(\lambda, 0, 1)=
\begin{bsmallmatrix}
2 h_{39}(\lambda)~&~ 2h_{41}(\lambda) ~&~ 2 h_{43}(\lambda)\\[1mm]
2 h_{40}(\lambda)~&~ 2 h_{42}(\lambda)~&~ 2 h_{44}(\lambda)
\end{bsmallmatrix}$, ~ $I(\lambda,0,1)=
\begin{bsmallmatrix}
2  h_{38}(\lambda) & \frac{2}{\sqrt{2\pi}(1+ \lambda^2)^{3/2}} - 2\lambda  h_{37}(\lambda) & - 2\lambda h_{38}(\lambda) \\[1mm]
\frac{2}{\sqrt{2\pi}(1+ \lambda^2)^{3/2}} - 2\lambda  h_{37}(\lambda)  & 1 + 2 \lambda^2 h_{36}(\lambda) &  \frac{2\lambda(1+2\lambda^2)}{\sqrt{2\pi}(1+ \lambda^2)^{3/2}} + 2\lambda^2  h_{37}(\lambda) \\[1mm]
- 2\lambda h_{38}(\lambda) & \frac{2\lambda(1+2\lambda^2)}{\sqrt{2\pi}(1+ \lambda^2)^{3/2}} + 2\lambda^2  h_{37}(\lambda) & 2\{1 + \lambda ^ 2 h_{38}(\lambda)\}
\end{bsmallmatrix}$\rule{0pt}{22pt}\vspace{.5mm} \cr
\hline
$\mathrm{GG}(\lambda,\beta, \rho)$ &  $G(\lambda, 1, 1)
=
\begin{bsmallmatrix}
h_{10}(\lambda) ~ & ~ \lambda h_{6}(\lambda,\lambda+1,1) ~&~ -h_{8}(\lambda) \\[1mm]
h_{11}(\lambda) ~ & ~ \lambda h_{7}(\lambda,\lambda+1,1) ~&~ -h_{9}(\lambda)
\end{bsmallmatrix}$, ~ $I(\lambda, 1, 1)
=
\begin{bsmallmatrix}
 \psi_1(\lambda) ~ & ~ 1 ~ & ~ -\psi(\lambda) \\[1mm]
1 ~ & ~ \lambda ~&~ -\lambda \psi(\lambda) - 1 \\[1mm]
-\psi(\lambda) ~ & ~ -\lambda \psi(\lambda) - 1 ~&~ \lambda \psi^2(\lambda) + 2 \psi(\lambda) + \lambda \psi_1(\lambda) + 1
\end{bsmallmatrix}$\rule{0pt}{15pt}\vspace{.5mm}\cr
\hline
$\vcenter{\hbox{\shortstack{Logistic \\ $(\mu,\sigma)$}}}$
& $G(0,1)
=
\begin{bsmallmatrix}
0 ~&~ 0.698397593884459 \\[1mm]
-1/\pi ~&~ 0
\end{bsmallmatrix}$, ~
$I(0,1)
=
\begin{bsmallmatrix}
1/3 ~&~ 0 \\[1mm]
0 ~&~ (3 + \pi^2)/9
\end{bsmallmatrix}$ \rule{0pt}{12pt}\vspace{1mm} \cr
\hline
$\vcenter{\hbox{\shortstack{Student's \\ $t(\lambda,\mu,\sigma)$}}}$
 & $G(\lambda,0,1)
=\begin{bsmallmatrix}
\frac{1}{2} h_{14}(\lambda) & 0 & h_{12}(\lambda) \\[1mm]
0 & \frac{2\Gamma(\frac{\lambda+1}{2})}{\sqrt{\lambda\pi}\Gamma(\lambda/2)} h_{13}(\lambda) & 0
\end{bsmallmatrix}$, ~ $I(\lambda,0,1)=
\begin{bsmallmatrix}
\frac{1}{4}\left\{\psi_1\left(\frac{\lambda}{2}\right)-\psi_1\left(\frac{\lambda+1}{2}\right)-\frac{2(\lambda+5)}{\lambda(\lambda+1)(\lambda+3)}\right\} & 0 & \frac{-2}{(\lambda+1)(\lambda+3)} \\[1mm]
0 & \frac{\lambda + 1}{\lambda + 3} & 0 \\[1mm]
\frac{-2}{(\lambda+1)(\lambda+3)} & 0 & \frac{2\lambda}{\lambda + 3}
\end{bsmallmatrix}$\rule{0pt}{19pt}\vspace{.5mm} \cr
\hline
$\vcenter{\hbox{\shortstack{Gompertz \\ $(\beta, \rho)$}}}$
 & $G(1,\rho)
=
\rho e^{\rho}
\begin{bsmallmatrix}
h_{19}(\rho) ~&~ -h_{21}(\rho) \\[1mm]
h_{20}(\rho) ~&~ -h_{22}(\rho)
\end{bsmallmatrix},
~
I(1,\rho)
=
\begin{bsmallmatrix}
1 + \rho^2 e^{\rho} h_{17}(\rho) ~&~ \rho e^{\rho} h_{18}(\rho) \\[1mm]
\rho e^{\rho} h_{18}(\rho) ~&~ \frac{1}{\rho^2}
\end{bsmallmatrix}$ \rule{0pt}{13pt}\vspace{.5mm} \cr
\hline
$\vcenter{\hbox{\shortstack{Lomax \\ $(\alpha, \sigma)$}}}$
 & $G(\alpha, 1)
=
\begin{bsmallmatrix}
\frac{-h_{6}(1,2,1)}{\alpha} ~&~ -\alpha h_{6}(1,1,\alpha/(\alpha+1)) \\[1mm]
\frac{-h_{7}(1,2,1)}{\alpha} ~&~ -\alpha h_{7}(1,1,\alpha/(\alpha+1))
\end{bsmallmatrix},
~
I(\alpha, 1)
=
\begin{bsmallmatrix}
\frac{1}{\alpha^2} ~&~ \frac{-1}{\alpha + 1} \\[1mm]
\frac{-1}{\alpha + 1} ~&~ \frac{\alpha}{\alpha + 2}
\end{bsmallmatrix}$\rule{0pt}{14pt}\vspace{.5mm} \cr
\hline
$\mathrm{IG}(\mu, \lambda)$
 & $G(\mu, \lambda)
=
\begin{bsmallmatrix}
\frac{\lambda}{\mu^3} h_{27}(\mu, \lambda) ~&~ - \frac{1}{2\mu^2} h_{29}(\mu, \lambda) \\[1mm]
\frac{\lambda}{\mu^3} h_{28}(\mu, \lambda) ~&~ - \frac{1}{2\mu^2} h_{30}(\mu, \lambda)
\end{bsmallmatrix}, ~
I(\mu, \lambda)
=
\begin{bsmallmatrix}
\frac{\lambda}{\mu^3} ~&~ 0 \\[1mm]
0 ~&~ \frac{1}{2\lambda^2}
\end{bsmallmatrix}$\rule{0pt}{15pt}\vspace{.5mm}\cr
\hline
$\mathrm{Beta}(\alpha, \beta)$ & $G(\alpha, \beta)
=
\begin{bsmallmatrix}
h_{23}(\alpha, \beta) ~&~ h_{25}(\alpha, \beta) \\[1mm]
h_{24}(\alpha, \beta) ~&~ h_{26}(\alpha, \beta)
\end{bsmallmatrix},
~
I(\alpha, \beta)
=
\begin{bsmallmatrix}
\psi_1(\alpha) - \psi_1(\alpha + \beta) ~&~ - \psi_1(\alpha + \beta) \\[1mm]
- \psi_1(\alpha + \beta) ~&~ \psi_1(\beta) - \psi_1(\alpha + \beta)
\end{bsmallmatrix}$\rule{0pt}{11pt}\vspace{.5mm} \cr
\hline
$\mathrm{Kum}(\alpha, \beta)$ & $G(1, \beta) = \beta
\begin{bsmallmatrix}
h_{31}(\beta) & h_{33}(\beta) \\[1mm]
h_{32}(\beta) & h_{34}(\beta)
\end{bsmallmatrix}$,  ~
 $I(1, \beta) =
\begin{bsmallmatrix}
1 + \frac{\beta}{\beta - 2} \left\{(\psi(\beta) -\psi(1)-1)^2 - \psi_1(\beta) + \pi^2/6 -1\right\}~ & \frac{\psi(\beta) -\psi(1) - 1 + 1 / \beta}{1 - \beta}\\[1mm]
\frac{\psi(\beta) -\psi(1) - 1 + 1 / \beta}{1 - \beta} & \frac{1}{\beta^2}
\end{bsmallmatrix}$, \rule{0pt}{15pt}\vspace{.5mm}\cr
&  ~ with  $[I(1, \beta=1)]_{1, 2} = 1 - \pi^2/6$  and $[I(1, \beta=2)]_{1, 1} = 1.80822761263836$ \cr
\hline
\end{tabular}
\end{center}
\vspace{-2mm}
\caption{Matrices used to compute the covariance matrix $\Sigma(\boldsymbol{\theta})$,
as defined in \eqref{eq:Sigma.ML} or \eqref{eq:Sigma.general}, and the normalizing scalar $V(\boldsymbol{\theta})$
defined in \eqref{eq:V.LK}, assuming that the parameter vector $\boldsymbol{\theta}$ is unknown.
If a given component $k$ of $\boldsymbol{\theta}$ is assumed known, it suffices to omit
column $k$ of $G$ (and of $J$ in the MM case), as well as row and column $k$ of $I$
(or of $R$). The constants $h_{i}(\cdot)$ are listed in Table~\ref{table:constants} in \ref{app:table.constants}.}
\label{table:G.I}
\end{table}

The first step in computing the $T_n$ and LK test statistics is to estimate the unknown nuisance parameters. The score equations or, where available, explicit estimators for each parameter of the 11 distribution families are given in Table~\ref{table:estimator} in \ref{app:table.constants}, assuming that all parameters are unknown, for the purpose of maximum likelihood estimation. If one or more parameters
are known, the corresponding score functions or estimators are simply omitted and the known values are used instead. In addition, method-of-moments (MM) estimators are provided for the EPD family as an alternative to maximum likelihood estimation, with the restriction that, in contrast to the ML case, the parameter $\lambda$ is always assumed to be known, as indicated by the notation $\mathrm{EPD}_{\lambda}(\mu,\sigma)$.

Next, the covariance matrix $\Sigma(\boldsymbol{\theta})$ for the $T_n$ test statistic must be computed, as defined in \eqref{eq:Sigma.ML} or \eqref{eq:Sigma.general}, as well as the normalizing scalar $V(\boldsymbol{\theta})$ for the LK test, defined in \eqref{eq:V.LK}. To this end, Table~\ref{table:G.I} reports the matrices $G(\boldsymbol{\theta})$ and
$I(\boldsymbol{\theta})$ for the ML-based tests, and the matrices $G(\boldsymbol{\theta})$, $J(\boldsymbol{\theta})$, and $R(\boldsymbol{\theta})$
for the MM-based case, assuming that all parameters are unknown. See Sections~\proofEPD to \proofuniform of the \ref{supp} for the derivation of all matrices reported in Table~\ref{table:G.I}.

If a given component $k$ of $\boldsymbol{\theta}$ is assumed known, it suffices to omit column $k$ of $G$ (and of $J$ in the MM case), as well as row and column $k$ of $I$
(or of $R$). This yields a different covariance matrix $\Sigma(\boldsymbol{\theta})$ for each configuration of known and unknown parameters, and consequently a different test. For most distribution families, the resulting matrix $\Sigma(\boldsymbol{\theta})$ depends only on a subset of the parameters. Therefore, to simplify the expressions,
the remaining parameters are set to standard values. For example, in the EPD case, we set $\mu=0$ and $\sigma=1$ in $G(\lambda,0,1)$ and $I(\lambda,0,1)$, reflecting the fact that $\Sigma(\boldsymbol{\theta})$ depends solely on $\lambda$.

To illustrate how the matrices $\Sigma(\boldsymbol{\theta})$ and $V(\boldsymbol{\theta})$
are computed, we now consider an example. Suppose that we wish to test
\vspace{-2mm}
\[
\mathcal{H}_0 : X_i \sim \mathrm{normal}(\mu_0,\sigma_0)
= \mathrm{EPD}(\lambda=2,\mu_0,\sigma_0),
\]
with $\boldsymbol{\theta}_0=[\mu_0,~\sigma_0]^{\top}$ unknown, using the ML-based $T_n$ and LK
tests. In this case, we omit column~1 of $G(\lambda,0,1)$ and row and column~1 of
$I(\lambda,0,1)$ provided in the first row of Table~\ref{table:G.I}, and set
$\lambda=2$, which yields
\[
G(\boldsymbol{\theta}) = [G(2,0,1)]_{1:2,2:3} =
\begin{bmatrix}
0 & h_{1}(2) \\[1mm]
h_{2}(2)\sqrt{2/\pi} & 0
\end{bmatrix}
\approx
\begin{bmatrix}
0 & 0.7421518289 \\[1mm]
-0.4638466719 & 0
\end{bmatrix},
\]
and $I(\boldsymbol{\theta}) = [I(2,0,1)]_{2:3,2:3} = \mathrm{diag}(1, 2)$.
The constants $h_{i}(\cdot)$ are listed in Table~\ref{table:constants} in \ref{app:table.constants} and are computed numerically using the \textsf{R} software (for example, using the function \texttt{cubature::adaptIntegrate}). Using equations \eqref{eq:Sigma.ML} and \eqref{eq:V.LK}, we then obtain
\[
\Sigma(\boldsymbol{\theta})
= \frac{1}{2} I_2
- G(\boldsymbol{\theta}) I(\boldsymbol{\theta})^{-1} G(\boldsymbol{\theta})^{\top}
\approx
\begin{bmatrix}
0.2246053314 & 0 \\[1mm]
0 & 0.284846265
\end{bmatrix},
\qquad
V^{-1}(\boldsymbol{\theta})
= \frac{1}{\mathrm{tr}\{\Sigma(\boldsymbol{\theta})\}}
\approx 1.962895017.
\]
To perform the same normality test while now assuming that $\mu$ is known and $\sigma$ is unknown, it suffices to omit columns~1 and~2 of $G(\lambda,0,1)$ and rows and columns~1 and~2 of $I(\lambda,0,1)$ provided in the first row of Table~\ref{table:G.I}, and set $\lambda=2$. This yields
\[
G(\boldsymbol{\theta})
\approx
\begin{bmatrix}
0.7421518289 \\[1mm]
0
\end{bmatrix},
\qquad
I(\boldsymbol{\theta}) = 2,
\qquad
\Sigma(\boldsymbol{\theta})
\approx
\begin{bmatrix}
0.2246053314 & 0 \\[1mm]
0 & 0.5
\end{bmatrix},
\qquad
V^{-1}(\boldsymbol{\theta})
\approx 1.38006161.
\]
A third possibility is to perform the normality test while assuming that $\mu$ is unknown and $\sigma$ is known. In this case, it suffices to omit columns~1 and~3 of $G(\lambda,0,1)$ and rows and columns~1 and~3 of $I(\lambda,0,1)$, and to set $\lambda = 2$. This yields
\[
G(\boldsymbol{\theta})
\approx
\begin{bmatrix}
0 \\[1mm]
-0.4638466719
\end{bmatrix},
\qquad
I(\boldsymbol{\theta}) = 1,
\qquad
\Sigma(\boldsymbol{\theta})
\approx
\begin{bmatrix}
0.5 & 0 \\[1mm]
0 & 0.284846265
\end{bmatrix},
\qquad
V^{-1}(\boldsymbol{\theta})
\approx 1.274134878.
\]
Note that the three values of $V^{-1}(\boldsymbol{\theta})$ computed for the three versions of the normality test coincide with the values of $V^{-1}$ reported in Table~1 of \citet{MR1146353}.

\section{Empirical size and power}\label{sec:empirical.simulations}

\subsection{Empirical size under \texorpdfstring{$\mathcal{H}_0$}{H0}}\label{sec:empirical.size.analysis}

In this section, we show in Tables~\ref{table:empirical.size.1} and~\ref{table:empirical.size.2} that the chi-square $\chi_2^2$ approximation to the distributions of the $T_n$ and LK statistics is remarkably accurate even for small sample sizes such as $n=30$, with the accuracy naturally improving as the sample size increases (e.g., $n=100$). This level of accuracy was studied across a wide range of distributions, including the normal (also covering the log-normal), logistic (also covering the log-logistic), Student's $t$, exponential (also covering the Rayleigh and Pareto), gamma (also covering the inverse-gamma and Nakagami), and Weibull (also covering the Fr\'echet and Gumbel) distributions. As a result, critical values and $p$-values for the $T_n$ and LK tests can be computed with high precision without relying on Monte Carlo simulations or tabulated values.

\begin{table}[!htbp]
\centering
\footnotesize
\setlength{\tabcolsep}{3pt}

\begin{tabular}{l *{18}{c}}
\toprule
 & \multicolumn{6}{c}{Normal ($\mu_0$ and $\sigma_0$ unknown)}
 & \multicolumn{6}{c}{Logistic ($\mu_0$ and $\sigma_0$ unknown)}
 & \multicolumn{6}{c}{Student's $t$ ($\lambda_0=4$ known, $\mu_0$ and $\sigma_0$ unknown)} \\
\cmidrule(lr){2-7}\cmidrule(lr){8-13}\cmidrule(lr){14-19}

 & \multicolumn{3}{c}{$T_n$} & \multicolumn{3}{c}{LK}
 & \multicolumn{3}{c}{$T_n$} & \multicolumn{3}{c}{LK}
 & \multicolumn{3}{c}{$T_n$} & \multicolumn{3}{c}{LK} \\
\cmidrule(lr){2-4}\cmidrule(lr){5-7}
\cmidrule(lr){8-10}\cmidrule(lr){11-13}
\cmidrule(lr){14-16}\cmidrule(lr){17-19}

$n$
 & 30 & 100 & $\infty$ & 30 & 100 & $\infty$
 & 30 & 100 & $\infty$ & 30 & 100 & $\infty$
 & 30 & 100 & $\infty$ & 30 & 100 & $\infty$ \\
\midrule

10\%
 & 10.2 & 10.1 & 10.0 & 10.3 & 10.1 & 10.0
 & 10.1 & 10.1 & 10.0 & 10.3 & 10.1 & 10.0
 & 10.1 & 10.1 & 10.0 & 10.2 & 10.2 & 10.1 \\

5\%
 & 5.0 & 5.0 & 5.0 & 5.1 & 5.1 & 5.1
 & 5.0 & 5.0 & 5.0 & 5.1 & 5.0 & 5.0
 & 5.0 & 5.0 & 5.0 & 5.2 & 5.1 & 5.1 \\

1\%
 & 0.9 & 1.0 & 1.0 & 1.0 & 1.0 & 1.0
 & 0.9 & 1.0 & 1.0 & 1.0 & 1.0 & 1.0
 & 0.9 & 1.0 & 1.0 & 1.0 & 1.1 & 1.1 \\

\bottomrule
\end{tabular}

\caption{Empirical rejection frequencies (\%) for the normal (also covering the log-normal), logistic (also covering the log-logistic), and Student's $t(\lambda_0=4$ known) distributions.}
\label{table:empirical.size.1}
\end{table}

\begin{table}[!htbp]
\centering
\footnotesize
\setlength{\tabcolsep}{3pt}

\begin{tabular}{l *{18}{c}}
\toprule
 & \multicolumn{6}{c}{Exponential ($\beta_0$ unknown)}
 & \multicolumn{6}{c}{Gamma ($\lambda_0$ and $\beta_0$ unknown)}
 & \multicolumn{6}{c}{Weibull ($\beta_0$ and $\rho_0$ unknown)} \\
\cmidrule(lr){2-7}\cmidrule{8-13}\cmidrule{14-19}

 & \multicolumn{3}{c}{$T_n$} & \multicolumn{3}{c}{LK}
 & \multicolumn{3}{c}{$T_n$} & \multicolumn{3}{c}{LK}
 & \multicolumn{3}{c}{$T_n$} & \multicolumn{3}{c}{LK} \\
\cmidrule(lr){2-4}\cmidrule(lr){5-7}
\cmidrule(lr){8-10}\cmidrule(lr){11-13}
\cmidrule(lr){14-16}\cmidrule(lr){17-19}

$n$
 & 30 & 100 & $\infty$ & 30 & 100 & $\infty$
 & 30 & 100 & $\infty$ & 30 & 100 & $\infty$
 & 30 & 100 & $\infty$ & 30 & 100 & $\infty$ \\
\midrule

10\%
 & 10.2 & 9.9 & 10.0 & 10.4 & 10.0 & 10.1
 & 10.1 & 10.1 & 10.0 & 10.3 & 10.0 & 10.1
 & 9.9 & 9.9 & 10.0 & 10.0 & 10.0 & 10.1 \\

5\%
 & 5.0 & 4.9 & 5.0 & 5.3 & 5.2 & 5.2
 & 4.8 & 5.0 & 5.0 & 5.1 & 5.1 & 5.1
 & 4.9 & 4.9 & 5.0 & 5.0 & 5.0 & 5.1 \\

1\%
 & 0.8 & 1.0 & 1.0 & 1.1 & 1.1 & 1.2
 & 0.8 & 1.0 & 1.0 & 1.0 & 1.1 & 1.2
 & 0.8 & 1.0 & 1.0 & 1.0 & 1.1 & 1.1 \\

\bottomrule
\end{tabular}

\caption{Empirical rejection frequencies (\%) for the exponential (also covering the Rayleigh and Pareto), gamma (also covering the inverse-gamma and Nakagami), and Weibull (also covering the Fr\'echet and Gumbel) distributions.}
\label{table:empirical.size.2}
\end{table}

The empirical rejection probabilities under $\mathcal{H}_0$ were obtained by simulating samples of sizes $30$ and $100$ and computing the test statistics of the $T_n$
and LK tests. For each test, we estimated the empirical sizes at the $1\%$, $5\%$, and $10\%$ nominal levels by comparing the test statistics with the
corresponding $\chi_2^2$ critical values. These Monte Carlo rejection rates, based on $100{,}000$ replications, are reported in Tables~\ref{table:empirical.size.1} and~\ref{table:empirical.size.2}. The empirical infinite-sample sizes are also reported for the $T_n$ test (which equal $1\%$, $5\%$, and $10\%$ due to its $\chi_2^2$ limiting distribution) and for the LK test, the latter being computed by simulation since, under $\mathcal{H}_0$, $\sqrt{n}[C_n(\hat{\bb{\theta}}_n), S_n(\hat{\bb{\theta}}_n)]^{\top}\rightsquigarrow \mathcal{N}_2(\bb{0}_2, \Sigma(\bb{\theta}_0))$. We observe that although the LK statistic is not exactly asymptotically $\chi_2^2$-distributed, the chi-square approximation remains very accurate.

In Table~\ref{table:empirical.size.1}, the specific values of the location--scale parameters $\mu_0$ and $\sigma_0$ (normal/log-normal, logistic/log-logistic, Student's $t$) used to generate the data under $\mathcal{H}_0$ are irrelevant, since the test statistics $T_n$ and LK are location--scale invariant.
In Table~\ref{table:empirical.size.2}, the generalized gamma distribution $\mathrm{GG}(\lambda_0,\beta_0,\rho_0)$ is examined under three specific cases:
(1) $\lambda_0=1$ known, $\beta_0$ unknown, and $\rho_0$ known, which covers the exponential, Rayleigh, and Pareto families; (2) $\lambda_0$ unknown, $\beta_0$ unknown, and $\rho_0$ known, which covers the gamma, inverse-gamma, and Nakagami families; and (3) $\lambda_0=1$ known, $\beta_0$ unknown, and $\rho_0$ unknown, which covers the Weibull, Fr\'echet, and Gumbel families. The particular choices of the scale and power parameters $\beta_0$ and $\rho_0$ used in the simulations under $\mathcal{H}_0$ are likewise irrelevant, because the statistics $T_n$ and LK are scale- and power-invariant for the GG test.

\subsection{Empirical power under \texorpdfstring{$\mathcal{H}_1$}{H1}}\label{sec:empirical.power.analysis}

Because the $T_n$ and LK goodness-of-fit tests introduced herein apply to 11 distribution families (EPD, GG, etc.) including many specific cases (normal, Laplace, exponential, etc.), yielding a total of 53 distinct tests once all configurations of known and unknown nuisance parameters are taken into account, a fully exhaustive power study is beyond the scope of this paper. Moreover, even for a single test, a meaningful power analysis requires examining a wide spectrum of carefully selected alternatives that
capture as fully as possible the range of plausible deviations from the null, as well as comparing the procedure with all relevant competing tests available
in the literature. Nevertheless, evaluating power against a limited set of alternatives and a small number of competitors remains informative for
illustrating that the $T_n$ and LK tests behave as expected, while keeping in mind that definitive conclusions require a more comprehensive investigation.

In this section, we conduct a simulation study to compare the power of the $T_n$ and LK tests with classical goodness-of-fit procedures based on the
probability integral transform $F(X_i \nvert \hat{\boldsymbol{\theta}}_n)$, namely the Anderson--Darling (AD), Cram\'{e}r--von Mises (CvM), Kuiper (Ku), and Watson (Wa) test statistics, given respectively by
\begin{align*}
\mathrm{AD}
& = -n - \frac{1}{n} \sum_{i=1}^{n}
\left[
(2i-1)\,\ln F(X_{(i)} \nvert \hat{\boldsymbol{\theta}}_n)
+ (2n+1-2i)\,\ln\!\left\{1 - F(X_{(i)} \nvert \hat{\boldsymbol{\theta}}_n)\right\}
\right],\\
\mathrm{CvM}
&= \frac{1}{12n}
  + \sum_{i=1}^{n} \left\{
      \frac{2i-1}{2n} - F(X_{(i)} \nvert \hat{\boldsymbol{\theta}}_n)
    \right\}^2,\\
\mathrm{Ku}
&= \max_{1 \le i \le n} \left\{
      \frac{i}{n} - F(X_{(i)} \nvert \hat{\boldsymbol{\theta}}_n)
    \right\}
  + \max_{1 \le i \le n} \left\{
      F(X_{(i)} \nvert \hat{\boldsymbol{\theta}}_n) - \frac{i-1}{n}
    \right\},\\
\mathrm{Wa}
&= \mathrm{CvM}
   - n\left\{
       \frac{1}{n} \sum_{i=1}^n F(X_i \nvert \hat{\boldsymbol{\theta}}_n)
       - \frac{1}{2}
     \right\}^2 .
\end{align*}

The study is conducted with a sample size of $n=50$ and a significance level of $\alpha=0.05$. We evaluate empirical powers across three families of null
distributions: (1) $\mathcal{H}_0 : X_i \sim \mathrm{normal}(\mu_0, \sigma_0)$, (2) $\mathcal{H}_0 : X_i \sim \mathrm{Student's}\, t(\lambda_0 = 2, \mu_0, \sigma_0)$, and (3) $\mathcal{H}_0 : X_i \sim \mathrm{exponential}(\beta_0)$, with all parameters treated as unknown except for the Student's $t$ distribution, where $\lambda_0 = 2$ is assumed known. The unknown nuisance parameters are replaced by their maximum likelihood estimators under the null hypothesis; see Table~\ref{table:estimator}. Critical values for each of the three families were computed using 1{,}000{,}000 simulations for each of the six goodness-of-fit tests, for $n=50$ and $\alpha = 0.05$. They are reported in Table~\ref{table:crit50}. We observe that the simulated critical values for the LK test are very close to the chi-square quantile $\chi^2_{2,\,0.95} = 5.99146$, and those of the $T_n$ test are even closer. This again illustrates that the chi-square approximation is excellent for both $T_n$ and LK, even for relatively small samples such as $n = 50$.

\begin{table}[H]
\centering
\footnotesize
\setlength{\tabcolsep}{6pt}

\begin{tabular}{lcccccc}
\toprule
Null distribution & $T_n$ & LK & AD & CvM & Ku & Wa \\
\midrule
Normal$(\mu_0,\sigma_0)$
 & 5.98199 & 6.00978 & 0.74615 & 0.12539 & 0.20554 & 0.11611 \\

Student's $t_2(\mu_0,\sigma_0)$
 & 5.97522 & 6.04111 & 0.81509 & 0.09768 & 0.17408 & 0.06979 \\

Exponential$(\beta_0)$
 & 5.97462 & 6.08453 & 1.30988 & 0.22017 & 0.22793 & 0.15833 \\
\bottomrule
\end{tabular}

\caption{Critical values at $\alpha = 0.05$ for $n=50$, obtained from 1{,}000{,}000 simulations for each goodness-of-fit test.}
\label{table:crit50}
\end{table}

Regarding the choice of alternatives, we followed the approach used in \citet{MR4512291}, selecting families of alternatives rather than a single and
somewhat arbitrary alternative. For example, when testing normality, we considered the family of EPD$(\lambda,\mu,\sigma)$ distributions, which offers
a wide range of tail behaviors through variation of $\lambda$, and which includes the normal distribution as the special case $\lambda = 2$. Instead of
choosing a single member of this family, for instance the Laplace distribution with $\lambda = 1$ as a heavy-tailed alternative, we examined regularly spaced
values of $\lambda$ over the interval $0.4 \leq \lambda \leq 2$ in order to generate a full power curve. Power was computed using 10,000 simulations for each grid
value (in this case $\lambda = 0.4, 0.5, 0.6, \ldots, 1.9, 2$).
We also considered short-tail alternatives within the EPD family by varying $\lambda$ from $2$ to $25$, as well as asymmetric alternatives by varying the
asymmetry parameter $\alpha$ of the asymmetric power distribution (APD; defined in \eqref{def:APD} to follow)  from $0.5$ to $1$. The same methodology was applied to the Student's $t$ case, using the skew normal family as asymmetric alternatives, and to the exponential case, using log-normal and inverse-Gaussian families as alternatives; see Table~\ref{table:density.CDF}.

We then summarized the overall behavior by averaging the resulting powers, i.e., by
computing the area under each power curve divided by the length of the parameter range, and these averages are reported in Tables~\ref{table:power-normal},~\ref{table:power-student}~and~\ref{table:power-exponential}. The full power
curves themselves are provided in Section~\powercurves of the \ref{supp}. To aid interpretation, the tables report the ``gap with best,'' defined as the difference between a given test's average power and the highest average power achieved among all $6$ tests.
The overall summary of empirical powers across the three families of null distributions is presented in Table~\ref{table:power-summary}. For each test,
the average power over the ten families of alternatives was computed to obtain a single overall measure. The $T_n$ test achieves the highest mean power at 59.9\%, followed by three close competitors: the LK test at 56.9\% (a gap of 3.0\%), the Watson test at 56.6\% (a gap of 3.3\%), and the AD test at 55.9\% (a gap of 4.0\%). As mentioned above, these results illustrate that the $T_n$ and LK tests perform well, while recognizing that definitive conclusions would require a more extensive study.

When examining the tests individually, the two highest average powers for normality are obtained by AD (42.4\%) and $T_n$ (40.7\%).
For the Student's $t_2$ test, the best performance is achieved by $T_n$ (60.6\%), followed by LK (55.8\%) and Watson (54.0\%). For the exponential test, AD (74.6\%) and $T_n$ (73.8\%) again yield the strongest results. Interestingly, the gain in power obtained by using the $T_n$ test over the LK test averages 3.0\% across all settings, with respective improvements of 1.4\%, 4.8\%, and 2.9\% for the normality, Student's $t_2$, and exponential tests.

\begin{table}[!htbp]
\centering
\footnotesize
\setlength{\tabcolsep}{5pt}

\begin{tabular}{lcccccc}
\toprule
Alternative $\mathcal{H}_1$ & $T_n$ & LK & AD & CvM & Kuiper & Watson \\
\midrule
$\mathrm{EPD}(0.4  \leq \lambda\leq 2,\,\mu,\sigma)$ & 43.8 & 42.6 & 42.6 & 42.1 & 40.5 & 42.4 \\
$\mathrm{EPD}(2 \leq \lambda \leq 25,\,\mu,\sigma)$ & 48.4 & 43.6 & 45.7 & 36.8 & 36.4 & 40.7 \\
$\mathrm{APD}_{\lambda=2}(0.5 \leq \alpha < 1,\rho=2,\mu,\sigma)$ & 29.8 & 31.5 & 38.9 & 34.7 & 27.2 & 30.4 \\
\textbf{Average power} & \textbf{40.7} & \textbf{39.3} & \textbf{42.4} & \textbf{37.9} & \textbf{34.7} & \textbf{37.8} \\
\textbf{Gap with best} & \textbf{1.7} & \textbf{3.1} & \textbf{0.0} & \textbf{4.5} & \textbf{7.7} & \textbf{4.6} \\
\bottomrule
\end{tabular}
\caption{Average empirical powers (\%) for testing $\mathcal{H}_0: X_i \sim \mathrm{normal}(\mu_0,\sigma_0\ \text{unknown})$ under EPD and APD (defined in \eqref{def:APD}) alternatives, for a sample size of $n=50$ and a significance level of $0.05$. The power results are invariant to the specific values of $\mu$ and $\sigma$ under $\mathcal{H}_1$.}
\label{table:power-normal}
\end{table}

\begin{table}[!htbp]
\centering
\footnotesize
\setlength{\tabcolsep}{5pt}

\begin{tabular}{lcccccc}
\toprule
Alternative $\mathcal{H}_1$ & $T_n$ & LK & AD & CvM & Kuiper & Watson \\
\midrule
$\mathrm{EPD}(0.25 \leq \lambda \leq 0.8,\mu,\sigma)$ & 44.5 & 40.2 & 34.0 & 37.7 & 41.3 & 45.1 \\
$\mathrm{EPD}(1.2 \leq \lambda \leq 10,\mu,\sigma)$ & 72.3 & 65.2 & 43.1 & 42.1 & 49.3 & 62.7 \\
$\mathrm{SN}(0 \leq \lambda \leq 16,\mu,\sigma)$ & 64.9 & 62.0 & 56.7 & 55.8 & 45.7 & 54.1 \\
\textbf{Average power} & \textbf{60.6} & \textbf{55.8} & \textbf{44.6} & \textbf{45.2} & \textbf{45.5} & \textbf{54.0} \\
\textbf{Gap with best} & \textbf{0.0} & \textbf{4.8} & \textbf{16.0} & \textbf{15.4} & \textbf{15.1} & \textbf{6.6} \\
\bottomrule
\end{tabular}
\caption{Average empirical powers (\%) for testing $\mathcal{H}_0: X_i \sim \mathrm{Student's}\, t(\lambda_0=2\,\text{known},\mu_0,\sigma_0\ \text{unknown})$ under EPD and skew normal (SN) alternatives, for a sample size of $n=50$ and a significance level of $0.05$. The power results are invariant to the specific values of $\mu$ and $\sigma$ under $\mathcal{H}_1$.}
\label{table:power-student}
\end{table}

\begin{table}[!htbp]
\centering
\footnotesize
\setlength{\tabcolsep}{5pt}

\begin{tabular}{lcccccc}
\toprule
Alternative $\mathcal{H}_1$ & $T_n$ & LK & AD & CvM & Kuiper & Watson \\
\midrule
$\mathrm{log}\mhyphen\mathrm{normal}(\mu,0.6 \leq \sigma \leq 1)$  & 75.8 &  74.0  & 75.9  & 70.4  & 71.7  &  75.2   \\
$\mathrm{log}\mhyphen\mathrm{normal}(\mu,1 \leq \sigma \leq 1.8)$ & 71.7 &  70.2 &  76.0 &  76.3 &  68.4 & 70.8  \\
$\mathrm{inverse}\mhyphen\mathrm{Gaussian}(\mu=1,0.1 \leq \lambda \leq 0.7)$  & 74.3 & 69.2 & 70.0 & 68.6 & 67.8 & 70.3 \\
$\mathrm{inverse}\mhyphen\mathrm{Gaussian}(\mu=1,0.7 \leq \lambda \leq 1.5)$  & 73.3 & 70.2 & 76.5 & 65.7 & 70.7 & 74.0 \\
\textbf{Average power} & \textbf{73.8} & \textbf{70.9} & \textbf{74.6} & \textbf{70.3} & \textbf{69.7} & \textbf{72.6} \\
\textbf{Gap with best} & \textbf{0.8} & \textbf{3.7} & \textbf{0.0} & \textbf{4.3} & \textbf{4.9} & \textbf{2.0} \\
\bottomrule
\end{tabular}
\caption{Average empirical powers (\%) for testing $\mathcal{H}_0: X_i \sim \mathrm{exponential}(\beta_0\,\text{unknown})$ under log-normal and inverse-Gaussian alternatives, for a sample size of $n=50$ and a significance level of $0.05$. For the log-normal alternatives, the power results are invariant to the specific value of $\mu$ under $\mathcal{H}_1$.}
\label{table:power-exponential}
\end{table}

\begin{table}[!htbp]
\centering
\footnotesize
\setlength{\tabcolsep}{5pt}

\begin{tabular}{lcccccc}
\toprule
 & $T_n$ & LK & AD & CvM & Kuiper & Watson \\
\midrule
\textbf{Overall average power} & \textbf{59.9} & \textbf{56.9} & \textbf{55.9} & \textbf{53.0} & \textbf{51.9} & \textbf{56.6} \\
\textbf{Gap with best} & \textbf{0.0} & \textbf{3.0} & \textbf{4.0} & \textbf{6.9} & \textbf{8.0} & \textbf{3.3} \\
\bottomrule
\end{tabular}

\caption{Overall summary of empirical powers across the three families of null distributions.}
\label{table:power-summary}
\end{table}

\subsection{Empirical power for the Laplace distribution}\label{sec:empirical.power.analysis.Laplace}

In this section, we revisit the comprehensive simulation study of goodness-of-fit tests
for the Laplace distribution presented by \citet{MR4512291}, assuming that both $\mu$ and
$\sigma$ are unknown. This Monte Carlo study evaluated $40$ competing goodness-of-fit
tests against a broad set of $400$ alternative distributions, including both symmetric
and asymmetric cases with light and heavy tails, for sample sizes
$n \in \{20, 50, 100, 200\}$ and a significance level $\alpha = 0.05$.
In particular, it included the MM-based LK test proposed by \citet{MR1146353}.
Our new MM-based Laplace test $T_n$ was added to this set of contenders, increasing the
total number of tests considered to $41$. We deliberately chose the MM-based version of
$T_n$, rather than its ML-based counterpart, in order to ensure consistency with the
MM-based LK test already included in the original study.

Table~\ref{table:power.Laplace} summarizes the results, listing the $12$ most powerful tests for each sample size. The results clearly demonstrate the strength of our new procedure. Our test $T_n$ (highlighted in dark gray) ranks 12th, 4th, 1st, and 1st out of 41 contenders for the sample sizes $n = 20$, 50, 100, and 200, respectively. The last two rows of Table~\ref{table:power.Laplace} summarize the gaps across the four sample sizes considered for each of the leading tests, reporting both the maximum and the average of these four gaps. Overall, when all sample sizes are considered, our test $T_n$ emerges as the most powerful test on average among all 41 competitors, with an average power gap of $1.2\%$. Furthermore, $T_n$ improves average power by margins of 1.5\%, 3.2\%, 3.1\%, and 1.8\% compared to the LK test (in light gray), which ranks 16th, 9th, 3rd, and 4th, respectively, for $n = 20$, 50, 100, and 200.

\newcolumntype{C}{>{\centering\arraybackslash}p{0.94cm}}

\smallskip
\begin{table}[!htbp]
\scriptsize\def\arraystretch{1.1}
\begin{center}
\setlength\tabcolsep{1.7pt} 
\begin{tabular}{C|C|C|C|C|C|C|C|C|C|C|C|C}
\hline
$n \!=\! 20$ & $\hspace{-0.2mm}\!\text{AP}_y$ & $\text{AP}_e$ & $\text{AP}_z$ & $\text{AP}_v$ & $\text{CK}_v$ & $\hspace{-0.2mm}\!\text{AP}_y^{\normalfont \text{(MLE)}}$ & $\text{Me}_2^{(1)}$ & $\text{AP}_a$ & Wa & $X^{\mathrm{APD}}_{1}$ & $\text{Me}_{0.5}^{\scriptscriptstyle (2)}$ & \Tn{\bf $T_n$} \\
{Power} & 47.0 & 46.4 & 46.2 & 45.3 & 45.2 & 44.9 & 44.7 & 44.5 & 44.2 & 44.0 & 43.7 & \Tn{\bf 43.5} \\
{Gap} & 0 & 0.6 & 0.8 & 1.7 & 1.8 & 2.1 & 2.3 & 2.5 & 2.8 & 3.0 & 3.3 & \Tn{\bf 3.5} \\
\hline
$n \!=\! 50$ & $\text{AP}_v$ & $\text{AP}_e$ & $\hspace{-0.2mm}\!\text{AP}_y^{\scriptscriptstyle (\text{MLE})}$ & \Tn{\bf $T_n$} & $X^{\mathrm{APD}}_{1}$ & $\hspace{-0.2mm}\!\text{AP}_y$ & $\text{Me}_{0.5}^{\scriptscriptstyle (2)}$ & $\text{Me}_2^{\scriptscriptstyle (1)}$ & \LK{LK} & Wa & $\text{CK}_v$ & $\text{CK}_c$ \\
{Power} & 63.7 & 62.6 & 62.3 & \Tn{\bf 62.2} & 60.8 & 60.4 & 60.4 & 60.1 & \LK{\bf 59.0} & 58.9 & 58.8 & 58.2\\
{Gap} & 0 & 1.1 & 1.4 & \Tn{\bf 1.5} & 2.9 & 3.3 & 3.3 & 3.6 & \LK{\bf 4.7} & 4.8 & 4.9& 5.5 \\
\hline
$n \!=\! 100$ & \Tn{\bf $T_n$} & $X^{\mathrm{APD}}_{1}$ & \LK{LK} & Wa & $\text{AP}_v$ & $\text{Me}_{0.5}^{\scriptscriptstyle (2)}$ & $\text{Me}_2^{\scriptscriptstyle (1)}$ & $\hspace{-0.2mm}\!\text{AP}_y^{\normalfont \text{(MLE)}}$ & $\text{AP}_e$ & $\text{AB}_{\text{He}}$ & Ku & AJ \\
{Power} & \Tn{\bf 75.4} & 74.3 & \LK{\bf 72.3} & 71.8 & 71.7 & 71.7 & 71.4 & 69.0 & 68.6 & 68.1 & 67.8 & 67.8\\
{Gap} & \Tn{\bf 0} & 1.1 & \LK{\bf 3.1} & 3.6 & 3.7 & 3.7 & 4.0 & 6.4 & 6.8 & 7.3 & 7.6 & 7.6 \\
\hline
$n \!=\! 200$ & \Tn{\bf $T_n$} & $X^{\mathrm{APD}}_{1}$ & Wa & \LK{LK} & $\text{Me}_2^{\scriptscriptstyle (1)}$ & $\text{Me}_{0.5}^{\scriptscriptstyle (2)}$ & Ku & $\text{AB}_{\text{He}}$ & BS & $\text{AB}_{\text{Je}}$ & $\text{AP}_v$ & $Z(K_1^{\text{net}})$ \\
{Power} & \Tn{\bf 82.1} & 81.8 & 80.7 & \LK{\bf 80.3} & 79.5 & 79.1 & 78.0 & 77.3 & 77.1 & 76.8 & 76.8 & 76.3 \\
{Gap} & \Tn{\bf 0} & 0.3 & 1.4 & \LK{\bf 1.8} & 2.6 & 3.0 & 4.1 & 4.8 & 5.0 & 5.3 & 5.3 & 5.8 \\
\hline
\hline
{Max} & $X^{\mathrm{APD}}_{1}$ & \Tn{\bf $T_n$} & $\text{Me}_{0.5}^{(2)}$ & $\text{Me}_2^{(1)}$ & Wa & \LK{LK} & $\text{AP}_v$ & $\text{AB}_{\text{He}}$ & $\hspace{-0.2mm}\!\text{AP}_y^{\normalfont \text{(MLE)}}$ & $\text{Z}_A$ & $\text{AP}_e$ & Ku \\
{Gap} & 3.0 & \Tn{\bf 3.5} & 3.7 & 4.0 & 4.7 & \LK{\bf 5.0} & 5.3 & 7.3 & 7.9 & 8.0 & 8.2 & 8.3 \\
\hline
{Average} & \Tn{\bf $T_n$} & $X^{\mathrm{APD}}_{1}$ & $\text{AP}_v$ & $\text{Me}_2^{(1)}$ & Wa & $\text{Me}_{0.5}^{(2)}$ & \LK{LK} & $\text{AP}_e$ & $\hspace{-0.2mm}\!\text{AP}_y^{\normalfont \text{(MLE)}}$ & $\hspace{-0.2mm}\!\text{AP}_y$ & Ku & $\text{AB}_{\text{He}}$ \\
{Gap} & \Tn{\bf 1.2} & 1.8 & 2.7 & 3.1 & 3.1 & 3.3 & \LK{\bf 3.6} & 4.2 & 4.5 & 6.1 & 6.2 & 6.6 \\
\hline
\end{tabular}
\end{center}
\caption{The average power \% for the 12 best performing Laplace tests (among 41) against a selection of $400$ alternatives, as a function of sample size ($\alpha=0.05$). The labels, formulas, and references for each test are all detailed by \citet{MR4512291}.}
\label{table:power.Laplace}
\end{table}

\vspace{-3mm}
\section{Asymptotic power under sequences of local alternatives}\label{sec:asymp.local.alternatives}

This section examines the asymptotic power of the test statistic $T_n(\hat{\bb{\theta}}_n)$ under sequences of local alternatives. Asymptotic power curves are provided for the gamma and exponential power distribution (EPD) tests with unknown nuisance parameters. In each case, a sequence of local alternatives is constructed by embedding the null distribution within a more general parametric family, which allows us to apply the theoretical framework of \citet[Theorem~2]{MR4906044} and derive the noncentrality parameter of the limiting chi-square distribution of the test statistic. The performance of the test $T_n(\hat{\bb{\theta}}_n)$ is compared with that of a natural benchmark provided by Rao's score test (also known as the Lagrange multiplier test), as well as with its asymptotically equivalent counterpart, the generalized likelihood ratio test (GLRT), for which the limiting distribution is also derived under the null hypothesis and under sequences of local alternatives.

Given that the null distribution $F(\cdot \nvert\bb{\theta})$ is embedded within a more general parametric family
$H(\cdot \nvert\bb{\phi},\bb{\theta})$ (with density denoted by $h$), the null hypothesis can be written as
\[
\mathcal{H}_0 : X_1, \ldots, X_n \sim F(\cdot \nvert\bb{\theta}_0)=H(\cdot \nvert\bb{\phi}_0,\bb{\theta}_0),
\]
where $\bb{\phi}_0$ is a fixed and known parameter vector chosen to correspond to a
specific case of the family $H(\bb{\phi},\bb{\theta})$ and $\bb{\theta}_0$ denotes the unknown parameter vector of the null distribution. The sequence of local alternative hypotheses is defined by
\vspace{-1mm}
\[
\mathcal{H}_{1,n}(\bb{\delta}) : X_1, \ldots, X_n \sim
H(\cdot \nvert\bb{\phi}_n,\bb{\theta}_0), ~\text{ with } ~ \bb{\phi}_n=\bb{\phi}_0 + \frac{\bb{\delta}}{\sqrt{n}}\{1 + o(1)\},
\]
where $n\in \N$ and $\bb{\delta}$ is a fixed, nonzero real vector of the same dimension as $\bb{\phi}_0$.

Consider the score vectors and their corresponding Fisher information blocks
\begin{align*}
&\bb{s}_{\bb{\phi}}(x, \bb{\phi}_0,\bb{\theta})
  = \partial_{\bb{\phi}} \ln h(x \nvert \bb{\phi},\bb{\theta})|_{\bb{\phi}=\bb{\phi}_0}, \qquad
 \bb{s}(x,\bb{\theta})\equiv \bb{s}_{\bb{\theta}}(x,\bb{\phi}_0,\bb{\theta})= \partial_{\bb{\theta}} \ln h(x \nvert \bb{\phi}_0,\bb{\theta}),\\[2mm]
& I_{\phi,\phi}(\bb{\phi}_0,\bb{\theta}) = \EE\{\bb{s}_{\bb{\phi}}(X,\bb{\phi}_0,\bb{\theta})\,\bb{s}_{\bb{\phi}}(X,\bb{\phi}_0,\bb{\theta})^{\top}\}, \qquad
I(\bb{\theta})\equiv I_{\theta,\theta}(\bb{\phi}_0,\bb{\theta}) = \EE\{\bb{s}_{\bb{\theta}}(X,\bb{\phi}_0,\bb{\theta})\,\bb{s}_{\bb{\theta}}(X,\bb{\phi}_0,\bb{\theta})^{\top}\},\\[2mm]
& I_{\phi,\theta}(\bb{\phi}_0,\bb{\theta}) = \EE\{\bb{s}_{\bb{\phi}}(X,\bb{\phi}_0,\bb{\theta})\,\bb{s}_{\bb{\theta}}(X,\bb{\phi}_0,\bb{\theta})^{\top}\},
\end{align*}
where the expectation is taken under \(X \sim F(\cdot \nvert\bb{\theta})=H(\cdot \nvert\bb{\phi}_0,\bb{\theta})\). The ML estimators of the restricted and enlarged models are
\[
\tilde{\bb{\theta}}_n
    = \arg\max_{\bb{\theta}} \sum_{i=1}^n \ln f(x_i \nvert \bb{\theta}) = \arg\max_{\bb{\theta}} \sum_{i=1}^n \ln h(x_i \nvert \bb{\phi}_0,\bb{\theta}),
\qquad
(\overline{\bb{\phi}}_n, \overline{\bb{\theta}}_n)^{\top}
    = \arg\max_{\bb{\phi},\bb{\theta}} \sum_{i=1}^n \ln h(x_i \nvert \bb{\phi},\bb{\theta}).
\]
Applying the theoretical framework of \citet[Theorem~1]{MR4906044}, as we did in Section~\ref{sec:GOF.general}, we derive Rao's score test statistic with unknown nuisance parameters $\bb{\theta}_0$ estimated by ML. Consider the kernel of dimension $\dim(\bb{\phi})$ and the corresponding score--kernel cross-moment matrix given by
\begin{equation*}\label{eq:tau.Rao}
\bb{\tau}_{\mathcal{R}}(x,\bb{\phi}_0,\bb{\theta})
=
\bb{s}_{\bb{\phi}}(x,\bb{\phi}_0,\bb{\theta}),
\qquad
G_{\mathcal{R}}(\bb{\phi}_0,\bb{\theta})
= \EE\{\bb{s}_{\bb{\phi}}(X,\bb{\phi}_0,\bb{\theta})\,\bb{s}_{\bb{\theta}}(X,\bb{\phi}_0,\bb{\theta})^{\top}\}
= I_{\phi,\theta}(\bb{\phi}_0,\bb{\theta}).
\end{equation*}
Rao's score test statistic, denoted by $\mathcal{R}_n$, is then given by
\[
\mathcal{R}_n
=
n\,\bb{\bar{s}}_{\bb{\phi}}(\bb{\phi}_0,\tilde{\bb{\theta}}_n)^{\top}
\Sigma_{\mathcal{R}}(\bb{\phi}_0,\tilde{\bb{\theta}}_n)^{-1}
\bb{\bar{s}}_{\bb{\phi}}(\bb{\phi}_0,\tilde{\bb{\theta}}_n),
\qquad
\text{where}
\qquad
\bb{\bar{s}}_{\bb{\phi}}(\bb{\phi}_0,\bb{\theta})
=
\frac{1}{n}\sum_{i=1}^{n}
\bb{s}_{\bb{\phi}}(x_i,\bb{\phi}_0,\bb{\theta}),
\]
with the corresponding covariance matrix given by
\[
\Sigma_{\mathcal{R}}(\bb{\phi}_0,\bb{\theta})
=
I_{\phi,\phi}(\bb{\phi}_0,\bb{\theta})
-
I_{\phi,\theta}(\bb{\phi}_0,\bb{\theta})
\,I_{\theta,\theta}(\bb{\phi}_0,\bb{\theta})^{-1}
\,I_{\phi,\theta}(\bb{\phi}_0,\bb{\theta})^{\top}.
\]
 The test statistic of the generalized likelihood ratio test (GLRT), denoted by $\Lambda_n$, is given by
\[
\Lambda_n=2 \sum_{i=1}^{n}\ln h(x_i \nvert \overline{\bb{\phi}}_n, \overline{\bb{\theta}}_n)-2  \sum_{i=1}^{n}\ln f(x_i \nvert \tilde{\bb{\theta}}_n).
\]
Under $\mathcal{H}_0$ and as $n \to \infty$, we obtain
$
\mathcal{R}_n , \Lambda_n\rightsquigarrow \chi_{\dim(\bb{\phi}_0)}^2.
$

Applying now the theoretical framework of \citet[Theorem~2]{MR4906044}, we derive the asymptotic distribution of the tests $T_n(\hat{\bb{\theta}}_n)$, $\mathcal{R}_n$ and $\Lambda_n$ under sequences of local alternatives.
For the ML-based $T_n(\hat{\bb{\theta}}_n)$, consider the matrix $M(\bb{\phi}_0,\bb{\theta}_0)$ defined by
\begin{align}\label{eq:M.ML}
M(\bb{\phi}_0,\bb{\theta}_0)
    = G_{\bb{\phi}}(\bb{\phi}_0,\bb{\theta}_0)
      - G(\bb{\theta}_0)\,I(\bb{\theta}_0)^{-1}
        I_{\phi,\theta}(\bb{\phi}_0,\bb{\theta}_0)^{\top},
        ~~ \text{where }~~ G_{\bb{\phi}}(\bb{\phi_0},\bb{\theta}_0)
  = \EE\left\{\bb{\tau}(X,\bb{\theta}_0)\,\bb{s}_{\bb{\phi}}(X,\bb{\phi}_0,\bb{\theta}_0)^{\top}\right\}.
\end{align}
The vector $\bb{\tau}$ denotes the kernel defined in~\eqref{eq:tau}, and $G(\bb{\theta}_0)$ is the score--kernel cross-moment matrix defined in~\eqref{eq:s.I.G}. Similarly for the ML-based tests $\mathcal{R}_n$ and $\Lambda_n$, by using instead the kernel $\bb{\tau}_{\mathcal{R}}(x,\bb{\phi}_0,\bb{\theta}_0)$, we obtain
\[
M_{\mathcal{R}}(\bb{\phi}_0,\bb{\theta}_0)=I_{\phi,\phi}(\bb{\phi}_0,\bb{\theta}_0)
-
I_{\phi,\theta}(\bb{\phi}_0,\bb{\theta}_0)
\,I_{\theta,\theta}(\bb{\phi}_0,\bb{\theta}_0)^{-1}
\,I_{\phi,\theta}(\bb{\phi}_0,\bb{\theta}_0)^{\top}
=
\Sigma_{\mathcal{R}}(\bb{\phi}_0,\bb{\theta}_0).
\]

For the general case where $\smash{\hat{\bb{\theta}}_n}$ is not restricted to the ML estimator, we define
\begin{align}\label{eq:M.general}
M(\bb{\phi}_0,\bb{\theta}_0)=G_{\bb{\phi}}(\bb{\phi}_0,\bb{\theta}_0)-G(\bb{\theta}_0)R(\bb{\theta}_0)^{-1}S_{\phi,\theta}(\bb{\phi}_0,\bb{\theta}_0)^{\top},
\end{align}
where $R(\bb{\theta}_0)$ is defined in \eqref{eq:R} and
\[
S_{\phi,\theta}(\bb{\phi}_0,\bb{\theta}_0)
  = \EE\big\{\bb{s}_{\bb{\phi}}(X,\bb{\phi}_0,\bb{\theta}_0)\,\bb{r}(X,\bb{\theta}_0)^{\top}\big\}.
\]
Note that for the maximum likelihood case, that is,  $\hat{\bb{\theta}}_n=\tilde{\bb{\theta}}_n$, we have $\bb{r}(x,\bb{\theta}_0) = \bb{s}(x,\bb{\theta}_0)\equiv \bb{s}_{\theta}(x,\bb{\phi}_0,\bb{\theta}_0)$.  In this case, $R(\bb{\theta}_0) = I(\bb{\theta}_0)\equiv I_{\theta,\theta}(\bb{\phi}_0,\bb{\theta}_0)$, $S_{\phi,\theta}(\bb{\phi}_0,\bb{\theta}_0) = I_{\phi,\theta}(\bb{\phi}_0,\bb{\theta}_0)$ and $M(\bb{\phi}_0,\bb{\theta}_0)$ defined in \eqref{eq:M.general} is equivalent to the form in \eqref{eq:M.ML}. By applying Theorem~2 of \citet{MR4906044}, we obtain, under $\mathcal{H}_{1,n}(\bb{\delta})$ and as $n\to\infty$,
\begin{equation*}\label{eq:asymp.H1}
T_n(\hat{\bb{\theta}}_n) \rightsquigarrow \chi_2^2\big(\bb{\delta}^{\top} V(\bb{\phi}_0,\bb{\theta}_0)\bb{\delta}\big),\qquad
\mathcal{R}_n , \Lambda_n\rightsquigarrow \chi_{\dim(\phi_0)}^2\big(\bb{\delta}^{\top} V_{\mathcal{R}}(\bb{\phi}_0,\bb{\theta}_0)\bb{\delta}\big),
\end{equation*}
where
\[
V(\bb{\phi}_0,\bb{\theta}_0)=M(\bb{\phi}_0,\bb{\theta}_0)^{\top} \Sigma(\bb{\theta}_0)^{-1} \, M(\bb{\phi_0},\bb{\theta}_0),\quad
V_{\mathcal{R}}(\bb{\phi}_0,\bb{\theta}_0)=M_{\mathcal{R}}(\bb{\phi}_0,\bb{\theta}_0)^{\top} \Sigma_{\mathcal{R}}(\bb{\phi}_0,\bb{\theta}_0)^{-1} \, M_{\mathcal{R}}(\bb{\phi_0},\bb{\theta}_0)=\Sigma_{\mathcal{R}}(\bb{\phi}_0,\bb{\theta}_0),
\]
the covariance matrix $\Sigma(\bb{\theta}_0)$ is defined in \eqref{eq:Sigma.ML} for the ML case and in \eqref{eq:Sigma.general} for the general case, and
$\chi_b^2(a)$ denotes a chi-square distribution with $b$ degrees of freedom and noncentrality parameter $a$. In particular, when $\bb{\delta}$ is zero, the situation reduces to $\mathcal{H}_0$, and the noncentrality parameter equals zero.

It should be noted that comparing the omnibus test $T_n(\hat{\bb{\theta}}_n)$ with $\mathcal{R}_n$ or  $\Lambda_n$, both tailored to this particular alternative, is not entirely fair, because their construction inherently favors them in that specific direction. Rao's score test and the GLRT, which are locally most powerful in that direction, should therefore be regarded as benchmarks.

Within this framework, the quantities required to compute $V(\bb{\phi}_0,\bb{\theta}_0)$ and $V_{\mathcal{R}}(\bb{\phi}_0,\bb{\theta}_0)=\Sigma_{\mathcal{R}}(\bb{\phi}_0,\bb{\theta}_0)$, and ultimately the noncentrality parameters, are derived for the gamma and EPD tests under specific sequences of local alternatives in order to study the asymptotic
power of the test statistic $T_n(\hat{\bb{\theta}}_n)$ and the benchmark tests $\Lambda_n$ and
$\mathcal{R}_n$.

\begin{proposition}\label{prop:gamma.local.alternative}
Consider the gamma test under the following sequence of local alternatives:
\[
\mathcal{H}_0 : X_i \sim \mathrm{gamma}(\lambda_0, \beta_0) \equiv \mathrm{GG}(\lambda_0, \beta_0, \rho_0=1)\quad \text{ vs. }\quad
\mathcal{H}_{1,n}(\delta) : X_i \sim \mathrm{GG}\Big(\lambda_0, \beta_0, \rho_n=1 + \frac{\delta}{\sqrt{n}} \{1 + o(1)\}\Big),
\]
where $\delta\in\R\setminus\{0\}$ is fixed. The fixed parameter is $\phi_0 = \rho_0 = 1$, while the unknown parameter vector is $\bb{\theta}_0 = [\lambda_0, \beta_0]^{\top}$. Consider the ML-based test $T_n(\hat{\lambda}_n,\hat{\beta}_n)$, Rao's score test $\mathcal{R}_n$, and the GLRT $\Lambda_n$. Then, we have
\begin{align*}
& G(\bb{\theta}_0) =
\begin{bmatrix}
h_{10}(\lambda_0) ~ & ~ \lambda_0\, h_{6}(\lambda_0,\lambda_0+1,1) \\[1mm]
h_{11}(\lambda_0) ~ & ~ \lambda_0\, h_{7}(\lambda_0,\lambda_0+1,1)
\end{bmatrix},
\qquad
G_{\bb{\phi}}(\bb{\phi_0},\bb{\theta}_0)
  =\begin{bmatrix}
 -h_{8}(\lambda_0) \\[1mm]
 -h_{9}(\lambda_0)
\end{bmatrix},
\qquad
  I(\bb{\theta}_0) =\begin{bmatrix}
 \psi_1(\lambda_0) ~ & ~ 1  \\[1mm]
1 ~ & ~ \lambda_0
\end{bmatrix},\\[2mm]
& I_{\phi,\phi}(\bb{\phi}_0,\bb{\theta}_0) =
\lambda_0 \psi^2(\lambda_0) + 2 \psi(\lambda_0) + \lambda_0 \psi_1(\lambda_0) + 1,
\qquad
I_{\phi,\theta}(\bb{\phi}_0,\bb{\theta}_0)=\begin{bmatrix}
 -\psi(\lambda_0) ~ & ~ -\lambda_0 \psi(\lambda_0) - 1
\end{bmatrix},\\
&\Sigma_{\mathcal{R}}(\bb{\phi}_0,\bb{\theta}_0)= \frac{\lambda_0^{2}\,\psi_1(\lambda_0)^{2} - \psi_1(\lambda_0) - 1}{\lambda_0\,\psi_1(\lambda_0) - 1},
\end{align*}
where $\psi(\cdot)$ and $\psi_1(\cdot)$ denote the digamma and trigamma functions, respectively, and the functions $h_i(\cdot)$ are defined in Table~\ref{table:constants}. Note that the resulting quantities $V(\bb{\phi}_0,\bb{\theta}_0)$ and $\Sigma_{\mathcal{R}}(\bb{\phi}_0,\bb{\theta}_0)$ depend only on the value of $\lambda_0$. Therefore, to simplify the expressions, we set $\beta_0=1$.

\end{proposition}
\begin{proof}
This essentially consists of decomposing the matrices $G(\lambda_0,1,1)$ and $I(\lambda_0,1,1)$ of the GG distribution from Table~\ref{table:G.I} according to the fixed and unknown parameters; see Section~\prooflocalgamma of the \ref{supp} for full details.
\end{proof}

Consider now the asymmetric power distribution (APD) introduced by \citet{MR1379475} and further studied by \citet{AyeboKozubowski2003,MR2395888,MR4547729}, among others, as a generalization of the exponential power distribution (EPD). The parametrization of \citet{MR4547729} is adopted in this paper.
\begin{definition}
Given an asymmetry parameter $\alpha\in(0,1)$, a tail-decay parameter $\rho\in(0,\infty)$, a location parameter $\mu\in\R$, a scale parameter $\sigma\in(0,\infty)$, and a fixed, user-specified parameter $\lambda\in(0,\infty)$, define the density of the $\mathrm{APD}_{\lambda}(\alpha,\rho,\mu,\sigma)$ at any $x \in \R$ by
\begin{equation}\label{def:APD}
f_{\lambda}(x \nvert \alpha,\rho,\mu,\sigma)
= \frac{\rho(\delta_{\alpha,\rho}/\lambda)^{1/\rho}}{\sigma\Gamma(1/\rho)}
\exp\left\{-\frac{(\delta_{\alpha,\rho}/\lambda)|y|^{\rho}} {\alpha^{\rho}\, \1_{\{y \le 0\}} + (1 - \alpha)^{\rho}\,\1_{\{y > 0\}}} \right\},
\end{equation}
where
\[
y = \frac{x - \mu}{\sigma} \in \R, \qquad \delta_{\alpha,\rho} = \frac{2 \alpha^{\rho}
(1 - \alpha)^{\rho}}{\alpha^{\rho} + (1 - \alpha)^{\rho}}\in (0,1).
\]
\end{definition}
Note that the parameter $\lambda$, fixed by the user, was added to the original version of \citet{MR2395888} in order to recover the Laplace and normal distributions as special cases when $\lambda = 1$ and $\lambda = 2$, respectively, together with $\alpha = 1/2$ and $\rho = \lambda$, so as to retrieve the $\mathrm{EPD}_{\lambda}(\mu, \sigma)$.

\begin{proposition}\label{prop:EPD.local.alternative}
For a fixed, user-specified value of $\lambda_0$, consider the EPD test under the following sequence of local alternatives:
\begin{align*}
&\mathcal{H}_0 : X_i \sim \mathrm{EPD}_{\lambda_0}(\mu_0, \sigma_0) \equiv \mathrm{APD}_{\lambda_0}(\alpha_0=1/2, \rho_0 =\lambda_0,\mu_0, \sigma_0)
\qquad \text{vs.}\\
&\mathcal{H}_{1,n}(\bb{\delta}) : X_i \sim \mathrm{APD}_{\lambda_0}(\alpha_n,\rho_n,\mu_0, \sigma_0), \quad \text{with } \begin{bmatrix}
\alpha_n \\
\rho_n
\end{bmatrix}
=
\begin{bmatrix}
1/2 \\
\lambda_0
\end{bmatrix}
+
\frac{1}{\sqrt{n}}
\begin{bmatrix}
\delta_1 \\
\delta_2
\end{bmatrix}
\{1 + o(1)\},
\end{align*}
where $\bb{\delta}=[\delta_1,\delta_2]^{\top}\in \R^2 \setminus \{[0,0]^{\top}\}$ is fixed. The fixed parameter vector is $\phi_0 = [\alpha_0, \rho_0]^{\top} = [1/2, \lambda_0]^{\top}$, while the unknown parameter vector is $\bb{\theta}_0 = [\mu_0, \sigma_0]^{\top}$. Consider the ML-based and MM-based tests $T_n(\hat{\mu}_n,\hat{\sigma}_n)$, Rao's score test $\mathcal{R}_n$, and the GLRT $\Lambda_n$. Then, we have $G(\bb{\theta}_0)= [G(\lambda_0,0,1)]_{1:2,2:3}$, $I(\bb{\theta}_0)=[I(\lambda_0,0,1)]_{2:3,2:3}$, and $R(\bb{\theta}_0)=R_{\lambda_0}(0,1)$, using the matrices of the EPD distribution (ML and MM cases) from Table~\ref{table:G.I}, and
\begin{align*}
&  G_{\bb{\phi}}(\bb{\phi_0},\bb{\theta}_0)
  =\begin{bmatrix}
0 & -\frac{1}{\lambda_0^2}h_{3}(\lambda_0)\\[1mm]
-2 h_{35}(\lambda_0) & 0
\end{bmatrix},\,\,  I_{\phi,\phi}(\bb{\phi}_0,\bb{\theta}_0) =\begin{bmatrix}
4 (\lambda_0 + 1) ~&~ 0 \\[1mm]
0 ~&~ \frac{1}{\lambda_0^3}\left\{\Big(\frac{1}{\lambda_0}+1\Big) \psi_1\Big(\frac{1}{\lambda_0}+1\Big) - 1 + (C_{\lambda_0} + 1)^ 2\right\}
\end{bmatrix},\\[2mm]
& I_{\phi,\theta}(\bb{\phi}_0,\bb{\theta}_0)=\begin{bmatrix}
\frac{-2\lambda_0^{2-1/\lambda_0}}{\Gamma(1/\lambda_0)} & 0\\[1mm]
0 & -\frac{1}{\lambda_0}(C_{\lambda_0}+1)
\end{bmatrix},
\,\,
\Sigma_{\mathcal{R}}(\bb{\phi}_0,\bb{\theta}_0)=
\begin{bmatrix}
4(\lambda_0 + 1) - \displaystyle \frac{4\lambda_0^{2}}{\Gamma(2 - 1/\lambda_0)\,\Gamma(1/\lambda_0)} & 0 \\[1mm]
0 &  \frac{\big(\frac{1}{\lambda_0}+1\big)\psi_{1}\big(\tfrac{1}{\lambda_0}+1\big) - 1} {\lambda_0^{3}}
\end{bmatrix}, \\[1mm]
& S_{\phi,\theta}(\bb{\phi}_0,\bb{\theta}_0)
=
\begin{bmatrix}
\frac{-4\Gamma(2/\lambda_0)}{\lambda_0^{1/\lambda_0}\Gamma(3/\lambda_0)} & 0\\[1mm]
0 & -\frac{2D_{\lambda_0}}{\lambda_0^2}\left\{2\ln(\lambda_0)+ 3\psi(3/\lambda_0) - \psi(1/\lambda_0)\right\}
\end{bmatrix},
\end{align*}
where $C_{\lambda_0}=\psi(1/\lambda_0 + 1)+\ln(\lambda_0)$, $\Gamma(\cdot)$, $\psi(\cdot)$ and $\psi_1(\cdot)$ denote the gamma, digamma and trigamma functions, respectively, and the functions $h_i(\cdot)$ are defined in Table~\ref{table:constants}.
Note that the resulting matrices $V(\bb{\phi}_0,\bb{\theta}_0)$ (for both the ML and MM cases) and $\Sigma_{\mathcal{R}}(\bb{\phi}_0,\bb{\theta}_0)$ depend only on the value of $\lambda_0$. Therefore, to simplify the expressions, we set $\mu_0=0,\sigma_0=1$.
\end{proposition}
\begin{proof}
See Section~\prooflocalEPD of the \ref{supp}.
\end{proof}

Propositions~\ref{prop:gamma.local.alternative} and~\ref{prop:EPD.local.alternative} allow us to derive the asymptotic power curves of the
gamma test statistic $T_n(\hat{\lambda}_n,\hat{\beta}_n)$ and the EPD test statistic
$T_n(\hat{\mu}_n,\hat{\sigma}_n)$ under the local alternatives $\mathcal{H}_{1,n}(\bb{\delta})$, for a nominal
significance level of $\alpha = 0.05$. These curves are shown in Figure~\ref{fig:gamma.EPD} along with the
GLRT/score-test benchmark. The left panel displays the gamma test for $\lambda_0 = 1$ (the specific value
of $\beta_0$ is irrelevant due to the scale invariance of the tests). The central and right panels present
the EPD test for an auxiliary parameter value of $\lambda_0 = 1.5$ (the specific values of $\mu_0$ and
$\sigma_0$ are irrelevant due to the location and scale invariance of the tests). Both the ML- and
MM-based versions of the $T_n(\hat{\mu}_n,\hat{\sigma}_n)$ test are shown.

\begin{figure}[!ht]
\centering
\includegraphics[width=0.33\textwidth]{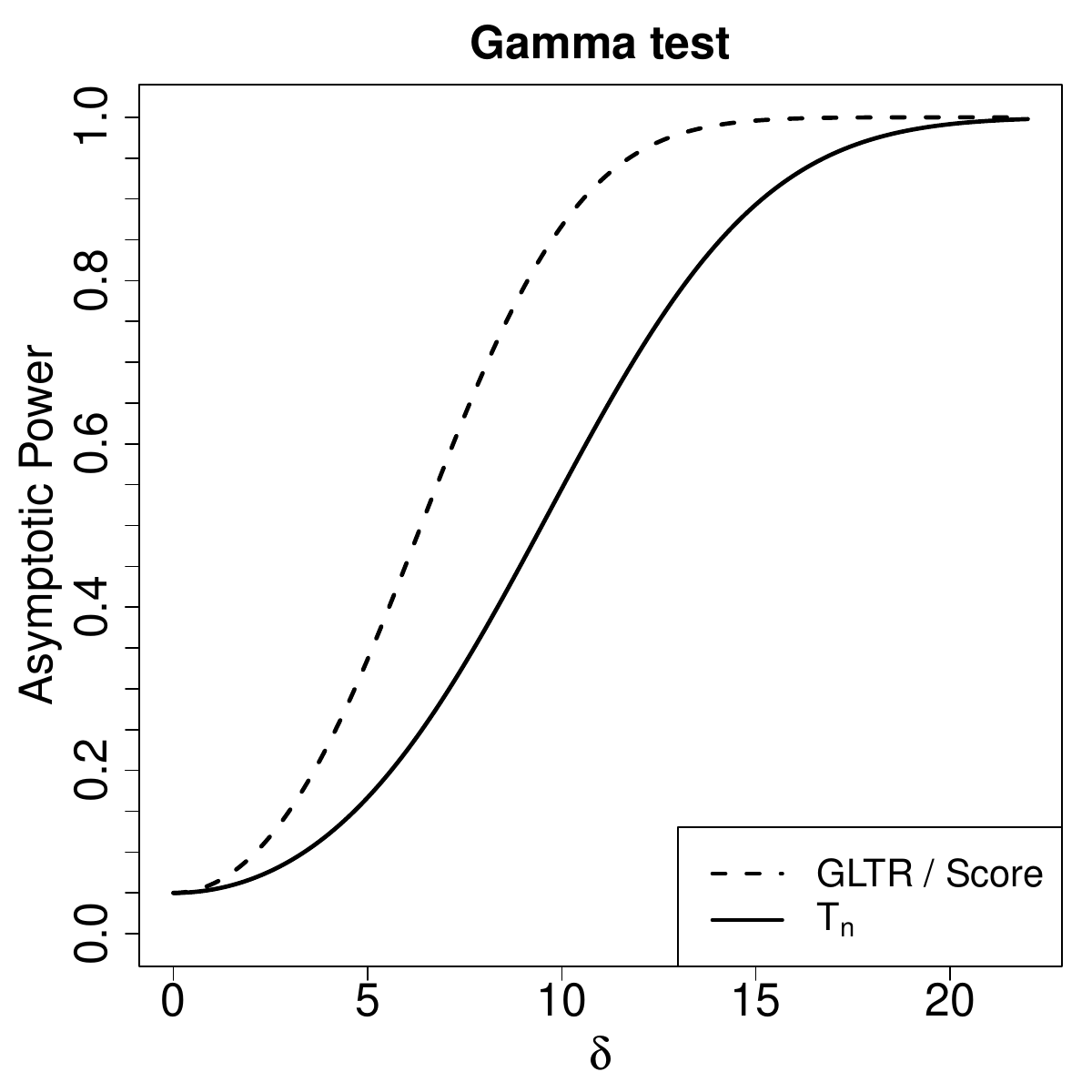}
\includegraphics[width=0.33\textwidth]{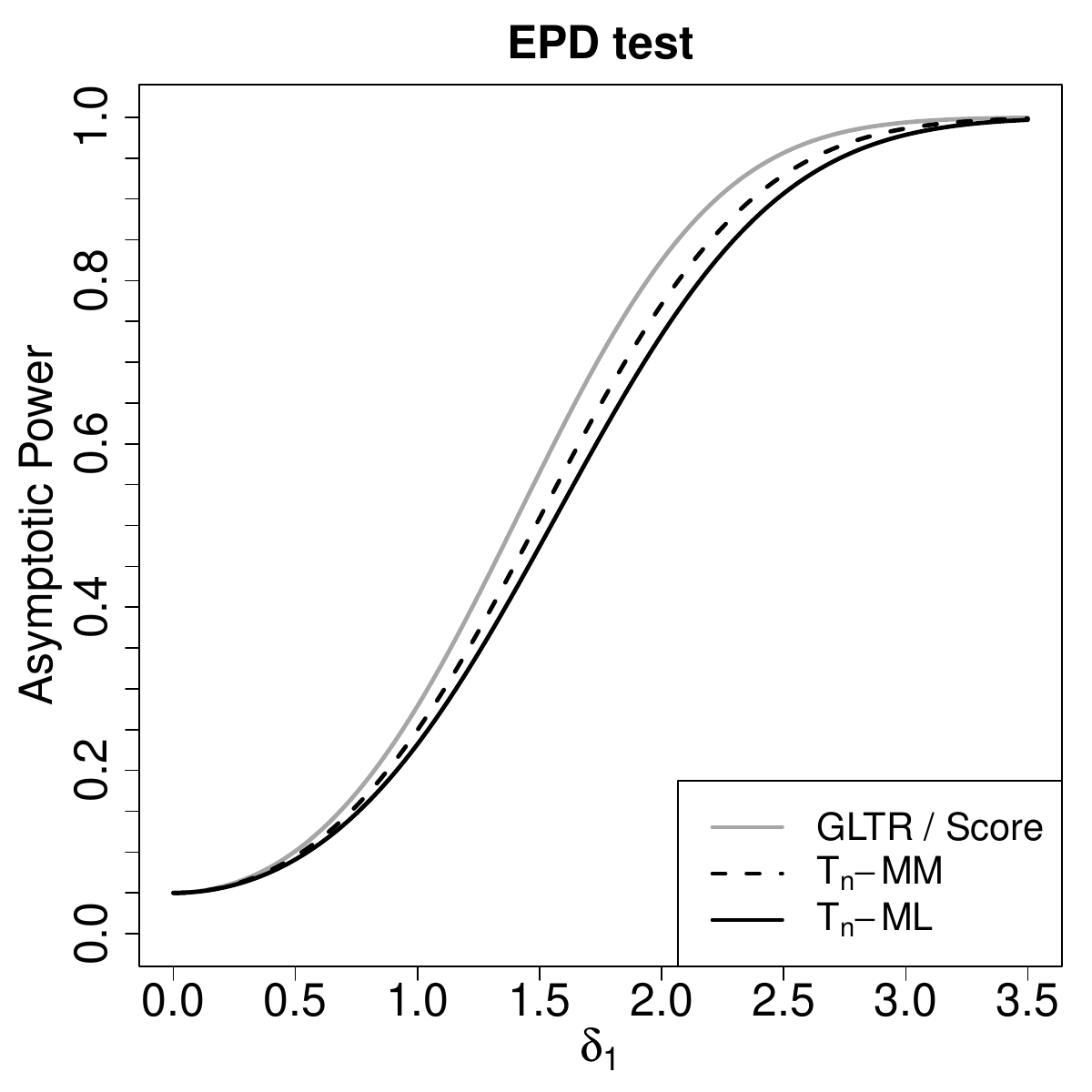}
\includegraphics[width=0.33\textwidth]{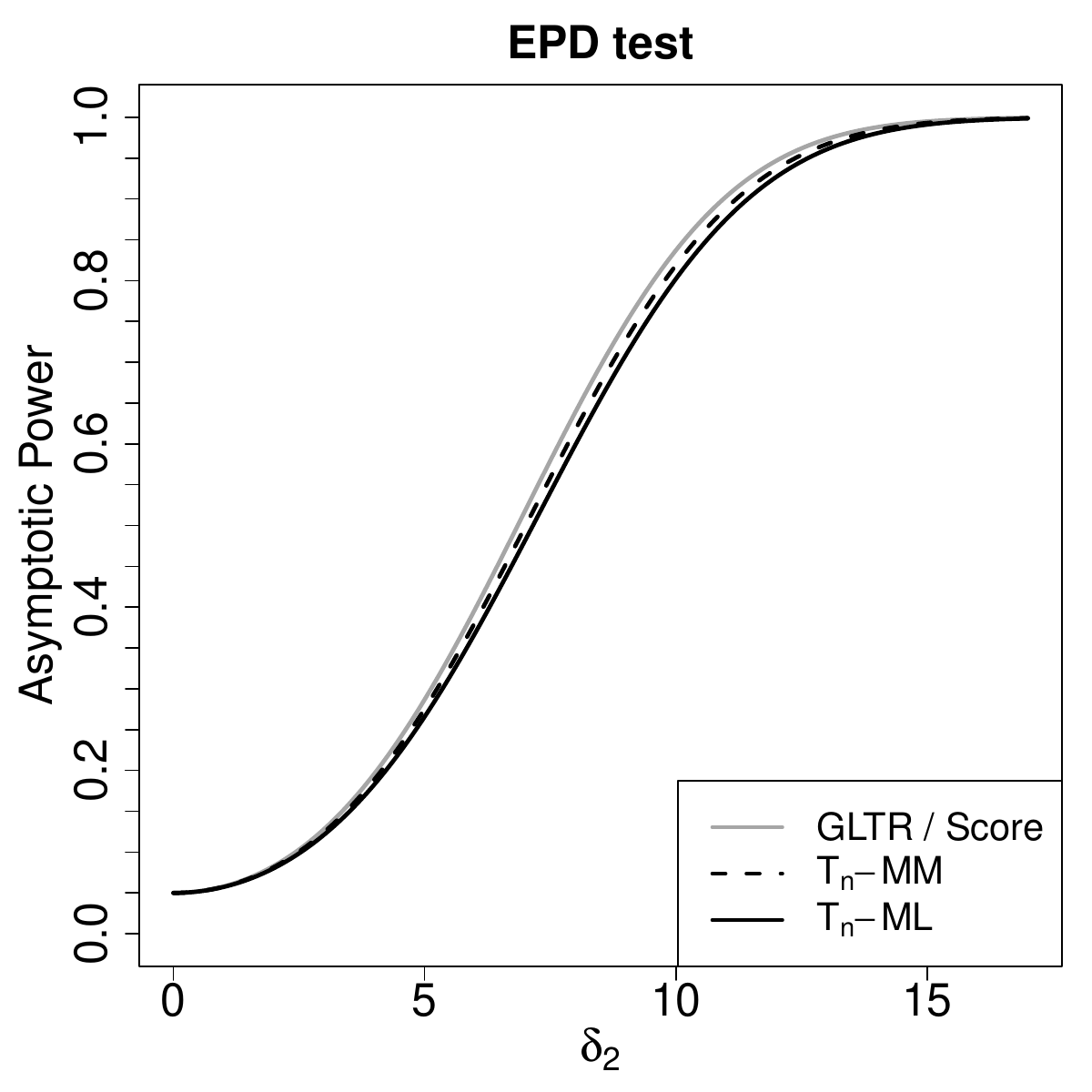}
\caption{Asymptotic power of our test $T_n(\hat{\bb{\theta}}_n)$ and of the GLRT/score-test benchmark for testing the gamma distribution (with $\lambda_0 = 1$) and the EPD distribution (with $\lambda_0 = 1.5$) under local alternatives. The nominal significance level is $0.05$.}\label{fig:gamma.EPD}
\end{figure}

The curves are displayed as a function of $\delta$ (left), of $\delta_1$ with $\delta_2 = 0$ (center),
and of $\delta_2$ with $\delta_1 = 0$ (right). In all cases, the asymptotic power reaches its minimum
at $\delta_i = 0$, where it is equal to the significance level $0.05$, and increases monotonically as
$\delta_i$ moves away from zero, as expected. Note that the noncentrality parameters
depend only on $\delta^2$, $\delta_1^{2}$, and $\delta_2^{2}$, respectively, so the corresponding
asymptotic power curves are symmetric in $\delta$, $\delta_1$, and $\delta_2$. Consequently, the
asymptotic power for $\delta \in [-22,0]$, $\delta_1 \in [-3.5,0]$, and $\delta_2 \in [-17,0]$ is simply
the mirror image of that displayed in Figure~\ref{fig:gamma.EPD}.

Note that Rao's score test for the EPD model coincides with the goodness-of-fit test based on the
$\lambda$-th power skewness and kurtosis introduced by \citet{MR4547729}. Since the family of local
alternatives is the APD, which accommodates a wide range of asymmetry and tail thickness, this test can
be used as an omnibus procedure even though it is specifically designed for this APD alternative. This
helps explain why the power curves of the $T_n$ tests and of the GLRT/score-test benchmark are much
closer for the EPD case than for the gamma case, for which the family of alternatives is more restrictive.

\section{Illustrative application to temperature forecast errors}\label{sec:application}

We illustrate the use of the $T_n$ and LK goodness-of-fit tests based on trigonometric moments with a real dataset containing 48-hour-ahead surface temperature forecast errors (biases) from the MM5 numerical weather prediction model \citep{doi:10.18637/jss.v028.i03}. Here, MM5 refers to the fifth-generation Pennsylvania State University -- National Center for Atmospheric Research Mesoscale Model. The data consist of bias values, each representing the difference between the forecasted and observed surface air temperatures on January 3, 2000, at $96$ different land-based meteorological stations located across the US Pacific Northwest. A positive value indicates an overestimation by the model, while a negative value indicates an underestimation. This dataset, originally compiled by the University of Washington's Department of Atmospheric Sciences, is available in the \texttt{lawstat} package in $\textsf{R}$ under the name \texttt{bias}. The histogram of the $n = 96$ temperature forecast errors is shown in Figure~\ref{fig:forecast.error}.

\begin{figure}[ht]
\centering
\includegraphics[width=0.60\textwidth]{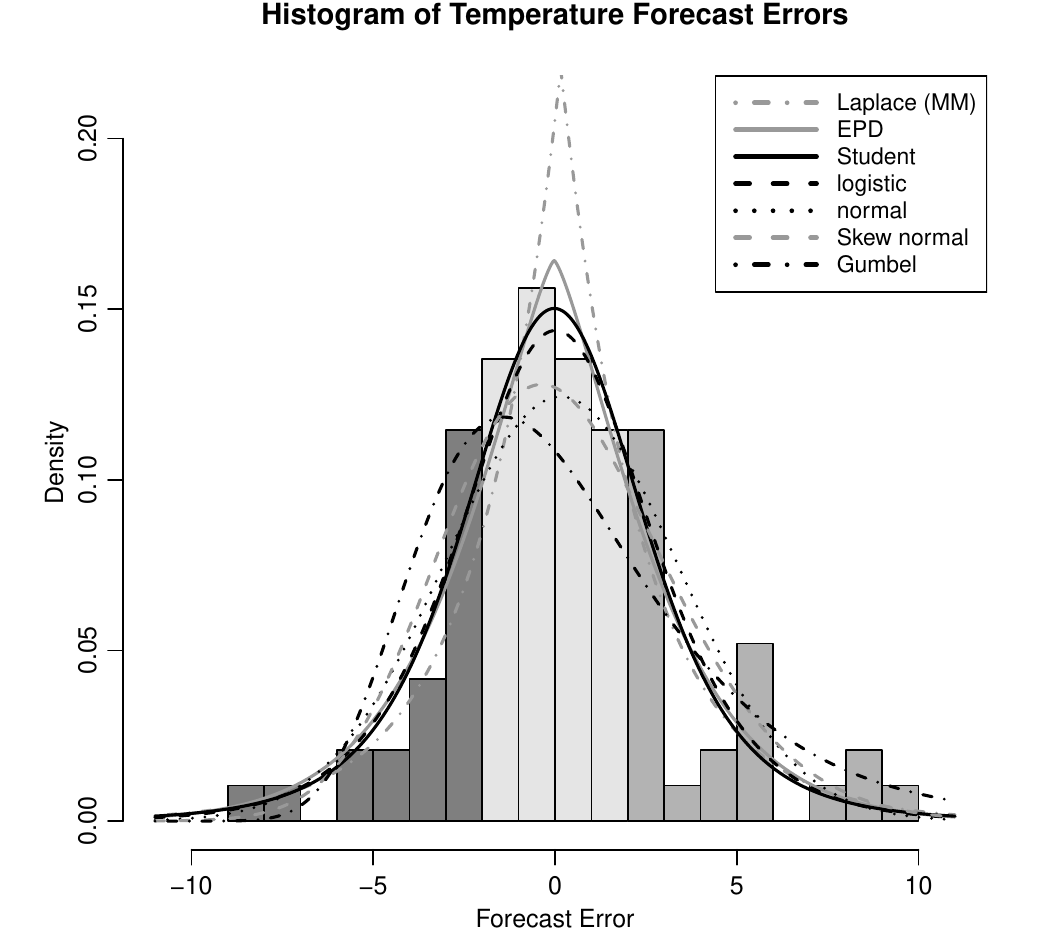}
\caption{Histogram of $n=96$ temperature forecast errors with seven fitted models.}\label{fig:forecast.error}
\end{figure}

Recall that the tests used in this article assume that the observations are i.i.d. The present dataset has a spatial component, which inevitably induces spatial dependence in the temperature field and may also create correlation among the forecast biases. Since detailed spatial information for the 96 locations is not available, we proceed under the working assumption that any such dependence is sufficiently weak not to invalidate the analysis.

Six null distributions supported on the real line, namely the EPD, normal, skew normal, Gumbel, logistic, and Student's $t$ distributions, were fitted to the data using maximum likelihood estimation. A seventh distribution, the Laplace distribution, was fitted using a method-of-moments estimator instead of maximum likelihood, given its gold-standard power performance in the comprehensive study of Section~\ref{sec:empirical.power.analysis.Laplace}. The goodness-of-fit was then assessed using the $T_n$ and LK tests, assuming that all parameters are unknown. For each fitted model, we report: (i) the parameter estimates; (ii) $-2\ell$ (twice the negative log-likelihood); (iii) the component $Z$-scores
\[
Z(C_n)=\sqrt{n}\,\frac{C_n(\hat{\bb{\theta}}_n)}{\sqrt{[\Sigma(\hat{\bb{\theta}}_n)]_{_{1,1}}}},
\qquad
Z(S_n)=\sqrt{n}\,\frac{S_n(\hat{\bb{\theta}}_n)}{\sqrt{[\Sigma(\hat{\bb{\theta}}_n)]_{_{2,2}}}},
\]
which provide measures of relative tail weight and central concentration, and of relative skewness, respectively; (iv) the test statistics $T_n$ and LK; (v) the corresponding $p$-values for $T_n$ and LK.

The dataset's near-symmetry, evident in the histogram, makes the normal distribution a natural first candidate. However, it is rejected at the 5\% significance level, with $p$-values of $0.027$ ($T_n$) and $0.031$ (LK). The result for the $T_n$ test can be interpreted through Figure~\ref{fig:C.n.vs.S.n} (left panel), where the point $\smash{[\sqrt{n}C_n(\hat{\bb{\theta}}_n),\sqrt{n}S_n(\hat{\bb{\theta}}_n)]^{\top}}$ is plotted along with the $95\%$-confidence ellipse for the normal model. The point lies just outside the $95\%$-confidence ellipse, reflecting the fact that the $p$-value is smaller than $0.05$.

\begin{figure}[!ht]
\centering
\includegraphics[width=0.325\textwidth]{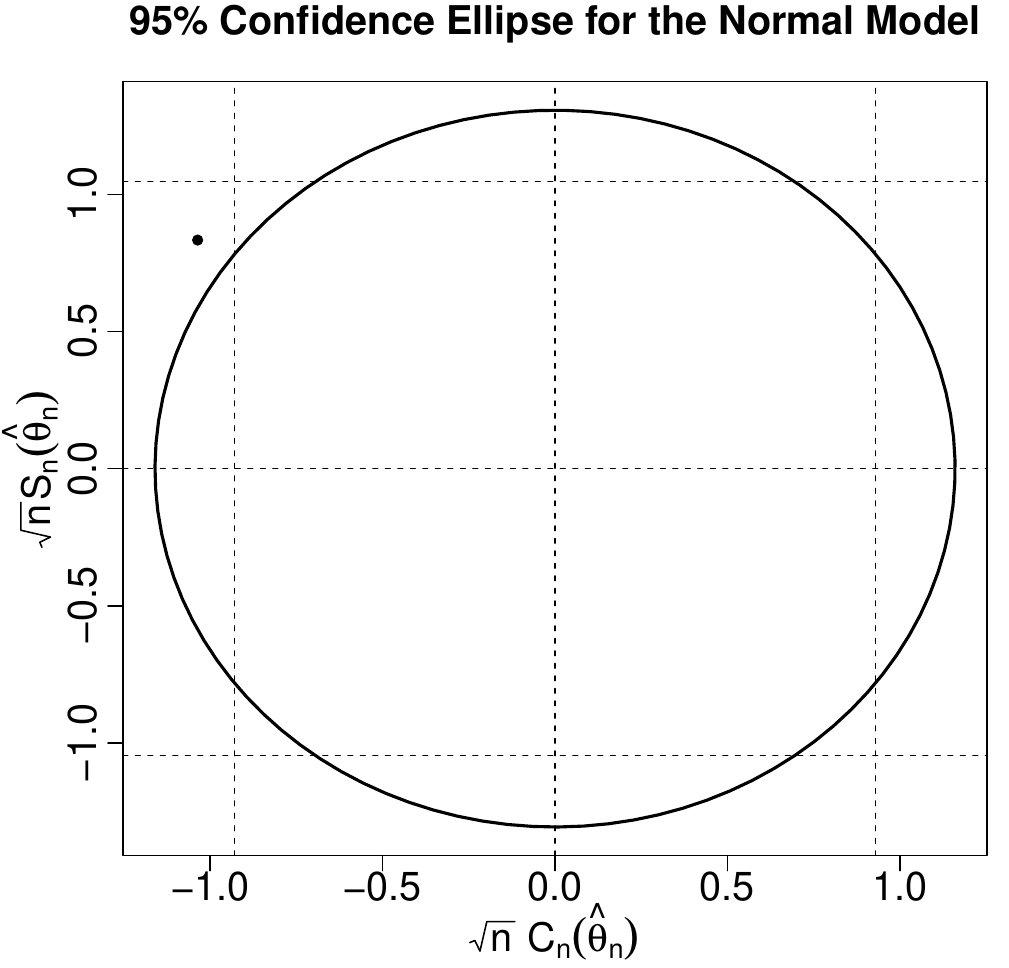}
\includegraphics[width=0.325\textwidth]{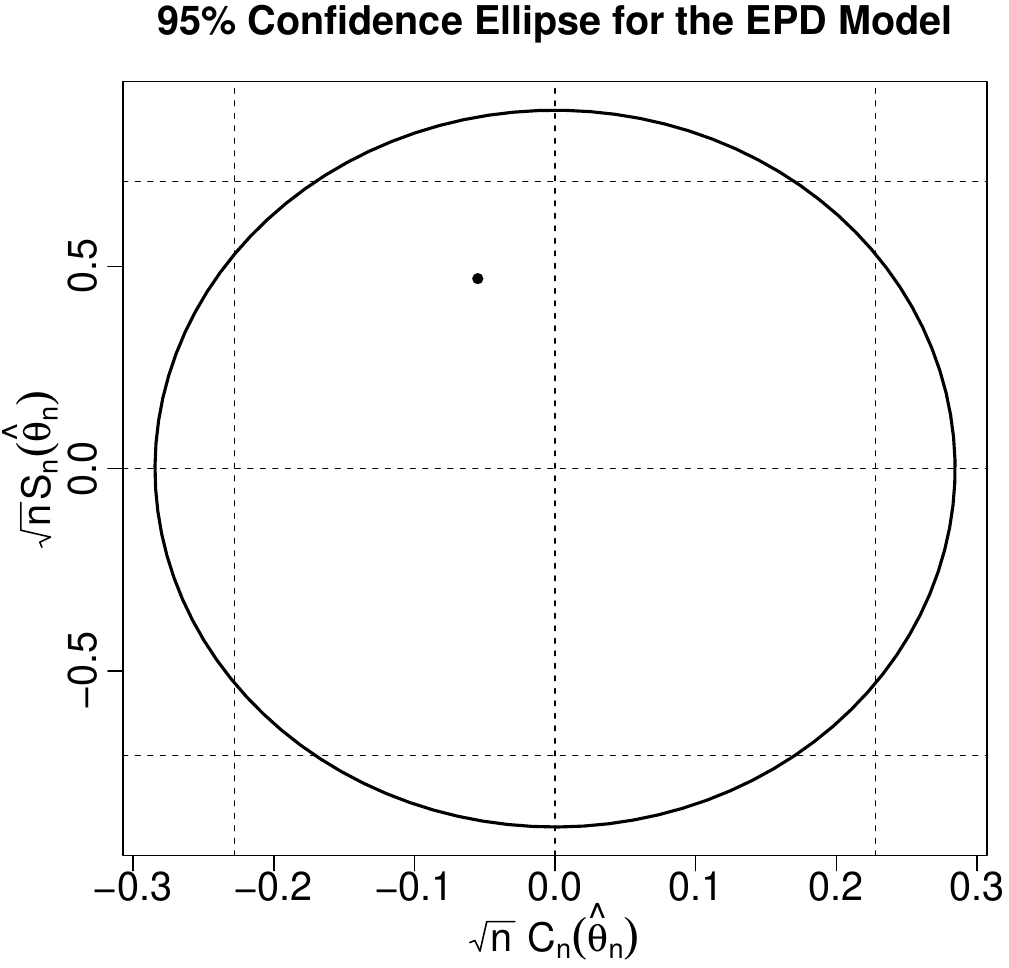}
\includegraphics[width=0.325\textwidth]{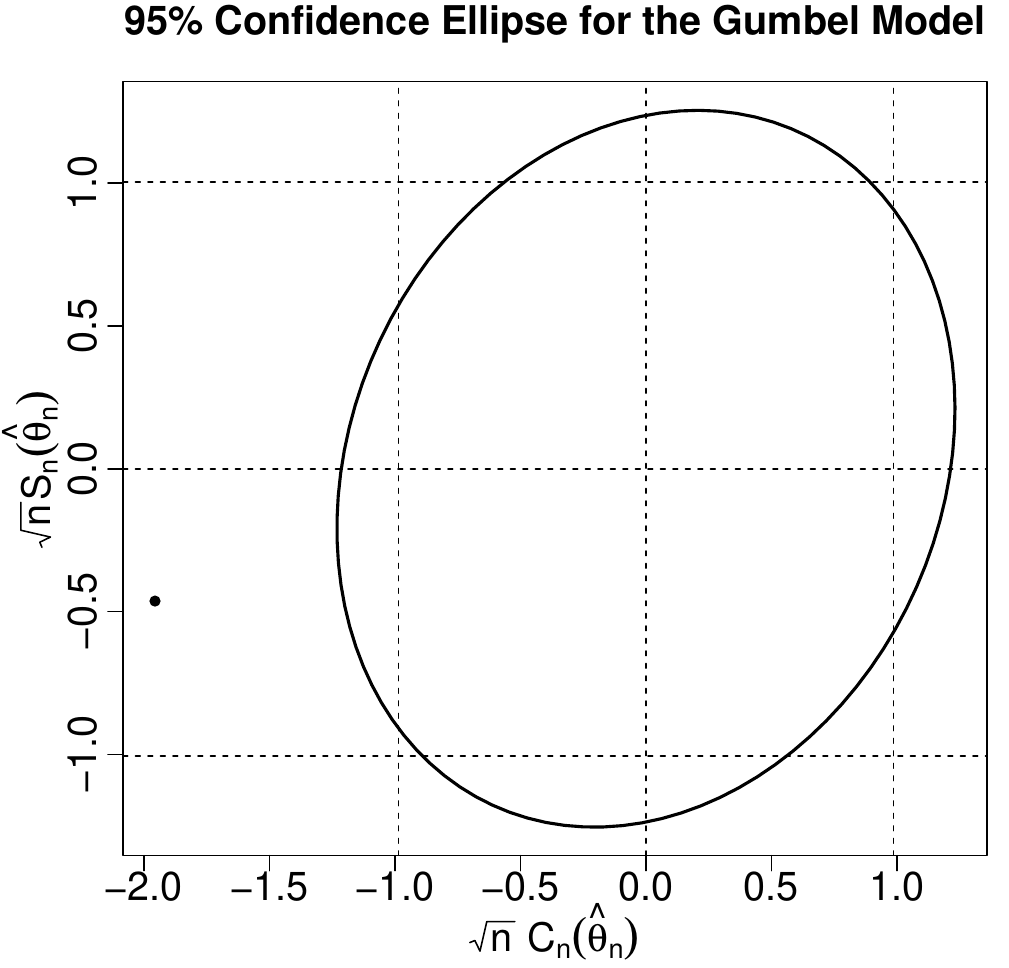}
\caption{The observed values of $\smash{[\sqrt{n} C_n(\hat{\bb{\theta}}_n), \sqrt{n} S_n(\hat{\bb{\theta}}_n)]^{\top}}$ with $95\%$-confidence ellipses for the normal, EPD, and Gumbel models. For each panel, the outer vertical dotted lines mark $\pm Z_{0.025} \sqrtsmash{[\Sigma(\hat{\bb{\theta}}_n)]_{_{1,1}}}$, while the horizontal lines mark $\pm Z_{0.025} \sqrtsmash{[\Sigma(\hat{\bb{\theta}}_n)]_{_{2,2}}}$, visually reproducing the $95\%$ univariate $Z$-score thresholds.}\label{fig:C.n.vs.S.n}
\end{figure}

A univariate analysis of each component shows that the $Z$-scores are $Z(C_n) = -2.19$ and $Z(S_n) = 1.56$, indicating that the dataset has significantly heavier tails than the normal distribution, since $Z(C_n) = -2.19$ is negative and more extreme than the left-tail critical value of $Z_{0.025}=-1.96$ under the standard normal distribution. Specifically, the number of observations in the central 50\% (the light gray bars in the histogram of Figure~\ref{fig:forecast.error}) is greater than expected under the normal model, while the number of observations in the outer quartiles (the bars with the two darkest shades of gray) is smaller. The value $Z(S_n) = 1.56$, on the other hand, is positive but smaller than $Z_{0.975}=1.96$, suggesting that the distribution is slightly more right-skewed than the normal model, but not significantly so. In summary, the rejection of the normal model is driven by a combination of heavier tails and mild right skewness in the dataset, with the dominant factor being the heavier tails. We observe that the SN model slightly improves the fit relative to the normal distribution by capturing the mild right skewness, with $\hat{\lambda}_n = 1.539 > 0$, but it does not improve the fit of the tail
thickness. As a result, the corresponding $p$-values, $0.089$ ($T_n$) and $0.045$ (LK), lie near conventional significance thresholds, leading to an ambiguous conclusion regarding the null hypothesis.

Given the need for heavier tails, we fit three additional candidates: $\mathrm{EPD}(\lambda,\mu,\sigma)$, $\mathrm{logistic}(\mu,\sigma)$, and Student's $t(\lambda,\mu,\sigma)$. Table~\ref{table:fits} shows that all three achieve nearly identical likelihoods ($-2\ell = 491.9,492.0,492.0$) and yield $p$-values ranging from $0.213$ to $0.509$, clearly leading to non-rejection of the null hypothesis. Because the logistic and Student's $t$ fits are very close to the EPD fit, Figure~\ref{fig:C.n.vs.S.n} (central panel) displays only the EPD confidence ellipse. For the EPD model, the $p$-value of $0.385$ for the $T_n$ test means the point $\smash{[\sqrt{n}C_n(\hat{\bb{\theta}}_n),\sqrt{n}S_n(\hat{\bb{\theta}}_n)]^{\top}}$ lies comfortably inside the $95\%$-confidence ellipse. The point also lies within both univariate confidence intervals; equivalently, the absolute values of the $Z$-scores, $Z(C_n)=-0.47$ and $Z(S_n)=1.30$, are less than $1.96$, indicating that the dataset exhibits slightly heavier tails than the EPD model and is slightly more right-skewed, although neither deviation is statistically significant.

\begin{table}[!ht]
\centering
\small
\setlength{\tabcolsep}{5pt}
\renewcommand{\arraystretch}{1}
\begin{tabular}{llccccccc}
 & & & & & \multicolumn{2}{c}{Test statistics} & \multicolumn{2}{c}{$p$-values} \\
\cmidrule(lr){8-9}
Distribution & Parameter estimates & $-2\ell$ & $Z(C_n)$ & $Z(S_n)$ & $T_n$ & LK & $T_n$ & LK \\
\midrule
$\mathrm{EPD}(\lambda,\mu, \sigma)$ & $(1.323, -0.024, 2.676)$ & 491.9 & -0.47 & 1.30 & 1.91 & 3.09 & 0.385 & 0.213\\
 $\mathrm{Laplace}(\mu, \sigma)$ & $(0.158, 2.269)$ & 495.2 & 1.46 & 0.99 & 3.12 & 2.90 & 0.210 & 0.235 \\
$\mathrm{Normal}(\mu, \sigma)$ & $(0.158, 3.209)$ & 496.3 & -2.19 & 1.56 & 7.22 & 6.94 & 0.027 & 0.031 \\
 $\mathrm{SN}(\lambda,\mu, \sigma)$ & $(1.539, -2.699,  4.296)$ &  493.8 & -2.01 &  1.09 &  4.85 & 6.21 &  0.089 & 0.045   \\
$\mathrm{Gumbel}(\mu, \sigma)$ & $(-1.395, 3.108)$ & 505.7 & -3.89 & -0.90 & 15.19 & 15.67 & 0.0005 & 0.0004\\
 $\mathrm{Logistic}(\mu, \sigma)$ & $(0.020, 1.739)$ & 492.0 & -0.70 & 1.24 & 2.03 & 2.14 & 0.362 & 0.343\\
 $\mathrm{Student's}\,t(\lambda,\mu, \sigma)$ & $(4.772, -0.020, 2.521)$ & 492.0 & -0.22 & 1.14 & 1.35 & 1.95 & 0.509 & 0.377 \\
\hline
\end{tabular}
\vspace{3mm}
\caption{Parameter estimates, $-2\ell$ (twice the minimized negative log-likelihood), the component $Z$-scores, observed test statistics $T_n$ and LK, and $p$-values for $T_n$ and LK.}
\label{table:fits}
\end{table}

The $\mathrm{Laplace}(\mu,\sigma)$ distribution, which is a special case of the EPD with $\lambda = 1$, is also considered because of its heavier tails compared to the normal distribution. Naturally, its fit cannot be as good as that of the EPD, which yields an estimated value of $\hat{\lambda}_n = 1.323$, albeit not far from $1$. Nonetheless, the Laplace distribution remains an acceptable model, with $p$-values of $0.210$ ($T_n$) and $0.235$ (LK), and has the advantage of being simpler. As can be seen in Figure~\ref{fig:forecast.error}, the Laplace model provides a less accurate fit near zero than the EPD, logistic, and Student's $t$ models.

The $\mathrm{Gumbel}(\mu,\sigma)$ distribution was also fitted and is decisively rejected, with $p$-values less than or equal~to~$0.0005$. The right panel in Figure~\ref{fig:C.n.vs.S.n} illustrates the Gumbel case: the point $\smash{[\sqrt{n}C_n(\hat{\bb{\theta}}_n),\sqrt{n}S_n(\hat{\bb{\theta}}_n)]^{\top}}$ lies well outside the $95\%$-confidence ellipse, in line with its very small $p$-values. The failure is driven by much heavier tails in the data than the Gumbel model allows; specifically, $Z(C_n) = -3.89 < -1.96 < 0$ indicates excess mass in the center of the dataset (light bars) and a deficit in the tails (dark bars).

\section{Summary and outlook}\label{sec:conclusion}

In this paper, we introduced a new omnibus goodness-of-fit test, denoted by $T_n$, based on trigonometric moments. The proposed test was designed as an alternative to the LK test of \citet{MR1146353}, fully exploiting the underlying covariance structure and ensuring a proper $\chi_2^2$ limiting distribution under the null hypothesis in the presence of nuisance parameters. We also proposed an alternative and straightforward method for computing the normalizing scalar appearing in the LK test.

To facilitate widespread adoption, we substantially expanded the scope of both the $T_n$ and LK tests by providing the necessary implementation details for $11$ distribution families, yielding $53$ distinct tests that covered most commonly used parametric distributions. We showed that the chi-square $\chi_2^2$ approximation employed to compute critical values and $p$-values remained very accurate in practice, even for relatively small sample sizes such as $n=30$, for both the $T_n$ and LK tests. As a result, these procedures offer broad applicability, comparable to that of classical EDF-based tests, combined with fully plug-and-play implementation that does not rely on Monte Carlo simulations or pre-tabulated values. To the best of our knowledge, this combination of features is unique. The strong power properties and practical utility of the proposed procedures were illustrated through two simulation studies, a theoretical analysis under local alternatives, and a real-data application involving meteorological forecast errors.

Several promising avenues for future research remain. A natural and important extension would be the development of a multivariate version of the proposed methodology, which could be based, for example, on a small number of functions from a tensor-product multivariate Fourier basis. Another valuable direction would be to adapt the approach to discrete or censored data, which are common in many applied fields but pose challenges for goodness-of-fit tests based on the probability integral transform. Finally, while incorporating higher-order trigonometric moments may yield increased power against more complex alternatives, such extensions would need to balance potential gains against the increased computational and analytical complexity associated with deriving the corresponding asymptotic covariance structures.

\section*{Conflict of interest}

The authors declare that they have no conflicts of interest.

\renewcommand{\refname}{References}
\setlength{\bibsep}{13pt plus 5ex}

\appendix

\renewcommand{\thetable}{\arabic{table}}

\begin{appendices}

\renewcommand{\thesection}{Appendix \Alph{section}}

\section{Reproducibility}\label{app:reproducibility}

\renewcommand{\thesection}{\Alph{section}}

The GitHub repository \citep{DesgagneOuimet2026github} provides the complete \textsf{R} implementation of the $T_n$ and LK goodness-of-fit tests for the 11 families of distributions listed in Tables~\ref{table:density.CDF} and~\ref{table:G.I}. It also includes the scripts used to generate all figures, perform the real-data analysis, and produce a numerical validation of all results reported in the paper.

\renewcommand{\thesection}{Appendix \Alph{section}}

\section{Estimators and constants required for the \texorpdfstring{$T_n$}{Tn} and LK test statistics}\label{app:table.constants}

\renewcommand{\thesection}{\Alph{section}}

Table~\ref{table:estimator} provides the score equations or, where available, explicit estimators for the maximum likelihood estimation of each parameter in the 11 distribution families, together with method-of-moments estimators for the EPD family.
Table~\ref{table:constants} provides the list of constants $h_{i}(\cdot)$ that must be computed numerically using the \textsf{R} software (for example, using the function \texttt{cubature::adaptIntegrate}) in order to evaluate the $T_n$ and LK test statistics. Here, $f_{\mathrm{ga}}(\cdot \nvert a,b)$ and $\Gamma_{a,b}(\cdot)$ denote the PDF and CDF of a gamma distribution with shape parameter $a$ and scale parameter $b$; $f_{\mathrm{be}}$ and $F_{\mathrm{be}}$ denote the PDF and CDF of a beta distribution, as defined in Table~\ref{table:density.CDF}; $f_{\mathrm{IG}}$ and $F_{\mathrm{IG}}$ denote the PDF and CDF of an inverse-Gaussian distribution, as defined in Table~\ref{table:density.CDF}; $\phi$ and $\Phi$ denote the PDF and CDF of the standard normal distribution; and $F_{\mathrm{SN}}$ denotes the CDF of the skew normal distribution, as defined in Table~\ref{table:density.CDF}.

\begin{table}[!b]
\small\def\arraystretch{2.3}
\begin{center}
\begin{tabular}{|p{1.1cm}|p{12.0cm}|}
\hline
$\vcenter{\hbox{\shortstack{EPD \\ $(\lambda,\mu,\sigma)$}}}$   & $\hat{\lambda}_n: \psi(1/\hat{\lambda}_n + 1)+\ln(\hat{\lambda}_n) - \frac{1}{n} \sum_{i=1}^n\big|\frac{x_i-\hat{\mu}_n}{\hat{\sigma}_n}\big|^{\hat{\lambda}_n} \ln\big(\big|\frac{x_i-\hat{\mu}_n}{\hat{\sigma}_n}\big|^{\hat{\lambda}_n}\big)+\Big(\frac{1}{n} \sum_{i=1}^n\big|\frac{x_i-\hat{\mu}_n}{\hat{\sigma}_n}\big|^{\hat{\lambda}_n}-1\Big)=0$,\rule{0pt}{11pt}\vspace{.5mm}\cr
\footnotesize ML-based &  $\hat{\mu}_n:\sum_{i=1}^n |x_i - \hat{\mu}_n|^{\hat{\lambda}_n - 1} \mathrm{sign}(x_i - \hat{\mu}_n) = 0$, ~ $\hat{\sigma}_n = \big(\frac{1}{n} \sum_{i=1}^n |x_i - \hat{\mu}_n|^{\hat{\lambda}_n}\big)^{1/\hat{\lambda}_n}$\vspace{.5mm} \cr
\hline
$\vcenter{\hbox{\shortstack{$\mathrm{EPD}_{\lambda}(\mu,\sigma)$ \\ \footnotesize MM-based}}}$  &  $\hat{\lambda}_n =\lambda_0$ known, ~ $\hat{\mu}_n = \frac{1}{n} \sum_{i=1}^n x_i$, ~ $\hat{\sigma}_n = \left\{\frac{\Gamma(1/\lambda)}{\lambda^{2/\lambda}\Gamma(3/\lambda)}\,\frac{1}{n} \sum_{i=1}^n (x_i - \hat{\mu}_n)^2\right\}^{1/2}$\rule{0pt}{11pt}\vspace{1mm} \cr
\hline
$\vcenter{\hbox{\shortstack{Half-EPD \\ $(\lambda,\sigma)$}}}$  &  $\hat{\lambda}_n:\psi(1/\hat{\lambda}_n + 1)+\ln(\hat{\lambda}_n) - \frac{1}{n} \sum_{i=1}^n(x_i/\hat{\sigma}_n)^{\hat{\lambda}_n} \ln\big\{(x_i/\hat{\sigma}_n)^{\hat{\lambda}_n}\big\}+\frac{1}{n} \sum_{i=1}^n(x_i/\hat{\sigma}_n)^{\hat{\lambda}_n}-1=0 $,
~ $\hat{\sigma}_n = \big(\frac{1}{n} \sum_{i=1}^n x_i ^{\hat{\lambda}_n}\big)^{1/\hat{\lambda}_n}$\rule{0pt}{11pt}\vspace{1mm} \cr
\hline
$\mathrm{SN}(\lambda,\mu,\sigma)$ & $\hat{\lambda}_n: \sum_{i=1}^n \hat{y}_i H(\hat{\lambda}_n \hat{y}_i) = 0$, ~$\hat{\mu}_n: \sum_{i=1}^n \hat{y}_i - \hat{\lambda}_n \sum_{i=1}^n H(\hat{\lambda}_n \hat{y}_i) =0$, ~ $\hat{\sigma}_n: -n + \sum_{i=1}^n \hat{y}_i^2 - \hat{\lambda}_n\sum_{i=1}^n \hat{y}_i H(\hat{\lambda}_n \hat{y}_i) = 0$, \vspace{.5mm}\cr
 & with $\hat{\sigma}_n=\sqrt{\frac{1}{n}\sum_{i=1}^n (x_i-\hat{\mu}_n)^2}$ if $\lambda_0$ is jointly estimated, where $\hat{y}_i=(x_i-\hat{\mu}_n)/\hat{\sigma}_n$ and  $H(t)=\phi(t)/\Phi(t)$ \rule{0pt}{11pt} \vspace{0.5mm}\cr
\hline
$\mathrm{GG}(\lambda,\beta, \rho)$  & $\hat{\lambda}_n:\frac{1}{n}\sum_{i=1}^{n}\ln(x_i^{\hat{\rho}_n})-\ln(\hat{\beta}_n^{\hat{\rho}_n}) -\psi(\hat{\lambda}_n) =0$, ~  $\hat{\beta}_n = \left(\frac{1}{\hat{\lambda}_n}\frac{1}{n}\sum_{i=1}^n x_i^{\hat{\rho}_n}\right)^{1/\hat{\rho}_n}$, \rule{0pt}{11pt}\vspace{.5mm}\cr
& $\hat{\rho}_n: \frac{\sum_{i=1}^{n}x_i^{\hat{\rho}_n}\ln(x_i)}{n \hat{\beta}_n^{\hat{\rho}_n}} - \hat{\lambda}_n\frac{1}{n}\sum_{i=1}^{n}\ln(x_i)- 1/\hat{\rho}_n-\ln(\hat{\beta}_n)\left(\frac{\sum_{i=1}^{n}x_i^{\hat{\rho}_n}}{n\hat{\beta}_n^{\hat{\rho}_n}} - \hat{\lambda}_n\right) =0$ \rule{0pt}{12pt}\vspace{.5mm}\cr
\hline
$\vcenter{\hbox{\shortstack{Logistic \\ $(\mu,\sigma)$}}}$ & $\hat{\mu}_n:\frac{1}{n} \sum_{i=1}^n \!\frac{2}{1 + e^{\hat{y}_i}} = 1$,  ~  $\hat{\sigma}_n:\frac{1}{n} \sum_{i=1}^n\! \hat{y}_i -\frac{1}{n} \sum_{i=1}^n \!\frac{2\hat{y}_i}{1+e^{\hat{y}_i}}=1$, ~ with $\hat{y}_i = \frac{x_i - \hat{\mu}_n}{\hat{\sigma}_n}$ \rule{0pt}{12pt}\vspace{1mm} \cr
\hline
$\vcenter{\hbox{\shortstack{Student's \\ $t(\lambda,\mu,\sigma)$}}}$  &  $\hat{\lambda}_n:\psi\left( \frac{\hat{\lambda}_n + 1}{2} \right)
- \psi\left( \frac{\hat{\lambda}_n}{2} \right)
- \frac{1}{n}\sum_{i=1}^n\ln\left(1 + \frac{\hat{y}_i^2}{\hat{\lambda}_n} \right)
+ \frac{1}{\hat{\lambda}_n}\left(\frac{\hat{\lambda}_n + 1}{\hat{\lambda}_n}\frac{1}{n}\sum_{i=1}^n\frac{\hat{y}_i^2}{1 + \hat{y}_i^2/\hat{\lambda}_n}-1\right)=0$,\rule{0pt}{11pt}\vspace{.5mm} \cr
& $\hat{\mu}_n:\sum_{i=1}^n \frac{\hat{y}_i}{1 + \hat{y}_i^2/\hat{\lambda}_n} = 0, \quad \hat{\sigma}_n: \frac{\hat{\lambda}_n + 1}{\hat{\lambda}_n} \, \frac{1}{n} \sum_{i=1}^n \frac{\hat{y}_i^2}{1 + \hat{y}_i^2/\hat{\lambda}_n} = 1$, ~  with $\hat{y}_i = \frac{x_i - \hat{\mu}_n}{\hat{\sigma}_n}$ \rule{0pt}{11pt}\vspace{.5mm} \cr
\hline
$\vcenter{\hbox{\shortstack{Gompertz \\ $(\beta, \rho)$}}}$ &  $\hat{\beta}_n : 1 / \hat{\beta}_n + \frac{1}{n}\sum_{i=1}^n x_i - \hat{\rho}_n\frac{1}{n}\sum_{i=1}^n x_i e^{\hat{\beta}_n x_i}=0$, ~ $\hat{\rho}_n = \left(\frac{1}{n}\sum_{i=1}^n e^{\hat{\beta}_n x_i} - 1\right)^{-1}$\rule{0pt}{11pt} \vspace{1mm} \cr
\hline
$\vcenter{\hbox{\shortstack{Lomax \\ $(\alpha, \sigma)$}}}$ & $\hat{\alpha}_n = \left\{\frac{1}{n} \sum_{i=1}^n \ln(1 + x_i/\hat{\sigma}_n)\right\}^{-1}$, ~ $\hat{\sigma}_n: \left(1 + \hat{\alpha}_n\right) \left(\frac{1}{n} \sum_{i=1}^n \frac{x_i}{1 + x_i/\hat{\sigma}_n}\right) - \hat{\sigma}_n = 0$\rule{0pt}{11pt} \vspace{.5mm} \cr
\hline
$\mathrm{IG}(\mu, \lambda)$ &  $\hat{\mu}_n= \frac{1}{n} \sum_{i=1}^n x_i$, ~ $\hat{\lambda}_n = \left\{\frac{1}{n} \sum_{i=1}^n \frac{1}{x_i} -\frac{1}{\hat{\mu}_n}\big(2-\frac{1}{\hat{\mu}_n}\frac{1}{n} \sum_{i=1}^n x_i\big)\right\}^{-1}$ \rule{0pt}{11pt}\vspace{.5mm}\cr
\hline
$\mathrm{Beta}(\alpha, \beta)$ &  $\hat{\alpha}_n: \psi(\hat{\alpha}_n + \hat{\beta}_n) - \psi(\hat{\alpha}_n) + \frac{1}{n} \sum_{i=1}^n \ln (x_i) = 0$, ~
$\hat{\beta}_n: \psi(\hat{\alpha}_n + \hat{\beta}_n) - \psi(\hat{\beta}_n) + \frac{1}{n} \sum_{i=1}^n \ln (1 - x_i) = 0$\vspace{0.5mm} \cr
\hline
$\mathrm{Kum}(\alpha, \beta)$ & $\hat{\alpha}_n: 1 - \hat{\beta}_n + \frac{1 + \frac{1}{n}\sum_{i=1}^n \ln (x_i^{\hat{\alpha}_n})}{\frac{1}{n}\sum_{i=1}^n x_i^{\hat{\alpha}_n}(1 - x_i^{\hat{\alpha}_n})^{-1} \ln (x_i^{\hat{\alpha}_n})} = 0$, ~  $\hat{\beta}_n = \frac{-n}{\sum_{i=1}^n \ln(1 - x_i^{\hat{\alpha}_n})}$ \rule{0pt}{11pt}\vspace{.5mm} \cr
\hline

\end{tabular}
\end{center}
\vspace{-2mm}
\caption{Score equations, or (where available) explicit parameter estimators, for maximum likelihood estimation of each parameter in the 11 distribution families, together with method-of-moments estimators for the EPD family.}
\label{table:estimator}
\end{table}

\begin{table}[!htbp]
\centering
\footnotesize\def\arraystretch{2.0}
\begin{multicols}{2}
\begin{tabular}{|l|}
\hline
$h_{1}(\lambda) = \int_0^{\infty} \cos\left[\pi \left\{1 + \Gamma_{1/\lambda, 1}(v)\right\}\right] f_{\mathrm{ga}}(v \nvert 1/\lambda+1,1) \rd v$ \cr
\hline
$h_{2}(\lambda) = \int_0^{\infty} \sin\left[\pi \left\{1 + \Gamma_{1/\lambda, 1}(v)\right\}\right] f_{\mathrm{ga}}(v \nvert 1,1) \rd v$ \cr
\hline
$h_{3}(\lambda) = \int_0^{\infty} \cos\left[\pi \left\{1 + \Gamma_{1/\lambda, 1}(v)\right\}\right]\ln(\lambda v) f_{\mathrm{ga}}(v \nvert 1/\lambda+1, 1) \rd v$ \cr
\hline
$h_{4}(\lambda) = \int_{0}^{\infty} \cos\left[\pi \left\{1 + \Gamma_{1/\lambda, 1}(v)\right\}\right] f_{\mathrm{ga}}(v \nvert 3/\lambda,1) \rd v$ \cr
\hline
$h_{5}(\lambda) = \int_{0}^{\infty} \sin\left[\pi \left\{1 + \Gamma_{1/\lambda, 1}(v)\right\}\right] f_{\mathrm{ga}}(v \nvert 2/\lambda,1) \rd v$ \cr
\hline
$h_{6}(a,b,c) = \int_0^{\infty} \cos\left\{2\pi \Gamma_{a, 1}(v)\right\} f_{\mathrm{ga}}(v \nvert b,c) \rd v$ \cr
\hline
$h_{7}(a,b,c) = \int_0^{\infty} \sin\left\{2\pi \Gamma_{a, 1}(v)\right\} f_{\mathrm{ga}}(v \nvert b,c) \rd v$ \cr
\hline
$h_{8}(\lambda) = \int_0^{\infty} (v - \lambda) \ln (v) \cos\left\{2\pi \Gamma_{\lambda, 1}(v)\right\} f_{\mathrm{ga}}(v \nvert \lambda, 1) \rd v$ \cr
\hline
$h_{9}(\lambda) = \int_0^{\infty} (v - \lambda) \ln (v) \sin\left\{2\pi \Gamma_{\lambda, 1}(v)\right\} f_{\mathrm{ga}}(v \nvert \lambda, 1) \rd v$ \cr
\hline
$h_{10}(\alpha) = \int_0^{\infty} \ln (v) \cos\left\{2\pi \Gamma_{\alpha, 1}(v)\right\} f_{\mathrm{ga}}(v \nvert \alpha, 1) \rd v$ \cr
\hline
$h_{11}(\alpha) = \int_0^{\infty} \ln (v) \sin\left\{2\pi \Gamma_{\alpha, 1}(v)\right\} f_{\mathrm{ga}}(v \nvert \alpha, 1) \rd v$ \cr
\hline
$h_{12}(\lambda) = \int_0^1 \cos\left[\pi \left\{2-F_{\mathrm{be}}(v \nvert \lambda/2, 1/2)\right\}\right] f_{\mathrm{be}}(v\nvert \lambda/2, 3/2) \rd v$ \cr
\hline
$h_{13}(\lambda) = \int_0^1 \sin\left[\pi \left\{2-F_{\mathrm{be}}(v \nvert \lambda/2, 1/2)\right\}\right] f_{\mathrm{be}}\big(v\nvert \frac{\lambda+1}{2}, 1\big) \rd v$ \cr
\hline
$h_{14}(\lambda) = \int_0^{1} \cos\left[\pi \left\{2 - F_{\mathrm{be}}(v\nvert\lambda/2, 1/2) \right\}\right]$\cr
\hspace{15mm}$\big\{\ln(v) + \frac{\lambda + 1}{\lambda}(1-v)\big\} f_{\mathrm{be}}(v\nvert \lambda/2, 1) \rd v$ \cr
\hline
$h_{15}(\lambda) = \int_0^{\infty} \cos\left[2\pi \Gamma_{1/\lambda, 1}(v)\right]\ln(\lambda v) f_{\mathrm{ga}}(v \nvert 1/\lambda+1, 1) \rd v$ \cr
\hline
$h_{16}(\lambda) = \int_0^{\infty} \sin\left[2\pi \Gamma_{1/\lambda, 1}(v)\right]\ln(\lambda v) f_{\mathrm{ga}}(v \nvert 1/\lambda+1, 1) \rd v$ \cr
$h_{17}(\rho) = \int_1^{\infty} \{\ln (v)\}^2 v e^{-\rho v} \rd v$ \cr
\hline
$h_{18}(\rho) = \int_1^{\infty} \ln (v) v e^{-\rho v} \rd v$ \cr
\hline
$h_{19}(\rho) = \int_1^{\infty} \cos\big\{2\pi (1 - e^{-\rho(v-1)})\big\} \ln (v) (1 - \rho v) e^{-\rho v} \rd v$ \cr
\hline
$h_{20}(\rho) = \int_1^{\infty} \sin\big\{2\pi (1 - e^{-\rho(v-1)})\big\} \ln (v) (1 - \rho v) e^{-\rho v} \rd v$ \cr
\hline
$h_{21}(\rho) = \int_1^{\infty} \cos\big\{2\pi (1 - e^{-\rho(v-1)})\big\} v e^{-\rho v} \rd v$ \cr
\hline
$h_{22}(\rho) = \int_1^{\infty} \sin\big\{2\pi (1 - e^{-\rho(v-1)})\big\} v e^{-\rho v} \rd v$ \cr
\hline
\end{tabular}

\hspace{-8mm}
\begin{tabular}{|l|}
\hline
$h_{23}(\alpha, \beta) = \int_0^1 \ln (v)\cos\{2\pi F_{\mathrm{be}}(v \nvert \alpha, \beta)\}f_{\mathrm{be}}(v \nvert \alpha, \beta) \rd v$ \cr
\hline
$h_{24}(\alpha, \beta) = \int_0^1 \ln (v)\sin\{2\pi F_{\mathrm{be}}(v \nvert \alpha, \beta)\} f_{\mathrm{be}}(v \nvert \alpha, \beta) \rd v$ \cr
\hline
$h_{25}(\alpha, \beta) = \int_0^1 \ln (1 - v)\cos\{2\pi F_{\mathrm{be}}(v \nvert \alpha, \beta)\} f_{\mathrm{be}}(v \nvert \alpha, \beta) \rd v$ \cr
\hline
$h_{26}(\alpha, \beta) = \int_0^1 \ln (1 - v)\sin\{2\pi F_{\mathrm{be}}(v \nvert \alpha, \beta)\}f_{\mathrm{be}}(v \nvert \alpha, \beta) \rd v$ \cr
\hline
$h_{27}(\mu, \lambda) = \int_0^{\infty} v \cos\left\{2\pi F_{\mathrm{IG}}(v \nvert \mu, \lambda)\right\} f_{\mathrm{IG}}(v \nvert \mu, \lambda) \rd v$ \cr
\hline
$h_{28}(\mu, \lambda) = \int_0^{\infty} v \sin\left\{2\pi F_{\mathrm{IG}}(v \nvert \mu, \lambda)\right\} f_{\mathrm{IG}}(v \nvert \mu, \lambda) \rd v$ \cr
\hline
$h_{29}(\mu, \lambda) = \int_0^{\infty} \frac{(v^2 + \mu^2)}{v} \cos\left\{2\pi F_{\mathrm{IG}}(v \nvert \mu, \lambda)\right\} f_{\mathrm{IG}}(v \nvert \mu, \lambda) \rd v$ \cr
\hline
$h_{30}(\mu, \lambda) = \int_0^{\infty} \frac{(v^2 + \mu^2)}{v} \sin\left\{2\pi F_{\mathrm{IG}}(v \nvert \mu, \lambda)\right\} f_{\mathrm{IG}}(v \nvert \mu, \lambda) \rd v$ \cr
\hline
$h_{31}(\beta) = \int_0^1 \cos\big[2\pi \big\{1 - (1 - v)^{\beta}\big\}\big] \ln (v)(1 - v)^{\beta-2}(1-\beta v)\rd v$ \cr
\hline
$h_{32}(\beta) = \int_0^1 \sin\big[2\pi \big\{1 - (1 - v)^{\beta}\big\}\big] \ln (v)(1 - v)^{\beta-2}(1-\beta v)\rd v$ \cr
\hline
$h_{33}(\beta) = \int_0^1 \cos\big[2\pi \big\{1 - (1 - v)^{\beta}\big\}\big] \ln(1 - v) (1 - v)^{\beta - 1} \rd v$ \cr
\hline
$h_{34}(\beta) = \int_0^1 \sin\big[2\pi \big\{1 - (1 - v)^{\beta}\big\}\big] \ln(1 - v) (1 - v)^{\beta - 1} \rd v$ \cr
\hline
$h_{35}(\lambda) = \int_0^{\infty} \sin\left[\pi \left\{1 + \Gamma_{1/\lambda, 1}(v)\right\}\right]f_{\mathrm{ga}}(v \nvert 1/\lambda+1, 1) \rd v$ \cr
\hline
$h_{36}(\lambda) = \int_{-\infty}^{\infty}\phi(v)\phi^2(\lambda v)\{\Phi(\lambda v)\}^{-1} \rd v$ \cr
\hline
$h_{37}(\lambda) = \int_{-\infty}^{\infty}v\phi(v)\phi^2(\lambda v)\{\Phi(\lambda v)\}^{-1} \rd v$ \cr
\hline
$h_{38}(\lambda) = \int_{-\infty}^{\infty}v^2\phi(v)\phi^2(\lambda v)\{\Phi(\lambda v)\}^{-1} \rd v$ \cr
\hline
$h_{39}(\lambda)=\int_{-\infty}^{\infty} v \cos\left\{2\pi F_{\mathrm{SN}}(v \nvert \lambda, 0, 1)\right\}\phi(\lambda v)\phi(v)\rd v$ \cr
\hline
$h_{40}(\lambda)=\int_{-\infty}^{\infty} v \sin\left\{2\pi F_{\mathrm{SN}}(v \nvert \lambda, 0, 1)\right\}\phi(\lambda v)\phi(v)\rd v$ \cr
\hline
$h_{41}(\lambda)= \int_{-\infty}^{\infty} \cos\left\{2\pi F_{\mathrm{SN}}(v \nvert \lambda, 0, 1)\right\} \left\{v \Phi(\lambda v)-\lambda \phi(\lambda v)\right\}\phi(v)\rd v$ \cr
\hline
$h_{42}(\lambda)=\int_{-\infty}^{\infty} \sin\left\{2\pi F_{\mathrm{SN}}(v \nvert \lambda, 0, 1)\right\} \left\{v \Phi(\lambda v)-\lambda \phi(\lambda v)\right\}\phi(v)\rd v$ \cr
\hline
$h_{43}(\lambda)= \int_{-\infty}^{\infty} v\cos\left\{2\pi F_{\mathrm{SN}}(v \nvert \lambda, 0, 1)\right\} \left\{v \Phi(\lambda v)-\lambda \phi(\lambda v)\right\}\phi(v)\rd v$ \cr
\hline
$h_{44}(\lambda)=\int_{-\infty}^{\infty} v\sin\left\{2\pi F_{\mathrm{SN}}(v \nvert \lambda, 0, 1)\right\} \left\{v \Phi(\lambda v)-\lambda \phi(\lambda v)\right\}\phi(v)\rd v$ \cr
\hline
\end{tabular}
\end{multicols}
\caption{List of constants to be computed numerically.}
\label{table:constants}
\end{table}

\end{appendices}

\newpage
\begin{center}
\bf \LARGE Supplementary material
\end{center}

\phantomsection
\makeatletter
\def\@currentlabel{Supplementary material}
\makeatother
\label{supp}

The supplement \citep{supp} collects the proofs of the results stated herein, namely: (i) those supporting the implementation details reported in Tables~\ref{table:G.I} and \ref{table:estimator}; and (ii) the proofs of Propositions~\ref{prop:gamma.local.alternative} and~\ref{prop:EPD.local.alternative}, which establish the asymptotic behavior of the gamma and EPD tests under local alternatives in Section~\ref{sec:asymp.local.alternatives}, as well as the corresponding power curves.

\section{Proof for the \texorpdfstring{$\mathrm{EPD}(\lambda,\mu,\sigma)$}{EPD} tests in the ML- and MM-based cases}\label{proof:EPD}

Letting $y = (x-\mu)/\sigma$, straightforward calculations yield
\[
\bb{s}(x, \lambda, \mu, \sigma)
=
\begin{bmatrix}
s_1(x, \lambda, \mu, \sigma) \\[1mm]
s_2(x, \lambda, \mu, \sigma) \\[1mm]
s_3(x, \lambda, \mu, \sigma)
\end{bmatrix}
=
\begin{bmatrix}
\partial_{\lambda} \ln\{f(x \nvert \lambda, \mu, \sigma)\} \\[1mm]
\partial_{\mu} \ln\{f(x \nvert \lambda, \mu, \sigma)\} \\[1mm]
\partial_{\sigma} \ln\{f(x \nvert \lambda, \mu, \sigma)\}
\end{bmatrix}
=
\begin{bmatrix}
 \frac{1}{\lambda^2}\left\{|y|^{\lambda}- |y|^{\lambda} \ln(|y|^\lambda) + C_{\lambda} - 1 \right\}\\[1mm]
\frac{1}{\sigma}\{|y|^{\lambda - 1} \mathrm{sign}(y)\} \\[1mm]
\frac{1}{\sigma}\{|y|^{\lambda} - 1\}
\end{bmatrix},
\]
where
\[
C_{\lambda} = \psi(1/\lambda+1) + \ln(\lambda).
\]
The ML estimators for $\lambda_0$, $\mu_0$ and/or $\sigma_0$ come from the equation $\sum_{i=1}^{n}\bb{s}(x_i, \lambda_0, \mu_0, \sigma_0)=0$.

Here are some equations that will be useful in the following sections. One has, for $\lambda\in (0, \infty)$,
\begin{align*}
X\sim \mathrm{EPD}(\lambda,\mu, \sigma)
~~ & \Leftrightarrow ~~ Y = \frac{X - \mu}{\sigma} \sim \mathrm{EPD}(\lambda, 0, 1)\\
\Leftrightarrow |Y| \sim \mathrm{half\mhyphen EPD}(\lambda, 1) ~~ & \Leftrightarrow ~~ V = \frac{1}{\lambda}|Y|^{\lambda} \sim \mathrm{gamma}(1/\lambda, 1).
\end{align*}

If $V\sim \mathrm{gamma}(1/\lambda, 1)$ and $k\in (-1/\lambda, \infty)$, it can also be verified in \texttt{Mathematica} that
\[
\EE\left\{V^k(\ln V)^j\right\}
= \frac{\Gamma(1/\lambda + k)}{\Gamma(1/\lambda)} \times
\begin{cases}
1, &\mbox{if } j = 0, \\
\psi(1/\lambda + k), &\mbox{if } j = 1, \\
\psi_1(1/\lambda + k) + \psi^2(1/\lambda + k), &\mbox{if } j = 2.
\end{cases}
\]

It follows that, for $k\in (-1, \infty)$,
\begin{align}
\EE\left\{|Y|^k(\ln |Y|)^j\right\}
&= \frac{\lambda^{k/\lambda-j}\Gamma(1/\lambda + k/\lambda)}{\Gamma(1/\lambda)} \nonumber\\
&\times
\begin{cases}
1, &\mbox{if } j = 0, \\
\psi(1/\lambda + k/\lambda)+\ln(\lambda), &\mbox{if } j = 1, \\
\psi_1(1/\lambda + k/\lambda) + \{\psi(1/\lambda + k/\lambda)+\ln(\lambda)\}^2, &\mbox{if } j = 2,
\end{cases}
\end{align}\label{eqn.EPD}
regardless of whether the expectation is taken with respect to $X\sim \mathrm{EPD}(\lambda,\mu, \sigma)$,
$Y \sim \mathrm{EPD}(\lambda, 0, 1)$ or $|Y| \sim \mathrm{half\mhyphen EPD}(\lambda, 1)$ if $Y= (X - \mu)/\sigma$.

For $z\in (0, \infty)$, it is well-known that
\[
\Gamma(1 + z) = z\Gamma(z), \quad
\psi(1 + z) = \psi(z) + 1/z, \quad
\psi_1(1 + z) = \psi_1(z) - 1/z^2.
\]

Next, one determines the Fisher information matrix $I(\lambda, \mu, \sigma)=\EE\big\{\bb{s}(X, \lambda, \mu, \sigma) \bb{s}(X, \lambda, \mu, \sigma)^{\top}\big\}$. One has, given that $\EE\{\bb{s}(X, \lambda, \mu, \sigma)\}= \bb{0}_{3}$,
\begin{align*}
\EE\left\{s_1^2(X, \lambda, \mu, \sigma)\right\}
&= \frac{1}{\lambda^4}\Big((C_{\lambda} - 1)^2 -2 (C_{\lambda} - 1)\EE\left\{|Y|^{\lambda}\ln(|Y|^\lambda)\right\}+2 (C_{\lambda} - 1)\EE\left\{|Y|^{\lambda}\right\}
\\
& + \lambda^2\EE\{|Y|^{2\lambda}(\ln(|Y|))^2\}+ \EE\{|Y|^{2\lambda}\}-2\lambda\EE\{|Y|^{2\lambda}\ln(|Y|)\}\Big)\\
& = \frac{1}{\lambda^4}\Big(-(C_{\lambda} - 1)^2 + \lambda^2 \frac{\Gamma(1/\lambda + 2)}{\Gamma(1/\lambda)}
 [\psi_1(1/\lambda + 2) + \{\psi(1/\lambda + 2)+\ln(\lambda)\}^2]\\
& + \frac{\lambda^2\Gamma(1/\lambda + 2)}{\Gamma(1/\lambda)} -2\lambda\frac{\lambda\Gamma(1/\lambda + 2)}{\Gamma(1/\lambda)}
\{\psi(1/\lambda + 2)+\ln(\lambda)\}\Big)\\
& = \frac{1}{\lambda^4}\Big(-(C_{\lambda} - 1)^2 + (\lambda+1)
 [\psi_1(1/\lambda + 1) - (1/\lambda+1)^{-2}+ \{\psi(1/\lambda + 1) \\
 & +\ln(\lambda)+ (1/\lambda+1)^{-1} \}^2] + \lambda+1 -2(\lambda+1)\{\psi(1/\lambda + 1)+\ln(\lambda)+ (1/\lambda+1)^{-1}\}\Big)\\
 &=\frac{1}{\lambda^3}\big\{(1/\lambda+1)\psi_1(1/\lambda + 1)+C_{\lambda}^2 -1\big\},
\end{align*}
\[
\EE\left\{s_2^2(X, \lambda, \mu, \sigma)\right\}
= \frac{1}{\sigma^2} \EE\left(|Y|^{2\lambda - 2}\right)
= \frac{\lambda^{2 - 2/\lambda}\Gamma(2-1/\lambda)}{\sigma^2\Gamma(1/\lambda)},
\]
\[
\EE\left\{s_3^2(X, \lambda, \mu, \sigma)\right\}
= \frac{1}{\sigma^2} \left\{\EE\left(|Y|^{2\lambda}\right) - 2\EE\left(|Y|^{\lambda}\right) + 1\right\}
= \frac{1}{\sigma^2} \left\{\frac{\lambda^2\Gamma(1/\lambda+2)}{\Gamma(1/\lambda)} - 2 + 1\right\}
= \frac{\lambda}{\sigma^2},
\]
\begin{align*}
 \EE\{s_1(X, \lambda, &\mu, \sigma) s_3(X, \lambda, \mu, \sigma)\} =
 \frac{-1}{\sigma\lambda^2}\EE\left\{\left( |Y|^{\lambda}\ln(|Y|^\lambda) - |Y|^{\lambda} \right) (|Y|^{\lambda} - 1)\right\}\\
 &=\frac{-1}{\sigma\lambda^2}\Big(\lambda\EE\left\{|Y|^{2\lambda}\ln(|Y|)\right\}-\lambda\EE\left\{|Y|^{\lambda}\ln(|Y|)\right\}
 -\EE\left\{|Y|^{2\lambda}\right\} + \EE\left\{|Y|^{\lambda}\right\}\Big)\\
 &=\frac{-1}{\sigma\lambda^2}\Big(\lambda\frac{\lambda\Gamma(1/\lambda + 2)}{\Gamma(1/\lambda)}\{\psi(1/\lambda + 2)+\ln(\lambda)\}
 -\lambda\frac{\Gamma(1/\lambda + 1)}{\Gamma(1/\lambda)}\{\psi(1/\lambda + 1)+\ln(\lambda)\}\\
 &-\frac{\lambda^2\Gamma(1/\lambda + 2)}{\Gamma(1/\lambda)}
 +\frac{\lambda\Gamma(1/\lambda + 1)}{\Gamma(1/\lambda)} \Big)\\
 &=\frac{-1}{\sigma\lambda^2}\Big((\lambda + 1)\{\psi(1/\lambda + 1)+(1/\lambda+1)^{-1}+\ln(\lambda)\}-\psi(1/\lambda + 1)-\ln(\lambda)-\lambda - 1 +1 \Big)\\
 &= \frac{-1}{\sigma\lambda}\{\psi(1/\lambda+1) + \ln(\lambda)\}= \frac{-C_{\lambda}}{\sigma\lambda}
\end{align*}
and
\begin{align*}
\EE\left\{s_2(X, \lambda, \mu, \sigma) s_3(X, \lambda, \mu, \sigma)\right\}& = \frac{1}{\sigma^2} \EE\left\{s_2(Y, \lambda, 0, 1) s_3(Y, \lambda, 0, 1)\right\} = 0,\\
\EE\left\{s_1(X, \lambda, \mu, \sigma) s_2(X, \lambda, \mu, \sigma)\right\}& = \frac{1}{\sigma} \EE\left\{s_1(Y, \lambda, 0, 1) s_2(Y, \lambda, 0, 1)\right\} = 0
\end{align*}
since $s_2(y, \lambda, 0, 1)$ is an odd function and $s_1(y, \lambda, 0, 1)$, $s_3(y, \lambda, 0, 1)$, $f(y \nvert \lambda, 0, 1)$ are even functions.

One has
\[
I(\lambda, \mu, \sigma) =
\begin{bmatrix}
\frac{1}{\lambda^3}\big\{(1/\lambda+1)\psi_1(1/\lambda + 1)+C_{\lambda}^2 -1\big\} & 0 & \frac{-C_{\lambda}}{\sigma\lambda} \\[1mm]
0 & \frac{\lambda^{2 - 2/\lambda}\Gamma(2-1/\lambda)}{\sigma^2\Gamma(1/\lambda)} & 0 \\[1mm]
\frac{-C_{\lambda}}{\sigma\lambda} & 0 & \frac{\lambda}{\sigma^2}
\end{bmatrix}.
\]

Next, one determines the matrix $G(\lambda, \mu, \sigma)=\EE\big\{\bb{\tau}(X, \lambda, \mu, \sigma) \bb{s}(X, \lambda, \mu, \sigma)^{\top}\big\}$. One has, if $y = (x - \mu)/\sigma$,
\begin{align*}
\bb{\tau}(x, \lambda, \mu, \sigma)
&=
\begin{bmatrix}
\tau_1(x, \lambda, \mu, \sigma) \\[1mm]
\tau_2(x, \lambda, \mu, \sigma)
\end{bmatrix}
=
\begin{bmatrix}
\cos\left\{2\pi F(x \nvert \lambda, \mu, \sigma)\right\} \\[1mm]
\sin\left\{2\pi F(x \nvert \lambda, \mu, \sigma)\right\}
\end{bmatrix}
=
\begin{bmatrix}
\cos\left[\pi \left\{1 + \mathrm{sign}(y) \Gamma_{1/\lambda, 1}(|y|^{\lambda}/\lambda)\right\}\right] \\[1mm]
\sin\left[\pi \left\{1 + \mathrm{sign}(y) \Gamma_{1/\lambda, 1}(|y|^{\lambda}/\lambda)\right\}\right]
\end{bmatrix}\\
&=\begin{bmatrix}
\cos\left[\pi \left\{1 + \Gamma_{1/\lambda, 1}(|y|^{\lambda}/\lambda)\right\}\right] \\[1mm]
\mathrm{sign}(y)\sin\left[\pi \left\{1 + \Gamma_{1/\lambda, 1}(|y|^{\lambda}/\lambda)\right\}\right]
\end{bmatrix},
\end{align*}
since $\cos\{\pi(1-z)\}=\cos\{\pi(1+z)\}$ and $\sin\{\pi(1-z)\}=-\sin\{\pi(1+z)\}$ for all $0\le z \le 1$.

Now, using that $\EE\{\bb{\tau}(X, \lambda, \mu, \sigma)\} = \bb{0}_{2}$,
one has
\begin{align*}
 \lambda^2\EE\left\{\tau_1(X, \lambda, \mu, \sigma) s_1(X, \lambda, \mu, \sigma)\right\}
& =\EE\left(\cos\left[\pi \left\{1 + \Gamma_{1/\lambda, 1}(|Y|^{\lambda}/\lambda)\right\}\right]\left\{|Y|^{\lambda} \left(1 - \ln(|Y|^\lambda) \right) + C_{\lambda} - 1 \right\}\right)\\
& =\EE\left(\cos\left[\pi \left\{1 + \Gamma_{1/\lambda, 1}(|Y|^{\lambda}/\lambda)\right\}\right]|Y|^{\lambda} \left\{ 1 - \ln(|Y|^\lambda) \right\}\right)\\
& =\lambda\EE\left(\cos\left[\pi \left\{1 + \Gamma_{1/\lambda, 1}(V)\right\}\right]V\left\{1-\ln(\lambda V) \right\}\right)\\
&= \lambda\int_0^{\infty} \cos\left[\pi \left\{1 + \Gamma_{1/\lambda, 1}(v)\right\}\right]v \left\{ 1-\ln(\lambda v) \right\} f_{\mathrm{ga}}(v \nvert 1/\lambda, 1) \rd v\\
&= \int_0^{\infty} \cos\left[\pi \left\{1 + \Gamma_{1/\lambda, 1}(v)\right\}\right]\left\{1- \ln(\lambda v)\right\} f_{\mathrm{ga}}(v \nvert 1/\lambda+1, 1) \rd v\\
&= h_{1}(\lambda)-h_{3}(\lambda).
\end{align*}
\begin{align*}
\sigma \EE\left\{\tau_2(X, \lambda, \mu, \sigma) s_2(X, \lambda, \mu, \sigma)\right\}
& = \EE\left\{\mathrm{sign}(Y)\sin\left[\pi \left\{1 + \Gamma_{1/\lambda, 1}(|Y|^{\lambda}/\lambda)\right\}\right] |Y|^{\lambda - 1} \mathrm{sign}(Y)\right\}\\
&= \lambda^{1-1/\lambda}\EE\left\{\sin\left[\pi \left\{1 + \Gamma_{1/\lambda, 1}(V)\right\}\right] V^{1-1/\lambda}\right\}\\
&=\lambda^{1-1/\lambda} \int_0^{\infty} \sin\left[\pi \left\{1 + \Gamma_{1/\lambda, 1}(v)\right\}\right] v^{1-1/\lambda} f_{\mathrm{ga}}(v \nvert 1/\lambda, 1) \rd v\\
&= \frac{1}{\lambda^{1/\lambda-1} \Gamma(1/\lambda)} \int_0^{\infty} \sin\left[\pi \left\{1 + \Gamma_{1/\lambda, 1}(v)\right\}\right] e^{-v} \rd v
= \frac{h_{2}(\lambda)}{\lambda^{1/\lambda-1} \Gamma(1/\lambda)}.
\end{align*}
\begin{align*}
 \sigma\EE\left\{\tau_1(X, \lambda, \mu, \sigma) s_3(X, \lambda, \mu, \sigma)\right\}
& = \EE\left\{\cos\left[\pi \left\{1 + \Gamma_{1/\lambda, 1}(|Y|^{\lambda}/\lambda)\right\}\right] (|Y|^{\lambda}-1)\right\}\\
&= \EE\left\{\cos\left[\pi \left\{1 + \Gamma_{1/\lambda, 1}(|Y|^{\lambda}/\lambda)\right\}\right] |Y|^{\lambda}\right\}\\
&= \lambda\EE\left\{\cos\left[\pi \left\{1 + \Gamma_{1/\lambda, 1}(V)\right\}\right] V\right\}\\
&= \lambda\int_0^{\infty} \cos\left[\pi \left\{1 + \Gamma_{1/\lambda, 1}(v)\right\}\right] v f_{\mathrm{ga}}(v \nvert 1/\lambda, 1) \rd v\\
&= \int_0^{\infty} \cos\left[\pi \left\{1 + \Gamma_{1/\lambda, 1}(v)\right\}\right] f_{\mathrm{ga}}(v \nvert 1/\lambda+1, 1) \rd v
= h_{1}(\lambda).
\end{align*}

Given that $\tau_2(y, \lambda, 0, 1)$ and $s_2(Y, \lambda, 0, 1)$ are odd functions, that $\tau_1(y, \lambda, 0, 1)$, $s_1(Y, \lambda, 0, 1)$, $s_3(Y, \lambda, 0, 1)$ and $f(y \nvert \lambda, 0, 1)$ are even functions, one has
\[
\EE\left\{\tau_2(X, \lambda, \mu, \sigma) s_1(X, \lambda, \mu, \sigma)\right\} = \EE\left\{\tau_2(Y, \lambda, 0, 1) s_1(Y, \lambda, 0, 1)\right\} = 0,
\]
\[
\EE\left\{\tau_1(X, \lambda, \mu, \sigma) s_2(X, \lambda, \mu, \sigma)\right\} = \frac{1}{\sigma} \EE\left\{\tau_1(Y, \lambda, 0, 1) s_2(Y, \lambda, 0, 1)\right\} = 0
\]
and
\[
\EE\left\{\tau_2(X, \lambda, \mu, \sigma) s_3(X, \lambda, \mu, \sigma)\right\} = \frac{1}{\sigma} \EE\left\{\tau_2(Y, \lambda, 0, 1) s_3(Y, \lambda, 0, 1)\right\} = 0
\]
and one obtains
\[
G(\lambda, \mu, \sigma)
=
\begin{bmatrix}
\frac{h_{1}(\lambda)-h_{3}(\lambda)}{\lambda^2}& 0 & \frac{h_{1}(\lambda)}{\sigma} \\[1mm]
0 & \frac{h_{2}(\lambda)}{\sigma\lambda^{1/\lambda-1} \Gamma(1/\lambda)} & 0
\end{bmatrix}.
\]

Consider now the MM estimators of $\mu$ and $\sigma$, $\lambda$ considered as known. We define
\[
C_{2,\lambda}=\frac{\Gamma(1/\lambda)}{\lambda^{2/\lambda}\Gamma(3/\lambda)},\qquad D_{\lambda}=\frac{\Gamma^2(3/\lambda)}{\Gamma(1/\lambda)\Gamma(5/\lambda)-\Gamma^2(3/\lambda)}.
\]
Using the symmetry of $f_{\lambda}(x \nvert \mu, \sigma)$ with respect to $\mu$ and \eqref{eqn.EPD}, one has
\begin{equation}\label{eq.EPD.moments}
\EE(X)=\mu ~~\text{ and }~~ \Var(X)=\EE\{(X-\mu)^2\}=\frac{\lambda^{2/\lambda}\Gamma(3/\lambda)}{\Gamma(1/\lambda)}\sigma^2=\frac{\sigma^2}{C_{2,\lambda}}.
\end{equation}
The joint MM estimators of $\mu$ and $\sigma$, matching the two first moments, are then given by
\[
\hat{\mu}_n =\frac{1}{n} \sum_{i=1}^n x_i \text{ (or }\hat{\mu}_n=\mu_0\text{ if }\mu_0\text{ is known})
~~\text{ and }~~
\hat{\sigma}_n = \left\{C_{2,\lambda}~\frac{1}{n} \sum_{i=1}^n (x_i - \hat{\mu}_n)^2\right\}^{1/2}.
\]

The next step is to find a vector $\bb{g}(x,\lambda,\mu,\sigma)=[g_1(x,\lambda,\mu,\sigma),g_2(x,\lambda,\mu,\sigma)]^\top$ such that
\[
\sqrt{n}
\begin{bmatrix}
\hat{\mu}_n-\mu_0 \\[1mm]
\hat{\sigma}_n -\sigma_0
\end{bmatrix}= \frac{1}{\sqrt{n}}\sum_{i=1}^{n}\bb{g}(X_i,\lambda_0,\mu_0,\sigma_0)+o_{\hspace{0.3mm}\PP_{\!\mathcal{H}_0}}(1) \bb{1}_2.
\]
From the delta method, we know that if $\hat{\sigma}^2_n\stackrel{\PP}{\rightarrow}\sigma_0^2$ and $\sqrt{n}(\hat{\sigma}^2_n -\sigma_0^2)=O_{\hspace{0.3mm}\PP_{\!\mathcal{H}_0}}(1)$, then $\sqrt{n}\{h(\hat{\sigma}^2_n) -h(\sigma_0^2)\}=\sqrt{n}(\hat{\sigma}^2_n -\sigma_0^2)h'(\sigma_0^2)+o_{\hspace{0.3mm}\PP_{\!\mathcal{H}_0}}(1)$ for any function $h$ such that $h'(\sigma_0^2)\neq 0$ exists. Using the CLT and WLLN and
\[
\sqrt{n}\frac{\hat{\sigma}^2_n}{C_{2,\lambda}}=\frac{1}{\sqrt{n}}\sum_{i=1}^n (x_i - \hat{\mu}_n)^2= \frac{1}{\sqrt{n}}\sum_{i=1}^n (x_i - \mu_0)^2-\sqrt{n}(\hat{\mu}_n-\mu_0)^2= \frac{1}{\sqrt{n}}\sum_{i=1}^n (x_i - \mu_0)^2+o_{\hspace{0.3mm}\PP_{\!\mathcal{H}_0}}(1),
\]
it is easy to see that the two conditions are satisfied. Then, setting $h(a)=\sqrt{a}$ with $h'(a)=\frac{1}{2\sqrt{a}}$, one has $h(\hat{\sigma}^2_n)=\hat{\sigma}_n$, $h(\sigma_0^2)=\sigma_0$ with $h'(\sigma_0^2)=\frac{1}{2\sigma_0}$ and
\[
\sqrt{n}(\hat{\sigma}_n -\sigma_0)=\sqrt{n}\frac{1}{2\sigma_0}(\hat{\sigma}^2_n -\sigma_0^2)+o_{\hspace{0.3mm}\PP_{\!\mathcal{H}_0}}(1)
= \frac{1}{\sqrt{n}}\sum_{i=1}^n\frac{1}{2\sigma_0}\left\{C_{2,\lambda}(x_i - \mu_0)^2 -\sigma_0^2\right\}+o_{\hspace{0.3mm}\PP_{\!\mathcal{H}_0}}(1).
\]
Therefore, we set
\[
\bb{g}(x,\lambda,\mu,\sigma) =
\begin{bmatrix}
x - \mu\\[1mm]
\frac{1}{2\sigma}\left\{C_{2,\lambda}(x - \mu)^2 - \sigma^2\right\}
\end{bmatrix}.
\]
Using \eqref{eq.EPD.moments}, one can verify that $\EE\{\bb{g}(X,\lambda,\mu,\sigma)\}=\bb{0}_2$ if $X\sim \mathrm{EPD}(\lambda,\mu, \sigma)$.
The next step is to find the matrix $R(\lambda,\mu,\sigma)^{-1}=\EE\left\{\bb{g}(X, \lambda,\mu,\sigma) \bb{g}(X, \lambda,\mu,\sigma)^{\top}\right\}$. One can show that $\EE\{g_1(X,\lambda,\mu,\sigma)g_2(X,\lambda,\mu,\sigma)\}=0$ using even and odd functions and using \eqref{eq.EPD.moments}, one has $\EE\{g_1(X,\lambda,\mu,\sigma)^2\}=\EE\{(X - \mu)^2\}=\frac{\sigma^2}{C_{2,\lambda}}$. We have
\begin{align*}
\EE\{g_2(X,\lambda,\mu,\sigma)^2\} & =\frac{1}{4\sigma^2}\EE\left[\left\{C_{2,\lambda}(X - \mu)^2 - \sigma^2\right\}^2\right] \\
&=\frac{C^2_{2,\lambda}}{4\sigma^2}\EE\{(X-\mu)^4\}+\frac{\sigma^2}{4}-\frac{C_{2,\lambda}}{2}\EE\{(X-\mu)^2\}\\
&=\frac{C^2_{2,\lambda}}{4\sigma^2}\frac{\lambda^{4/\lambda}\Gamma(5/\lambda)\sigma^4}{\Gamma(1/\lambda)}+\frac{\sigma^2}{4}-\frac{\sigma^2}{2}
=\frac{\Gamma(1/\lambda)\Gamma(5/\lambda)}{\Gamma^2(3/\lambda)}\frac{\sigma^2}{4}-\frac{\sigma^2}{4}\\
&= \frac{\Gamma(1/\lambda)\Gamma(5/\lambda)-\Gamma^2(3/\lambda)}{\Gamma^2(3/\lambda)}\cdot\frac{\sigma^2}{4}=\frac{\sigma^2}{4D_{\lambda}}.
\end{align*}
We obtain
\[
R(\lambda,\mu,\sigma)^{-1}=\sigma^2\begin{bmatrix}
\frac{1}{C_{2,\lambda}} & 0 \\[1mm]
0 & \frac{1}{4D_{\lambda}}
\end{bmatrix}~\text{and }~
R(\lambda,\mu,\sigma)=\frac{1}{\sigma^2}\begin{bmatrix}
C_{2,\lambda} & 0 \\[1mm]
0 & 4D_{\lambda}
\end{bmatrix}
\]
and letting $y = (x-\mu)/\sigma$,
\begin{equation}\label{eq.r.EPD}
\bb{r}(x,\lambda,\mu,\sigma)
= R(\lambda,\mu,\sigma)\bb{g}(x, \lambda,\mu,\sigma)=\frac{1}{\sigma}
\begin{bmatrix}
C_{2,\lambda}y \\[1mm]
2D_{\lambda}\left(C_{2,\lambda}y^2 - 1\right)
\end{bmatrix}.
\end{equation}

Finally, one determines the matrix $J(\lambda,\mu,\sigma)=\EE\big\{\bb{\tau}(X, \lambda,\mu,\sigma) \bb{r}(X, \lambda,\mu,\sigma)^{\top}\big\}$.
Given that $\tau_2(y, \lambda, 0, 1)$ and $r_1(Y, \lambda, 0, 1)$ are odd functions, that $\tau_1(y, \lambda, 0, 1)$, $r_2(Y, \lambda, 0, 1)$ and $f(y \nvert \lambda, 0, 1)$ are even functions, one has
\[
\EE\left\{\tau_1(X, \lambda,\mu,\sigma) r_1(X, \lambda,\mu,\sigma)\right\} = \frac{1}{\sigma} \EE\left\{\tau_1(Y, \lambda, 0, 1) r_1(Y, \lambda, 0, 1)\right\} = 0
\]
and
\[
\EE\left\{\tau_2(X, \lambda,\mu,\sigma) r_2(X, \lambda,\mu,\sigma)\right\} = \frac{1}{\sigma} \EE\left\{\tau_2(Y, \lambda, 0, 1) r_2(Y, \lambda, 0, 1)\right\} = 0.
\]
We have
\begin{align*}
\sigma \EE\big\{\tau_1(X, \lambda,\mu,\sigma)& r_2(X, \lambda,\mu,\sigma)\big\}=\EE\big\{\tau_1(Y, \lambda, 0,1) 2C_{2,\lambda}D_{\lambda}Y^2\big\}\\
&=2C_{2,\lambda}D_{\lambda}\EE\big\{ \cos\left[\pi \left\{1 + \Gamma_{1/\lambda, 1}(|Y|^{\lambda}/\lambda)\right\}\right]Y^2\big\}\\
&=2C_{2,\lambda}D_{\lambda}\lambda^{2/\lambda}\EE\big\{ \cos\left[\pi \left\{1 + \Gamma_{1/\lambda, 1}(V)\right\}\right]V^{2/\lambda}\big\}\\
&=2C_{2,\lambda}D_{\lambda}\lambda^{2/\lambda}\int_{0}^{\infty} \cos\left[\pi \left\{1 + \Gamma_{1/\lambda, 1}(v)\right\}\right]v^{2/\lambda}
f_{\mathrm{ga}}(v \nvert 1/\lambda, 1)\rd v \\
&=\frac{2 C_{2,\lambda}D_{\lambda}\lambda^{2/\lambda}\Gamma(3/\lambda)}{\Gamma(1/\lambda)}\int_{0}^{\infty} \cos\left[\pi \left\{1 + \Gamma_{1/\lambda, 1}(v)\right\}\right] f_{\mathrm{ga}}(v \nvert 3/\lambda, 1) \rd v\\
&=\frac{2 C_{2,\lambda}D_{\lambda}\lambda^{2/\lambda}\Gamma(3/\lambda)h_{4}(\lambda)}{\Gamma(1/\lambda)}=
\frac{\Gamma(1/\lambda)}{\lambda^{2/\lambda}\Gamma(3/\lambda)}\frac{2\lambda^{2/\lambda}\Gamma(3/\lambda)}{\Gamma(1/\lambda)}D_{\lambda}h_{4}(\lambda)
=2D_{\lambda}h_{4}(\lambda)
\end{align*}
and
\begin{align*}
\sigma \EE\big\{\tau_2(X, \lambda,\mu,\sigma)& r_1(X, \lambda,\mu,\sigma)\big\}=C_{2,\lambda}\EE\big\{\tau_2(Y, \lambda, 0,1) Y\big\}=C_{2,\lambda}\EE\big\{\sin\left[\pi \left\{1 + \Gamma_{1/\lambda, 1}(|Y|^{\lambda}/\lambda)\right\}\right]|Y|\big\}\\
&=C_{2,\lambda}\lambda^{1/\lambda}\EE\big\{\sin\left[\pi \left\{1 + \Gamma_{1/\lambda, 1}(V)\right\}\right]V^{1/\lambda}\big\}\\
&=C_{2,\lambda}\lambda^{1/\lambda}\int_{0}^{\infty} \sin\left[\pi \left\{1 + \Gamma_{1/\lambda, 1}(v)\right\}\right]v^{1/\lambda}
f_{\mathrm{ga}}(v \nvert 1/\lambda, 1)\rd v \\
&=\frac{C_{2,\lambda}\lambda^{1/\lambda}\Gamma(2/\lambda)}{\Gamma(1/\lambda)} \int_{0}^{\infty} \sin\left[\pi \left\{1 + \Gamma_{1/\lambda, 1}(v)\right\}\right]
f_{\mathrm{ga}}(v \nvert 2/\lambda, 1)\rd v =\frac{h_{5}(\lambda)\Gamma(2/\lambda)}{\lambda^{1/\lambda}\Gamma(3/\lambda)}.
\end{align*}
Therefore
\[
J(\lambda,\mu,\sigma)=\frac{1}{\sigma}
\begin{bmatrix}
0 & 2D_{\lambda}h_{4}(\lambda)\\[1mm]
\frac{h_{5}(\lambda)\Gamma(2/\lambda)}{\lambda^{1/\lambda}\Gamma(3/\lambda)} & 0
\end{bmatrix}.
\]

\section{Proof for the \texorpdfstring{$\mathrm{half\mhyphen EPD}(\lambda,\sigma)$}{half-EPD} test}\label{proof:half-EPD}

Letting $y = x/\sigma$, straightforward calculations yield
\[
\bb{s}(x, \lambda, \sigma)
=
\begin{bmatrix}
s_1(x, \lambda, \sigma) \\[1mm]
s_2(x, \lambda, \sigma)
\end{bmatrix}
=
\begin{bmatrix}
\partial_{\lambda} \ln f(x \nvert \lambda, \sigma) \\[1mm]
\partial_{\sigma} \ln f(x \nvert \lambda, \sigma)
\end{bmatrix}
=
\begin{bmatrix}
 \frac{1}{\lambda^2}\left\{y^{\lambda}- y^{\lambda} \ln(y^\lambda) + C_{\lambda} - 1 \right\}\\[1mm]
\frac{1}{\sigma}(y^{\lambda} - 1)
\end{bmatrix},
\]
where
\[
C_{\lambda} = \psi(1/\lambda+1) + \ln(\lambda).
\]
The ML estimators for $\lambda_0$ and/or $\sigma_0$ come from the equation $\sum_{i=1}^{n}\bb{s}(x_i, \lambda_0, \sigma_0)=0$.
Furthermore, one has, for $\lambda\in (0, \infty)$,
\[
X\sim \mathrm{half\mhyphen EPD}(\lambda,\sigma)
~~ \Leftrightarrow ~~ Y = \frac{X}{\sigma}\sim \mathrm{half\mhyphen EPD}(\lambda,1)
~~ \Leftrightarrow ~~ V = \frac{1}{\lambda}Y^{\lambda} \sim \mathrm{gamma}(1/\lambda, 1).
\]
From \eqref{eqn.EPD}, it follows that, for $k\in (-1, \infty)$,
\begin{align}
\EE\left\{Y^k(\ln Y)^j\right\}
&= \frac{\lambda^{k/\lambda-j}\Gamma(1/\lambda + k/\lambda)}{\Gamma(1/\lambda)} \nonumber\\
&\times
\begin{cases}
1, &\mbox{if } j = 0, \\
\psi(1/\lambda + k/\lambda)+\ln(\lambda), &\mbox{if } j = 1, \\
\psi_1(1/\lambda + k/\lambda) + \{\psi(1/\lambda + k/\lambda)+\ln(\lambda)\}^2, &\mbox{if } j = 2,
\end{cases}
\end{align}\label{eq:halfEPD}
regardless of whether the expectation is taken with respect to $X\sim \mathrm{half\mhyphen EPD}(\lambda,\sigma)$ or $Y\sim \mathrm{half\mhyphen EPD}(\lambda, 1)$ if $Y= X/\sigma$.

Next, one determines the Fisher information matrix $I(\lambda,\sigma)=\EE\big\{\bb{s}(X, \lambda, \sigma) \bb{s}(X, \lambda, \sigma)^{\top}\big\}$. One has, given that $\EE(\bb{s}(X, \lambda, \sigma))= \bb{0}_{2}$,
\begin{align*}
\EE\left\{s_1^2(X, \lambda, \sigma)\right\}
&= \frac{1}{\lambda^4}\Big((C_{\lambda} - 1)^2 -2 (C_{\lambda} - 1)\EE\left\{Y^{\lambda}\ln(Y^\lambda)\right\}+2 (C_{\lambda} - 1)\EE(Y^{\lambda}) \\
&+ \lambda^2\EE\{Y^{2\lambda}(\ln(Y))^2\}+ \EE(Y^{2\lambda})- 2\lambda\EE\{Y^{2\lambda}\ln(Y)\}\Big)\\
&= \frac{1}{\lambda^4}\Big(-(C_{\lambda} - 1)^2 + \lambda^2 \frac{\Gamma(1/\lambda + 2)}{\Gamma(1/\lambda)} [\psi_1(1/\lambda + 2) + \{\psi(1/\lambda + 2)+\ln(\lambda)\}^2]\\
&+ \frac{\lambda^2\Gamma(1/\lambda + 2)}{\Gamma(1/\lambda)} - 2\lambda\frac{\lambda\Gamma(1/\lambda + 2)}{\Gamma(1/\lambda)} \{\psi(1/\lambda + 2) + \ln(\lambda)\}\Big)\\
&= \frac{1}{\lambda^4}\Big(-(C_{\lambda} - 1)^2 + (\lambda+1) [\psi_1(1/\lambda + 1) - (1/\lambda+1)^{-2}+ \{\psi(1/\lambda + 1) \\
&+\ln(\lambda)+ (1/\lambda+1)^{-1} \}^2] + \lambda+1 -2(\lambda+1)\{\psi(1/\lambda + 1)+\ln(\lambda)+ (1/\lambda+1)^{-1}\}\Big)\\
&=\frac{1}{\lambda^3}\big\{(1/\lambda+1)\psi_1(1/\lambda + 1)+C_{\lambda}^2 -1\big\},
\end{align*}
\[
\EE\left\{s_2^2(X, \lambda,\sigma)\right\}
= \frac{1}{\sigma^2} \left\{\EE\left(Y^{2\lambda}\right) - 2\EE\left(Y^{\lambda}\right) + 1\right\}
= \frac{1}{\sigma^2} \left\{\frac{\lambda^2\Gamma(2 + 1/\lambda)}{\Gamma(1/\lambda)} - 2 + 1\right\}
= \frac{\lambda}{\sigma^2}.
\]
and
\begin{align*}
\EE\{s_1(X, \lambda, & \sigma) s_2(X, \lambda, \sigma)\} =
\frac{-1}{\sigma\lambda^2}\EE\left\{\left( Y^{\lambda}\ln(Y^\lambda) - Y^{\lambda} \right) (Y^{\lambda} - 1)\right\}\\
&=\frac{-1}{\sigma\lambda^2}\Big(\lambda\EE\left\{Y^{2\lambda}\ln(Y)\right\}-\lambda\EE\left\{Y^{\lambda}\ln(Y)\right\} -\EE\left\{Y^{2\lambda}\right\} + \EE\left\{Y^{\lambda}\right\}\Big)\\
&=\frac{-1}{\sigma\lambda^2}\Big(\lambda\frac{\lambda\Gamma(1/\lambda + 2)}{\Gamma(1/\lambda)}\{\psi(1/\lambda + 2)+\ln(\lambda)\} -\lambda\frac{\Gamma(1/\lambda + 1)}{\Gamma(1/\lambda)}\{\psi(1/\lambda + 1)+\ln(\lambda)\}\\
&-\frac{\lambda^2\Gamma(1/\lambda + 2)}{\Gamma(1/\lambda)} +\frac{\lambda\Gamma(1/\lambda + 1)}{\Gamma(1/\lambda)} \Big)\\
&=\frac{-1}{\sigma\lambda^2}\Big((\lambda + 1)\{\psi(1/\lambda + 1)+(1/\lambda+1)^{-1}+\ln(\lambda)\}-\psi(1/\lambda + 1)-\ln(\lambda)-\lambda - 1 +1 \Big)\\
&= \frac{-1}{\sigma\lambda}\{\psi(1/\lambda+1) + \ln(\lambda)\}= \frac{-C_{\lambda}}{\sigma\lambda}.
\end{align*}

One has
\[
I(\lambda,\sigma) =
\begin{bmatrix}
\frac{1}{\lambda^3}\big\{(1/\lambda+1)\psi_1(1/\lambda + 1)+C^2_{1, \lambda} -1\big\} & \frac{-C_{\lambda}}{\sigma\lambda} \\[1mm]
\frac{-C_{\lambda}}{\sigma\lambda} & \frac{\lambda}{\sigma^2}
\end{bmatrix}.
\]

Next, one determines the matrix $G(\lambda,\sigma)=\EE\big\{\bb{\tau}(X, \lambda,\sigma) \bb{s}(X, \lambda,\sigma)^{\top}\big\}$. One has, if $y = x/\sigma$,
\[
\bb{\tau}(x, \lambda,\sigma)
=
\begin{bmatrix}
\tau_1(x, \lambda,\sigma) \\[1mm]
\tau_2(x, \lambda,\sigma)
\end{bmatrix}
=
\begin{bmatrix}
\cos\left\{2\pi F(x \nvert \lambda,\sigma)\right\} \\[1mm]
\sin\left\{2\pi F(x \nvert \lambda,\sigma)\right\}
\end{bmatrix}
=
\begin{bmatrix}
\cos\left\{2\pi \Gamma_{1/\lambda, 1}(y^{\lambda}/\lambda)\right\} \\[1mm]
\sin\left\{2\pi \Gamma_{1/\lambda, 1}(y^{\lambda}/\lambda)\right\}
\end{bmatrix}.
\]
Note that $\bb{\tau}(x, \lambda,\sigma) = \bb{\tau}(y, \lambda,1)$.
Now, using that $\EE\{\bb{\tau}(X, \lambda,\sigma)\} = \bb{0}_2$, one has
\begin{align*}
 \lambda^2\EE\left\{\tau_1(X, \lambda, \sigma) s_1(X, \lambda, \sigma)\right\}
& =\EE\left(\cos\left[2\pi \Gamma_{1/\lambda, 1}(Y^{\lambda}/\lambda)\right]\left\{ Y^{\lambda} \left( 1- \ln(Y^\lambda) \right) + C_{\lambda} - 1\right\}\right)\\
& =\EE\left(\cos\left[2\pi \Gamma_{1/\lambda, 1}(Y^{\lambda}/\lambda)\right] Y^{\lambda} \left\{ 1- \ln(Y^\lambda) \right\}\right)\\
& =\lambda\EE\left(\cos\left[2\pi \Gamma_{1/\lambda, 1}(V)\right] V \left\{ 1- \ln(\lambda V) \right\}\right)\\
&= \lambda\int_0^{\infty} \cos\left[2\pi \Gamma_{1/\lambda, 1}(v)\right]v \left\{ 1-\ln(\lambda v) \right\} f_{\mathrm{ga}}(v \nvert 1/\lambda, 1) \rd v\\
&= \int_0^{\infty} \cos\left[2\pi \Gamma_{1/\lambda, 1}(v)\right]\left\{1- \ln(\lambda v)\right\} f_{\mathrm{ga}}(v \nvert 1/\lambda+1, 1) \rd v\\
&= h_{6}(1/\lambda,1/\lambda+1,1)-h_{15}(\lambda).
\end{align*}
Similarly, one obtains
\[
\lambda^2\EE\left\{\tau_2(X, \lambda, \sigma) s_1(X, \lambda, \sigma)\right\}= h_{7}(1/\lambda,1/\lambda+1,1)-h_{16}(\lambda).
\]
One has
\begin{align*}
 \sigma\EE\left\{\tau_1(X, \lambda, \sigma) s_2(X, \lambda, \sigma)\right\}
& = \EE\left\{\cos\left[2\pi \Gamma_{1/\lambda, 1}(Y^{\lambda}/\lambda)\right] (Y^{\lambda}-1)\right\}
= \EE\left\{\cos\left[2\pi \Gamma_{1/\lambda, 1}(Y^{\lambda}/\lambda)\right] Y^{\lambda}\right\}\\
&= \lambda\EE\left\{\cos\left[2\pi \Gamma_{1/\lambda, 1}(V)\right] V\right\}
= \lambda\int_0^{\infty} \cos\left[2\pi \Gamma_{1/\lambda, 1}(v)\right] v f_{\mathrm{ga}}(v \nvert 1/\lambda, 1) \rd v\\
&= \int_0^{\infty} \cos\left\{2\pi \Gamma_{1/\lambda, 1}(v)\right\} f_{\mathrm{ga}}(v \nvert 1/\lambda+1,1) \rd v
= h_{6}(1/\lambda,1 / \lambda+1,1).
\end{align*}
Similarly, one obtains
\[
\sigma \EE\left\{\tau_2(X, \lambda, \sigma) s_2(X, \lambda, \sigma)\right\}= h_{7}(1/\lambda, 1 / \lambda +1, 1)
\]
and\[
G(\lambda,\sigma)
=
\begin{bmatrix}
\frac{h_{6}(1/\lambda,1/\lambda+1,1)-h_{15}(\lambda)}{\lambda^2} & \frac{h_{6}(1/\lambda, 1 / \lambda +1, 1)}{\sigma} \\[1mm]
\frac{h_{7}(1/\lambda,1/\lambda+1,1)-h_{16}(\lambda)}{\lambda^2} & \frac{h_{7}(1/\lambda, 1 / \lambda +1, 1)}{\sigma}
\end{bmatrix}.
\]

\section{Proof for the \texorpdfstring{$\mathrm{SN}(\lambda, \mu, \sigma)$}{SN} test}\label{proof:SN}

Letting $y = (x-\mu)/\sigma$, straightforward calculations yield
\[
\bb{s}(x, \lambda, \mu, \sigma)
=
\begin{bmatrix}
s_1(x, \lambda, \mu, \sigma) \\[1mm]
s_2(x, \lambda, \mu, \sigma) \\[1mm]
s_3(x, \lambda, \mu, \sigma)
\end{bmatrix}
=
\begin{bmatrix}
\partial_{\lambda} \ln\{f(x \nvert \lambda, \mu, \sigma)\} \\[1mm]
\partial_{\mu} \ln\{f(x \nvert \lambda, \mu, \sigma)\} \\[1mm]
\partial_{\sigma} \ln\{f(x \nvert \lambda, \mu, \sigma)\}
\end{bmatrix}
=
\begin{bmatrix}
y H(\lambda y)\\[1mm]
\frac{1}{\sigma}\{y-\lambda H(\lambda y)\} \\[1mm]
\frac{1}{\sigma}\{-1+y^2-\lambda y H(\lambda y)\}
\end{bmatrix},
\]
where
\[
H(t)=\phi(t)/\Phi(t).
\]
The ML estimators for $\lambda_0$, $\mu_0$ and/or $\sigma_0$ come from the equation $\sum_{i=1}^{n}\bb{s}(x_i, \lambda_0, \mu_0, \sigma_0)=0$.

Here are some equations that will be useful. For all $\lambda\in \R, \mu \in\R, \sigma \in (0,\infty)$,
$$
X\sim \mathrm{SN}(\lambda,\mu,\sigma)\Leftrightarrow Y=\frac{X-\mu}{\sigma}\sim \mathrm{SN}(\lambda,0, 1),
$$
$$
\EE(Y)=\frac{2 \lambda}{\sqrt{2\pi(1+ \lambda^2)}}, ~ \EE(Y^2)=1,
$$
$$
\EE\{Y H(\lambda Y)\}=0, ~ \EE\{Y^3 H(\lambda Y)\} =0,
$$
$$
\EE\{H(\lambda Y)\}=2\int_{-\infty}^{\infty}  \phi(y)\phi(\lambda y) \rd y = \frac{2}{\sqrt{2\pi(1+ \lambda^2)}},
$$
$$
\EE\{Y^2 H(\lambda Y)\}=2\int_{-\infty}^{\infty}  y^2\phi(y)\phi(\lambda y) \rd y = \frac{2}{\sqrt{2\pi}(1+ \lambda^2)^{3/2}},
$$
$$
\EE\{H(\lambda Y)^2)\} = 2 \int_{-\infty}^{\infty}\frac{\phi(v)\phi^2(\lambda v)}{\Phi(\lambda v)} \rd v=  2 h_{36}(\lambda),
$$
$$
\EE\{Y H(\lambda Y)^2)\} = 2 \int_{-\infty}^{\infty}\frac{v\phi(v)\phi^2(\lambda v)}{\Phi(\lambda v)} \rd v=  2 h_{37}(\lambda),
$$
$$
\EE\{Y^2 H^2(\lambda Y)\}=2\int_{-\infty}^{\infty}\frac{v^2\phi(v)\phi^2(\lambda v)}{\Phi(\lambda v)} \rd v = 2  h_{38}(\lambda).
$$

Next, one determines the Fisher information matrix $I(\lambda, \mu, \sigma)=\EE\big\{\bb{s}(X, \lambda, \mu, \sigma) \bb{s}(X, \lambda, \mu, \sigma)^{\top}\big\}$ or using $I(\lambda, \mu, \sigma)=-\EE\big\{\partial^2_{\bb{\theta}} \ln\{f(X \nvert \lambda, \mu, \sigma)\}\big\}$. One has
\[
\EE\big[-\partial^2_{\lambda} \ln\{f(X \nvert \lambda, \mu, \sigma)\}\big]=\lambda \EE\{Y^3 H(\lambda Y)\} + \EE\{Y^2 H(\lambda Y)^2\}= 2  h_{38}(\lambda),
\]
\begin{align*}
\sigma^2\EE\big[-\partial^2_{\mu} \ln\{f(X \nvert \lambda, \mu, \sigma)\}\big] & =1+ \lambda^3 \EE\{Y H(\lambda Y)\} + \lambda^2 \EE\{H(\lambda Y)^2)\}\\
& = 1 + \lambda^2 \EE\{H(\lambda Y)^2)\} =  1 + 2 \lambda^2 h_{36}(\lambda),
\end{align*}
\begin{align*}
\sigma^2\EE\big[-\partial^2_{\sigma} \ln\{f(X \nvert \lambda, \mu, \sigma)\}\big] & =-1 + 3 \EE(Y^2) - 2 \lambda \EE\{Y  H(\lambda Y)\} + \lambda ^ 3  \EE\{Y ^ 3  H(\lambda Y)\}  + \lambda ^ 2  \EE\{Y ^ 2  H(\lambda Y) ^ 2\} \\
&= 2  + \lambda ^ 2  \EE\{Y ^ 2  H(\lambda Y) ^ 2\} = 2  + 2\lambda ^ 2 h_{38}(\lambda)=2\{1 + \lambda ^ 2 h_{38}(\lambda)\},
\end{align*}
\begin{align*}
\sigma\EE\big[-\partial^2_{\lambda\mu} \ln\{f(X \nvert \lambda, \mu, \sigma)\}\big] & =\EE\{H(\lambda Y)\} - \lambda^2\EE\{Y^2  H(\lambda Y)\} - \lambda \EE\{Y H(\lambda Y)^2\}\\
&= \frac{2}{\sqrt{2\pi(1+ \lambda^2)}} -\lambda^2 \frac{2}{\sqrt{2\pi}(1+ \lambda^2)^{3/2}} - 2\lambda  h_{37}(\lambda)\\
&= \frac{2}{\sqrt{2\pi}(1+ \lambda^2)^{3/2}} - 2\lambda  h_{37}(\lambda)= 2\left\{\frac{1}{\sqrt{2\pi}(1+ \lambda^2)^{3/2}} - \lambda  h_{37}(\lambda)\right\},
\end{align*}
\[
\sigma\EE\big[-\partial^2_{\lambda\sigma} \ln\{f(X \nvert \lambda, \mu, \sigma)\}\big]  =\EE\{Y H(\lambda Y)\}-\lambda^2\EE\{Y^3 H(\lambda Y)\}-\lambda\EE\{Y^2 H^2(\lambda Y)\}= - 2\lambda h_{38}(\lambda),
\]
\begin{align*}
\sigma^2\EE\big[-\partial^2_{\mu\sigma} \ln\{f(X \nvert \lambda, \mu, \sigma)\}\big] & =2\EE(Y)-\lambda\EE\{H(\lambda Y)\} + \lambda^3\EE\{Y^2  H(\lambda Y)\} + \lambda^2 \EE\{Y H(\lambda Y)^2\}\\
&= \frac{4\lambda}{\sqrt{2\pi(1+ \lambda^2)}} - \frac{2\lambda}{\sqrt{2\pi(1+ \lambda^2)}}  + \frac{2\lambda^3}{\sqrt{2\pi}(1+ \lambda^2)^{3/2}} + 2\lambda^2  h_{37}(\lambda)\\
&= 2\lambda\left\{\frac{1}{\sqrt{2\pi(1+ \lambda^2)}}  + \frac{\lambda^2}{\sqrt{2\pi}(1+ \lambda^2)^{3/2}} + \lambda  h_{37}(\lambda)\right\}\\
&= 2\lambda\left\{\frac{1+2\lambda^2}{\sqrt{2\pi}(1+ \lambda^2)^{3/2}} + \lambda  h_{37}(\lambda)\right\}.
\end{align*}

One has
\[
I(\lambda, \mu, \sigma) =2
\begin{bmatrix}
 h_{38}(\lambda) & \frac{1}{\sigma} \Big\{\frac{1}{\sqrt{2\pi}(1+ \lambda^2)^{3/2}} - \lambda  h_{37}(\lambda)\Big\} & \frac{- \lambda h_{38}(\lambda)}{\sigma} \\[1mm]
\frac{1}{\sigma} \Big\{\frac{1}{\sqrt{2\pi}(1+ \lambda^2)^{3/2}} - \lambda  h_{37}(\lambda)\Big\}  & \frac{1/2 +  \lambda^2 h_{36}(\lambda)}{\sigma^2} & \frac{\lambda}{\sigma^2} \Big\{\frac{1+2\lambda^2}{\sqrt{2\pi}(1+ \lambda^2)^{3/2}} + \lambda  h_{37}(\lambda)\Big\} \\[1mm]
\frac{- \lambda h_{38}(\lambda)}{\sigma} & \frac{\lambda}{\sigma^2} \Big\{\frac{1+2\lambda^2}{\sqrt{2\pi}(1+ \lambda^2)^{3/2}} + \lambda  h_{37}(\lambda)\Big\} & \frac{1}{\sigma^2}\{1 + \lambda ^ 2 h_{38}(\lambda)\}
\end{bmatrix}.
\]

Next, one determines the matrix $G(\lambda, \mu, \sigma)=\EE\big\{\bb{\tau}(X, \lambda, \mu, \sigma) \bb{s}(X, \lambda, \mu, \sigma)^{\top}\big\}$. One has, if $y = (x - \mu)/\sigma$,
\begin{align*}
\bb{\tau}(x, \lambda, \mu, \sigma)
&=
\begin{bmatrix}
\tau_1(x, \lambda, \mu, \sigma) \\[1mm]
\tau_2(x, \lambda, \mu, \sigma)
\end{bmatrix}
=
\begin{bmatrix}
\cos\left\{2\pi F_{SN}(x \nvert \lambda, \mu, \sigma)\right\} \\[1mm]
\sin\left\{2\pi F_{SN}(x \nvert \lambda, \mu, \sigma)\right\}
\end{bmatrix}
=
\begin{bmatrix}
\cos\left\{2\pi F_{SN}(y \nvert \lambda, 0, 1)\right\} \\[1mm]
\sin\left\{2\pi F_{SN}(y \nvert \lambda, 0, 1)\right\}
\end{bmatrix}.
\end{align*}

One has
\begin{align*}
\EE\left\{\tau_1(X, \lambda, \mu, \sigma) s_1(X, \lambda, \mu, \sigma)\right\}
& =\EE\left[\cos\left\{2\pi F_{SN}(Y \nvert \lambda, 0, 1)\right\}Y H(\lambda Y)\right]\\
&=2\int_{-\infty}^{\infty} v \cos\left\{2\pi F_{SN}(v \nvert \lambda, 0, 1)\right\}\phi(\lambda v)\phi(v)\rd v=2 h_{39}(\lambda).
\end{align*}
\begin{align*}
\EE\left\{\tau_2(X, \lambda, \mu, \sigma) s_1(X, \lambda, \mu, \sigma)\right\}
& =\EE\left[\sin\left\{2\pi F_{SN}(Y \nvert \lambda, 0, 1)\right\}Y H(\lambda Y)\right]\\
&=2\int_{-\infty}^{\infty} v \sin\left\{2\pi F_{SN}(v \nvert \lambda, 0, 1)\right\}\phi(\lambda v)\phi(v)\rd v=2 h_{40}(\lambda).
\end{align*}
\begin{align*}
\sigma\EE\left\{\tau_1(X, \lambda, \mu, \sigma) s_2(X, \lambda, \mu, \sigma)\right\}
& =\EE\left[\cos\left\{2\pi F_{SN}(Y \nvert \lambda, 0, 1)\right\}Y \right]-\lambda\EE\left[\cos\left\{2\pi F_{SN}(Y \nvert \lambda, 0, 1)\right\}H(\lambda Y)\right]\\
&=2\int_{-\infty}^{\infty} v \cos\left\{2\pi F_{SN}(v \nvert \lambda, 0, 1)\right\}\Phi(\lambda v)\phi(v)\rd v \\
& - 2\lambda\int_{-\infty}^{\infty} \cos\left\{2\pi F_{SN}(v \nvert \lambda, 0, 1)\right\}\phi(\lambda v)\phi(v)\rd v\\
&= 2\int_{-\infty}^{\infty} \cos\left\{2\pi F_{SN}(v \nvert \lambda, 0, 1)\right\} \left\{v \Phi(\lambda v)-\lambda \phi(\lambda v)\right\}\phi(v)\rd v \\
&=2 h_{41}(\lambda).
\end{align*}
\begin{align*}
\sigma\EE\left\{\tau_2(X, \lambda, \mu, \sigma) s_2(X, \lambda, \mu, \sigma)\right\}
& =\EE\left[\sin\left\{2\pi F_{SN}(Y \nvert \lambda, 0, 1)\right\}Y \right]-\lambda\EE\left[\sin\left\{2\pi F_{SN}(Y \nvert \lambda, 0, 1)\right\}H(\lambda Y)\right]\\
&=2\int_{-\infty}^{\infty} v \sin\left\{2\pi F_{SN}(v \nvert \lambda, 0, 1)\right\}\Phi(\lambda v)\phi(v)\rd v \\
& - 2\lambda\int_{-\infty}^{\infty} \sin\left\{2\pi F_{SN}(v \nvert \lambda, 0, 1)\right\}\phi(\lambda v)\phi(v)\rd v\\
&= 2\int_{-\infty}^{\infty} \sin\left\{2\pi F_{SN}(v \nvert \lambda, 0, 1)\right\} \left\{v \Phi(\lambda v)-\lambda \phi(\lambda v)\right\}\phi(v)\rd v \\
&=2 h_{42}(\lambda).
\end{align*}
\begin{align*}
\sigma\EE\left\{\tau_1(X, \lambda, \mu, \sigma) s_3(X, \lambda, \mu, \sigma)\right\}
& =-\EE\left[\cos\left\{2\pi F_{SN}(Y \nvert \lambda, 0, 1)\right\}\right]\\
&+\EE\left[\cos\left\{2\pi F_{SN}(Y \nvert \lambda, 0, 1)\right\}Y^2 \right]-\lambda\EE\left[\cos\left\{2\pi F_{SN}(Y \nvert \lambda, 0, 1)\right\}Y H(\lambda Y)\right]\\
&= 0 + 2\int_{-\infty}^{\infty} v^2 \cos\left\{2\pi F_{SN}(v \nvert \lambda, 0, 1)\right\}\Phi(\lambda v)\phi(v)\rd v \\
& - 2\lambda \int_{-\infty}^{\infty} v \cos\left\{2\pi F_{SN}(v \nvert \lambda, 0, 1)\right\}\phi(\lambda v)\phi(v)\rd v \\
&= 2\int_{-\infty}^{\infty}v \cos\left\{2\pi F_{SN}(v \nvert \lambda, 0, 1)\right\} \left\{v \Phi(\lambda v)-\lambda \phi(\lambda v)\right\}\phi(v)\rd v \\
& =2 h_{43}(\lambda).
\end{align*}
\begin{align*}
\sigma\EE\left\{\tau_2(X, \lambda, \mu, \sigma) s_3(X, \lambda, \mu, \sigma)\right\}
& =-\EE\left[\sin\left\{2\pi F_{SN}(Y \nvert \lambda, 0, 1)\right\}\right]\\
&+\EE\left[\sin\left\{2\pi F_{SN}(Y \nvert \lambda, 0, 1)\right\}Y^2 \right]-\lambda\EE\left[\sin\left\{2\pi F_{SN}(Y \nvert \lambda, 0, 1)\right\}Y H(\lambda Y)\right]\\
&= 0 + 2\int_{-\infty}^{\infty} v^2 \sin\left\{2\pi F_{SN}(v \nvert \lambda, 0, 1)\right\}\Phi(\lambda v)\phi(v)\rd v  \\
& - 2\lambda \int_{-\infty}^{\infty} v \sin\left\{2\pi F_{SN}(v \nvert \lambda, 0, 1)\right\}\phi(\lambda v)\phi(v)\rd v \\
&= 2\int_{-\infty}^{\infty}v \sin\left\{2\pi F_{SN}(v \nvert \lambda, 0, 1)\right\} \left\{v \Phi(\lambda v)-\lambda \phi(\lambda v)\right\}\phi(v)\rd v \\
& =2 h_{44}(\lambda).
\end{align*}

and one obtains
\[
G(\lambda, \mu, \sigma)
= 2
\begin{bmatrix}
h_{39}(\lambda)& \frac{1}{\sigma}h_{41}(\lambda) & \frac{1}{\sigma} h_{43}(\lambda)\\[1mm]
h_{40}(\lambda)& \frac{1}{\sigma} h_{42}(\lambda) & \frac{1}{\sigma} h_{44}(\lambda)
\end{bmatrix}.
\]

\section{Proof for the \texorpdfstring{$\mathrm{GG}(\lambda,\beta, \rho)$}{GG} test}\label{proof:GG}

The proof is first carried out for the exponential variant of the $\mathrm{GG}(\lambda,\beta,\rho)$ distribution, namely the \\ exp-GG$(\lambda,\mu,\sigma)$ distribution, whose density is given by
\[
f(x \mid \lambda,\mu,\sigma)
= \{\sigma \Gamma(\lambda)\}^{-1}
\exp\!\left[\lambda(x-\mu)/\sigma
- \exp\!\left\{(x-\mu)/\sigma\right\}\right],
\qquad x \in \mathbb{R},
\]
with $\beta = e^{\mu}$ and $\rho = 1/\sigma$.

Letting $y = (x-\mu)/\sigma$, straightforward calculations yield
\[
\bb{s}(x, \lambda, \mu, \sigma)
=
\begin{bmatrix}
s_1(x, \lambda, \mu, \sigma) \\[1mm]
s_2(x, \lambda, \mu, \sigma) \\[1mm]
s_3(x, \lambda, \mu, \sigma)
\end{bmatrix}
=
\begin{bmatrix}
\partial_{\lambda} \ln f(x \nvert \lambda,\mu, \sigma) \\[1mm]
\partial_{\mu} \ln f(x \nvert \lambda,\mu, \sigma) \\[1mm]
\partial_{\sigma} \ln f(x \nvert \lambda,\mu, \sigma)
\end{bmatrix}
=
\begin{bmatrix}
y - \psi(\lambda) \\[1mm]
\frac{e^y - \lambda}{\sigma} \\[1mm]
\frac{y (e^y - \lambda) - 1}{\sigma}
\end{bmatrix}.
\]

The ML estimators for $\lambda_0$, $\mu_0$ and/or $\sigma_0$ come from the equation $\sum_{i=1}^{n}\bb{s}(x_i, \lambda_0, \mu_0, \sigma_0)=0$.
Explicitly, one has
\begin{equation*}
 \sum_{i=1}^{n}s_1(x_i, \lambda_0, \mu_0, \sigma_0)=0
 \Leftrightarrow \frac{1}{n}\sum_{i=1}^{n}x_i/\sigma_0 -\mu_0/\sigma_0 - \psi(\lambda_0)=0,
\end{equation*}
\begin{equation*}
 \sum_{i=1}^{n}s_2(x_i, \lambda_0,\mu_0, \sigma_0)=0\Leftrightarrow \frac{1}{n}\sum_{i=1}^{n}\exp\{(x_i-\mu_0)/\sigma_0\}=\lambda_0 \Leftrightarrow
\mu_0 = \sigma_0\ln \left\{\frac{1}{\lambda_0}\frac{1}{n}\sum_{i=1}^n \exp(x_i/\sigma_0)\right\},
\end{equation*}
\begin{align*}
\sum_{i=1}^{n}&s_3(x_i, \lambda_0,\mu_0, \sigma_0)=0\Leftrightarrow \frac{1}{n}\sum_{i=1}^{n}(x_i-\mu_0) [\exp\{(x_i-\mu_0)/\sigma_0\} - \lambda_0] - \sigma_0 =0\\
&\Leftrightarrow \frac{\sum_{i=1}^{n}x_i\exp(x_i/\sigma_0)}{n\exp(\mu_0/\sigma_0)} - \lambda_0\frac{1}{n}\sum_{i=1}^{n}x_i- \sigma_0-\mu_0\left(\frac{\sum_{i=1}^{n}\exp(x_i/\sigma_0)}{n\exp(\mu_0/\sigma_0)} - \lambda_0\right) =0.
\end{align*}

Here are some equations that will be useful in the following sections. One has, for $\lambda\in(0,\infty)$,
\[
X\sim \mathrm{exp\mhyphen GG}(\lambda,\mu, \sigma)
~~ \Leftrightarrow ~~ Y = \frac{X - \mu}{\sigma}\sim \mathrm{exp\mhyphen GG}(\lambda,0, 1)
~~ \Leftrightarrow ~~ V = e^Y \sim \mathrm{gamma}(\lambda, 1).
\]
For $z\in (0, \infty)$, it is well-known that
\[
\Gamma(1 + z) = z\Gamma(z), \quad
\psi(1 + z) = \psi(z) + 1/z, \quad
\psi_1(1 + z) = \psi_1(z) - 1/z^2.
\]
If $V\sim \mathrm{gamma}(\lambda, 1)$ and $k\in (-\lambda, \infty)$, it can also be verified in \texttt{Mathematica} that
\[
\EE\left(Y^j e^{kY}\right)=\EE\left\{V^k(\ln V)^j\right\}
= \frac{\Gamma(\lambda + k)}{\Gamma(\lambda)} \times
\begin{cases}
1, &\mbox{if } j = 0, \\
\psi(\lambda + k), &\mbox{if } j = 1, \\
\psi_1(\lambda + k) + \psi^2(\lambda + k), &\mbox{if } j = 2,
\end{cases}
\]
regardless of whether the expectation is taken with respect to $X\sim \mathrm{exp\mhyphen GG}(\lambda,\mu, \sigma)$ if $Y= (X - \mu)/\sigma$,
$Y \sim \mathrm{exp\mhyphen GG}(\lambda,0, 1)$ or $V \sim \mathrm{gamma}(\lambda, 1)$ if and $V=e^Y$.

Next, one determines the Fisher information matrix $I(\lambda,\mu, \sigma)=\EE\big\{\bb{s}(X,\lambda, \mu, \sigma) \bb{s}(X,\lambda, \mu, \sigma)^{\top}\big\}$. One has
\begin{align*}
\EE\left\{s_1^2(X, \lambda,\mu, \sigma)\right\}&=\EE [\{Y-\psi(\lambda)\}^2]=\EE(Y^2)-2\psi(\lambda)\EE(Y)+\psi^2(\lambda)\\
&=\psi_1(\lambda) + \psi^2(\lambda)-2\psi^2(\lambda)+\psi^2(\lambda)=\psi_1(\lambda),
\end{align*}
\[
\EE\left\{s_2^2(X, \lambda,\mu, \sigma)\right\}= \sigma^{-2} \EE\left\{(e^Y - \lambda)^2\right\} =\sigma^{-2} \EE\left\{(V - \lambda)^2\right\} = \sigma^{-2}\Var(V) = \lambda/\sigma^2,
\]
and
\begin{align*}
\sigma^2 \EE\left\{s_3^2(X, \lambda,\mu, \sigma)\right\}
&= \EE\left\{(Y e^Y - \lambda Y - 1)^2\right\}
= \EE\left\{(V\ln V - \lambda\ln V - 1)^2\right\} \\
&= \EE\left\{V^2(\ln V)^2\right\} + \lambda^2 \EE\left\{(\ln V)^2\right\} + 1 - 2 \EE(V\ln V) + 2\lambda \EE(\ln V) - 2\lambda \EE\left\{V(\ln V)^2\right\} \\
&= \lambda(\lambda + 1) \psi_1(\lambda + 2) + \lambda(\lambda + 1) \psi^2(\lambda + 2) + \lambda^2\psi_1(\lambda) + \lambda^2 \psi^2(\lambda) + 1 \\
&\hspace{4mm} - 2\lambda\psi(\lambda + 1) + 2\lambda\psi(\lambda) - 2\lambda^2\psi_1(\lambda + 1) - 2\lambda^2\psi^2(\lambda + 1) \\
&= \lambda \psi^2(\lambda) + 2 \psi(\lambda) + \lambda \psi_1(\lambda) + 1.
\end{align*}
The last equality can be shown using the properties of the digamma and trigamma functions. As it is rather technical, details are omitted.
One has
\begin{align*}
&\sigma \EE\left\{s_1(X, \lambda,\mu, \sigma) s_2(X, \lambda,\mu, \sigma)\right\} = \EE\left\{s_1(Y, \lambda,0, 1) (e^Y-\lambda)\right\} = \EE\left\{s_1(Y, \lambda,0, 1) e^Y\right\} = \EE(Y e^Y)-\psi(\lambda)\EE(e^Y)\\
&=\frac{\Gamma(\lambda + 1)}{\Gamma(\lambda)}\psi(\lambda+1)-\psi(\lambda)\frac{\Gamma(\lambda + 1)}{\Gamma(\lambda)}=\lambda(\psi(\lambda)+1/\lambda-\psi(\lambda))=1,
\end{align*}
\begin{align*}
&\sigma \EE\left\{s_1(X, \lambda,\mu, \sigma) s_3(X, \lambda,\mu, \sigma)\right\} = \EE\left\{s_1(Y, \lambda,0, 1) (Y e^Y - \lambda Y - 1)\right\} = \EE\left\{(Y-\psi(\lambda)) (Y e^Y - \lambda Y)\right\} \\
&= \EE(Y^2 e^Y)-\psi(\lambda)\EE(Y e^Y)-\lambda\EE(Y^2)+\lambda\psi(\lambda)\EE(Y)\\
&= \lambda\{\psi_1(\lambda + 1) + \psi^2(\lambda + 1)\} - \lambda\psi(\lambda)\psi(\lambda+1)-\lambda \{\psi_1(\lambda) + \psi^2(\lambda)\}+\lambda\psi^2(\lambda)\\
&= \lambda \left\{\psi_1(\lambda + 1) + \psi^2(\lambda + 1) - (\psi(\lambda+1)-1/\lambda)\psi(\lambda+1)-\psi_1(\lambda+1)-1/\lambda^2\right\}\\
&= \lambda \left\{\psi(\lambda+1)/\lambda-1/\lambda^2\right\}=\psi(\lambda+1)-1/\lambda = \psi(\lambda),
\end{align*}
\begin{align*}
&\sigma^2 \EE\left\{s_2(X, \lambda,\mu, \sigma) s_3(X, \lambda,\mu, \sigma)\right\} = \EE\left\{s_2(Y, \lambda,0, 1) s_3(Y, \lambda,0, 1)\right\} = \EE\left\{s_2(Y, \lambda,0, 1) (-1 + Y s_2(Y, \lambda, 0, 1))\right\} \\
&\quad= \EE\left\{Ys_2^2(Y, \lambda, 0, 1)\right\} - \EE\left\{s_2(Y, \lambda, 0, 1)\right\} = \EE\left\{Y(e^Y - \lambda)^2\right\} - \EE(e^Y - \lambda) \\
&\quad= \EE\left\{(V - \lambda)^2\ln V\right\} - \EE(V - \lambda) = \EE(V^2\ln V) - 2\lambda \EE(V\ln V) + \lambda^2 \EE(\ln V) - 0 \\
&\quad= \lambda(\lambda + 1) \psi(\lambda + 2) - 2\lambda^2 \psi(\lambda + 1) + \lambda^2 \psi(\lambda) \\
&\quad= \lambda(\lambda + 1) \left\{\psi(\lambda) + 1/(\lambda + 1) + 1/\lambda\right\} - 2\lambda^2 \left\{\psi(\lambda) + 1/\lambda\right\} + \lambda^2 \psi(\lambda) \\
&\quad= \lambda \psi(\lambda) + 1.
\end{align*}

The Fisher information matrix is then given by
\[
I(\lambda,\mu, \sigma)
=
\begin{bmatrix}
\psi_1(\lambda) & \frac{1}{\sigma} & \frac{\psi(\lambda)}{\sigma} \\[1mm]
\frac{1}{\sigma} & \frac{\lambda}{\sigma^2} & \frac{\lambda\psi(\lambda) + 1}{\sigma^2} \\[1mm]
\frac{\psi(\lambda)}{\sigma} & \frac{\lambda\psi(\lambda) + 1}{\sigma^2} &\frac{\lambda\psi^2(\lambda) + 2\psi(\lambda) + \lambda\psi_1(\lambda) + 1}{\sigma^2}
\end{bmatrix}.
\]

Next, one determines the matrix $G(\lambda,\mu, \sigma)=\EE\big\{\bb{\tau}(X,\lambda, \mu, \sigma) \bb{s}(X, \lambda,\mu, \sigma)^{\top}\big\}$. One has, if $y = (x - \mu)/\sigma$,
\[
\bb{\tau}(x, \lambda,\mu, \sigma)
=
\begin{bmatrix}
\tau_1(x,\lambda, \mu, \sigma) \\[1mm]
\tau_2(x,\lambda, \mu, \sigma)
\end{bmatrix}
=
\begin{bmatrix}
\cos\left\{2\pi F(x \nvert\lambda, \mu, \sigma)\right\} \\[1mm]
\sin\left\{2\pi F(x \nvert \lambda,\mu, \sigma)\right\}
\end{bmatrix}
=
\begin{bmatrix}
\cos\left\{2\pi \Gamma_{\lambda, 1}(e^y)\right\} \\[1mm]
\sin\left\{2\pi \Gamma_{\lambda, 1}(e^y)\right\}
\end{bmatrix}.
\]
Note that $\bb{\tau}(x, \lambda,\mu, \sigma) = \bb{\tau}(y, \lambda,0, 1)$. Since $\EE(\tau_1(Y, \lambda,0, 1)) = \EE(\tau_2(Y, \lambda,0, 1)) = 0$, one has
\begin{align*}
\EE(\tau_1(X, \lambda,\mu, \sigma) s_1(X, \lambda,\mu, \sigma))
&= \EE\left\{\tau_1(Y,\lambda, 0, 1) (Y - \psi(\lambda))\right\}= \EE\left\{\cos\left(2\pi \Gamma_{\lambda, 1}(e^Y)\right)Y\right\}\\
& = \EE\left\{\ln(V)\cos\left(2\pi \Gamma_{\lambda, 1}(V)\right)\right\}\\
&= \int_0^{\infty} \ln(v)\cos\left\{2\pi \Gamma_{\lambda, 1}(v)\right\} f_{\mathrm{ga}}(v \nvert \lambda, 1) \rd v = h_{10}(\lambda).
\end{align*}
Similarly, one shows that
\[
\EE(\tau_2(X, \lambda,\mu, \sigma) s_1(X, \lambda,\mu, \sigma))=h_{11}(\lambda).
\]

\begin{align*}
\sigma \EE(\tau_1(X, \lambda,\mu, \sigma) s_2(X, \lambda,\mu, \sigma))
&= \EE\left\{\tau_1(Y,\lambda, 0, 1) (e^Y - \lambda)\right\}
= \EE\left\{\cos\left(2\pi \Gamma_{\lambda, 1}(e^Y)\right)e^Y\right\} \\
&= \EE\left\{V\cos\left(2\pi \Gamma_{\lambda, 1}(V)\right)\right\}
= \int_0^{\infty} v\cos\left\{2\pi \Gamma_{\lambda, 1}(v)\right\} f_{\mathrm{ga}}(v \nvert \lambda, 1) \rd v \\
&= \lambda\int_0^{\infty} \cos\left\{2\pi \Gamma_{\lambda, 1}(v)\right\} f_{\mathrm{ga}}(v \nvert \lambda+1, 1) \rd v = \lambda h_{6}(\lambda, \lambda + 1, 1).
\end{align*}

Similarly, one shows that
\[
\sigma \EE\left\{\tau_2(X, \lambda,\mu, \sigma) s_2(X, \lambda,\mu, \sigma)\right\} = \lambda h_{7}(\lambda, \lambda + 1, 1).
\]
Finally, one has
\begin{align*}
\sigma \EE\left\{\tau_1(X, \lambda,\mu, \sigma) s_3(X, \lambda,\mu, \sigma)\right\}
&= \EE\left\{\tau_1(Y,\lambda, 0, 1) (Y e^Y - \lambda Y - 1)\right\}
= \EE\left[\cos\left\{2\pi \Gamma_{\lambda, 1}(e^Y)\right\}(Y e^Y - \lambda Y)\right] \\
&= \EE\left[(V - \lambda) \ln(V) \cos\left\{2\pi \Gamma_{\lambda, 1}(V)\right\}\right] \\
&= \int_0^{\infty} (v - \lambda) \ln (v) \cos\left\{2\pi \Gamma_{\lambda, 1}(v)\right\} f_{\mathrm{ga}}(v \nvert \lambda, 1) \rd v = h_{8}(\lambda).
\end{align*}
Similarly, one shows that
\[
\sigma \EE\left\{\tau_2(X, \lambda,\mu, \sigma) s_3(X,\lambda, \mu, \sigma)\right\} = h_{9}(\lambda)
\]
and one obtains
\[
G(\lambda,\mu, \sigma)=
\begin{bmatrix}
h_{10}(\lambda) & \frac{\lambda h_{6}(\lambda, \lambda + 1, 1)}{\sigma} & \frac{h_{8}(\lambda)}{\sigma} \\[1mm]
h_{11}(\lambda) & \frac{\lambda h_{7}(\lambda, \lambda + 1, 1)}{\sigma} & \frac{h_{9}(\lambda)}{\sigma}
\end{bmatrix}.
\]

The proof is now carried out for the $\mathrm{GG}(\lambda,\beta,\rho)$ distribution.
Since
\[
 X \sim \mathrm{GG}(\lambda_0,\beta_0, \rho_0) \Leftrightarrow X'=\ln X\sim \mathrm{exp\mhyphen GG}(\lambda_0,\mu_0, \sigma_0),
\]
with $\mu_0=\ln \beta_0$ and $\sigma_0=1/\rho_0$, the ML estimators of $\beta_0$ and $\rho_0$ are found from the ML of $\mu_0$ and $\sigma_0$  by setting $\hat{\mu}_n=\ln(\hat{\beta}_n)$, $\hat{\sigma}_n=1/\hat{\rho}_n$ and $x'_i = \ln (x_i)$, where $x_i$ and $x'_i$ are the $\mathrm{GG}(\lambda_0,\beta_0, \rho_0)$ and $\mathrm{exp\mhyphen GG}(\lambda_0,\mu_0, \sigma_0)$ observations respectively.

Letting $y = \ln \left\{(x/\beta)^{\rho}\right\}$, straightforward calculations yield
\begin{align*}
\bb{s}(x, \lambda, \beta, \rho)
&=
\begin{bmatrix}
s_1(x, \lambda,\beta, \rho) \\[1mm]
s_2(x, \lambda,\beta, \rho) \\[1mm]
s_3(x, \lambda,\beta, \rho)
\end{bmatrix}
=
\begin{bmatrix}
\partial_{\lambda} \ln f(x \nvert \lambda, \beta, \rho) \\[1mm]
\partial_{\beta} \ln f(x \nvert \lambda, \beta, \rho) \\[1mm]
\partial_{\rho} \ln f(x \nvert \lambda, \beta, \rho)
\end{bmatrix}
=
\begin{bmatrix}
\ln \left\{(x/\beta)^{\rho}\right\} - \psi(\lambda) \\[1mm]
\frac{\rho}{\beta}\{(x/\beta)^{\rho} - \lambda\} \\[1mm]
\frac{-1}{\rho}\left[\ln \left\{(x/\beta)^{\rho}\right\}\{ (x/\beta)^{\rho} - \lambda\} - 1\right]
\end{bmatrix}\\
&=
\begin{bmatrix}
y - \psi(\lambda) \\[1mm]
\frac{\rho}{\beta}(e^y - \lambda) \\[1mm]
\frac{-1}{\rho}(y e^y - \lambda y - 1)
\end{bmatrix}.
\end{align*}

Furthermore, one has
\[
X\sim \mathrm{GG}(\lambda, \beta, \rho)
~~ \Leftrightarrow ~~ (X/\beta)^{\rho}\sim \mathrm{GG}(\lambda, 1, 1)
~~ \Leftrightarrow ~~ Y = \ln \left\{(X/\beta)^{\rho}\right\}\sim \mathrm{exp\mhyphen GG}(\lambda, 0, 1).
\]

Next, one determines the Fisher information matrix $I(\lambda,\beta,\rho)=\EE\big\{\bb{s}(X, \lambda, \beta,\rho) \bb{s}(X, \lambda,\beta,\rho)^{\top}\big\}$. Using results above, one has
\[
\EE\left\{s_1^2(X, \lambda,\beta, \rho)\right\} = \EE\left[\{Y - \psi(\lambda)\}^2\right] = \psi_1(\lambda),
\]
\[
\EE\left\{s_2^2(X, \lambda,\beta, \rho)\right\} = \frac{\rho^2}{\beta^2} \EE\left\{(e^Y - \lambda)^2\right\} = \frac{\rho^2\lambda}{\beta^2},
\]
\[
\EE\left\{s_3^2(X, \lambda, \beta, \rho)\right\}
= \frac{1}{\rho^2} \EE\left\{(Y e^Y - \lambda Y - 1)^2\right\}
= \frac{1}{\rho^2} \left\{\lambda \psi^2(\lambda) + 2 \psi(\lambda) + \lambda \psi_1(\lambda) + 1\right\},
\]
\[
\EE\left\{s_1(X, \lambda, \beta, \rho) s_2(X, \lambda, \beta, \rho)\right\}
= \frac{\rho}{\beta}\EE\left[ \{Y - \psi(\lambda)\}(e^Y - \lambda) \right]= \frac{\rho}{\beta},
\]
\[
\EE\left\{s_1(X, \lambda, \beta, \rho) s_3(X, \lambda, \beta, \rho)\right\}
= \frac{-1}{\rho}\EE\left[ \{Y - \psi(\lambda)\}(Y e^Y - \lambda Y - 1) \right]= -\frac{\psi(\lambda)}{\rho},
\]
\[
\EE\left\{s_2(X, \lambda, \beta, \rho) s_3(X, \lambda, \beta, \rho)\right\}
= \frac{-1}{\beta} \EE\left\{(e^Y - \lambda) (Y e^Y - \lambda Y - 1)\right\}
= \frac{-1}{\beta} \left\{\lambda \psi(\lambda) + 1\right\}.
\]

The Fisher information matrix is then given by
\[
I(\lambda,\beta,\rho)
=
\begin{bmatrix}
 \psi_1(\lambda) & \frac{\rho}{\beta} & -\frac{\psi(\lambda)}{\rho} \\
\frac{\rho}{\beta} & \frac{\rho^2\lambda}{\beta^2} & \frac{-1}{\beta} \left\{\lambda \psi(\lambda) + 1\right\} \\
-\frac{\psi(\lambda)}{\rho} & \frac{-1}{\beta} \left\{\lambda\psi(\lambda) + 1\right\} & \frac{1}{\rho^2} \left\{\lambda \psi^2(\lambda) + 2 \psi(\lambda) + \lambda \psi_1(\lambda) + 1\right\}
\end{bmatrix}.
\]

Next, one determines the matrix $G(\lambda, \beta, \rho)=\EE\big\{\bb{\tau}(X, \lambda, \beta, \rho) \bb{s}(X, \lambda, \beta, \rho)^{\top}\big\}$. One has, if $y = \ln\{(x/\beta)^{\rho}\}$,
\[
\bb{\tau}(x, \lambda, \beta, \rho)
=
\begin{bmatrix}
\tau_1(x, \lambda, \beta, \rho) \\[1mm]
\tau_2(x, \lambda, \beta, \rho)
\end{bmatrix}
=
\begin{bmatrix}
\cos\left\{2\pi F(x \nvert \lambda, \beta, \rho)\right\} \\[1mm]
\sin\left\{2\pi F(x \nvert \lambda, \beta, \rho)\right\}
\end{bmatrix}
=
\begin{bmatrix}
\cos\left[2\pi \Gamma_{\lambda, 1} \left\{(x/\beta)^{\rho}\right\}\right] \\[1mm]
\sin\left[2\pi \Gamma_{\lambda, 1} \left\{(x/\beta)^{\rho}\right\}\right]
\end{bmatrix}
=
\begin{bmatrix}
\cos\left\{2\pi \Gamma_{\lambda, 1}(e^y)\right\} \\[1mm]
\sin\left\{2\pi \Gamma_{\lambda, 1}(e^y)\right\}
\end{bmatrix}.
\]
Then using results found above, one has
\begin{align*}
\EE\left\{\tau_1(X, \lambda, \beta, \rho) s_1(X, \lambda, \beta, \rho)\right\}
&= \EE\left[\cos\left\{2\pi \Gamma_{\lambda, 1}(e^Y)\right\} \{Y - \psi(\lambda)\}\right]
= h_{10}(\lambda), \\
\EE\left\{\tau_2(X, \lambda, \beta, \rho) s_1(X, \lambda, \beta, \rho)\right\} &= \EE\left[\sin\left\{2\pi \Gamma_{\lambda, 1}(e^Y)\right\} \{Y - \psi(\lambda)\}\right]
= h_{11}(\lambda), \\
\EE\left\{\tau_1(X, \lambda, \beta, \rho) s_2(X, \lambda, \beta, \rho)\right\}
&= \frac{\rho}{\beta} \EE\left[\cos\left\{2\pi \Gamma_{\lambda, 1}(e^Y)\right\} (e^Y - \lambda)\right]
= \frac{\rho\lambda}{\beta} h_{6}(\lambda,\lambda+1,1), \\
\EE\left\{\tau_2(X, \lambda, \beta, \rho) s_2(X, \lambda, \beta, \rho)\right\}
&= \frac{\rho}{\beta} \EE\left[\sin\left\{2\pi \Gamma_{\lambda, 1}(e^Y)\right\} (e^Y - \lambda)\right]
= \frac{\rho\lambda}{\beta} h_{7}(\lambda,\lambda+1,1), \\
\EE\left\{\tau_1(X, \lambda, \beta, \rho) s_3(X, \lambda, \beta, \rho)\right\}
&= \frac{-1}{\rho} \EE\left[\cos\left\{2\pi \Gamma_{\lambda, 1}(e^Y)\right\} (Y e^Y - \lambda Y - 1)\right]
= \frac{-1}{\rho} h_{8}(\lambda), \\
\EE\left\{\tau_2(X, \lambda, \beta, \rho) s_3(X, \lambda, \beta, \rho)\right\}
&= \frac{-1}{\rho} \EE\left[\sin\left\{2\pi \Gamma_{\lambda, 1}(e^Y)\right\} (Y e^Y - \lambda Y - 1)\right]
= \frac{-1}{\rho} h_{9}(\lambda).
\end{align*}
One obtains
\[
G(\lambda, \beta, \rho)=
\begin{bmatrix}
h_{10}(\lambda) & \frac{\rho\lambda}{\beta} h_{6}(\lambda,\lambda+1,1) & \frac{-1}{\rho} h_{8}(\lambda) \\[1mm]
h_{11}(\lambda) & \frac{\rho\lambda}{\beta} h_{7}(\lambda,\lambda+1,1) & \frac{-1}{\rho} h_{9}(\lambda)
\end{bmatrix}.
\]

\section{Proof for the \texorpdfstring{$\mathrm{logistic}(\mu, \sigma)$}{logistic} test}\label{proof:logistic}

Letting $y = (x - \mu)/\sigma$, straightforward calculations yield
\[
\bb{s}(x, \mu, \sigma)
=
\begin{bmatrix}
s_1(x, \mu, \sigma) \\[1mm]
s_2(x, \mu, \sigma)
\end{bmatrix}
=
\begin{bmatrix}
\partial_{\mu} \ln f(x \nvert \mu, \sigma) \\[1mm]
\partial_{\sigma} \ln f(x \nvert \mu, \sigma)
\end{bmatrix}
=\frac{1}{\sigma}
\begin{bmatrix}
1 - \frac{2}{1 + e^{y}} \\[1mm]
y - \frac{2y}{1 + e^{y}} - 1
\end{bmatrix}
=
\frac{1}{\sigma}
\begin{bmatrix}
\frac{1 - e^{-y}}{1 + e^{-y}} \\[1mm]
\frac{y(1 - e^{-y})}{1 + e^{-y}} - 1
\end{bmatrix}.
\]
The ML estimators for $\mu_0$ and/or $\sigma_0$ come directly from the equation $\sum_{i=1}^{n}\bb{s}(x_i, \mu_0, \sigma_0)=0$.

Next, one determines the Fisher information matrix $I(\mu, \sigma)=\EE\big\{\bb{s}(X, \mu, \sigma) \bb{s}(X, \mu, \sigma)^{\top}\big\}$.
Given that $s_1(y, 0, 1)$ is an odd function, $s_1^2(y, 0, 1)$ and $f(y \nvert 0, 1)$ are even functions, and setting $Y=(X-\mu)/\sigma\sim \mathrm{logistic}(0,1)$ one has
\begin{align*}
\EE\left\{s_1^2(X, \mu, \sigma)\right\}
&= \frac{1}{\sigma^2} \EE\left\{s_1^2(Y, 0, 1)\right\}
= \frac{2}{\sigma^2} \int_0^{\infty} \frac{(e^{-y} - 1)^2 e^{-y}}{(1 + e^{-y})^4} \rd y
= \frac{2}{\sigma^2} \int_1^2 \frac{(v - 2)^2}{v^4} \rd v \\
&= \frac{1}{\sigma^2} \left(\int_1^2 2v^{-2} \rd v - \int_1^2 8v^{-3} \rd v + \int_1^2 8v^{-4} \rd v\right)
= \frac{1}{\sigma^2}(1 - 3 + 7/3) = \frac{1}{3\sigma^2},
\end{align*}
using the change of variable $v = 1 + e^{-y}$. Given that $s_2(y, 0, 1)$, $s_2^2(y, 0, 1)$ and $f(y \nvert 0, 1)$ are even functions, one has
\[
\EE\left\{s_2^2(X, \mu, \sigma)\right\}
= \frac{1}{\sigma^2} \EE\left\{s_2^2(Y, 0, 1)\right\}
= \frac{2}{\sigma^2} \int_0^{\infty} \left(\frac{y (1 - e^{-y})}{1 + e^{-y}} - 1\right)^2 \frac{e^{-y}}{(1 + e^{-y})^2} \rd y
= \frac{3 + \pi^2}{9\sigma^2},
\]
using \texttt{Mathematica} for the second equality. Given that $s_1(y, 0, 1)$ is an odd function, $s_2(y, 0, 1)$ and $f(y \nvert 0, 1)$ are even functions, one has
\[
\EE\left\{s_1(X, \mu, \sigma) s_2(X, \mu, \sigma)\right\}
= \sigma^{-2} \EE\left\{s_1(Y, 0, 1) s_2(Y, 0, 1)\right\}
= 0.
\]
The Fisher information matrix is then given by
\[
I(\mu, \sigma)
=
\frac{1}{\sigma^2}
\begin{bmatrix}
1/3 & 0 \\[1mm]
0 & (3 + \pi^2)/9
\end{bmatrix}.
\]

Next, one determines the matrix $G(\mu, \sigma)=\EE\left\{\bb{\tau}(X, \mu, \sigma) \bb{s}(X, \mu, \sigma)^{\top}\right\}$. One has, if $y = (x - \mu)/\sigma$,
\[
\bb{\tau}(x, \mu, \sigma)
=
\begin{bmatrix}
\tau_1(x, \mu, \sigma) \\[1mm]
\tau_2(x, \mu, \sigma)
\end{bmatrix}
=
\begin{bmatrix}
\cos\left\{2\pi F(x \nvert \mu, \sigma)\right\} \\[1mm]
\sin\left\{2\pi F(x \nvert \mu, \sigma)\right\}
\end{bmatrix}
=
\begin{bmatrix}
\cos\left\{2\pi F(y \nvert 0, 1)\right\} \\[1mm]
\sin\left\{2\pi F(y \nvert 0, 1)\right\}
\end{bmatrix}
=
\begin{bmatrix}
\cos\left\{2\pi (1 + e^{-y})^{-1}\right\} \\[1mm]
\sin\left\{2\pi (1 + e^{-y})^{-1}\right\}
\end{bmatrix}.
\]
Note that $\bb{\tau}(x, \mu, \sigma) = \bb{\tau}(y, 0, 1)$. Now, since $\tau_2(y, 0, 1)$ and $s_1(Y, 0, 1)$ are odd functions, their product is an even function, as is $f(y \nvert 0, 1)$. Then one has
\begin{align*}
\EE\left\{\tau_2(X, \mu, \sigma) s_1(X, \mu, \sigma)\right\}
&= \frac{1}{\sigma} \EE\left\{\tau_2(Y, 0, 1) s_1(Y, 0, 1)\right\}
= \frac{2}{\sigma} \int_0^{\infty} \sin\left(\frac{2\pi}{1 + e^{-y}}\right) \frac{e^{-y}(1 - e^{-y})}{(1 + e^{-y})^3} \rd y \\
&= \frac{2}{\sigma} \int_0^1 \sin\left(\frac{2\pi}{1 + v}\right) \frac{1 - v}{(1 + v)^3} \rd v
= \frac{-1}{\sigma\pi},
\end{align*}
using the change of variable $v = e^{-y}$ and \texttt{Mathematica}.

Given that $\EE\{\tau_1(Y, 0, 1)\} = 0$, $\tau_1(y, 0, 1)$ and $s_2(Y, 0, 1)$ are even functions, as is $f(y \nvert 0, 1)$, one has
\begin{align*}
\EE\left\{\tau_1(X, \mu, \sigma) s_2(X, \mu, \sigma)\right\}
&= \frac{1}{\sigma} \EE\left\{\tau_1(Y, 0, 1) s_2(Y, 0, 1)\right\}
= \frac{1}{\sigma} \EE\left[\tau_1(Y, 0, 1) \left\{\frac{Y (1 - e^{-Y})}{1 + e^{-Y}} - 1\right\}\right] \\
&= \frac{2}{\sigma} \int_0^{\infty} \cos\left(\frac{2\pi}{1 + e^{-y}}\right) \left\{\frac{y (1 - e^{-y})}{1 + e^{-y}}\right\} \frac{e^{-y}}{(1 + e^{-y})^2} \rd y \\
&= \frac{2}{\sigma} \int_0^{\infty} \cos\left(\frac{2\pi}{1 + e^{-y}}\right) \frac{y e^{-y}(1 - e^{-y})}{(1 + e^{-y})^3} \rd y \\
&= -\frac{2}{\sigma} \int_0^1 \cos\left(\frac{2\pi}{1 + v}\right) \frac{(1 - v) \ln v}{(1 + v)^3} \rd v
= \frac{0.698397593884459}{\sigma},
\end{align*}
using the change of variable $v = e^{-y}$ and numerical calculation for the last integral (with an accuracy of 15 significant digits). Given that $\tau_2(y, 0, 1)$ and $s_1(Y, 0, 1)$ are odd functions, that $\tau_1(y, 0, 1)$, $s_2(Y, 0, 1)$ and $f(y \nvert 0, 1)$ are even functions, one has
\begin{align*}
\EE\left\{\tau_1(X, \mu, \sigma) s_1(X, \mu, \sigma)\right\}
&= \frac{1}{\sigma} \EE\left\{\tau_1(Y, 0, 1) s_1(Y, 0, 1)\right\}
= 0, \\
\EE\left\{\tau_2(X, \mu, \sigma) s_2(X, \mu, \sigma)\right\}
&= \frac{1}{\sigma} \EE\left\{\tau_2(Y, 0, 1) s_2(Y, 0, 1)\right\}
= 0,
\end{align*}
and one obtains
\[
G(\mu, \sigma)
=
\frac{1}{\sigma}
\begin{bmatrix}
0 & 0.698397593884459 \\[1mm]
-1/\pi & 0
\end{bmatrix}.
\]

\section{Proof for the \texorpdfstring{$\mathrm{Student's}~t(\lambda, \mu, \sigma)$}{Student's t} test}\label{proof:Student}

Letting $y = (x - \mu)/\sigma$, straightforward calculations yield
\begin{align*}
\bb{s}(x, \lambda,\mu, \sigma)
&=
\begin{bmatrix}
s_1(x, \lambda,\mu, \sigma) \\[1mm]
s_2(x, \lambda,\mu, \sigma) \\[1mm]
s_3(x, \lambda,\mu, \sigma)
\end{bmatrix}
=
\begin{bmatrix}
\partial_{\lambda} \ln\{f(x \nvert \lambda, \mu, \sigma)\} \\[1mm]
\partial_{\mu} \ln\{f(x \nvert \lambda, \mu, \sigma)\} \\[1mm]
\partial_{\sigma} \ln\{f(x \nvert \lambda, \mu, \sigma)\}
\end{bmatrix}\\
& =
\begin{bmatrix}
\frac{1}{2}\left\{\psi\left( \frac{\lambda + 1}{2} \right)
- \psi\left( \frac{\lambda}{2} \right)
- \ln\left(1 + \frac{y^2}{\lambda} \right)
+ \frac{1}{\lambda}\left(\frac{\lambda + 1}{\lambda}\frac{y^2}{1 + y^2/\lambda}-1\right)\right\}\\[1mm]
\frac{1}{\sigma}\frac{\lambda + 1}{\lambda} \frac{y}{1 + y^2/\lambda} \\[1mm]
\frac{1}{\sigma}\left(\frac{\lambda + 1}{\lambda} \frac{y^2}{1 + y^2/\lambda} - 1\right)
\end{bmatrix}.
\end{align*}
The ML estimators for $\lambda_0$, $\mu_0$ and/or $\sigma_0$ come directly from the equation $\sum_{i=1}^{n}\bb{s}(x_i, \lambda_0, \mu_0, \sigma_0)=0$. Furthermore, one has
\[
 X\sim t(\lambda,\mu, \sigma)\Leftrightarrow Y=\frac{X-\mu}{\sigma}\sim t(\lambda,0, 1)\Leftrightarrow V=(1+Y^2/\lambda)^{-1}\sim \mathrm{Beta}\left(\frac{\lambda}{2},\frac{1}{2}\right).
\]

Next, one determines the Fisher information matrix $I(\lambda,\mu, \sigma)=\EE\big\{\bb{s}(X, \lambda,\mu, \sigma) \bb{s}(X, \lambda,\mu, \sigma)^{\top}\big\}$.
\begin{lemma}\label{lem:Student}
If $a\in (0, \infty)$, $b\in (0, \infty)$, then one has
\[
\EE\left\{\frac{|Y|^{2a}}{(1 + Y^2/\lambda)^b}\right\}=\frac{\lambda^a B(\lambda/2 + b - a,1/2+a)}{B(\lambda/2,1/2)}.
\]
\end{lemma}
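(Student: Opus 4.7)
The plan is to reduce the expectation to a standard beta integral by exploiting the fact already noted in the excerpt that if $Y\sim t(\lambda,0,1)$ then $V = (1 + Y^2/\lambda)^{-1} \sim \mathrm{beta}(\lambda/2, 1/2)$. This substitution turns the problem into a single beta moment, which is what produces the ratio of beta functions on the right-hand side.

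More concretely, first I would rewrite the integrand purely in terms of $V$. Since $V = (1+Y^2/\lambda)^{-1}$ gives $Y^2/\lambda = (1-V)/V$, one has
\[
\frac{|Y|^{2a}}{(1+Y^2/\lambda)^{b}} = \lambda^{a}\left(\frac{1-V}{V}\right)^{a} V^{b} = \lambda^{a}\,(1-V)^{a}\,V^{b-a}.
\]
Thus $\EE\{|Y|^{2a}/(1+Y^2/\lambda)^b\} = \lambda^{a}\,\EE\{V^{b-a}(1-V)^{a}\}$, where the expectation is now taken under $V\sim \mathrm{beta}(\lambda/2,1/2)$.

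Next I would plug in the beta density $f_{\mathrm{be}}(v\nvert \lambda/2,1/2) = v^{\lambda/2-1}(1-v)^{-1/2}/B(\lambda/2,1/2)$ and collect exponents:
\[
\EE\{V^{b-a}(1-V)^{a}\} = \frac{1}{B(\lambda/2,1/2)} \int_0^1 v^{\lambda/2 + b - a - 1}(1-v)^{a+1/2-1}\,\rd v = \frac{B(\lambda/2 + b - a,\, 1/2 + a)}{B(\lambda/2,1/2)}.
\]
Multiplying by $\lambda^{a}$ yields exactly the stated identity. (One may equivalently do the direct substitution $v = (1+y^2/\lambda)^{-1}$ in the original integral against the $t$-density, but the beta-moment route is cleaner because the Jacobian and the density collapse into a recognizable beta integrand.)

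There is no real obstacle here: the identity is a routine consequence of the distributional fact $V\sim \mathrm{beta}(\lambda/2,1/2)$ and the definition of the beta function. The only thing to be mildly careful about is that the resulting beta integral converges, i.e.\ that $\lambda/2 + b - a > 0$ and $a + 1/2 > 0$; the second holds since $a>0$, while the first will be satisfied in the downstream applications of Lemma~\ref{lem:Student} (where it will be invoked with specific small values of $a$ and $b$ in the score-function computations for the Fisher information and the $G$-matrix).
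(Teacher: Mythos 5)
Your proof is correct and follows exactly the same route as the paper: the substitution $V=(1+Y^2/\lambda)^{-1}\sim\mathrm{beta}(\lambda/2,1/2)$, the rewriting of the integrand as $\lambda^a V^{b-a}(1-V)^a$, and the evaluation of the resulting beta moment. Your added remark on the convergence condition $\lambda/2+b-a>0$ is a sensible (if implicit in the paper) precaution.
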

\begin{proof}
Using $V=(1+Y^2/\lambda)^{-1}\Leftrightarrow Y^2=\lambda(1-V)/V$ and $V\sim \mathrm{Beta}\left(\frac{\lambda}{2},\frac{1}{2}\right)$, one has
\[
\EE\left\{\frac{|Y|^{2a}}{(1 + Y^2/\lambda)^b}\right\}=\lambda^a\EE\left\{V^{b-a}(1-V)^a\right\}=\frac{\lambda^a B(\lambda/2 + b - a,1/2+a)}{B(\lambda/2,1/2)}.
\]
\end{proof}

In particular, one has
\begin{align*}
\EE\left\{\frac{Y^2}{(1 + Y^2/\lambda)^2}\right\}
&=\frac{\lambda B(\lambda/2+1,3/2)}{B(\lambda/2,1/2)}= \frac{\lambda^2}{(\lambda + 1) (\lambda + 3)}, \\
\EE\left\{\frac{Y^2}{1 + Y^2/\lambda}\right\}
&=\frac{\lambda B(\lambda/2,3/2)}{B(\lambda/2,1/2)}= \frac{\lambda}{\lambda + 1}, \\
\EE\left\{\frac{Y^4}{(1 + Y^2/\lambda)^2}\right\}
& =\frac{\lambda^2 B(\lambda/2,5/2)}{B(\lambda/2,1/2)}= \frac{3\lambda^2}{(\lambda + 1) (\lambda + 3)}.
\end{align*}

It can also be verified in \texttt{Mathematica} that
\[
\EE\left\{\ln\left(1 + \frac{Y^2}{\lambda}\right)\right\}= \psi\left(\frac{\lambda+1}{2}\right) - \psi\left(\frac{\lambda}{2}\right),
\]
\[
\EE\left[\left\{\ln\left(1 + \frac{Y^2}{\lambda}\right)\right\}^2\right]=\psi_1\left(\frac{\lambda}{2}\right)-\psi_1\left(\frac{\lambda+1}{2}\right) + \left\{\psi\left(\frac{\lambda+1}{2}\right) - \psi\left(\frac{\lambda}{2}\right)\right\}^2,
\]
\[
\EE\left\{\frac{Y^2}{1+Y^2/\lambda}\ln\left(1 + \frac{Y^2}{\lambda}\right)\right\}=\frac{\lambda}{\lambda+1} \left\{\psi\left(\frac{\lambda+1}{2}\right) - \psi\left(\frac{\lambda}{2}\right)+\frac{2}{\lambda+1}\right\}.
\]

Now, letting $K_{\lambda}=\psi\left( \frac{\lambda + 1}{2} \right)
- \psi\left( \frac{\lambda}{2}\right)$, one has
\begin{align*}
4\EE&\left\{s_1^2(X, \lambda,\mu, \sigma)\right\}
=\EE\left[s_1(Y, \lambda,0,1)\left\{- \ln\left(1 + \frac{Y^2}{\lambda} \right) + \frac{\lambda + 1}{\lambda^2}\frac{Y^2}{1 + Y^2/\lambda}\right\}\right]\\
&=\EE\left[\left\{K_{\lambda} - \frac{1}{\lambda} - \ln\left(1 + \frac{Y^2}{\lambda} \right) + \frac{\lambda + 1}{\lambda^2}\frac{Y^2}{1 + Y^2/\lambda}\right\}\left\{- \ln\left(1 + \frac{Y^2}{\lambda} \right) + \frac{\lambda + 1}{\lambda^2}\frac{Y^2}{1 + Y^2/\lambda}\right\}\right]\\
&=\left(K_{\lambda} - \frac{1}{\lambda}\right)\frac{\lambda + 1}{\lambda^2}\EE \left\{\frac{Y^2}{1 + Y^2/\lambda}\right\}
- \left(K_{\lambda} - \frac{1}{\lambda}\right)\EE \left\{\ln\left(1 + \frac{Y^2}{\lambda} \right)\right\}+\EE\left[\left\{\ln\left(1 + \frac{Y^2}{\lambda}\right)\right\}^2\right]\\
&-2\frac{\lambda + 1}{\lambda^2}\EE\left\{\frac{Y^2}{1+Y^2/\lambda}\ln\left(1 + \frac{Y^2}{\lambda}\right)\right\}+\frac{(\lambda + 1)^2}{\lambda^4}\EE\left\{\frac{Y^4}{(1 + Y^2/\lambda)^2}\right\}\\
&=\left(K_{\lambda} - \frac{1}{\lambda}\right)\frac{\lambda + 1}{\lambda^2}\frac{\lambda}{\lambda + 1}
- \left(K_{\lambda} - \frac{1}{\lambda}\right)K_{\lambda}+\psi_1\left(\frac{\lambda}{2}\right)-\psi_1\left(\frac{\lambda+1}{2}\right) + K_{\lambda}^2\\
&-2\frac{\lambda + 1}{\lambda^2}\frac{\lambda}{\lambda+1} \left(K_{\lambda}+\frac{2}{\lambda+1}\right)+\frac{(\lambda + 1)^2}{\lambda^4}\frac{3\lambda^2}{(\lambda + 1) (\lambda + 3)}\\
&=\frac{K_{\lambda}}{\lambda} -\frac{1}{\lambda^2}
- K_{\lambda}^2 + \frac{K_{\lambda}}{\lambda} +\psi_1\left(\frac{\lambda}{2}\right)-\psi_1\left(\frac{\lambda+1}{2}\right) + K_{\lambda}^2-\frac{2 K_{\lambda}}{\lambda}-\frac{4}{\lambda(\lambda+1)}+\frac{3(\lambda + 1)}{\lambda^2(\lambda + 3)}\\
&= \psi_1\left(\frac{\lambda}{2}\right)-\psi_1\left(\frac{\lambda+1}{2}\right)-\frac{1}{\lambda^2}-\frac{4}{\lambda(\lambda+1)}+\frac{3(\lambda + 1)}{\lambda^2(\lambda + 3)}
= \psi_1\left(\frac{\lambda}{2}\right)-\psi_1\left(\frac{\lambda+1}{2}\right)-\frac{2(\lambda+5)}{\lambda(\lambda+1)(\lambda+3)},
\end{align*}

\begin{align*}
\EE\left\{s_2^2(X, \lambda,\mu, \sigma)\right\}
&= \frac{1}{\sigma^2} \EE\left\{s_2^2(Y, \lambda, 0, 1)\right\}
= \frac{(\lambda + 1)^2}{\sigma^2\lambda^2} \EE\left\{\frac{Y^2}{(1 + Y^2/\lambda)^2}\right\}
= \frac{(\lambda + 1)^2}{\lambda^2\sigma^2} \frac{\lambda^2}{(\lambda + 1) (\lambda + 3)}
= \frac{1}{\sigma^2} \frac{\lambda + 1}{\lambda + 3}, \\
\EE\left\{s_3^2(X, \lambda,\mu, \sigma)\right\}
&= \frac{1}{\sigma^2} \EE\left\{s_3^2(Y, \lambda, 0, 1)\right\}
= \frac{1}{\sigma^2} \EE\left[\left\{\frac{(\lambda + 1)}{\lambda} \frac{Y^2}{1 + Y^2/\lambda} - 1\right\}^2\right] \\
&= \frac{1}{\sigma^2} \left[\frac{(\lambda + 1)^2}{\lambda^2} \EE\left\{\frac{Y^4}{(1 + Y^2/\lambda)^2}\right\} - \frac{2(\lambda + 1)}{\lambda} \EE\left(\frac{Y^2}{1 + Y^2/\lambda}\right) + 1\right] \\
&= \frac{1}{\sigma^2} \left\{\frac{(\lambda + 1)^2}{\lambda^2} \frac{3\lambda^2}{(\lambda + 1) (\lambda + 3)} - \frac{2(\lambda + 1)}{\lambda} \frac{\lambda}{\lambda + 1} + 1\right\}
= \frac{1}{\sigma^2} \left\{\frac{3(\lambda + 1)}{(\lambda + 3)} - 1\right\}
= \frac{1}{\sigma^2} \frac{2\lambda}{\lambda + 3},
\end{align*}
and
\begin{align*}
2\sigma\EE&\left\{s_1(X, \lambda,\mu, \sigma)s_3(X, \lambda,\mu, \sigma)\right\}=2\EE\left\{s_1(Y, \lambda,0,1)s_3(Y, \lambda,0,1)\right\}\\
&=\EE\left[\left\{- \ln\left(1 + \frac{Y^2}{\lambda} \right)+ \frac{\lambda + 1}{\lambda^2}\frac{Y^2}{1 + Y^2/\lambda}\right\}s_3(Y, \lambda,0,1)\right]\\
&=\EE\left[\left\{- \ln\left(1 + \frac{Y^2}{\lambda} \right)+ \frac{\lambda + 1}{\lambda^2}\frac{Y^2}{1 + Y^2/\lambda}\right\}\left(\frac{\lambda + 1}{\lambda} \frac{Y^2}{1 + Y^2/\lambda} - 1\right)\right]\\
&=-\frac{\lambda + 1}{\lambda}\EE\left\{\ln\left(1 + \frac{Y^2}{\lambda} \right)\frac{Y^2}{1 + Y^2/\lambda}\right\}+\EE\left\{\ln\left(1 + \frac{Y^2}{\lambda} \right)\right\}\\
&+\frac{(\lambda + 1)^2}{\lambda^3}\EE\left\{\frac{Y^4}{(1 + Y^2/\lambda)^2}\right\}-\frac{\lambda + 1}{\lambda^2}\EE\left\{\frac{Y^2}{1 + Y^2/\lambda}\right\}\\
&=-\frac{\lambda + 1}{\lambda}\frac{\lambda}{\lambda+1} \left\{K_{\lambda}+\frac{2}{\lambda+1}\right\}+K_{\lambda}+
\frac{(\lambda + 1)^2}{\lambda^3}\frac{3\lambda^2}{(\lambda + 1) (\lambda + 3)}-\frac{\lambda + 1}{\lambda^2}\frac{\lambda}{\lambda + 1}\\
&=-\frac{2}{\lambda+1}+\frac{3(\lambda + 1)}{\lambda(\lambda + 3)}-\frac{1}{\lambda}=\frac{-4}{(\lambda+1)(\lambda+3)},
\end{align*}

\[
\EE\left\{s_1(X, \lambda, \mu, \sigma) s_2(X, \lambda, \mu, \sigma)\right\}
= \sigma^{-1} \EE\left\{s_1(Y, \lambda, 0, 1) s_2(Y, \lambda, 0, 1)\right\}
= 0,
\]
\[
\EE\left\{s_2(X, \lambda, \mu, \sigma) s_3(X, \lambda, \mu, \sigma)\right\}
= \sigma^{-2} \EE\left\{s_2(Y, \lambda, 0, 1) s_3(Y, \lambda, 0, 1)\right\}
= 0,
\]
given that $s_2(y, \lambda, 0, 1)$ is an odd function, $s_1(y, \lambda, 0, 1)$, $s_3(y, \lambda, 0, 1)$ and $f(y \nvert \lambda, 0, 1)$ are even functions. The Fisher information matrix is then given by
\[
I(\lambda, \mu, \sigma)
=
\begin{bmatrix}
\frac{1}{4}\left\{\psi_1\left(\frac{\lambda}{2}\right)-\psi_1\left(\frac{\lambda+1}{2}\right)-\frac{2(\lambda+5)}{\lambda(\lambda+1)(\lambda+3)}\right\} & 0 & \frac{-2}{\sigma(\lambda+1)(\lambda+3)} \\[1mm]
0 & \frac{\lambda + 1}{\sigma^2(\lambda + 3)} & 0 \\[1mm]
\frac{-2}{\sigma(\lambda+1)(\lambda+3)} & 0 & \frac{2\lambda}{\sigma^2(\lambda + 3)}
\end{bmatrix}.
\]

Next, one determines the matrix $G(\lambda,\mu, \sigma)=\EE\left\{\bb{\tau}(X, \lambda,\mu, \sigma) \bb{s}(X, \lambda,\mu, \sigma)^{\top}\right\}$. We define
\[
C_{4,\lambda} =\frac{1}{\sqrt{\lambda} B\left(\frac{\lambda}{2}, \frac{1}{2}\right)}= \frac{\Gamma(\frac{\lambda+1}{2})}{\sqrt{\lambda\pi}\Gamma(\frac{\lambda}{2})}.
\]

One has, if $y = (x - \mu)/\sigma$,
\[
\bb{\tau}(x, \lambda,\mu, \sigma)
=
\begin{bmatrix}
\tau_1(x, \lambda,\mu, \sigma) \\[1mm]
\tau_2(x, \lambda,\mu, \sigma)
\end{bmatrix}
=
\begin{bmatrix}
\cos\left\{2\pi F(x \nvert \lambda,\mu, \sigma)\right\} \\[1mm]
\sin\left\{2\pi F(x \nvert \lambda,\mu, \sigma)\right\}
\end{bmatrix}
=
\begin{bmatrix}
\cos\left\{2\pi F(y \nvert \lambda,0, 1)\right\} \\[1mm]
\sin\left\{2\pi F(y \nvert \lambda,0, 1)\right\}
\end{bmatrix}.
\]
Note that $\bb{\tau}(x, \lambda,\mu, \sigma) = \bb{\tau}(y, \lambda,0, 1)$. Now, using that $\EE\left\{\tau_1(Y \nvert \lambda,0, 1)\right\} = 0$, the fact that $\tau_1(y, \lambda,0, 1)$ and $s_3(Y, \lambda,0, 1)$ are even functions, as is $f(y \nvert \lambda,0, 1)$, one has
\begin{align*}
\EE&\left\{\tau_1(X, \lambda, \mu, \sigma) s_3(X, \lambda,\mu, \sigma)\right\}
= \sigma^{-1} \EE\left\{\tau_1(Y \nvert \lambda,0, 1) s_3(Y \nvert \lambda,0, 1)\right\} \\
&= \sigma^{-1} \EE\left[\cos\left\{2\pi F(Y \nvert \lambda,0, 1)\right\} \left(\frac{\lambda + 1}{\lambda} \frac{Y^2}{1 + Y^2/\lambda} - 1\right)\right] \\
&= \frac{\lambda + 1}{\sigma\lambda} \EE\left[\cos\left\{2\pi F(Y \nvert \lambda,0, 1)\right\} \frac{Y^2}{1 + Y^2/\lambda}\right] \\
&= \frac{2(\lambda + 1)}{\sigma\lambda} \int_0^{\infty} \cos\left[\pi \left\{2 - F_{\mathrm{be}}((1 + y^2/\lambda)^{-1}\nvert\lambda/2, 1/2)\right\}\right] \frac{y^2}{1 + y^2/\lambda} f(y \nvert \lambda,0, 1) \rd y \\
&= \frac{2(\lambda + 1)}{\sigma\lambda} \frac{\lambda^{3/2}C_{4,\lambda}}{2} \int_0^1 \cos\left[\pi \left\{2 - F_{\mathrm{be}}(v\nvert\lambda/2, 1/2)\right\}\right] v^{\lambda/2-1}(1 - v)^{1/2} \rd v \\
&= \frac{(\lambda + 1)}{\sigma} \frac{B(\lambda/2, 3/2)}{B(\lambda/2, 1/2)} \int_0^1 \cos\left[\pi \left\{2 - F_{\mathrm{be}}(v\nvert\lambda/2, 1/2)\right\}\right] f_{\mathrm{be}}(v\nvert\lambda/2, 3/2) \rd v \\
&= \frac{1}{\sigma}h_{12}(\lambda),
\end{align*}
using the change of variables $v = (1 + y^2/\lambda)^{-1} \Leftrightarrow y = \sqrt{\lambda(1 - v)/v}$ and $\rd y = \sqrt{\lambda} v^{-3/2}(1 - v)^{-1/2}/2$.

Using the fact that $\tau_2(y, \lambda,0, 1)$ and $s_2(Y, \lambda,0, 1)$ are odd functions and that $f(y \nvert \lambda,0, 1)$ is an even function, one has
\begin{align*}
\EE\left\{\tau_2(X, \lambda,\mu, \sigma) s_2(X, \lambda,\mu, \sigma)\right\}
&= \sigma^{-1} \EE\left\{\tau_2(Y, \lambda,0, 1) s_2(Y, \lambda,0, 1)\right\} \\
&= \frac{2(\lambda + 1)}{\sigma\lambda} \int_0^{\infty} \sin\left\{2\pi F(y \nvert \lambda, 0, 1)\right\} \frac{y}{1 + y^2/\lambda} f(y \nvert \lambda, 0, 1) \rd y \\
&= \frac{2(\lambda + 1)}{\sigma\lambda} \frac{\lambda C_{4,\lambda}}{2} \int_0^1 \sin\left[\pi \left\{2 - F_{\mathrm{be}}(v\nvert\lambda/2, 1/2) \right\}\right] v^{(\lambda-1)/2} \rd v \\
&= \frac{(\lambda + 1)C_{4,\lambda}B(\frac{\lambda+1}{2}, 1)}{\sigma} \int_0^1 \sin\left[\pi \left\{2 - F_{\mathrm{be}}(v\nvert\lambda/2, 1/2) \right\}\right] f_{\mathrm{be}}(v\nvert(\lambda+1)/2, 1) \rd v \\
&= \frac{2C_{4,\lambda}}{\sigma} h_{13}(\lambda).
\end{align*}
\begin{align*}
\EE&\left\{\tau_1(X, \lambda,\mu, \sigma) s_1(X, \lambda,\mu, \sigma)\right\}
= \EE\left\{\tau_1(Y, \lambda,0, 1) s_1(Y, \lambda,0, 1)\right\} \\
&=\frac{1}{2}\EE\left[\tau_1(Y, \lambda,0, 1)\left\{- \ln\left(1 + \frac{Y^2}{\lambda} \right)
+ \frac{\lambda + 1}{\lambda^2}\frac{Y^2}{1 + Y^2/\lambda}\right\}\right] \\
&= \frac{1}{2}\int_0^{\infty} \cos\left\{2\pi F(y \nvert \lambda, 0, 1)\right\} \left\{- \ln\left(1 + \frac{y^2}{\lambda} \right)
+ \frac{\lambda + 1}{\lambda^2}\frac{y^2}{1 + y^2/\lambda}\right\} 2 f(y \nvert \lambda, 0, 1) \rd y \\
&=\frac{1}{2}\ \int_0^{1} \cos\left[\pi \left\{2 - F_{\mathrm{be}}(v\nvert\lambda/2, 1/2) \right\}\right] \left\{\ln(v)
+ \frac{\lambda + 1}{\lambda}(1-v)\right\}f_{\mathrm{be}}(v\nvert \lambda/2, 1) \rd v\\
& = \frac{1}{2} h_{14}(\lambda).
\end{align*}
Given that $\tau_2(y, \lambda,0, 1)$ and $s_2(Y, \lambda,0, 1)$ are odd functions, that $\tau_1(y, \lambda, 0, 1)$, $s_1(Y, \lambda,0, 1)$, $s_3(Y, \lambda,0, 1)$ and $f(y \nvert \lambda,0, 1)$ are even functions, one has
\begin{align*}
\EE\left\{\tau_2(X, \lambda,\mu, \sigma) s_1(X, \lambda, \mu, \sigma)\right\}
&= \EE\left\{\tau_2(Y, \lambda, 0, 1) s_1(Y, \lambda, 0, 1)\right\}
= 0, \\
\EE\left\{\tau_1(X, \lambda,\mu, \sigma) s_2(X, \lambda, \mu, \sigma)\right\}
&= \frac{1}{\sigma} \EE\left\{\tau_1(Y, \lambda, 0, 1) s_2(Y, \lambda, 0, 1)\right\}
= 0, \\
\EE\left\{\tau_2(X, \lambda, \mu, \sigma) s_3(X, \lambda, \mu, \sigma)\right\}
&= \frac{1}{\sigma} \EE\left\{\tau_2(Y, \lambda, 0, 1) s_3(Y, \lambda, 0, 1)\right\}
= 0
\end{align*}
and one obtains
\[
G(\lambda, \mu, \sigma)
=
\begin{bmatrix}
\frac{1}{2} h_{14}(\lambda) & 0 & \frac{h_{12}(\lambda)}{\sigma} \\[1mm]
0 & \frac{2 C_{4,\lambda} h_{13}(\lambda)}{\sigma} & 0
\end{bmatrix}.
\]

\section{Proof for the \texorpdfstring{$\mathrm{Gompertz}(\beta, \rho)$}{Gompertz} test}\label{proof:Gompertz}

Straightforward calculations yield
\[
\bb{s}(x, \beta, \rho)
=
\begin{bmatrix}
s_1(x, \beta, \rho) \\
s_2(x, \beta, \rho)
\end{bmatrix}
=
\begin{bmatrix}
\partial_{\beta} \ln f(x \nvert \beta, \rho) \\
\partial_{\rho} \ln f(x \nvert \beta, \rho)
\end{bmatrix}
=
\begin{bmatrix}
\frac{1}{\beta} + x - \rho x e^{\beta x} \\
\frac{1}{\rho} + 1 - e^{\beta x}
\end{bmatrix}.
\]
The ML estimators for $\beta_0$ and/or $\rho_0$ come directly from the equation $\sum_{i=1}^{n}\bb{s}(x_i, \beta_0, \rho_0)=0$.

Next, one determines the Fisher information matrix $I(\beta, \rho)=\EE\big\{\bb{s}(X, \beta, \rho) \bb{s}(X, \beta, \rho)^{\top}\big\}$.
One has, using the change of variables $v = \exp(\beta x)$ in the fourth equality,
\begin{align*}
- \EE\left\{\partial_{\beta} s_1(X, \beta, \rho)\right\}
&= \frac{1}{\beta^2} + \rho \EE\left(X^2 e^{\beta X}\right)
= \frac{1}{\beta^2} + \rho \int_0^{\infty} x^2 e^{\beta x}\beta\rho\exp(\rho + \beta x - \rho e^{\beta x}) \rd x \\
&= \frac{1}{\beta^2} \left\{1 + \rho^2 e^{\rho} \int_1^{\infty} (\ln v)^2 v e^{-\rho v} \rd v\right\}
= \frac{1 + \rho^2 e^{\rho} h_{17}(\rho)}{\beta^2}.
\end{align*}
One has
\[
- \EE\left\{\partial_{\rho} s_2(X, \beta, \rho)\right\}
= \frac{1}{\rho^2}.
\]
One has, using the change of variables $v = \exp(\beta x)$ in the fourth equality,
\begin{align*}
- \EE\left\{\partial_{\rho} s_1(X, \beta, \rho)\right\}
&= \EE\left(X e^{\beta X}\right)
= \int_0^{\infty} x e^{\beta x}\beta\rho\exp(\rho + \beta x - \rho e^{\beta x}) \rd x \\
&= \frac{\rho e^{\rho}}{\beta} \int_1^{\infty} (\ln v) v e^{-\rho v} \rd v
= \frac{\rho e^{\rho} h_{18}(\rho)}{\beta}.
\end{align*}

Given that the usual regularity conditions are satisfied, the Fisher information matrix is given by
\[
I(\beta, \rho)
=
\begin{bmatrix}
\frac{1 + \rho^2 e^{\rho} h_{17}(\rho)}{\beta^2} & \frac{\rho e^{\rho} h_{18}(\rho)}{\beta} \\[1mm]
\frac{\rho e^{\rho} h_{18}(\rho)}{\beta} & \frac{1}{\rho^2}
\end{bmatrix}.
\]

Next, one determines the matrix $G(\beta, \rho)=\EE\left\{\bb{\tau}(X, \beta, \rho) \bb{s}(X, \beta, \rho)^{\top}\right\}$. One has
\[
\bb{\tau}(x, \beta, \rho)
=
\begin{bmatrix}
\tau_1(x, \beta, \rho) \\[1mm]
\tau_2(x, \beta, \rho)
\end{bmatrix}
=
\begin{bmatrix}
\cos\left\{2\pi F(x \nvert \beta, \rho)\right\} \\[1mm]
\sin\left\{2\pi F(x \nvert \beta, \rho)\right\}
\end{bmatrix}
=
\begin{bmatrix}
\cos\left(2\pi \left[1 - \exp\left\{-\rho( e^{\beta x} - 1)\right\}\right]\right) \\[1mm]
\sin\left(2\pi \left[1 - \exp\left\{-\rho( e^{\beta x} - 1)\right\}\right]\right)
\end{bmatrix}.
\]

Given that $\EE\left[\cos\left\{2\pi F(X \nvert \beta, \rho)\right\}\right] = 0$ and using the change of variables $v = \exp(\beta x)$, one has
\begin{align*}
\EE\left\{\tau_1(X, \beta, \rho) s_1(X, \beta, \rho)\right\}
&= \EE\left[\cos\left\{2\pi F(X \nvert \beta, \rho)\right\} (1/\beta + X - \rho X e^{\beta X})\right] \\
&= \EE\left[\cos\left\{2\pi F(X \nvert \beta, \rho)\right\} X (1 - \rho e^{\beta X})\right] \\
&= \int_0^{\infty} \cos\left(2\pi \left[1 - \exp\left\{-\rho( e^{\beta x} - 1)\right\}\right]\right) x (1 - \rho e^{\beta x}) \beta\rho \exp(\rho + \beta x - \rho e^{\beta x}) \rd x \\
&= \frac{\rho e^{\rho}}{\beta} \int_1^{\infty} \cos\left(2\pi \left[1 - \exp\left\{-\rho(v - 1)\right\}\right]\right) (\ln v) (1 - \rho v) e^{-\rho v} \rd v
= \frac{\rho e^{\rho} h_{19}(\rho)}{\beta}.
\end{align*}
Similarly, one has
\begin{align*}
\EE\left\{\tau_2(X, \beta, \rho) s_1(X, \beta, \rho)\right\}
&= \EE\left[\sin\left\{2\pi F(X \nvert \beta, \rho)\right\} (1/\beta + X - \rho X e^{\beta X})\right] \\
&= \frac{\rho e^{\rho}}{\beta} \int_1^{\infty} \sin\left(2\pi \left[1 - \exp\left\{-\rho(v - 1)\right\}\right]\right) (\ln v) (1 - \rho v) e^{-\rho v} \rd v
= \frac{\rho e^{\rho} h_{20}(\rho)}{\beta}.
\end{align*}
Given that $\EE\left[\cos\left\{2\pi F(X \nvert \beta, \rho)\right\}\right] = 0$ and using the change of variables $v = \exp(\beta x)$ in the fourth equality,
\begin{align*}
\EE\left\{\tau_1(X, \beta, \rho) s_2(X, \beta, \rho)\right\}
&= \EE\left[\cos\left\{2\pi F(X \nvert \beta, \rho)\right\} (1/\rho + 1 - e^{\beta X})\right] \\
&= -\int_0^{\infty} \cos\left(2\pi \left[1 - \exp\left\{-\rho( e^{\beta x} - 1)\right\}\right]\right) e^{\beta x} \beta \rho \exp(\rho + \beta x - \rho e^{\beta x}) \rd x \\
&= -\rho e^{\rho} \int_1^{\infty} \cos\left[2\pi \left\{1 - e^{-\rho(v - 1)}\right\}\right] v e^{-\rho v} \rd v
= -\rho e^{\rho} h_{21}(\rho).
\end{align*}
Similarly, one has
\begin{align*}
\EE\left\{\tau_2(X, \beta, \rho) s_2(X, \beta, \rho)\right\}
&= \EE\left[\sin\left\{2\pi F(X \nvert \beta, \rho)\right\} (1/\rho + 1 - e^{\beta X})\right] \\
&= -\rho e^{\rho} \int_1^{\infty} \sin\left[2\pi \left\{1 - e^{-\rho(v - 1)}\right\}\right] v e^{-\rho v} \rd v
= -\rho e^{\rho} h_{22}(\rho).
\end{align*}

Therefore, one has
\[
G(\beta, \rho)
=
\rho e^{\rho}
\begin{bmatrix}
\frac{ h_{19}(\rho)}{\beta} & -h_{21}(\rho) \\[1mm]
\frac{h_{20}(\rho)}{\beta} & -h_{22}(\rho)
\end{bmatrix}.
\]

\section{Proof for the \texorpdfstring{$\mathrm{Lomax}(\alpha, \sigma)$}{Lomax} test}\label{proof:Lomax}

Straightforward calculations yield
\[
\bb{s}(x, \alpha, \sigma)
=
\begin{bmatrix}
s_1(x, \alpha, \sigma) \\[1mm]
s_2(x, \alpha, \sigma)
\end{bmatrix}
=
\begin{bmatrix}
\partial_{\alpha} \ln f(x \nvert \alpha, \sigma) \\[1mm]
\partial_{\sigma} \ln f(x \nvert \alpha, \sigma)
\end{bmatrix}
=
\begin{bmatrix}
\frac{1}{\alpha} - \ln(1 + x / \sigma) \\[1mm]
- \frac{1}{\sigma} + \frac{(\alpha + 1)x}{\sigma^2(1 + x/\sigma)}
\end{bmatrix}.
\]
The ML estimators for $\alpha_0$ and/or $\sigma_0$ come directly from the equation $\sum_{i=1}^{n}\bb{s}(x_i, \alpha_0, \sigma_0)=0$.

Next, one determines the Fisher information matrix $I(\alpha, \sigma)=\EE\big\{\bb{s}(X, \alpha, \sigma) \bb{s}(X, \alpha, \sigma)^{\top}\big\}$.
Note that if $X\sim \mathrm{Lomax}(\alpha, \sigma)$, one can show that
\[
\EE(X^k) = \frac{k! \, \sigma^k}{(\alpha - 1) \ldots (\alpha - k)}, \quad \alpha\in (k, \infty), ~k\in \N_0 = \{0, 1, 2, \ldots\}.
\]
Therefore, for $k\in \N_0$ and $\alpha, \sigma\in (0, \infty)$, one has
\begin{align*}
\EE\left\{X^k(1 + X/\sigma)^{-k}\right\}
&= \int_0^{\infty}x^k(1 + x/\sigma)^{-k} (\alpha/\sigma) (1 + x/\sigma)^{-(\alpha + 1)} \rd x \\
&= \frac{\alpha}{\alpha + k} \int_0^{\infty}x^k((\alpha + k)/\sigma) (1 + x/\sigma)^{-(\alpha + k + 1)} \rd x \\
&= \frac{\alpha}{\alpha + k} \times \frac{k!\sigma^k}{(\alpha + k-1)\ldots \alpha} = \frac{k!\sigma^k}{(\alpha + k)\ldots (\alpha + 1)}.
\end{align*}

One has
\[
- \EE\left\{\partial_{\alpha} s_1(X, \alpha, \sigma)\right\} = \frac{1}{\alpha^2},
\]
\[
- \EE\left\{\partial_{\sigma} s_1(X, \alpha, \sigma)\right\}
= -\frac{1}{\sigma^2} \EE\left\{X (1 + X / \sigma)^{-1}\right\}
= -\frac{1}{\sigma^2} \times \frac{\sigma}{\alpha + 1}
= \frac{-1}{(\alpha + 1)\sigma}
\]
and
\begin{align*}
- \EE\left\{\partial_{\sigma} s_2(X, \alpha, \sigma)\right\}
&= \frac{-1}{\sigma^2} - (\alpha + 1) \EE\left\{-\frac{2X}{\sigma^3(1 + X/\sigma)} + \frac{X^2}{\sigma^4(1 + X/\sigma)^2}\right\} \\
&= \frac{-1}{\sigma^2} - (\alpha + 1) \left\{-\frac{2}{\sigma^3} \times \frac{\sigma}{\alpha + 1} + \frac{1}{\sigma^4} \times \frac{2\sigma^2}{(\alpha + 2) (\alpha + 1)}\right\} \\
&= \frac{-1}{\sigma^2} + \frac{2}{\sigma^2} - \frac{2}{(\alpha + 2)\sigma^2} = \frac{\alpha}{(\alpha + 2)\sigma^2}.
\end{align*}
Given that the usual regularity conditions are satisfied, the Fisher information matrix is given by
\[
I(\alpha, \sigma)
=
\begin{bmatrix}
\frac{1}{\alpha^2} &\frac{-1}{(\alpha + 1)\sigma} \\[1mm]
\frac{-1}{(\alpha + 1)\sigma} & \frac{\alpha}{(\alpha + 2)\sigma^2}
\end{bmatrix}.
\]

Next, one determines the matrix $G(\alpha, \sigma)=\EE\left\{\bb{\tau}(X, \alpha, \sigma) \bb{s}(X, \alpha, \sigma)^{\top}\right\}$.
\[
\bb{\tau}(x, \alpha, \sigma)
\equiv
\begin{bmatrix}
\tau_1(x, \alpha, \sigma) \\[1mm]
\tau_2(x, \alpha, \sigma)
\end{bmatrix}
=
\begin{bmatrix}
\cos\left\{2\pi F(x \nvert \alpha, \sigma)\right\} \\[1mm]
\sin\left\{2\pi F(x \nvert \alpha, \sigma)\right\}
\end{bmatrix}
=
\begin{bmatrix}
\cos\left[2\pi \left\{1 - (1 + x/\sigma)^{-\alpha}\right\}\right] \\[1mm]
\sin\left[2\pi \left\{1 - (1 + x/\sigma)^{-\alpha}\right\}\right]
\end{bmatrix}.
\]

Given that $\EE\left[\cos\left\{2\pi F(X \nvert \alpha, \sigma)\right\}\right] = 0$ and using the change of variables $v = -\ln\{(1 + x/\sigma)^{-\alpha}\}$, one has
\begin{align*}
\EE\left\{\tau_1(X, \alpha, \sigma) s_1(X, \alpha, \sigma)\right\}
&= \EE\left[\cos\left\{2\pi F(X \nvert \alpha, \sigma)\right\} \left\{\alpha^{-1} - \ln(1 + X / \sigma)\right\}\right] \\
&= - \alpha\sigma^{-1} \int_0^{\infty} \cos\left[2\pi \left\{1 - (1 + x/\sigma)^{-\alpha}\right\}\right] \ln(1 + x / \sigma) (1 + x/\sigma)^{-(\alpha + 1)} \rd x \\
&= \frac{-1}{\alpha} \int_0^\infty \cos\left\{2\pi (1 - e^{-v})\right\} v e^{-v} \, \rd v \\
&= \frac{-1}{\alpha} \int_0^\infty \cos\left\{2\pi \Gamma_{1,1}(v)\right\} f_{\mathrm{ga}}(v \nvert 2, 1) \, \rd v
= \frac{-h_{6}(1,2,1)}{\alpha}.
\end{align*}
Similarly, one has
\begin{align*}
&\EE\left\{\tau_2(X, \alpha, \sigma) s_1(X, \alpha, \sigma)\right\}
= \EE\left[\sin\left\{2\pi F(X \nvert \alpha, \sigma)\right\} \left\{\alpha^{-1} - \ln(1 + X / \sigma)\right\}\right] \\
&\quad= \frac{-1}{\alpha} \int_0^\infty \sin\left\{2\pi \Gamma_{1,1}(v)\right\} f_{\mathrm{ga}}(v \nvert 2, 1) \, \rd v = \frac{-h_{7}(1,2,1)}{\alpha}.
\end{align*}

Given that $\EE\left[\cos\left\{2\pi F(X \nvert \alpha, \sigma)\right\}\right] = 0$ and using the change of variables $v = \alpha\ln(1 + x/\sigma)\Leftrightarrow x=\sigma(e^{v/\alpha}-1)$ with $\rd v=\frac{\alpha}{\sigma(1+x/\sigma)}\rd x\Leftrightarrow \rd x = \frac{\sigma}{\alpha}e^{v/\alpha} \rd v$, one has
\begin{align*}
\EE\left\{\tau_1(X, \alpha, \sigma) s_2(X, \alpha, \sigma)\right\}
&= \EE\left[\cos\left\{2\pi F(X \nvert \alpha, \sigma)\right\} \left\{-\frac{1}{\sigma} + \frac{(\alpha + 1)X}{\sigma^2(1 + X/\sigma)}\right\}\right] \\
&= \alpha (\alpha + 1) \sigma^{-3} \int_0^{\infty} \cos\left[2\pi \left\{1 - (1 + x/\sigma)^{-\alpha}\right\}\right] x (1 + x/\sigma)^{-(\alpha + 2)} \rd x \\
&= \alpha (\alpha + 1) \sigma^{-3} \int_0^{\infty} \cos\left[2\pi(1 - e^{-v})\right] \sigma(e^{v/\alpha}-1)e^{-\frac{v(\alpha+2)}{\alpha}}\frac{\sigma}{\alpha}e^{v/\alpha}\rd v \\
&= \frac{\alpha}{\sigma} \frac{\alpha + 1}{\alpha} \int_0^{\infty} \cos\left[2\pi(1 - e^{-v})\right](e^{-v}-e^{-\frac{v(\alpha+1)}{\alpha}})\rd v \\
&= -\frac{\alpha}{\sigma} \int_0^{\infty} \cos\left[2\pi(1 - e^{-v})\right]\frac{\alpha + 1}{\alpha}e^{-\frac{v(\alpha+1)}{\alpha}}\rd v \\
&= -\frac{\alpha}{\sigma} \int_0^{\infty} \cos\left[2\pi \Gamma_{1,1}(v)\right]f_{\mathrm{ga}}(v \nvert 1,\alpha/(\alpha + 1))\rd v \\
&=\frac{-\alpha h_{6}(1,1,\alpha/(\alpha + 1))}{\sigma}.
\end{align*}
Similarly, one has
\begin{align*}
\EE\left\{\tau_2(X, \alpha, \sigma) s_2(X, \alpha, \sigma)\right\}
&= \EE\left[\sin\left\{2\pi F(X \nvert \alpha, \sigma)\right\} \left\{-\frac{1}{\sigma} + \frac{(\alpha + 1)X}{\sigma^2(1 + X/\sigma)}\right\}\right] \\
&= -\frac{\alpha}{\sigma} \int_0^{\infty} \sin\left[2\pi \Gamma_{1,1}(v)\right]f_{\mathrm{ga}}(v \nvert 1,\alpha/(\alpha + 1))\rd v
=\frac{-\alpha h_{7}(1,1,\alpha/(\alpha + 1))}{\sigma}.
\end{align*}

Therefore, one has
\[
G(\alpha, \sigma)
=
\begin{bmatrix}
\frac{-h_{6}(1,2,1)}{\alpha} & \frac{-\alpha h_{6}(1,1,\alpha/(\alpha + 1))}{\sigma} \\[1mm]
\frac{-h_{7}(1,2,1)}{\alpha}& \frac{-\alpha h_{7}(1,1,\alpha/(\alpha + 1))}{\sigma}
\end{bmatrix}.
\]

\section{Proof for the \texorpdfstring{$\mathrm{inverse\mhyphen Gaussian}(\mu, \lambda)$}{inverse-Gaussian} test}\label{proof:inverse.Gaussian}

Straightforward calculations yield
\[
\bb{s}(x, \mu, \lambda)
=
\begin{bmatrix}
s_1(x, \mu, \lambda) \\[1mm]
s_2(x, \mu, \lambda)
\end{bmatrix}
=
\begin{bmatrix}
\partial_{\mu} \ln f(x \nvert \mu, \lambda) \\[1mm]
\partial_{\lambda} \ln f(x \nvert \mu, \lambda)
\end{bmatrix}
=
\begin{bmatrix}
\frac{\lambda(x - \mu)}{\mu^3} \\[1mm]
\frac{1}{2\lambda} - \frac{(x - \mu)^2}{2 \mu^2 x}
\end{bmatrix}
=
\begin{bmatrix}
\frac{\lambda(x - \mu)}{\mu^3} \\[1mm]
\frac{1}{2\lambda} - \frac{x}{2\mu^2} + \frac{1}{\mu} - \frac{1}{2x}
\end{bmatrix}.
\]
The ML estimators for $\mu_0$ and/or $\lambda_0$ come directly from the equation $\sum_{i=1}^{n}\bb{s}(x_i, \mu_0, \lambda_0)=0$.

Next, one determines the Fisher information matrix $I(\mu, \lambda)=\EE\big\{\bb{s}(X, \mu, \lambda) \bb{s}(X, \mu, \lambda)^{\top}\big\}$.
Given that $\EE(X) = \mu$, one has
\[
- \EE\left\{\partial_{\mu} s_1(X, \mu, \lambda)\right\}
= \frac{3\lambda\EE(X)}{\mu^4} - \frac{2\lambda}{\mu^3}
= \frac{\lambda}{\mu^3}, \quad
- \EE\left\{\partial_{\lambda} s_1(X, \mu, \lambda)\right\}
= -\frac{\EE(X) - \mu}{\mu^3}
= 0
\]
and
\[
- \EE\left\{\partial_{\lambda} s_2(X, \mu, \lambda)\right\} = \frac{1}{2\lambda^2}.
\]

Given that the usual regularity conditions are satisfied, the Fisher information matrix is given by
\[
I(\mu, \lambda)
=
\begin{bmatrix}
\frac{\lambda}{\mu^3} & 0 \\[1mm]
0 & \frac{1}{2\lambda^2}
\end{bmatrix}.
\]

Next, one determines the matrix $G(\mu, \lambda)=\EE\left\{\bb{\tau}(X, \mu, \lambda) \bb{s}(X, \mu, \lambda)^{\top}\right\}$.
\[
\bb{\tau}(x, \mu, \lambda)
=
\begin{bmatrix}
\tau_1(x, \mu, \lambda) \\[1mm]
\tau_2(x, \mu, \lambda)
\end{bmatrix}
=
\begin{bmatrix}
\cos\left\{2\pi F(x \nvert \mu, \lambda)\right\} \\[1mm]
\sin\left\{2\pi F(x \nvert \mu, \lambda)\right\}
\end{bmatrix}.
\]

Given that $\EE\left[\cos\left\{2\pi F(X \nvert \mu, \lambda)\right\}\right] = 0$, one has
\begin{align*}
\EE\left\{\tau_1(X, \mu, \lambda) s_1(X, \mu, \lambda)\right\}
&= \frac{\lambda}{\mu^3} \EE\left[\cos\left\{2\pi F(X \nvert \mu, \lambda)\right\} (X - \mu)\right]
= \frac{\lambda}{\mu^3} \EE\left[X \cos\left\{2\pi F(X \nvert \mu, \lambda)\right\}\right] \\
&= \frac{\lambda}{\mu^3} \int_0^{\infty} v \cos\left\{2\pi F(v \nvert \mu, \lambda)\right\} f(v \nvert \mu, \lambda) \rd v
= \frac{\lambda}{\mu^3} h_{27}(\mu, \lambda).
\end{align*}
Similarly, one has
\begin{align*}
\EE\left\{\tau_2(X, \mu, \lambda) s_1(X, \mu, \lambda)\right\}
&= \frac{\lambda}{\mu^3} \int_0^{\infty} v \sin\left\{2\pi F(v \nvert \mu, \lambda)\right\} f(v \nvert \mu, \lambda) \rd v
= \frac{\lambda}{\mu^3} h_{28}(\mu, \lambda).
\end{align*}
Given that $\EE\left[\cos\left\{2\pi F(X \nvert \mu, \lambda)\right\}\right] = 0$, one has
\begin{align*}
\EE\left\{\tau_1(X, \mu, \lambda) s_2(X, \mu, \lambda)\right\}
&= \EE\left[\cos\left\{2\pi F(X \nvert \mu, \lambda)\right\} \left(\frac{1}{2\lambda} - \frac{X}{2\mu^2} + \frac{1}{\mu} - \frac{1}{2X}\right)\right] \\
&= -\frac{1}{2\mu^2} \EE\left[\cos\left\{2\pi F(X \nvert \mu, \lambda)\right\} \left(\frac{X^2 + \mu^2}{X}\right)\right] \\
&= -\frac{1}{2\mu^2} \int_0^{\infty} v^{-1} (v^2 + \mu^2) \cos\left\{2\pi F(v \nvert \mu, \lambda)\right\} f(v \nvert \mu, \lambda) \rd v
= - \frac{1}{2\mu^2} h_{29}(\mu, \lambda).
\end{align*}
Similarly, one has
\begin{align*}
\EE\left\{\tau_2(X, \mu, \lambda) s_2(X, \mu, \lambda)\right\}
&= -\frac{1}{2\mu^2} \int_0^{\infty} v^{-1} (v^2 + \mu^2) \sin\left\{2\pi F(v \nvert \mu, \lambda)\right\} f(v \nvert \mu, \lambda) \rd v
= - \frac{1}{2\mu^2} h_{30}(\mu, \lambda).
\end{align*}
Therefore, one has
\[
G(\mu, \lambda)
=
\begin{bmatrix}
\frac{\lambda}{\mu^3} h_{27}(\mu, \lambda) & - \frac{1}{2\mu^2} h_{29}(\mu, \lambda) \\[1mm]
\frac{\lambda}{\mu^3} h_{28}(\mu, \lambda) & - \frac{1}{2\mu^2} h_{30}(\mu, \lambda)
\end{bmatrix}.
\]

\section{Proof for the \texorpdfstring{$\mathrm{Beta}(\alpha, \beta)$}{Beta} test}\label{proof:beta}

Straightforward calculations yield
\[
\bb{s}(x, \alpha, \beta)
=
\begin{bmatrix}
s_1(x, \alpha, \beta) \\[1mm]
s_2(x, \alpha, \beta)
\end{bmatrix}
=
\begin{bmatrix}
\partial_{\alpha} \ln f(x \nvert \alpha, \beta) \\[1mm]
\partial_{\beta} \ln f(x \nvert \alpha, \beta)
\end{bmatrix}
=
\begin{bmatrix}
\psi(\alpha + \beta) - \psi(\alpha) + \ln x \\[1mm]
\psi(\alpha + \beta) - \psi(\beta) + \ln (1 - x)
\end{bmatrix}.
\]
The ML estimators for $\alpha_0$ and/or $\beta_0$ come directly from the equation $\sum_{i=1}^{n}\bb{s}(x_i, \alpha_0, \beta_0)=0$.

Next, one determines the Fisher information matrix $I(\alpha, \beta)=\EE\big\{\bb{s}(X, \alpha, \beta) \bb{s}(X, \alpha, \beta)^{\top}\big\}$.
Given that
\[
- \EE\left\{\partial_{\alpha} s_1(X, \alpha, \beta)\right\} = \psi_1(\alpha) - \psi_1(\alpha + \beta), \quad
- \EE\left\{\partial_{\beta} s_1(X, \alpha, \beta)\right\} = - \psi_1(\alpha + \beta)
\]
and
\[
- \EE\left\{\partial_{\beta} s_2(X, \alpha, \beta)\right\} = \psi_1(\beta) - \psi_1(\alpha + \beta),
\]
and that the usual regularity conditions are satisfied, the Fisher information matrix is given by
\[
I(\alpha, \beta) =
\begin{bmatrix}
\psi_1(\alpha) - \psi_1(\alpha + \beta) & - \psi_1(\alpha + \beta) \\[1mm]
- \psi_1(\alpha + \beta) & \psi_1(\beta) - \psi_1(\alpha + \beta)
\end{bmatrix}.
\]

Next, one determines the matrix $G(\alpha, \beta)=\EE\left\{\bb{\tau}(X, \alpha, \beta) \bb{s}(X, \alpha, \beta)^{\top}\right\}$.
\[
\bb{\tau}(x, \alpha, \beta)
\equiv
\begin{bmatrix}
\tau_1(x, \alpha, \beta) \\[1mm]
\tau_2(x, \alpha, \beta)
\end{bmatrix}
=
\begin{bmatrix}
\cos\left\{2\pi F_{\mathrm{be}}(x \nvert \alpha, \beta)\right\} \\[1mm]
\sin\left\{2\pi F_{\mathrm{be}}(x \nvert \alpha, \beta)\right\}
\end{bmatrix}.
\]

Given that $\EE[\cos\{2\pi F_{\mathrm{be}}(X \nvert \alpha, \beta)\}] = 0$, one has
\begin{align*}
\EE\left\{\tau_1(X, \alpha, \beta) s_1(X, \alpha, \beta)\right\}
&= \EE\left[\cos\left\{2\pi F_{\mathrm{be}}(X \nvert \alpha, \beta)\right\} (\psi(\alpha + \beta) - \psi(\alpha) + \ln X)\right] \\
&= \int_0^1 (\ln x)\cos\left\{2\pi F_{\mathrm{be}}(x \nvert \alpha, \beta)\right\} f_{\mathrm{be}}(x \nvert \alpha, \beta) \rd x =h_{23}(\alpha, \beta).
\end{align*}
Similarly, one has
\begin{align*}
\EE\left\{\tau_2(X, \alpha, \beta) s_1(X, \alpha, \beta)\right\}
&= \EE\left[\sin\left\{2\pi F_{\mathrm{be}}(X \nvert \alpha, \beta)\right\} (\psi(\alpha + \beta) - \psi(\alpha) + \ln X)\right] \\
&= \int_0^1 (\ln x)\sin\left\{2\pi F_{\mathrm{be}}(x \nvert \alpha, \beta)\right\} f_{\mathrm{be}}(x \nvert \alpha, \beta) \rd x =h_{24}(\alpha, \beta).
\end{align*}
Given that $\EE[\cos\{2\pi F_{\mathrm{be}}(X \nvert \alpha, \beta)\}] = 0$, one has
\begin{align*}
\EE\left\{\tau_1(X, \alpha, \beta) s_2(X, \alpha, \beta)\right\}
&= \EE\left[\cos\left\{2\pi F_{\mathrm{be}}(X \nvert \alpha, \beta)\right\} \left\{\psi(\alpha + \beta) - \psi(\beta) + \ln (1 - X)\right\}\right] \\
&= \int_0^1 \ln (1 - x)\cos\left\{2\pi F_{\mathrm{be}}(x \nvert \alpha, \beta)\right\} f_{\mathrm{be}}(x \nvert \alpha, \beta) \rd x= h_{25}(\alpha, \beta).
\end{align*}
Similarly, one has
\begin{align*}
\EE\left\{\tau_2(X, \alpha, \beta) s_2(X, \alpha, \beta)\right\}
&= \EE\left[\sin\left\{2\pi F_{\mathrm{be}}(X \nvert \alpha, \beta)\right\} \left\{\psi(\alpha + \beta) - \psi(\beta) + \ln (1 - X)\right\}\right] \\
&= \int_0^1 \ln (1 - x)\sin\left\{2\pi F_{\mathrm{be}}(x \nvert \alpha, \beta)\right\} f_{\mathrm{be}}(x \nvert \alpha, \beta) \rd x= h_{26}(\alpha, \beta).
\end{align*}
Therefore, one has
\[
G(\alpha, \beta)
=
\begin{bmatrix}
h_{23}(\alpha, \beta) & h_{25}(\alpha, \beta) \\[1mm]
h_{24}(\alpha, \beta) & h_{26}(\alpha, \beta)
\end{bmatrix}.
\]

\section{Proof for the \texorpdfstring{$\mathrm{Kumaraswamy}(\alpha, \beta)$}{Kumaraswamy} test}\label{proof:Kumaraswamy}

Straightforward calculations yield
\[
\bb{s}(x, \alpha, \beta)
=
\begin{bmatrix}
s_1(x, \alpha, \beta) \\[1mm]
s_2(x, \alpha, \beta)
\end{bmatrix}
=
\begin{bmatrix}
\partial_{\alpha} \ln f(x \nvert \alpha, \beta) \\[1mm]
\partial_{\beta} \ln f(x \nvert \alpha, \beta)
\end{bmatrix}
=
\begin{bmatrix}
\alpha^{-1} + \ln x - (\beta - 1) x^{\alpha} (1 - x^{\alpha})^{-1} \ln x \\[1mm]
\beta^{-1} + \ln(1 - x^{\alpha})
\end{bmatrix}.
\]
The ML estimators for $\alpha_0$ and/or $\beta_0$ come directly from the equation $\sum_{i=1}^{n}\bb{s}(x_i, \alpha_0, \beta_0)=0$ since
\[
 n\alpha^{-1} + \sum_{i=1}^{n}\ln x_i - (\beta - 1) \sum_{i=1}^{n} x_i^{\alpha} (1 - x_i^{\alpha})^{-1} \ln x_i =0\Leftrightarrow
\frac{n\alpha^{-1}+ \sum_{i=1}^{n}\ln x_i}{\sum_{i=1}^{n} x_i^{\alpha} (1 - x_i^{\alpha})^{-1} \ln x_i} + 1 - \beta = 0.
\]

Next, one determines the Fisher information matrix $I(\alpha, \beta)=\EE\big\{\bb{s}(X, \alpha, \beta) \bb{s}(X, \alpha, \beta)^{\top}\big\}$.
Consider that $\beta\neq 2$. Using the change of variables $v = x^{\alpha}$ in the fifth equality and \texttt{Mathematica} in the sixth equality, one has
\begin{align*}
- \EE\left\{\partial_{\alpha} s_1(X, \alpha, \beta)\right\}
&= -\EE\left[\partial_{\alpha} \left\{\alpha^{-1} + \ln X - (\beta - 1) X^{\alpha} (1 - X^{\alpha})^{-1} \ln X\right\}\right] \\
&= \alpha^{-2} + (\beta - 1) \EE\left\{X^{\alpha}(1 - X^{\alpha})^{-2} (\ln X)^2\right\} \\
&= \alpha^{-2} + (\beta - 1) \alpha \beta \int_0^1 x^{\alpha} (1 - x^{\alpha})^{-2} (\ln x)^2 x^{\alpha - 1} (1 - x^{\alpha})^{\beta - 1} \rd x \\
&= \alpha^{-2} + (\beta - 1) \alpha \beta \int_0^1 (\ln x)^2 x^{2\alpha - 1} (1 - x^{\alpha})^{\beta - 3} \rd x \\
&= \alpha^{-2} + \alpha^{-2} \beta \int_0^1 (\beta - 1) (\ln v)^2 v (1 - v)^{\beta - 3} \rd v \\
&= \frac{1}{\alpha^2} + \frac{\beta}{\alpha^2(\beta - 2)} \left\{\psi^2(\beta) + 2 \psi(\beta) (\gamma-1) - \psi_1(\beta) + \pi^2/6 + (\gamma - 2)\gamma\right\}\\
& = \frac{1}{\alpha^2} + \frac{\beta}{\alpha^2(\beta - 2)} \left\{(\psi(\beta) + \gamma-1)^2 - \psi_1(\beta) + \pi^2/6 -1\right\},
\end{align*}
where $\gamma$ is the Euler-Mascheroni constant defined as $\gamma = - \psi(1) = 0.57721566490153\ldots$.

If $\beta = 2$, one has, using \texttt{Mathematica},
\[
- \EE\left\{\partial_{\alpha} s_1(X, \alpha, \beta)\right\}
= \frac{1}{\alpha^2} + \frac{2}{\alpha^2} \int_0^1 (\ln v)^2 v (1 - v)^{-1} \rd v
= \frac{1}{\alpha^2} + \frac{4}{\alpha^2} \left\{\zeta(3) - 1\right\}
= \frac{1}{\alpha^2} \left\{4\zeta(3) - 3\right\},
\]
where $\zeta(3) = \sum_{n=1}^{\infty} n^{-3} = 1.20205690315959\ldots$

Note that $-\EE\left(\partial_{\alpha} s_1(X, \alpha, \beta)\right)$ is continuous in $\beta = 2$ since, using L'H\^opital's rule,
\begin{align*}
\lim_{\beta\rightarrow 2} - \EE\left\{\partial_{\alpha} s_1(X, \alpha, \beta)\right\}
&= \frac{1}{\alpha^2} + \frac{2}{\alpha^2} \lim_{\beta\rightarrow 2} \frac{\psi^2(\beta) + 2\psi(\beta) (\gamma - 1) - \psi_1(\beta) + \pi^2/6 + (\gamma - 2)\gamma}{\beta - 2} \\
&= \frac{1}{\alpha^2} + \frac{2}{\alpha^2} \times \left\{2 \psi(2) \psi_1(2) + 2 \psi_1(2) (\gamma - 1) - \psi_2(2)\right\} \\
&= \frac{1}{\alpha^2} + \frac{2}{\alpha^2} \times \left[2 (1 - \gamma) (\pi^2/6 - 1) + 2 (\pi^2/6 - 1) (\gamma - 1) + 2 \left\{\zeta(3) - 1\right\}\right] \\
&= \frac{1}{\alpha^2} + \frac{2}{\alpha^2} \times 2 \left\{\zeta(3) - 1\right\}
= \frac{1}{\alpha^2} + \frac{4}{\alpha^2} \left\{\zeta(3) - 1\right\}
= \frac{1}{\alpha^2} \left\{4 \zeta(3) - 3\right\},
\end{align*}
since $\psi(2) = 1 - \gamma$, $\psi_1(2) = \pi^2/6-1$ and $\psi_2(2) = \frac{\rd}{\rd \beta} \psi_1(\beta)|_{\beta = 2} = - 2 \left\{\zeta(3) - 1\right\}$.

Therefore, one has
\[
- \EE\left\{\partial_{\alpha} s_1(X, \alpha, \beta)\right\}
=
\begin{cases}
\frac{1}{\alpha^2} + \frac{\beta}{\alpha^2(\beta - 2)} \left\{(\psi(\beta) + \gamma-1)^2 - \psi_1(\beta) + \pi^2/6 -1\right\}, & \mbox{if } \beta\neq 2, \\[1mm]
\frac{1}{\alpha^2} \left\{4\zeta(3) - 3\right\}, & \mbox{if } \beta = 2.
\end{cases}
\]

Consider that $\beta\neq 1$. Using the change of variables $v = x^{\alpha}$ in the fifth equality and \texttt{Mathematica} in the sixth equality, one has
\begin{align*}
- \EE\left\{\partial_{\alpha} s_2(X, \alpha, \beta)\right\}
&= -\EE\left[\partial_{\alpha} \left\{\beta^{-1} + \ln(1 - X^{\alpha})\right\}\right]
= \EE\left\{X^{\alpha} (1 - X^{\alpha})^{-1} \ln X\right\} \\
&= \alpha \beta \int_0^1 x^{\alpha} (1 - x^{\alpha})^{-1} (\ln x) x^{\alpha - 1} (1 - x^{\alpha})^{\beta - 1} \rd x \\
&= \alpha \beta \int_0^1 x^{2\alpha - 1} (1 - x^{\alpha})^{\beta - 2} (\ln x) \rd x
= \alpha^{-1} \beta \int_0^1 v (1 - v)^{\beta - 2} (\ln v) \rd v \\
&= \alpha^{-1} \beta \, \frac{1 - \gamma - \psi(\beta + 1)}{\beta(\beta - 1)}
= \frac{1 - \gamma - \psi(\beta + 1)}{\alpha(\beta - 1)}= \frac{\psi(\beta) + \gamma - 1 + 1 / \beta}{\alpha(1 - \beta)}.
\end{align*}
If $\beta = 1$, one has, using \texttt{Mathematica},
\[
- \EE\left\{\partial_{\alpha} s_2(X, \alpha, \beta)\right\}
= \alpha^{-1} \int_0^1 v (1 - v)^{-1} (\ln v) \rd v
= \frac{1 - \pi^2/6}{\alpha}.
\]
Note that $-\EE\left\{\partial_{\alpha} s_2(X, \alpha, \beta)\right\}$ is continuous in $\beta = 1$ since, using L'H\^opital's rule,
\[
\lim_{\beta\rightarrow 1} \frac{\psi(\beta) + \gamma - 1 + 1 / \beta}{\alpha(1 - \beta)}=\lim_{\beta\rightarrow 1} \frac{\psi_1(\beta)-1/\beta^2}{-\alpha} = \frac{1 - \pi^2/6}{\alpha}.
\]
Therefore, one has
\[
- \EE\left\{\partial_{\alpha} s_2(X, \alpha, \beta)\right\}
=
\begin{cases}
\frac{\psi(\beta) + \gamma - 1 + 1 / \beta}{\alpha(1 - \beta)}, & \mbox{if } \beta\neq 1, \\[1mm]
\frac{1 - \pi^2/6}{\alpha}, & \mbox{if } \beta = 1.
\end{cases}
\]
One has
\[
- \EE\left\{\partial_{\beta} s_2(X, \alpha, \beta)\right\}
= -\EE\left[\partial_{\beta} \left\{\beta^{-1} + \ln(1 - X^{\alpha})\right\}\right]
= \beta^{-2}.
\]
Given that the usual regularity conditions are satisfied, the Fisher information matrix is given by
\[
I(\alpha, \beta)
=
\begin{bmatrix}
\frac{1}{\alpha^2} + \frac{\beta}{\alpha^2(\beta - 2)} \left\{(\psi(\beta) + \gamma-1)^2 - \psi_1(\beta) + \pi^2/6 -1\right\} & \frac{\psi(\beta) + \gamma - 1 + 1 / \beta}{\alpha(1 - \beta)} \\[1mm]
\frac{\psi(\beta) + \gamma - 1 + 1 / \beta}{\alpha(1 - \beta)} & \frac{1}{\beta^2}
\end{bmatrix},
\]
with
\[
[I(\alpha, \beta)]_{1, 2} = \frac{1 - \pi^2/6}{\alpha} ~\text{if } \beta = 1
\quad \text{and} \quad
[I(\alpha, \beta)]_{1, 1} = \frac{1}{\alpha^2} \left\{4\zeta(3) - 3\right\}=\frac{1.80822761263836}{\alpha^2} ~\text{if } \beta = 2
\]
and $\zeta(3) = \sum_{n=1}^{\infty} n^{-3} = 1.20205690315959\ldots$

Next, one determines the matrix $G(\alpha, \beta)=\EE\left\{\bb{\tau}(X, \alpha, \beta) \bb{s}(X, \alpha, \beta)^{\top}\right\}$.
One has
\[
\bb{\tau}(x, \alpha, \beta)
=
\begin{bmatrix}
\tau_1(x, \alpha, \beta) \\[1mm]
\tau_2(x, \alpha, \beta)
\end{bmatrix}
=
\begin{bmatrix}
\cos\left\{2\pi F(x \nvert \alpha, \beta)\right\} \\[1mm]
\sin\left\{2\pi F(x \nvert \alpha, \beta)\right\}
\end{bmatrix}
=
\begin{bmatrix}
\cos\left[2\pi \left\{1 - (1 - x^{\alpha})^{\beta}\right\}\right] \\[1mm]
\sin\left[2\pi \left\{1 - (1 - x^{\alpha})^{\beta}\right\}\right]
\end{bmatrix}.
\]

Given that $\EE\left[\cos\left\{2\pi F(X \nvert \alpha, \beta)\right\}\right] = 0$ and using the change of variables $v = x^{\alpha}$, one has
\begin{align*}
&\EE\left\{\tau_1(X, \alpha, \beta) s_1(X, \alpha, \beta)\right\} \\[1mm]
&\quad= \EE\left[\cos\left\{2\pi F(X \nvert \alpha, \beta)\right\} \left\{\alpha^{-1} + \ln X - (\beta - 1) X^{\alpha} (1 - X^{\alpha})^{-1} \ln X\right\}\right] \\
&\quad= \alpha \beta \int_0^1 \cos\left[2\pi \left\{1 - (1 - x^{\alpha})^{\beta}\right\}\right] (\ln x) \left\{1 - (\beta - 1)x^{\alpha}(1 - x^{\alpha})^{-1}\right\} x^{\alpha - 1} (1 - x^{\alpha})^{\beta - 1} \rd x \\
&\quad= \alpha^{-1} \beta \int_0^1 \cos\left[2\pi \left\{1 - (1 - v)^{\beta}\right\}\right] (\ln v) (1 - v)^{\beta - 1} \left\{1 - (\beta - 1) v (1 - v)^{-1}\right\} \rd v \\
&\quad= \alpha^{-1} \beta \int_0^1 \cos\left[2\pi \left\{1 - (1 - v)^{\beta}\right\}\right] (\ln v)(1 - v)^{\beta-2}(1-\beta v) \rd v
= \alpha^{-1} \beta h_{31}(\beta).
\end{align*}
Similarly, one has
\begin{align*}
&\EE\left\{\tau_2(X, \alpha, \beta) s_1(X, \alpha, \beta)\right\} \\[1mm]
&\quad= \EE\left[\sin\left\{2\pi F(X \nvert \alpha, \beta)\right\} \left\{\alpha^{-1} + \ln X - (\beta - 1) X^{\alpha} (1 - X^{\alpha})^{-1} \ln X\right\}\right] \\
&\quad= \alpha^{-1} \beta \int_0^1 \sin\left[2\pi \left\{1 - (1 - v)^{\beta}\right\}\right] (\ln v)(1 - v)^{\beta-2}(1-\beta v) \rd v
= \alpha^{-1} \beta h_{32}(\beta).
\end{align*}

Given that $\EE\left[\cos\left\{2\pi F(X \nvert \alpha, \beta)\right\}\right] = 0$ and using the change of variables $v = x^{\alpha}$, one has
\begin{align*}
\EE\left\{\tau_1(X, \alpha, \beta) s_2(X, \alpha, \beta)\right\}
&= \EE\left[\cos\left\{2\pi F(X \nvert \alpha, \beta)\right\} \left\{\beta^{-1} + \ln(1 - X^{\alpha})\right\}\right] \\
&= \alpha \beta \int_0^1 \cos\left[2\pi \left\{1 - (1 - x^{\alpha})^{\beta}\right\}\right] \ln(1 - x^{\alpha}) x^{\alpha - 1} (1 - x^{\alpha})^{\beta - 1} \rd x \\
&= \beta \int_0^1 \cos\left[2\pi \left\{1 - (1 - v)^{\beta}\right\}\right] \ln(1 - v) (1 - v)^{\beta - 1} \rd v
= \beta h_{33}(\beta).
\end{align*}
Similarly, one has
\begin{align*}
\EE\left\{\tau_2(X, \alpha, \beta) s_2(X, \alpha, \beta)\right\}
&= \EE\left[\sin\left\{2\pi F(X \nvert \alpha, \beta)\right\} \left\{\beta^{-1} + \ln(1 - X^{\alpha})\right\}\right] \\
&= \beta \int_0^1 \sin\left[2\pi \left\{1 - (1 - v)^{\beta}\right\}\right] \ln(1 - v) (1 - v)^{\beta - 1} \rd v
= \beta h_{34}(\beta).
\end{align*}

Therefore, one has
\[
G(\alpha, \beta)
=
\beta
\begin{bmatrix}
\alpha^{-1} h_{31}(\beta) & h_{33}(\beta) \\[1mm]
\alpha^{-1} h_{32}(\beta)& h_{34}(\beta)
\end{bmatrix}.
\]

\section{Proof for the \texorpdfstring{$\mathrm{uniform}(a, b)$}{uniform} test as a limiting case}\label{proof:uniform}

The proof consists of considering the continuous $\mathrm{uniform}(a, b)$ as the limiting case of the $\mathrm{EPD}_{\lambda}(\mu,\sigma)$ as $\lambda$ tends to infinity,
with $a=\mu-\sigma$ and $b=\mu+\sigma$.

First, it is shown that the density of an $\mathrm{EPD}_{\lambda}(\mu,\sigma)$ converges to that of a $\mathrm{uniform}(\mu-\sigma,\mu+\sigma)$ distribution as $\lambda\rightarrow\infty$. Consider the following results. One has
\begin{equation*}
\lim_{\lambda\rightarrow\infty}\Gamma(1 + 1/\lambda)=\Gamma(1)=1~~\text{and}~~
\lim_{\lambda\rightarrow\infty}\lambda^{1/\lambda}=\exp\Big(\lim_{\lambda\rightarrow\infty}\frac{\ln\lambda}{\lambda}\Big)
=\exp\Big(\lim_{\lambda\rightarrow\infty}\frac{1/\lambda}{1}\Big)=\exp(0)=1,
\end{equation*}
using L'Hospital's rule. Furthermore, for $y>0$,
\begin{equation*}
\lim_{\lambda\rightarrow\infty}\frac{y^\lambda}{\lambda}
=
\begin{cases}
0, & \hbox{if }0<y\le 1;\\
\lim_{\lambda\rightarrow\infty}\frac{y^{\lambda}\ln(y)}{1}=\infty, & \hbox{if } y> 1,
\end{cases}
\end{equation*}
using again L'Hospital's rule. Therefore, using $\Gamma(1 + 1/\lambda)=\lambda^{-1}\Gamma(1/\lambda)$, one has
\begin{equation*}
\lim_{\lambda\rightarrow\infty}f_{\lambda}(x\nvert\mu,\sigma)= \lim_{\lambda\rightarrow\infty}\frac{1}{2\sigma \lambda^{1/\lambda} \Gamma(1 + 1/\lambda)}\exp\Big(\!-\frac{1}{\lambda}\Big|\frac{x-\mu}{\sigma}\Big|^\lambda\Big)
=
\begin{cases}
(2\sigma)^{-1} , & \hbox{if }|x-\mu|\le\sigma;\\
0, & \hbox{if } |x-\mu|>\sigma,
\end{cases}
\end{equation*}
or equivalently, $\lim_{\lambda\rightarrow\infty}f_{\lambda}(x\nvert\mu,\sigma)= (b-a)^{-1}$ if $a\le x \le b$, which is the density of the $\mathrm{uniform}(\mu-\sigma,\mu+\sigma)=\mathrm{uniform}(a,b)$.

The cdf of an $\mathrm{EPD}_{\lambda}(\mu,\sigma)$ also tends to that of a $\mathrm{uniform}(\mu-\sigma,\mu+\sigma)=\mathrm{uniform}(a,b)$, as $\lambda\rightarrow\infty$, since
\begin{align*}
\lim_{\lambda\rightarrow\infty}F_{\lambda}(x \nvert\mu, \sigma)& = \lim_{\lambda\rightarrow\infty}\frac{1}{2} \left[1 + \mathrm{sign}(x - \mu) \, \Gamma_{1/\lambda, 1}\!\left\{\frac{1}{\lambda} \left(|x - \mu|/\sigma\right)^{\lambda}\right\}\right]\\
&=\frac{1}{2} \left[1 + \mathrm{sign}(x - \mu) \,\frac{|x - \mu|}{\sigma}\right]=\frac{1}{2} \left[1 + \,\frac{x - \mu}{\sigma}\right]=\frac{x-(\mu-\sigma)}{2\sigma}=\frac{x-a}{b-a}.
\end{align*}
It suffices to show that, for $y>0$,
\begin{align*}
\lim_{\lambda\rightarrow\infty}\Gamma_{1/\lambda, 1}(y^\lambda/\lambda)
&=\lim_{\lambda\rightarrow\infty}\int_{0}^{y^\lambda/\lambda}\frac{w^{1/\lambda-1}\exp(-w)}{\Gamma(1/\lambda)}dw
=\lim_{\lambda\rightarrow\infty}\int_{0}^{y}\frac{\exp(-u^\lambda/\lambda)}{\lambda^{1/\lambda}\Gamma(1+1/\lambda)}du\\
&= \int_{0}^{y}\lim_{\lambda\rightarrow\infty}\frac{\exp(-u^\lambda/\lambda)}{\lambda^{1/\lambda}\Gamma(1+1/\lambda)}du
=\int_{0}^{y}du=y,
\end{align*}
using the change of variable $u = (\lambda w)^{1/\lambda}$, $w=u^\lambda/\lambda$, $dw=u^{\lambda-1}du$, $\Gamma(1/\lambda)=\lambda\Gamma(1+1/\lambda)$ and using Lebesgue's dominated convergence theorem.

Next, the limiting behavior of the ML estimators is studied for the $\mathrm{EPD}_{\lambda}(\mu,\sigma)$. The ML estimator of $\mu_0$ is the value of $\mu$ that minimizes $\sum_{i=1}^n |x_i - \mu|^{\lambda}$. If $\mu$ is set to $\mu=(x_{(1)}+x_{(n)})/2$, then the two largest terms that dominate the others are $(\mu-x_{(1)})^{\lambda}$ and $(x_{(n)} - \mu)^{\lambda}$, as $\lambda\rightarrow\infty$. Therefore, $\hat{\mu}_n$ is the value of $\mu$ that minimizes $(\mu-x_{(1)})^{\lambda}+(x_{(n)} - \mu)^{\lambda}$, as $\lambda\rightarrow\infty$, or equivalently, the solution in $\mu$ of the equation $\lambda(\mu-x_{(1)})^{\lambda-1}=\lambda(x_{(n)}-\mu)^{\lambda-1}$, which results in $\hat{\mu}_n\rightarrow (x_{(1)}+x_{(n)})/2$.

Then, using again the fact that the two largest terms that dominate the others are $(\hat{\mu}_n-x_{(1)})^{\lambda}$ and $(x_{(n)} - \hat{\mu}_n)^{\lambda}$, one has
\begin{align*}
\lim_{\lambda\rightarrow \infty}\hat{\sigma}_n &= \lim_{\lambda\rightarrow \infty}\left(\frac{1}{n} \sum_{i=1}^n |x_i - \hat{\mu}_n|^{\lambda}\right)^{1/\lambda}=\lim_{\lambda\rightarrow \infty}
\left\{\frac{1}{n} (\hat{\mu}_n-x_{(1)})^{\lambda}+\frac{1}{n}(x_{(n)} - \hat{\mu}_n)^{\lambda}\right\}^{1/\lambda}\\
&=\lim_{\lambda\rightarrow \infty}\left\{\frac{2}{n} \left(\frac{x_{(n)}-x_{(1)}}{2}\right)^{\lambda}\right\}^{1/\lambda}
=\lim_{\lambda\rightarrow \infty} \left(\frac{2}{n}\right)^{1/\lambda} \left(\frac{x_{(n)}-x_{(1)}}{2}\right)=\frac{x_{(n)}-x_{(1)}}{2}.
\end{align*}
Therefore, one has
\begin{equation*}
\hat{a}_n = \lim_{\lambda\rightarrow \infty}\hat{\mu}_n-\hat{\sigma}_n = \frac{x_{(1)}+x_{(n)}}{2} - \frac{x_{(n)}-x_{(1)}}{2} = x_{(1)}
\end{equation*}
and
\begin{equation*}
\hat{b}_n = \lim_{\lambda\rightarrow \infty}\hat{\mu}_n+\hat{\sigma}_n = \frac{x_{(1)}+x_{(n)}}{2} + \frac{x_{(n)}-x_{(1)}}{2} = x_{(n)},
\end{equation*}
which are, as expected, the ML estimators of a $\mathrm{uniform}(a_0, b_0)$.

Finally, the limit of the matrix $\Sigma(\mu,\sigma)$ is studied as $\lambda\rightarrow \infty$, for an $\mathrm{EPD}_{\lambda}(\mu,\sigma)$. Given that
\[
G(\mu, \sigma)=
 \frac{1}{\sigma}
 \begin{bmatrix}
0 ~&~ h_{1}(\lambda) \\[1mm]
\frac{h_{2}(\lambda)}{\lambda^{1/\lambda-1} \Gamma(1/\lambda)} ~&~ 0
\end{bmatrix},~~I(\mu,\sigma)=\frac{1}{\sigma^2}
\begin{bmatrix}
\frac{\lambda^{2 - 2/\lambda}\Gamma(2-1/\lambda)}{\Gamma(1/\lambda)} ~&~ 0 \\[1mm]
0 ~&~ \lambda
\end{bmatrix},
\]
one has
\[
\Sigma(\mu,\sigma)= \frac{1}{2} I_2 - G(\mu,\sigma) I(\mu,\sigma)^{-1} G(\mu,\sigma)^{\top}=
\begin{bmatrix}
1/2-\frac{h_{1}^2(\lambda)}{\lambda} ~&~ 0 \\[1mm]
0 ~&~ 1/2-\frac{h_{2}^2(\lambda)}{\Gamma(1/\lambda)\Gamma(2-1/\lambda)}
\end{bmatrix}.
\]

The limits of $h_{1}(\lambda)$ and $h_{2}(\lambda)$ (from the Table of constants in the main text) are studied as $\lambda\rightarrow\infty$. Given that $\max_{\lambda\geq 1,v\in [0,1]} v^{1/\lambda} \leq 1$ and the map $\lambda\mapsto v^{1/\lambda}$ is decreasing on $[1,\infty)$ for all $v > 1$, it is easy to see that, for all $\lambda\geq 1$ and all $v\in (0,\infty)$,
\[
\left|\cos[\pi \{1 + \Gamma_{1/\lambda,1}(v)\}]v^{1/\lambda}e^{-v}\right| \leq \max(1,v) \, e^{-v} \leq e^{-v/2}.
\]
Since $v\mapsto e^{-v/2}$ is integrable on $(0,\infty)$, Lebesgue's dominated convergence theorem yields
\begin{equation*}
\lim_{\lambda\rightarrow\infty} h_{1}(\lambda) = \frac{1}{\lim_{\lambda\rightarrow\infty}\Gamma(1+1/\lambda)} \int_{0}^{\infty}\lim_{\lambda\rightarrow\infty}\cos[\pi \{1 + \Gamma_{1/\lambda,1}(v)\}]v^{1/\lambda}e^{-v}dv
=\cos(2\pi)\int_{0}^{\infty}e^{-v}dv=1,
\end{equation*}
because $\Gamma_{1/\lambda,1}(v)$ and $v^{1/\lambda}$ converge pointwise to 1 as $\lambda\rightarrow\infty$, for every $v>0$. Similarly, for all $\lambda\geq 1$ and all $v\in (0,\infty)$,
\[
\left|\sin[\pi \{1 + \Gamma_{1/\lambda,1}(v)\}]e^{-v}\right| \leq e^{-v},
\]
and $v\mapsto e^{-v}$ is integrable on $(0,\infty)$, so Lebesgue's dominated convergence theorem yields
\begin{equation*}
\lim_{\lambda\rightarrow\infty}h_{2}(\lambda)=\int_{0}^{\infty}\lim_{\lambda\rightarrow\infty}\sin[\pi \{1 + \Gamma_{1/\lambda,1}(v)\}]e^{-v} dv
=\sin(2\pi)\int_{0}^{\infty}e^{-v}dv=\sin(2\pi)=0.
\end{equation*}
Therefore, one has
\begin{equation*}
\lim_{\lambda\rightarrow\infty}1/2-\frac{h_{1}^2(\lambda)}{\lambda} = 1/2, \qquad
\lim_{\lambda\rightarrow\infty}1/2-\frac{h_{2}^2(\lambda)}{\Gamma(1/\lambda)\Gamma(2-1/\lambda)} = 1/2,
\end{equation*}
which means that
\[
\lim_{\lambda\rightarrow\infty}\Sigma(\mu,\sigma)=
\lim_{\lambda\rightarrow\infty}\begin{bmatrix}
1/2-\frac{h_{1}^2(\lambda)}{\lambda} ~&~ 0 \\[1mm]
0 ~&~ 1/2-\frac{h_{2}^2(\lambda)}{\Gamma(1/\lambda)\Gamma(2-1/\lambda)}
\end{bmatrix} = \begin{bmatrix}
1/2 ~&~ 0 \\[1mm]
0 ~&~ 1/2
\end{bmatrix}=\frac{1}{2}I_2.
\]

\section{Proof of Proposition~4 for the asymptotics of the \texorpdfstring{$\mathrm{gamma}(\lambda, \beta)$}{Gamma} test under local alternatives}\label{proof:local.gamma}

Consider the gamma test under the following sequence of local alternatives:
\[
\mathcal{H}_0 : X_i \sim \mathrm{gamma}(\lambda_0, \beta_0) \equiv \mathrm{GG}(\lambda_0, \beta_0, \rho_0=1)\quad \text{ vs. }\quad
\mathcal{H}_{1,n}(\delta) : X_i \sim \mathrm{GG}\Big(\lambda_0, \beta_0, \rho_n=1 + \frac{\delta}{\sqrt{n}} \{1 + o(1)\}\Big),
\]
where $\delta\in\R\setminus\{0\}$ is fixed. Here $\bb{\theta}_0=[\lambda_0, \beta_0]^{\top}$ and $\bb{\phi}_0=\rho_0=1$ with $\dim(\bb{\phi}_0)=1$.

Note that the quantities $V(\bb{\phi}_0,\bb{\theta}_0)$ and $\Sigma_{\mathcal{R}}(\bb{\phi}_0,\bb{\theta}_0)$ depend only on the value of $\lambda_0$. Therefore, to simplify the expressions and the proof, we set $\beta_0=1$.

For the generalized gamma $\mathrm{GG}(\lambda_0,1,1)$ distribution, we have
\[
G(\lambda_0,1,1)
=
\begin{bmatrix}
h_{10}(\lambda_0) ~ & ~ \lambda_0 h_{6}(\lambda_0,\lambda_0+1,1) ~&~ -h_{8}(\lambda_0) \\[1mm]
h_{11}(\lambda_0) ~ & ~ \lambda_0 h_{7}(\lambda_0,\lambda_0+1,1) ~&~ -h_{9}(\lambda_0)
\end{bmatrix},
\]
which splits into
\[
G(\bb{\theta}_0)=
\begin{bmatrix}
h_{10}(\lambda_0) ~ & ~ \lambda_0 h_{6}(\lambda_0,\lambda_0+1,1) \\[1mm]
h_{11}(\lambda_0) ~ & ~ \lambda_0 h_{7}(\lambda_0,\lambda_0+1,1)
\end{bmatrix}, \qquad
G_{\bb{\phi}}(\bb{\phi_0},\bb{\theta}_0)
  =\begin{bmatrix}
 -h_{8}(\lambda_0) \\[1mm]
 -h_{9}(\lambda_0)
\end{bmatrix}.
\]

We also have, for the $\mathrm{GG}(\lambda_0,1,1)$ distribution,
\[
I(\lambda_0,1,1)
=
\begin{bmatrix}
 \psi_1(\lambda_0) ~ & ~ 1 ~ & ~ -\psi(\lambda_0) \\[1mm]
1 ~ & ~ \lambda_0 ~&~ -\lambda_0 \psi(\lambda_0) - 1 \\[1mm]
-\psi(\lambda_0) ~ & ~ -\lambda_0 \psi(\lambda_0) - 1 ~&~ \lambda_0 \psi^2(\lambda_0) + 2 \psi(\lambda_0) + \lambda_0 \psi_1(\lambda_0) + 1
\end{bmatrix},
\]
which splits into
\[
I(\bb{\theta}_0) =\begin{bmatrix}
 \psi_1(\lambda_0) ~ & ~ 1  \\[1mm]
1 ~ & ~ \lambda_0
\end{bmatrix},
\]
\[
I_{\phi,\phi}(\bb{\phi}_0,\bb{\theta}_0) =
\lambda_0 \psi^2(\lambda_0) + 2 \psi(\lambda_0) + \lambda_0 \psi_1(\lambda_0) + 1,
\]
\[
I_{\phi,\theta}(\bb{\phi}_0,\bb{\theta}_0)=\begin{bmatrix}
 -\psi(\lambda_0) ~ & ~ -\lambda_0 \psi(\lambda_0) - 1
\end{bmatrix}.
\]
Finally, one can verify that
\[
\Sigma_{\mathcal{R}}(\bb{\phi}_0,\bb{\theta}_0)=I_{\phi,\phi}(\bb{\phi}_0,\bb{\theta}_0)-I_{\phi,\theta}(\bb{\phi}_0,\bb{\theta}_0) I_{\theta,\theta}(\bb{\phi}_0,\bb{\theta}_0)^{-1}I_{\phi,\theta}(\bb{\phi}_0,\bb{\theta}_0)^{\top}= \frac{\lambda_0^{2}\,\psi_1(\lambda_0)^{2} - \psi_1(\lambda_0) - 1}{\lambda_0\,\psi_1(\lambda_0) - 1}.
\]

\section{Proof of Proposition~5 for the asymptotics of the \texorpdfstring{$\mathrm{EPD}_{\lambda}(\mu, \sigma)$}{EPD} test under local alternatives}\label{proof:local.EPD}

Recall the $\mathrm{APD}_{\lambda}(\alpha,\rho,\mu,\sigma)$ distribution and its density $f_{\lambda}(x \nvert  \alpha,\rho,\mu,\sigma)$.
One can show that
\begin{equation}\label{eq:sk}
\bb{s}_{\bb{\phi}}(x, \bb{\phi}_0,\bb{\theta}_0)\equiv \bb{s}_{\mathcal{K}}(x, \lambda_0,\mu_0,\sigma_0) =
\begin{bmatrix}
\partial_{\alpha} \ln \{f_{\lambda_0}(x \nvert \alpha=1/2,\rho=\lambda_0,\mu_0,\sigma_0)\} \\[1mm]
\partial_{\rho} \ln \{f_{\lambda_0}(x \nvert \alpha=1/2,\rho=\lambda_0,\mu_0,\sigma_0)\}
\end{bmatrix}=
\begin{bmatrix}
-2 |y|^{\lambda_0} \mathrm{sign}(y) \\[1mm]
-\frac{1}{\lambda_0} \Big[|y|^{\lambda_0} \ln |y| - \frac{C_{\lambda_0}}{\lambda_0}\Big]
\end{bmatrix},
\end{equation}
where $C_{\lambda_0}=\psi(1/\lambda_0+1)+\ln(\lambda_0)$, $y = (x - \mu_0) / \sigma_0$, $\lambda_0$ is known, and where the last equality in \eqref{eq:sk} is straightforward to verify using, e.g., \texttt{Mathematica}.

Note that the resulting matrices $V(\bb{\phi}_0,\bb{\theta}_0)$ (for both the ML and MM cases) and $\Sigma_{\mathcal{R}}(\bb{\phi}_0,\bb{\theta}_0)$ depend only on the value of $\lambda_0$. Therefore, to simplify the expressions and the proof, we set $\mu_0=0,\sigma_0=1$.

Next, one determines the matrix $G_{\bb{\phi}}(\bb{\phi_0},\bb{\theta}_0)
  = \EE\!\left\{\bb{\tau}(X,\bb{\theta}_0)\,\bb{s}_{\bb{\phi}}(X,\bb{\phi}_0,\bb{\theta}_0)^{\top}\right\}$.
One has, if $y = (x - \mu_0)/\sigma_0$,
\begin{align*}
\bb{\tau}(x,\bb{\theta}_0)\equiv \bb{\tau}(x, \lambda_0,\mu_0, \sigma_0)
&=
\begin{bmatrix}
\tau_1(x, \lambda_0,\mu_0, \sigma_0) \\[1mm]
\tau_2(x, \lambda_0,\mu_0, \sigma_0)
\end{bmatrix}
=
\begin{bmatrix}
\cos\left\{2\pi F_{\lambda_0}(x \nvert \mu_0, \sigma_0)\right\} \\[1mm]
\sin\left\{2\pi F_{\lambda_0}(x \nvert  \mu_0, \sigma_0)\right\}
\end{bmatrix}\\
&=
\begin{bmatrix}
\cos\left[\pi \left\{1 + \mathrm{sign}(y) \Gamma_{1/\lambda_0, 1}(|y|^{\lambda_0}/\lambda_0)\right\}\right] \\[1mm]
\sin\left[\pi \left\{1 + \mathrm{sign}(y) \Gamma_{1/\lambda_0, 1}(|y|^{\lambda_0}/\lambda_0)\right\}\right]
\end{bmatrix}\\
&=\begin{bmatrix}
\cos\left[\pi \left\{1 + \Gamma_{1/\lambda_0, 1}(|y|^{\lambda_0}/\lambda_0)\right\}\right] \\[1mm]
\mathrm{sign}(y)\sin\left[\pi \left\{1 + \Gamma_{1/\lambda_0, 1}(|y|^{\lambda_0}/\lambda_0)\right\}\right]
\end{bmatrix},
\end{align*}
since $\cos\{\pi(1-z)\}=\cos\{\pi(1+z)\}$ and $\sin\{\pi(1-z)\}=-\sin\{\pi(1+z)\}$ for all $0\le z \le 1$. One has, for $\lambda_0\in (0, \infty)$,
\begin{align*}
X\sim \mathrm{EPD}_{\lambda_0}(\mu_0, \sigma_0)
~~ & \Leftrightarrow ~~ Y = \frac{X - \mu_0}{\sigma_0} \sim \mathrm{EPD}_{\lambda_0}(0, 1)\\
\Leftrightarrow |Y| \sim \mathrm{half\mhyphen EPD}(\lambda_0, 1) ~~ & \Leftrightarrow ~~ V = \frac{1}{\lambda_0}|Y|^{\lambda_0} \sim \mathrm{gamma}(1/\lambda_0, 1).
\end{align*}
Note that $\bb{\tau}(x, \lambda_0,\mu_0, \sigma_0) = \bb{\tau}(y, \lambda_0,0, 1)$ and $\bb{s}_{\mathcal{K}}(x, \lambda_0,\mu_0,\sigma_0)=\bb{s}_{\mathcal{K}}(y, \lambda_0, 0, 1)$.
Using that $\EE\{\bb{\tau}(Y, \lambda_0,0, 1)\} = \bb{0}_2$, one has
\begin{align*}
\EE&\Big\{\tau_1(X, \lambda_0,\mu_0, \sigma_0) s_{\mathcal{K},2}(X,\lambda_0, \mu_0, \sigma_0)\Big\}
= -\frac{1}{\lambda_0}\EE\left\{\tau_1(Y, \lambda_0,0, 1) \Big[|Y|^{\lambda_0} \ln(|Y|) - \frac{1}{\lambda_0} \left\{\lambda_0 + \ln (\lambda_0) + \psi(1/\lambda_0)\right\}\Big]\right\}\\
&= -\frac{1}{\lambda_0}\EE\left\{\cos\left[\pi \left\{1 + \Gamma_{1/\lambda_0, 1}(|Y|^{\lambda_0}/\lambda_0)\right\}\right]|Y|^{\lambda_0} \ln(|Y|)\right\}
= -\frac{1}{\lambda_0}\EE\left\{\cos\left[\pi \left\{1 + \Gamma_{1/\lambda_0, 1}(V)\right\}\right]V \ln(\lambda_0 V)\right\}\\
&= -\frac{1}{\lambda_0}\int_0^{\infty} \cos\left[\pi \left\{1 + \Gamma_{1/\lambda_0, 1}(v)\right\}\right]v\ln(\lambda_0 v) f_{\mathrm{ga}}(v \nvert 1/\lambda_0, 1) \rd v\\
&= -\frac{1}{\lambda_0^2}\int_0^{\infty} \cos\left[\pi \left\{1 + \Gamma_{1/\lambda_0, 1}(v)\right\}\right]\ln(\lambda_0 v) f_{\mathrm{ga}}(v \nvert 1/\lambda_0+1, 1) \rd v
= -\frac{1}{\lambda_0^2}h_{3}(\lambda_0),
\end{align*}
\begin{align*}
\EE&\Big\{\tau_2(X, \lambda_0,\mu_0, \sigma_0) s_{\mathcal{K},1}(X, \lambda_0,\mu_0, \sigma_0)\Big\}
= -2\EE\left\{\mathrm{sign}(Y)\sin\left[\pi \left\{1 + \Gamma_{1/\lambda_0, 1}(|Y|^{\lambda_0}/\lambda_0)\right\}\right] |Y|^{\lambda_0} \mathrm{sign}(Y)\right\}\\
&= -2\lambda_0\EE\left\{\sin\left[\pi \left\{1 + \Gamma_{1/\lambda_0, 1}(V)\right\}\right] V \right\}
= -2\lambda_0\int_0^{\infty} \sin\left[\pi \left\{1 + \Gamma_{1/\lambda_0, 1}(v)\right\}\right]v f_{\mathrm{ga}}(v \nvert 1/\lambda_0, 1) \rd v\\
&= -2\int_0^{\infty} \sin\left[\pi \left\{1 + \Gamma_{1/\lambda_0, 1}(v)\right\}\right] f_{\mathrm{ga}}(v \nvert 1/\lambda_0 + 1, 1) \rd v= -2 h_{35}(\lambda_0).
\end{align*}

Given that $\tau_2(y, \lambda_0,0, 1)$ and $s_{\mathcal{K},1}(Y, \lambda_0,0, 1)$ are odd functions, that $\tau_1(y, \lambda_0,0, 1)$, $s_{\mathcal{K},2}(Y, \lambda_0,0, 1)$ and $f(y \nvert \lambda_0, 0, 1)$ are even functions, one has
\[
\EE\left\{\tau_1(X, \lambda_0,\mu_0, \sigma_0) s_{\mathcal{K},1}(X, \lambda_0,\mu_0, \sigma_0)\right\} = \EE\left\{\tau_1(Y, \lambda_0, 0, 1) s_{\mathcal{K},1}(Y, \lambda_0, 0, 1)\right\} = 0
\]
and
\[
\EE\left\{\tau_2(X, \lambda_0,\mu_0, \sigma_0) s_{\mathcal{K},2}(X, \lambda_0,\mu_0, \sigma_0)\right\} = \EE\left\{\tau_2(Y, \lambda_0,0, 1) s_{\mathcal{K},2}(Y, \lambda_0,0, 1)\right\} = 0
\]
and one obtains
\[
G_{\bb{\phi}}(\bb{\phi_0},\bb{\theta}_0)
=
\begin{bmatrix}
0 & -\frac{1}{\lambda_0^2}h_{3}(\lambda_0)\\[1mm]
-2 h_{35}(\lambda_0) & 0
\end{bmatrix}.
\]

For the ML estimator, one has $\bb{r}(x, \lambda_0,\mu_0, \sigma_0)=\bb{s}(x, \lambda_0,\mu_0, \sigma_0)$ and then
\[
S_{\phi,\theta}(\bb{\phi}_0,\bb{\theta}_0)=\EE\big\{\bb{s}_{\mathcal{K}}(X, \lambda_0,\mu_0,\sigma_0) \bb{r}(X, \lambda_0,\mu_0,\sigma_0)^{\top}\big\}=I_{\phi,\theta}(\bb{\phi}_0,\bb{\theta}_0)=\EE\big\{\bb{s}_{\mathcal{K}}(X, \lambda_0,\mu_0,\sigma_0) \bb{s}(X, \lambda_0,\mu_0,\sigma_0)^{\top}\big\},
\]
where
\[
\bb{s}(x, \lambda_0,\mu_0, \sigma_0)
=
\begin{bmatrix}
s_1(x, \lambda_0,\mu_0, \sigma_0) \\[1mm]
s_2(x, \lambda_0,\mu_0, \sigma_0)
\end{bmatrix}
=
\begin{bmatrix}
\partial_{\mu_0} \ln f(x \nvert \lambda_0,\mu_0, \sigma_0) \\[1mm]
\partial_{\sigma_0} \ln f(x \nvert \lambda_0,\mu_0, \sigma_0)
\end{bmatrix}
= \frac{1}{\sigma_0}
\begin{bmatrix}
|y|^{\lambda_0 - 1} \mathrm{sign}(y) \\[1mm]
|y|^{\lambda_0} - 1
\end{bmatrix}
\]
is given in Section~\ref{proof:EPD}, with $y = (x-\mu_0)/\sigma_0$.

Next, one determines the matrix $I_{\phi,\theta}(\bb{\phi}_0,\bb{\theta}_0)$ for the ML estimator. By invariance, we can set $\mu_0=0$ and $\sigma_0=1$. With similar arguments of odd and even functions, one shows that
\[
\EE\left\{s_{\mathcal{K},1}(X, \lambda_0,0, 1) s_2(X, \lambda_0,0,1) \right\} = \EE\left\{s_{\mathcal{K},1}(Y, \lambda_0,0, 1) s_2(Y, \lambda_0,0, 1)\right\} = 0
\]
and
\[
\EE\left\{s_{\mathcal{K},2}(X,\lambda_0, 0,1) s_1(X, \lambda_0,0,1)\right\} = \EE\left\{s_{\mathcal{K},2}(Y, \lambda_0,0, 1) s_1(Y, \lambda_0,0, 1)\right\} = 0.
\]

Now, using \eqref{eqn.EPD}, one has
\[
-\EE\Big\{s_{\mathcal{K},1}(Y, \lambda_0,0,1) s_1(Y, \lambda_0,0,1)\Big\}
= 2\EE\left\{|Y|^{\lambda_0}\mathrm{sign}(Y)|Y|^{\lambda_0-1}\mathrm{sign}(Y)\right\}
= 2\EE\left\{|Y|^{2\lambda_0-1}\right\} = \frac{2\lambda_0^{2-1/\lambda_0}}{\Gamma(1/\lambda_0)},
\]
\begin{align*}
-\lambda_0&\EE\Big\{s_{\mathcal{K},2}(Y, \lambda_0,0,1) s_2(Y, \lambda_0,0,1)\Big\}
= \EE\left\{(|Y|^{\lambda_0} - 1)|Y|^{\lambda_0} \ln |Y|\right\}= \EE\left\{|Y|^{2\lambda_0} \ln |Y|\right\}
 -\EE\left\{|Y|^{\lambda_0} \ln |Y|\right\}\\
 &=\frac{\lambda_0\Gamma(1/\lambda_0+2)\{\psi(1/\lambda_0+2)+\ln(\lambda_0)\}}{\Gamma(1/\lambda_0)}-
 \frac{\Gamma(1/\lambda_0+1)\{\psi(1/\lambda_0+1)+\ln(\lambda_0)\}}{\Gamma(1/\lambda_0)}\\
 &=(1/\lambda_0+1)\{\psi(1/\lambda_0+1)+(1/\lambda_0+1)^{-1}+\ln(\lambda_0)\}-(1/\lambda_0)\{\psi(1/\lambda_0+1)+\ln(\lambda_0)\}\\
 &=\psi(1/\lambda_0+1)+\ln(\lambda_0)+1 = C_{\lambda_0}+1,
\end{align*}
and one obtains, for the ML estimator,
\[
I_{\phi,\theta}(\bb{\phi}_0,\bb{\theta}_0)
=
\begin{bmatrix}
\frac{-2\lambda_0^{2-1/\lambda_0}}{\Gamma(1/\lambda_0)} & 0\\[1mm]
0 & -\frac{1}{\lambda_0}\{C_{\lambda_0}+1\}
\end{bmatrix}.
\]

Recall from the EPD distribution that
\[
G(\bb{\theta}_0)\equiv G(\lambda_0,0, 1)=
\begin{bmatrix}
0 ~&~ h_{1}(\lambda_0) \\[1mm]
\frac{h_{2}(\lambda_0)}{\lambda_0^{1/\lambda_0-1} \Gamma(1/\lambda_0)} ~&~ 0
\end{bmatrix},\quad I(\bb{\theta}_0)\equiv I(\lambda_0,0,1)=
\begin{bmatrix}
\frac{\lambda_0^{2 - 2/\lambda_0}\Gamma(2-1/\lambda_0)}{\Gamma(1/\lambda_0)} ~&~ 0 \\[1mm]
0 ~&~ \lambda_0
\end{bmatrix}.
\]
Next, one determines the matrix $S_{\phi,\theta}(\bb{\phi}_0,\bb{\theta}_0)$ for MM estimator. Recall from \eqref{eq.r.EPD}   that
\begin{equation*}
\bb{r}(x,\lambda_0,\mu_0,\sigma_0)
= R(\lambda_0,\mu_0,\sigma_0)\bb{g}(x, \lambda_0,\mu_0,\sigma_0)=\frac{1}{\sigma_0}
\begin{bmatrix}
C_{2,\lambda_0}y \\[1mm]
2D_{\lambda_0}\left(C_{2,\lambda_0}y^2 - 1\right)
\end{bmatrix},
\end{equation*}
where $y=(x-\mu_0)/\sigma_0$ and
\[C_{2,\lambda_0}=\frac{\Gamma(1/\lambda_0)}{\lambda_0^{2/\lambda_0}\Gamma(3/\lambda_0)},\quad D_{\lambda_0}=\frac{\Gamma^2(3/\lambda_0)}{\Gamma(1/\lambda_0)\Gamma(5/\lambda_0)-\Gamma^2(3/\lambda_0)}.
\]
Recall also that
\[
R(\bb{\theta}_0)\equiv R(\lambda_0,0,1)=\begin{bmatrix}
C_{2,\lambda_0} & 0 \\[1mm]
0 & 4D_{\lambda_0}
\end{bmatrix}.
\]

With similar arguments of odd and even functions, one shows that
\[
\EE\left\{s_{\mathcal{K},1}(X, \lambda_0,\mu_0, \sigma_0) r_2(X, \lambda_0,\mu_0, \sigma_0) \right\} = \sigma_0^{-1}\EE\left\{s_{\mathcal{K},1}(Y, \lambda_0,0, 1) r_2(Y, \lambda_0,0, 1)\right\} = 0
\]
and
\[
\EE\left\{s_{\mathcal{K},2}(X, \lambda_0,\mu_0, \sigma_0) r_1(X, \lambda_0,\mu_0, \sigma_0)\right\} = \sigma_0^{-1}\EE\left\{s_{\mathcal{K},2}(Y, \lambda_0,0, 1) r_1(Y, \lambda_0,0, 1)\right\} = 0.
\]
Now, using \eqref{eqn.EPD}, one has

\begin{align*}
-\sigma_0\EE\Big\{s_{\mathcal{K},1}(X, \lambda_0,\mu_0, \sigma_0) r_1(X, \lambda_0,\mu_0, \sigma_0)\Big\}
&= 2C_{2,\lambda_0}\EE\left\{|Y|^{\lambda_0}\mathrm{sign}(Y)Y\right\}= 2C_{2,\lambda_0}\EE\left\{|Y|^{\lambda_0+1}\right\}\\
&=2\frac{\Gamma(1/\lambda_0)}{\lambda_0^{2/\lambda_0}\Gamma(3/\lambda_0)} \frac{\lambda_0^{1+1/\lambda_0}\Gamma(1+2/\lambda_0)}{\Gamma(1/\lambda_0)}= \frac{4\Gamma(2/\lambda_0)}{\lambda_0^{1/\lambda_0}\Gamma(3/\lambda_0)},
\end{align*}

\begin{align*}
-\frac{\lambda_0\sigma_0}{2D_{\lambda_0}}&\EE\Big\{s_{\mathcal{K},2}(X, \lambda_0,\mu_0, \sigma_0) r_2(X, \lambda_0,\mu_0, \sigma_0)\Big\}
= \EE\left\{|Y|^{\lambda_0} \ln |Y|\left(C_{2,\lambda_0}Y^2 - 1\right)\right\}\\
&=C_{2,\lambda_0}\EE\left\{|Y|^{\lambda_0+2} \ln |Y| \right\} -
\EE\left\{|Y|^{\lambda_0} \ln |Y|\right\}\\
&=\frac{\Gamma(1/\lambda_0)}{\lambda_0^{2/\lambda_0}\Gamma(3/\lambda_0)}\frac{\lambda_0^{2/\lambda_0}\Gamma(1+3/\lambda_0)\{\psi(1+3/\lambda_0)+\ln(\lambda_0)\}}{\Gamma(1/\lambda_0)}
 - \frac{\Gamma(1+1/\lambda_0)\{\psi(1+1/\lambda_0)+\ln(\lambda_0)\}}{\Gamma(1/\lambda_0)}\\
&=3\lambda_0^{-1}\{\psi(1+3/\lambda_0)+\ln(\lambda_0)\} - \lambda_0^{-1}\{\psi(1+1/\lambda_0)+\ln(\lambda_0)\}\\
&=\lambda_0^{-1}\left\{2\ln(\lambda_0)+ 3\psi(3/\lambda_0) - \psi(1/\lambda_0)\right\}
\end{align*}
and one obtains, for the MM estimator, setting $\sigma_0=1$ by invariance,
\[
S_{\phi,\theta}(\bb{\phi}_0,\bb{\theta}_0)
=
\begin{bmatrix}
\frac{-4\Gamma(2/\lambda_0)}{\lambda_0^{1/\lambda_0}\Gamma(3/\lambda_0)} & 0\\[1mm]
0 & -\frac{2D_{\lambda_0}}{\lambda_0^2}\left\{2\ln(\lambda_0)+ 3\psi(3/\lambda_0) - \psi(1/\lambda_0)\right\}
\end{bmatrix}.
\]

Finally, one determines  $I_{\phi,\phi}(\bb{\phi}_0,\bb{\theta}_0)$.

\[
\EE\left\{\bb{s}_{\bb{\phi},1}(X, \bb{\phi}_0,\bb{\theta}_0)^2 \right\}=\EE\left[\left\{-2 |Y|^{\lambda_0} \mathrm{sign}(Y)\right\}^2\right]=4\EE\left\{|Y|^{2\lambda_0} \right\}=4(\lambda_0+1)
\]
using \eqref{eqn.EPD}.

\begin{align*}
&\lambda_0^2\,\EE\left\{\bb{s}_{\bb{\phi},2}(X, \bb{\phi}_0,\bb{\theta}_0)^2 \right\}=\EE\left[\left\{|Y|^{\lambda_0} \ln |Y| - \frac{C_{\lambda_0}}{\lambda_0}\right\}^2\right]=\EE\left[|Y|^{2\lambda_0} (\ln |Y|)^2\right]
+\frac{C_{\lambda_0}^2}{\lambda_0^2}-\frac{2 C_{\lambda_0}}{\lambda_0}\EE\left[|Y|^{\lambda_0} \ln |Y| \right]\\
&= \frac{\Gamma(1/\lambda_0+2)}{\Gamma(1/\lambda_0)}\big[\psi_1(1/\lambda_0+2)+\{\psi(1/\lambda_0+2)+\ln(\lambda_0)\} ^2\big] +\frac{C_{\lambda_0}^2}{\lambda_0^2}
-\frac{2 C_{\lambda_0}^2\Gamma(1/\lambda_0+1)}{\lambda_0\Gamma(1/\lambda_0)}\\
&= \frac{1/\lambda_0+1}{\lambda_0}\left[\psi_1(1/\lambda_0+1)-\frac{1}{(1/\lambda_0+1)^2}+\left\{\psi(1/\lambda_0+1)+\frac{1}{1/\lambda_0+1}+\ln(\lambda_0)\right\} ^2\right] +\frac{C_{\lambda_0}^2}{\lambda_0^2}
-\frac{2 C_{\lambda_0}^2}{\lambda_0^2}\\
&= \frac{1/\lambda_0+1}{\lambda_0}\left[\psi_1(1/\lambda_0+1)-\frac{1}{(1/\lambda_0+1)^2}+\left\{\frac{1}{1/\lambda_0+1}+C_{\lambda_0}\right\} ^2\right] -\frac{C_{\lambda_0}^2}{\lambda_0^2}\\
&= \frac{1}{\lambda_0}\left\{\Big(\frac{1}{\lambda_0}+1\Big) \psi_1\Big(\frac{1}{\lambda_0}+1\Big)  + (1 + C_{\lambda_0})^ 2- 1\right\}.
\end{align*}
With arguments of odd and even functions, one shows that
\[
\EE\left\{\bb{s}_{\bb{\phi},1}(X, \bb{\phi}_0,\bb{\theta}_0)\bb{s}_{\bb{\phi},2}(X, \bb{\phi}_0,\bb{\theta}_0) \right\}=0,
\]
and one obtains
\[
I_{\phi,\phi}(\bb{\phi}_0,\bb{\theta}_0) =\begin{bmatrix}
4 (\lambda_0 + 1) ~&~ 0 \\[1mm]
0 ~&~ \frac{1}{\lambda_0^3}\left\{\Big(\frac{1}{\lambda_0}+1\Big) \psi_1\Big(\frac{1}{\lambda_0}+1\Big) - 1 + (C_{\lambda_0} + 1)^ 2\right\}
\end{bmatrix}.
\]

One can also verify that
\begin{align*}
\Sigma_{\mathcal{R}}(\bb{\phi}_0,\bb{\theta}_0)&=I_{\phi,\phi}(\bb{\phi}_0,\bb{\theta}_0)-I_{\phi,\theta}(\bb{\phi}_0,\bb{\theta}_0) I_{\theta,\theta}(\bb{\phi}_0,\bb{\theta}_0)^{-1}I_{\phi,\theta}(\bb{\phi}_0,\bb{\theta}_0)^{\top}\\
&=
\begin{bmatrix}
4(\lambda_0 + 1) \;-\; \displaystyle \frac{4\lambda_0^{2}}{\Gamma(2 - 1/\lambda_0)\,\Gamma(1/\lambda_0)} & 0 \\[1mm]
0 &  \frac{\big(\frac{1}{\lambda_0}+1\big)\psi_{1}\big(\tfrac{1}{\lambda_0}+1\big) - 1} {\lambda_0^{3}}
\end{bmatrix}.
\end{align*}
This concludes the proof.

\section{Power curves for the empirical power study}

\begin{figure}[!htbp]
\centering
\includegraphics[width=0.32\textwidth]{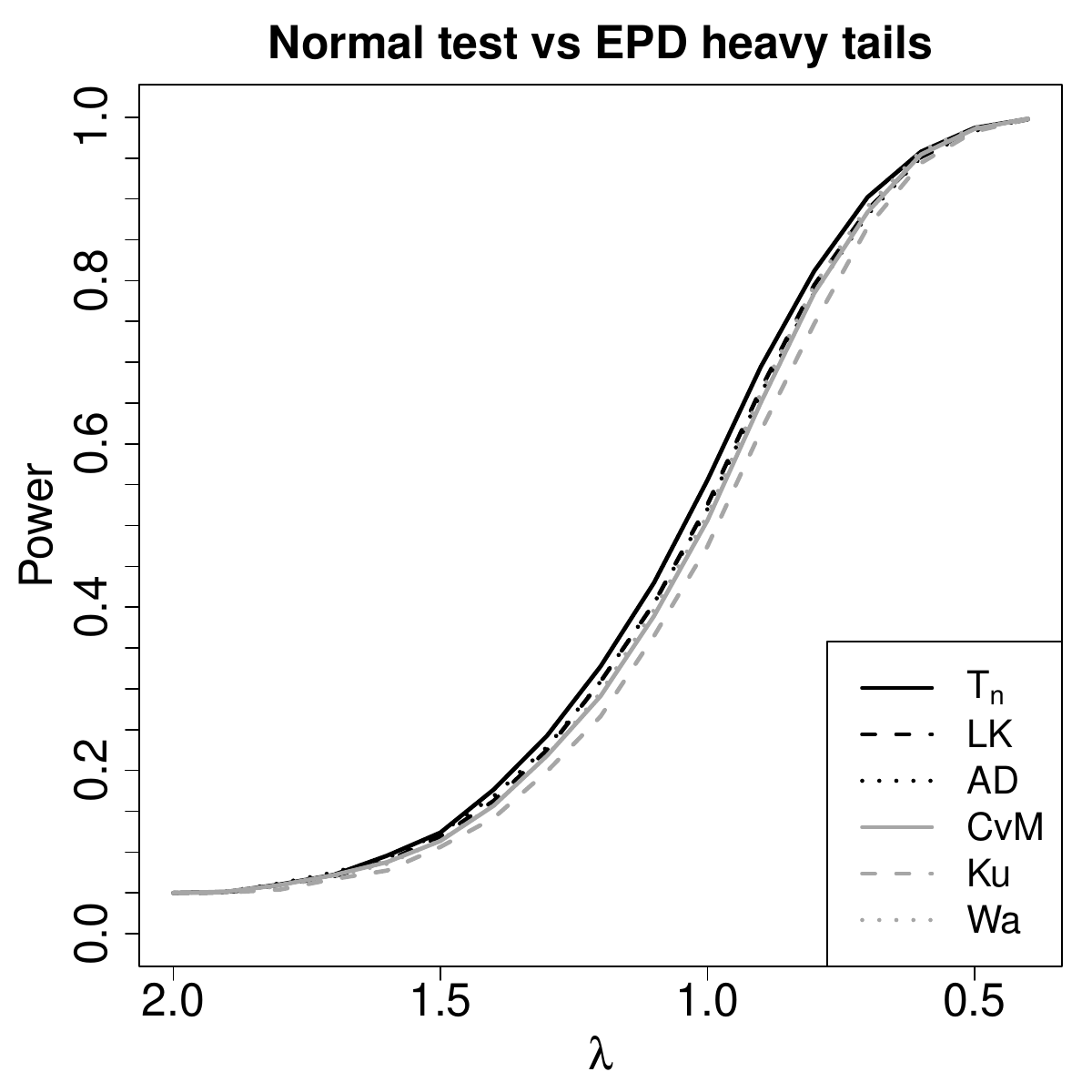}
\includegraphics[width=0.32\textwidth]{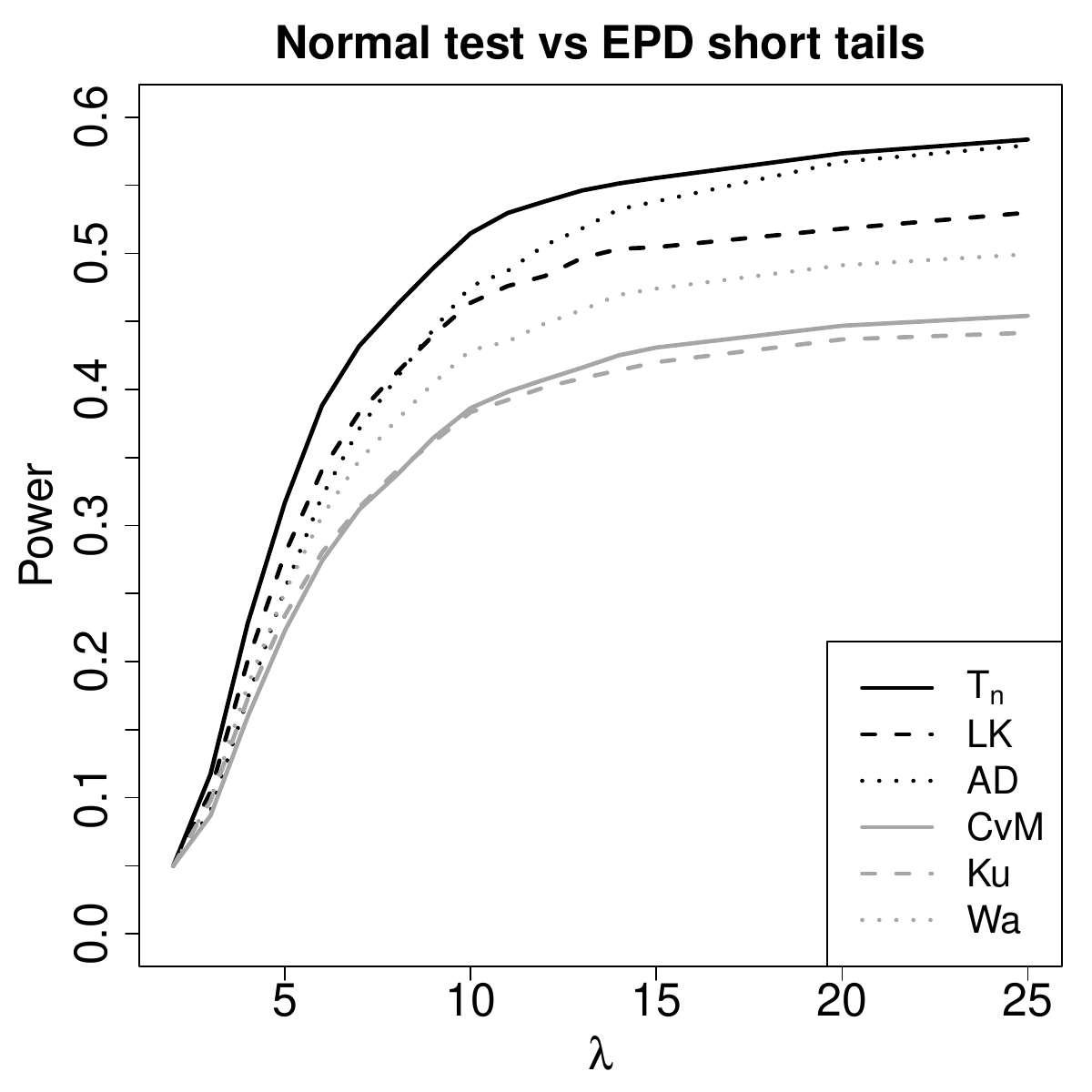}
\includegraphics[width=0.32\textwidth]{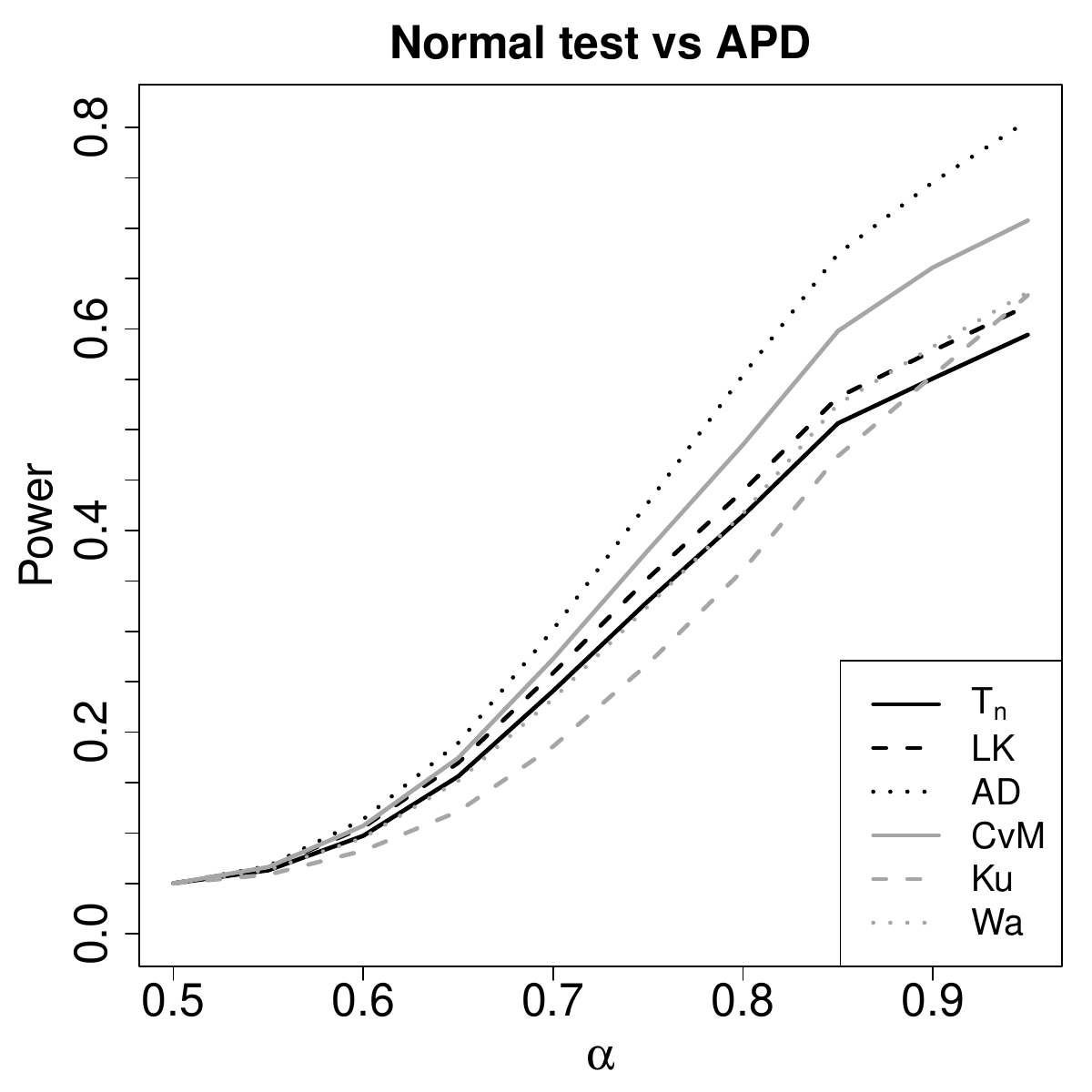}

\caption{Power curves of the $T_n$, LK, AD, CvM, Ku, and Wa tests for assessing
normality (with $\mu_0$ and $\sigma_0$ unknown) under $\mathrm{EPD}(0.4\leq \lambda \leq 2, \mu, \sigma)$,
$\mathrm{EPD}(2 \leq \lambda \leq 25, \mu, \sigma)$, and $\mathrm{APD}_{\lambda=2}(0.5 \leq \alpha < 1, \rho = 2, \mu, \sigma)$
alternatives. The nominal significance level is $0.05$, and the sample size is $n = 50$.}
\label{fig:norm.curve}
\end{figure}



\begin{figure}[!htbp]
\centering
\includegraphics[width=0.32\textwidth]{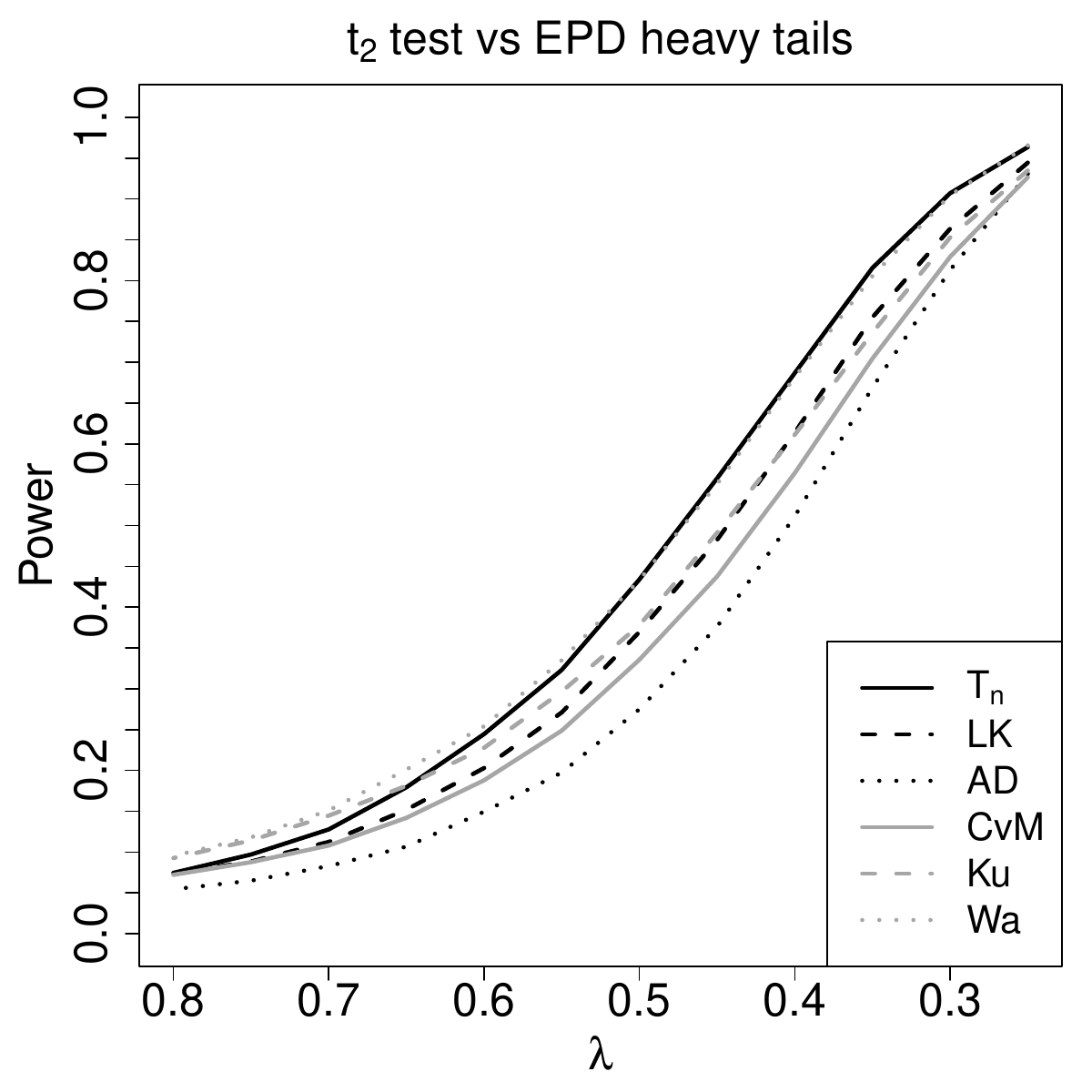}
\includegraphics[width=0.32\textwidth]{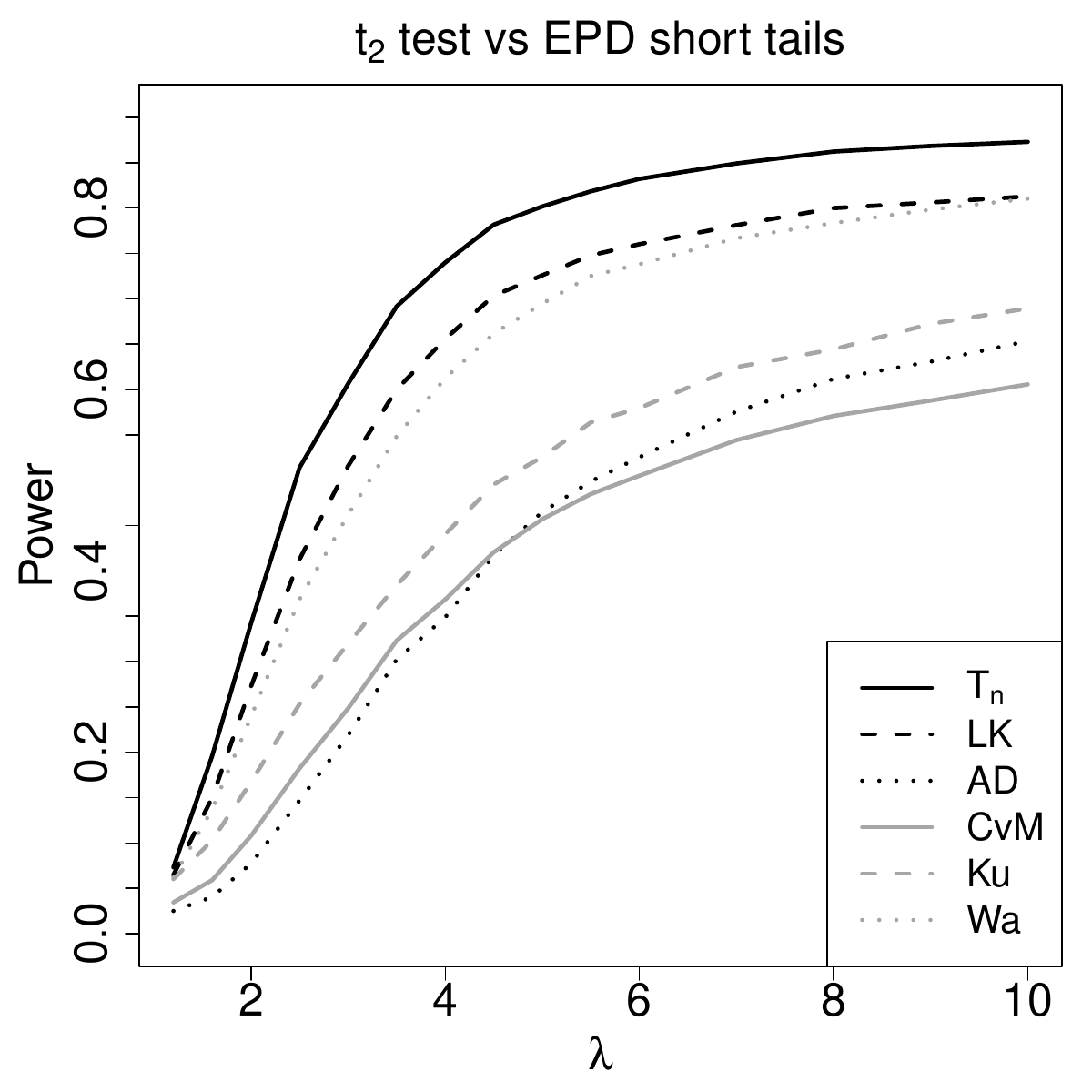}
\includegraphics[width=0.32\textwidth]{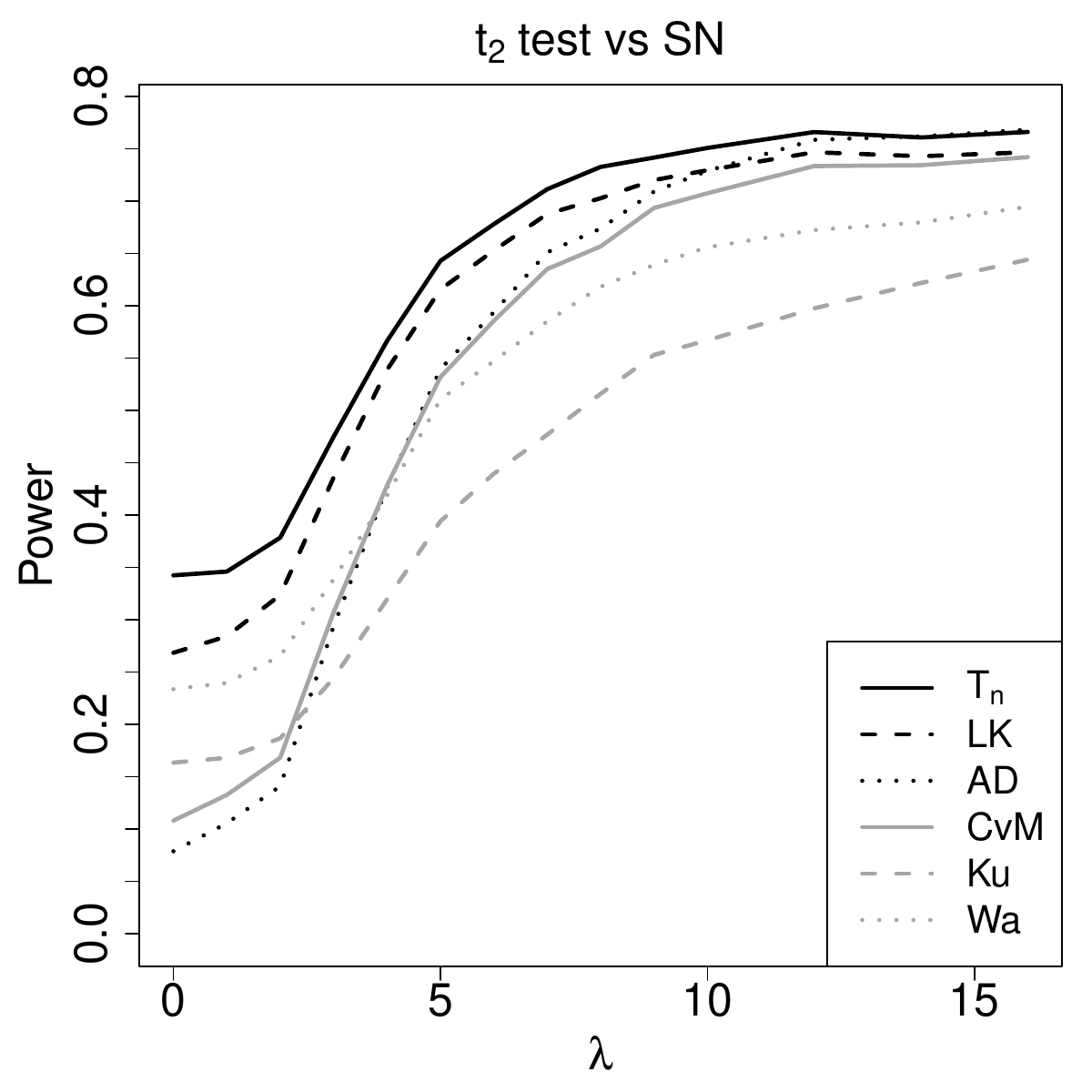}

\caption{Power curves of the $T_n$, LK, AD, CvM, Ku, and Wa tests for assessing goodness-of-fit to the Student's $t_2$ distribution (with $\lambda_0 = 2$ known and
$\mu_0$, $\sigma_0$ unknown) under $\mathrm{EPD}(0.25 \leq \lambda \leq 0.8, \mu, \sigma)$, $\mathrm{EPD}(1.2 \leq \lambda \leq 10, \mu, \sigma)$,
and $\mathrm{SN}(0 \leq \lambda \leq 16, \mu, \sigma)$ alternatives. The nominal significance level is $0.05$, and the sample size is
$n = 50$.}
\label{fig:t2.curve}
\end{figure}

\begin{figure}[!htbp]
\centering
\includegraphics[width=0.24\textwidth]{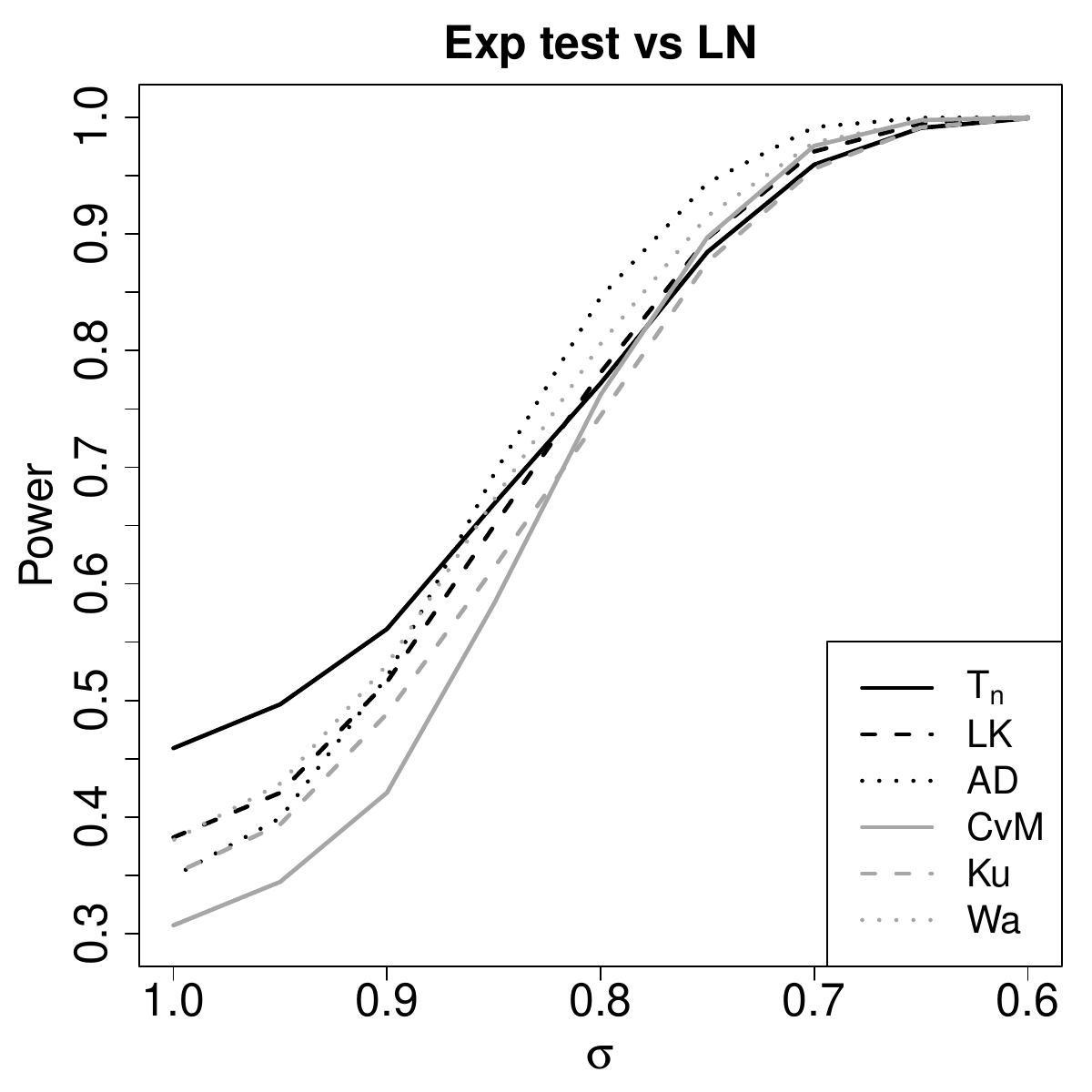}
\includegraphics[width=0.24\textwidth]{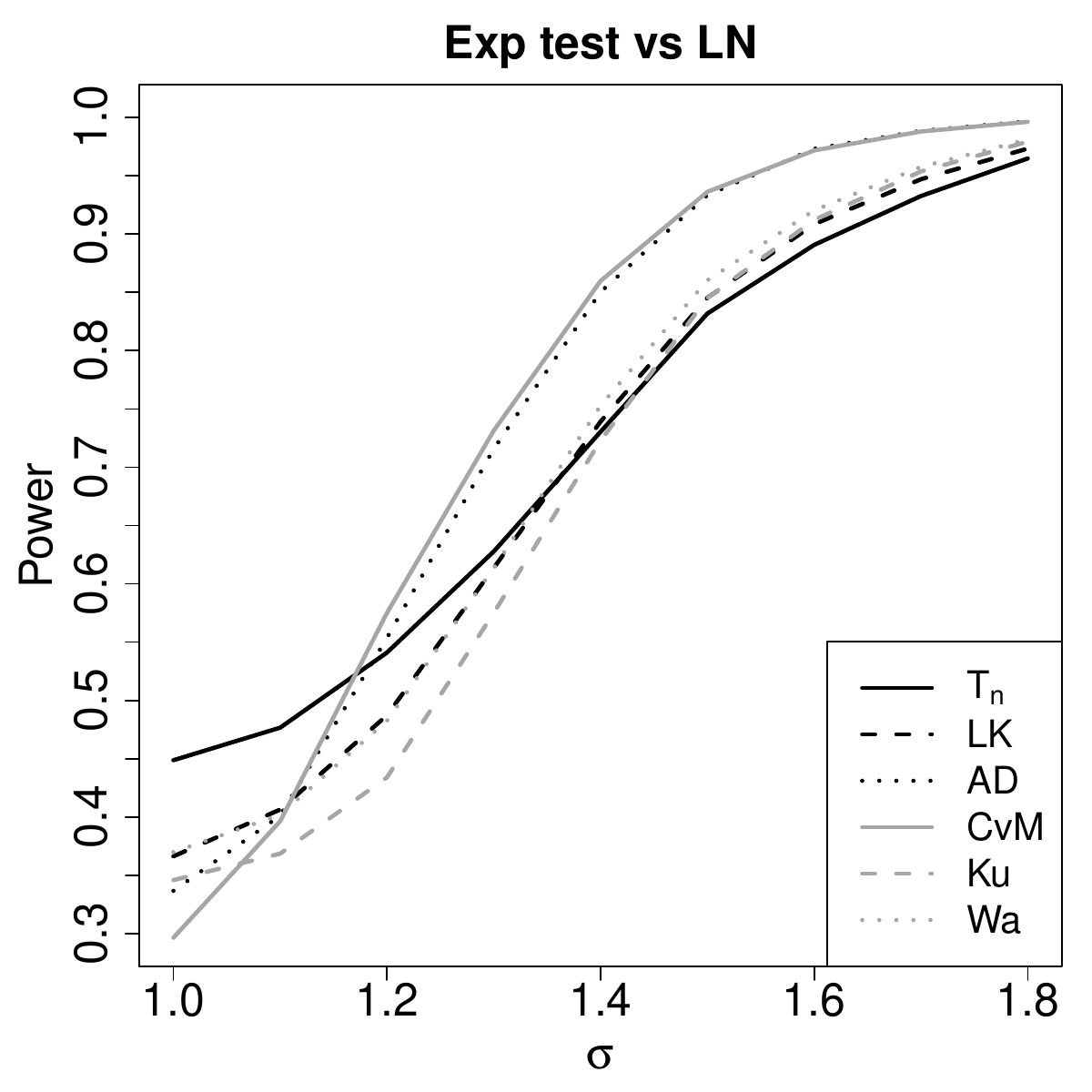}
\includegraphics[width=0.24\textwidth]{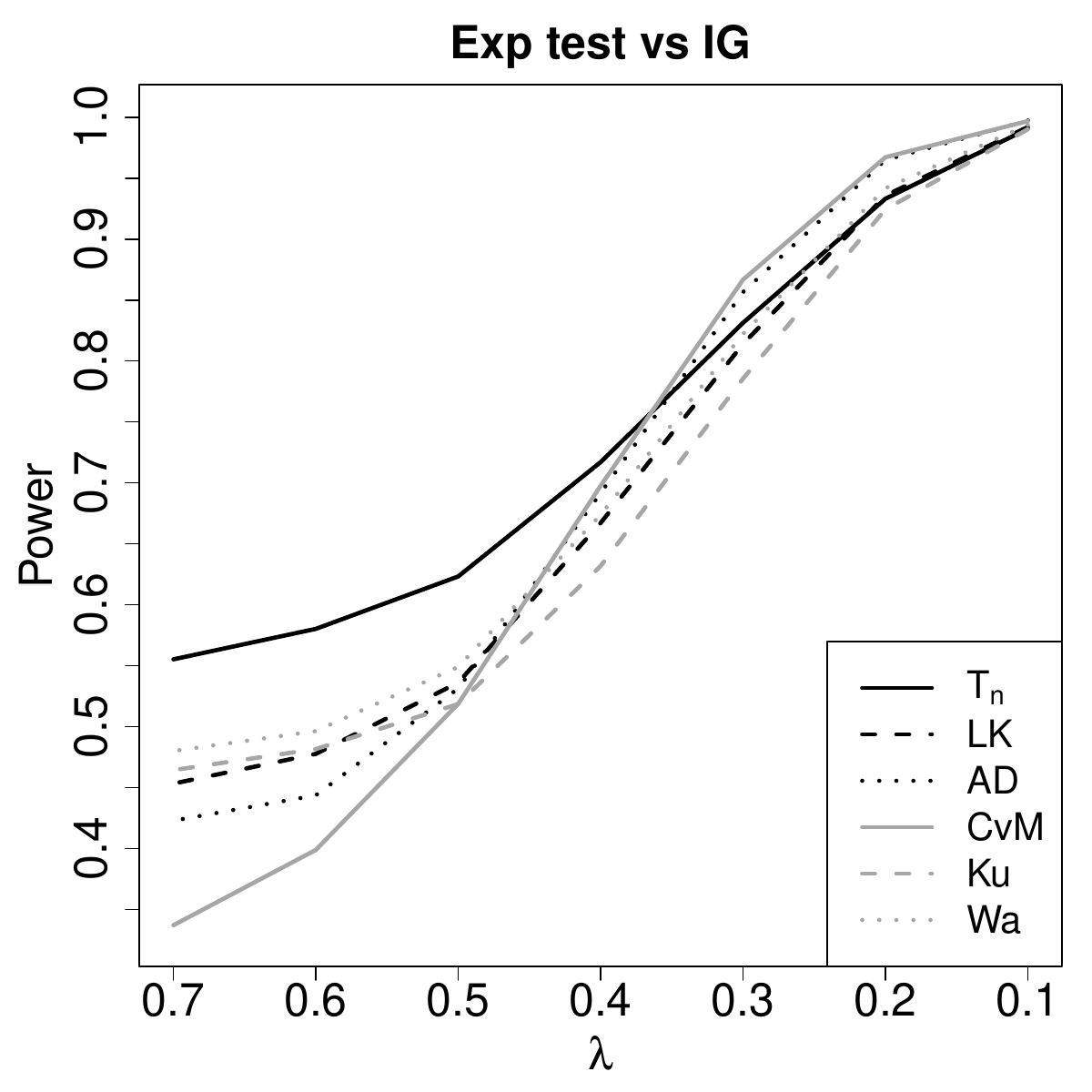}
\includegraphics[width=0.24\textwidth]{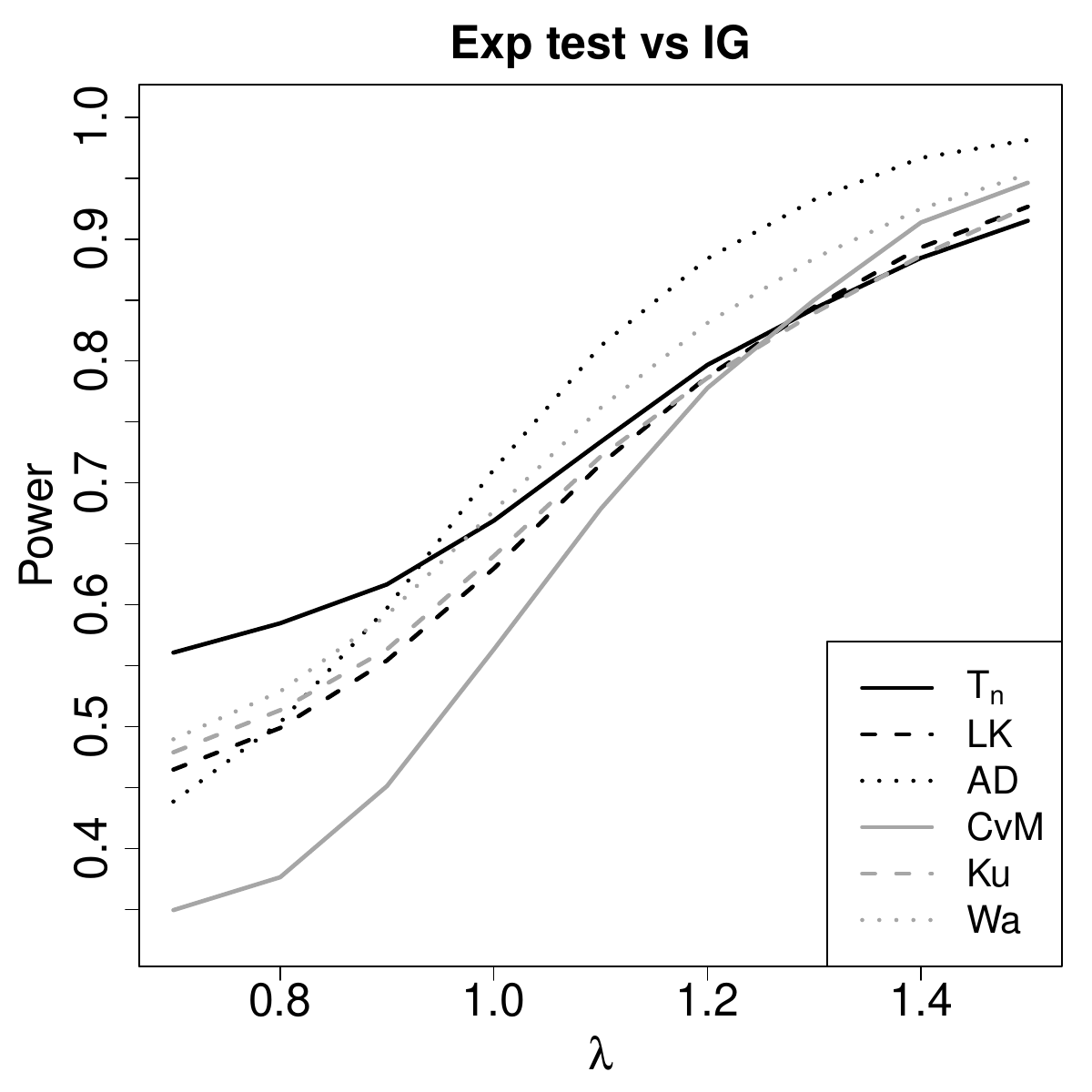}

\caption{Power curves of the $T_n$, LK, AD, CvM, Ku, and Wa tests for assessing goodness-of-fit to the Exponential distribution (with $\beta_0$ unknown) under $\mathrm{log}\mhyphen\mathrm{normal}(\mu, 0.6 \leq \sigma \leq 1)$, $\mathrm{log}\mhyphen\mathrm{normal}(\mu, 1 \leq \sigma \leq 1.8)$, $\mathrm{inverse}\mhyphen\mathrm{Gaussian}(\mu=1,0.1 \leq \lambda \leq 0.7)$ and $\mathrm{inverse}\mhyphen\mathrm{Gaussian}(\mu=1,0.7 \leq \lambda \leq 1.5)$ alternatives. The nominal significance level is $0.05$, and the sample size is
$n = 50$.}
\label{fig:exp.curve}
\end{figure}

\end{document}